\documentclass[11pt,twoside,openany]{book}
\usepackage{caption}
\captionsetup[figure]{font=small}
\usepackage[caption=false]{subfig}

\usepackage{amsmath,amsfonts,amsthm,amssymb,graphics,graphicx,mathtools,epstopdf,etoolbox,indentfirst,enumitem,mleftright,xcolor,booktabs,footnote,relsize,microtype}
\usepackage{mathrsfs} 

\usepackage{geometry}
\geometry{layout=a4paper}
\geometry{paper=a4paper}
\geometry{margin=25mm}

\usepackage[bottom]{footmisc} 

\usepackage{fancyhdr}
\pagestyle{fancy}
\fancyhf{}
\fancyhead[L]{\leftmark}
\fancyhead[R]{\thepage}

\fancypagestyle{plain}{

\fancyhf{}
}

\setlength{\headheight}{14pt}

\usepackage[
backend=bibtex,
style=alphabetic,
giveninits=true, 
maxbibnames=99,
minalphanames=3,
date=year
]{biblatex}
\addbibresource{biblio_thesis.bib}

\usepackage{hyperref} 
\usepackage{doi} 

\renewbibmacro{in:}{}

\AtEveryBibitem{\clearfield{issue}} 

\newbibmacro{string+doiurl}[1]{%
	\iffieldundef{doi}{%
		\iffieldundef{url}{
			#1%
		}{%
			\href{\thefield{url}}{#1}%
		}%
	}{%
		\href{http://doi.org/\thefield{doi}}{#1}%
	}%
}
\DeclareFieldFormat{title}{\usebibmacro{string+doiurl}{\mkbibemph{#1}}}
\DeclareFieldFormat[article,thesis,incollection,inproceedings,manual]{title}%
{\usebibmacro{string+doiurl}
	{#1} 
}
\ExecuteBibliographyOptions{url=false,doi=false}


\usepackage{algorithm} 
\usepackage[noend]{algpseudocode}
\floatname{algorithm}{Protocol} 
\algrenewcommand\alglinenumber[1]{\normalsize #1.} 

\newcounter{algsubstate}
\makeatletter

\makeatother
\newenvironment{algsubstates}
  {\setcounter{algsubstate}{0}%
   \renewcommand{\State}{%
     \refstepcounter{algsubstate}%
     \Statex {\normalsize\arabic{ALG@line}.\arabic{algsubstate}.}\kern5pt}
     }
  {}

\makeatletter
\newenvironment{breakablealgorithm}
  {
    ~\\[\parskip]
     \refstepcounter{algorithm}
     \hrule height.8pt depth0pt \kern5pt
     \renewcommand{\caption}[2][\relax]{
       {\raggedright\textbf{\fname@algorithm~\thealgorithm} ##2\par}%
       \ifx\relax##1\relax 
         \addcontentsline{loa}{algorithm}{\protect\numberline{\thealgorithm}##2}%
       \else 
         \addcontentsline{loa}{algorithm}{\protect\numberline{\thealgorithm}##1}%
       \fi
       \kern5pt\hrule\kern2pt
     }
  }{
     \kern2pt\hrule\relax
    ~\\[\parskip]
  }
\makeatother

\newfloat{process}{htbp}{loa}
\floatname{process}{Process} 

\makeatletter
\renewcommand\paragraph{\@startsection{paragraph}{4}{\z@}%
                                    {1.25ex \@plus1ex \@minus.2ex}%
                                    {-.5em}
                                    {\normalfont\normalsize\bfseries}}
\makeatother

\newcommand{\ceil}[1]{\left\lceil #1 \right\rceil}
\newcommand{\floor}[1]{\left\lfloor #1 \right\rfloor}
\newcommand{\ket}[1]{\left| #1 \right>}
\newcommand{\bra}[1]{\left< #1 \right|}
\newcommand{\ketbra}[2]{\ket{#1} \! \bra{#2}}
\newcommand{\pure}[1]{\ketbra{#1}{#1}}
\newcommand{\inn}[2]{\langle#1|#2\rangle} 
\newcommand{\tr}[2][]{\operatorname{Tr}_{#1}\!\left[#2\right]} 
\newcommand{\der}{\mathrm{d}} 
\newcommand{\binh}{h_2} 

\newcommand{\acos}{\cos^{-1}}

\newcommand{\Bt}{B'}
\newcommand{\Btstr}{{\str{B}'}}
\newcommand{\Ct}{C'}
\newcommand{\Ctstr}{{\str{C}'}}
\newcommand{\ctstr}{{\str{c}'}}
\newcommand{\calG}{\mathcal{G}}

\newcommand{\calZ}{\mathcal{Z}}
\newcommand{\cdfBin}[3]{B_{#1,#2}(#3)} 
\newcommand{\cmax}{\mathrm{EC}_\mathrm{leak}}
\newcommand{\constr}{\nu} 
\newcommand{\cont}{\eps_\mathrm{con}}
\newcommand{\cperp}{\beta}
\newcommand{\defvar}{\coloneqq} 
\newcommand{\dop}[1]{\operatorname{S}_{#1}} 
\newcommand{\dtol}{\delta_\mathrm{tol}}
\newcommand{\dsou}{\delta_\mathrm{stat}}
\newcommand{\eps}{\varepsilon}
\newcommand{\expval}[1]{\langle#1\rangle} 
\newcommand{\expvaltr}[1]{\tr{#1 \rho_{AB}}} 
\newcommand{\Fj}{} 
\newcommand{\fmax}{f_\mathrm{max}}
\newcommand{\fmin}{f_\mathrm{min}}
\newcommand{\freq}{\operatorname{freq}}
\newcommand{\fZS}[3]{C_{#1,#2}(#3)}
\newcommand{\g}{g} 
\newcommand{\gF}{\tilde{F}} 
\newcommand{\hash}{\operatorname{hash}}
\newcommand{\Hmin}{H_\mathrm{min}}
\newcommand{\Hmax}{H_\mathrm{max}}
\newcommand{\id}{\mathbb{I}} 
\newcommand{\idmap}{\mathcal{I}} 
\newcommand{\idk}{\mathbb{U}} 
\newcommand{\inX}{\{0,1\}}
\newcommand{\inYg}{\{0,1\}}
\newcommand{\inYt}{\{2,3\}}
\newcommand{\keyw}{\tau} 
\newcommand{\len}{\operatorname{len}}
\newcommand{\lin}{r} 
\newcommand{\lkey}{\ell_\mathrm{key}}
\newcommand{\map}{\mathcal{M}}
\newcommand{\mPA}{\mathcal{M}_\mathrm{PA}}
\newcommand{\Max}{\operatorname{Max}}
\newcommand{\Min}{\operatorname{Min}}
\newcommand{\no}[1]{#1^c} 
\newcommand{\norm}[1]{\left\lVert#1\right\rVert} 
\newcommand{\optdom}{\mathcal{D}}
\newcommand{\Og}{\Omega_\mathrm{g}} 
\newcommand{\Oh}{\Omega_\mathrm{h}} 
\newcommand{\OPE}{\Omega_\mathrm{PE}} 
\newcommand{\OpPE}{\Omega'_\mathrm{PE}} 
\newcommand{\OpPEnot}{\Omega'^c_\mathrm{PE}} 
\newcommand{\p}{p} 
\newcommand{\pBell}{\mu} 
\newcommand{\pct}{\%} 
\newcommand{\pd}{P} 
\newcommand{\perr}{p_\mathrm{err}}
\newcommand{\pg}{P_\mathrm{guess}} 
\newcommand{\pr}[2][]{\mathsf{P}_{#1}\!\left[#2\right]}
\newcommand{\prnoscale}[2][]{\mathsf{P}_{#1}[#2]}
\newcommand{\pvm}{P} 
\newcommand{\povm}{M} 
\newcommand{\q}{q} 
\newcommand{\rightsemicirc}{\put(1.5,2.5){\oval(4,4)[r]}\phantom{\circ}}
\newcommand{\rot}[1]{R_{#1}} 
\newcommand{\semicset}{S_{\rightsemicirc}}
\newcommand{\str}[1]{\mathbf{#1}} 
\newcommand{\smf}[1]{\vartheta_{#1}} 
\newcommand{\suchthat}{\text{ s.t.}} 
\newcommand{\term}[1]{\textbf{#1}}
\newcommand{\thA}{{\theta_A}}
\newcommand{\thB}{{\theta_B}}
\newcommand{\upto}[1]{\left[#1\right]} 
\newcommand{\Var}{\operatorname{Var}}
\newcommand{\wexp}{w_\mathrm{exp}}
\newcommand{\wtrue}{w_\mathrm{true}}
\newcommand{\Yt}{Y'}
\newcommand{\Ytstr}{{\str{Y}'}}

\newcommand{\ecom}{\eps^\mathrm{com}}
\newcommand{\esound}{\eps^\mathrm{sou}}
\newcommand{\ecorr}{\eps^\mathrm{cor}}
\newcommand{\esecr}{\eps^\mathrm{sec}}
\newcommand{\eh}{\eps_\mathrm{h}}
\newcommand{\ecEC}{\eps^\mathrm{com}_\mathrm{EC}}
\newcommand{\ecPE}{\eps^\mathrm{com}_\mathrm{PE}}
\newcommand{\eEA}{\eps_\mathrm{EA}}
\newcommand{\eIID}{\eps_\mathrm{stat}}
\newcommand{\ePA}{\eps_\mathrm{PA}}
\newcommand{\es}{\eps_s}
\newcommand{\ez}{\tilde{\eps}_s} 

\newcommand{\prnonoise}{\mathsf{P}^{\star}}
\newcommand{\prnew}[2][]{\widetilde{\mathsf{P}}_{#1}\!\left[#2\right]}

\newcommand{\qA}{A}
\newcommand{\qB}{B}
\newcommand{\qE}{E}
\newcommand{\allE}{\mathsf{E}}

\usepackage{pifont}

\newcommand{\xmark}{\ding{55}}

\newtheorem*{remark}{Remark}

\newtheorem{theorem}{Theorem}[chapter]
\newtheorem{lemma}{Lemma}[chapter]
\newtheorem{corollary}{Corollary}[chapter]
\newtheorem{fact}{Fact}[chapter]
\theoremstyle{definition} 
\newtheorem{definition}{Definition}[chapter]


\usepackage{makecell}

\newcolumntype{L}[1]{>{\raggedright\let\newline\\\arraybackslash\hspace{0pt}}m{#1}}
\newcolumntype{C}[1]{>{\centering\let\newline\\\arraybackslash\hspace{0pt}}m{#1}}
\newcolumntype{R}[1]{>{\raggedleft\let\newline\\\arraybackslash\hspace{0pt}}m{#1}}

\newcounter{tabrownumcounter}
\newcommand\tabrownum{\stepcounter{tabrownumcounter}\roman{tabrownumcounter}}

\hypersetup{
linktoc = all, 
colorlinks = true,
citecolor= blue,
urlcolor= blue,
linkcolor = black 
}

\newcommand*{\SavedEqref}{}
\let\SavedEqref\eqref
\renewcommand*{\eqref}[1]{%
  \begingroup
    \protect\hypersetup{
      linkcolor=blue,
    }%
    \SavedEqref{#1}%
  \endgroup
}


\begin{document}

\pagenumbering{gobble}

\newgeometry{margin=25mm}

\begin{titlepage}
\begin{center}

\vspace{15mm}

\Huge
\textbf{Prospects for device-independent quantum key distribution}
    
\vspace{15mm}

\LARGE 
A thesis submitted to attain the degree of 

DOCTOR OF SCIENCES of ETH ZURICH 

(Dr.~sc.~ETH Zurich) 

\vspace{15mm}

presented by

Ying Zhe Ernest Tan

\vfill

accepted on the recommendation of

Prof.~Dr.~Renato Renner, examiner

Prof.~Dr.~Norbert L\"{u}tkenhaus, co-examiner

Prof.~Dr.~Marco Tomamichel, co-examiner 

\vspace{10mm}

2021

\vspace{20mm}
\end{center}

\end{titlepage}

\restoregeometry

\pagenumbering{roman}

{\pagestyle{plain} 

\chapter*{Abstract}

Device-independent quantum key distribution (DIQKD) aims to achieve secure key distribution with only minimal assumptions, by basing its security on the violation of Bell inequalities. While this offers strong security guarantees, it comes at the cost of only being able to certify positive keyrates for devices with low levels of noise, which are challenging to implement experimentally. In recent years, the gap between the theoretical requirements and experimental performance was narrowed by several landmark results, making DIQKD a promising avenue for investigation.
In this thesis, we present security proofs for DIQKD protocols based on various techniques that further improve the keyrates and noise tolerance.

In particular, we derive methods for securely computing the keyrates of a wider variety of DIQKD protocols than previously studied. These methods are also able to account for the techniques of \emph{noisy preprocessing} and \emph{random key measurements}, which can improve the performance of DIQKD protocols. Furthermore, we also investigate the possibility of \emph{advantage distillation} in DIQKD, which refers to using two-way communication in the information-reconciliation step of the protocol. With these approaches, it is possible in principle for the devices in existing loophole-free Bell tests to achieve positive asymptotic keyrates in DIQKD, which is to say that a secure key could be produced given sufficiently many protocol rounds.

This naturally raises the question of precisely how many rounds would be required in order for an experiment of finite length to be considered a secure demonstration of DIQKD. To address this, we also perform a finite-size security analysis for protocols based on combinations of some of these techniques, excluding advantage distillation. The analysis is based on a fairly recent result known as the entropy accumulation theorem, which simultaneously accounts for finite-size effects and non-IID attacks on the protocol. We improve on previous security proofs based on this theorem by using tighter finite-size bounds and incorporating some combinations of the techniques mentioned above. 
Our results indicate that while the required sample sizes are still impractically large for some Bell-test implementations, there is still scope for further improvement, and we propose some further protocol modifications towards this end.

\chapter*{Acknowledgements}

I would like to thank Renato Renner for his guidance and advice over the past years. He has been an accommodating and understanding supervisor, always ready to support me in directions I choose to pursue. 
I also thank co-examiners Norbert L\"{u}tkenhaus and Marco Tomamichel, for taking the time to evaluate my thesis, as well as providing helpful feedback on my work.

Also, I was fortunate to have had the opportunity to work with many collaborators 
(Jean-Daniel Bancal, 
Koon Tong Goh, 
Melvyn Ho,
Srijita Kundu,
Charles C.-W. Lim, 
Ignatius William Primaatmaja, 
Nicolas Sangouard,
Valerio Scarani,
Ren\'{e} Schwonnek, 
Pavel Sekatski, 
Jamie Sikora,
Xavier Valcarce,
Ramona Wolf) 
over the course of my research, and I am indebted to them for many valuable discussions. In particular, I thank Nicolas Sangouard and Charles Lim for extending invitations for the initial visits which led to these fruitful opportunities, and Valerio Scarani for providing me with useful advice. 

I would like to express my appreciation to all members of the QIT group for the enlightening and interesting conversations we have had, and the assistance they have given me. Special thanks to Elisa for accommodating my many requests for help with translations (including the thesis abstract), as well as L\'{i}dia and Joe for advice on living in Z\"{u}rich. 

Last but certainly not least, I thank my family for all the help and encouragement they have given me. I am also very grateful to Srijita for being a constant source of support, and making these past years so much better than they would otherwise have been. I could not have done this without all your help. 

\cleardoublepage

\tableofcontents

\clearpage}

\definecolor{darkmagenta}{rgb}{0.85, 0, 0.45}
\hypersetup{linkcolor = darkmagenta}

\pagenumbering{arabic}

\chapter{Introduction}
\label{chap:intro}

In quantum key distribution (QKD), the goal is to extract a shared secret key from correlations obtained by measuring quantum systems. Device-independent (DI) quantum key distribution is based on the observation that when these correlations violate a Bell inequality, a secure key can be extracted even if the users' devices are not fully characterized --- one can prove security with only minimal assumptions on the device behaviour~\cite{BHK05,PAB+09,Sca12,BCP+14}, hence the term ``device-independent''. By working with fewer assumptions, DIQKD unlocks the prospect of achieving secret key distribution with an unprecedented level of security against potential attacks. It also lies at an intersection between quantum foundations and cryptography, by making use of the foundational concept of Bell inequalities to prove security in a cryptographic task.

The strong security guarantees offered by DIQKD come with the drawback that it can only tolerate lower levels of noise as compared to standard QKD. This makes it challenging to demonstrate DIQKD in practice. However, experimental and theoretical developments in recent years have brought the possibility of such a demonstration closer to fruition.

For instance, on the experimental front, the advent of loophole-free Bell tests~\cite{HBD+15,SMC+15,GVW+15,RBG+17} has shown that laboratory devices are now able to achieve Bell violations under stringent conditions. Since the security of DIQKD rests on Bell violations, this signifies an important landmark on the path towards a DIQKD demonstration. In fact, these experimental implementations were soon used to realize some other device-independent protocols~\cite{LZL+18,ZSB+20,LLR+21,SZB+21,LZL+21}, bringing us a step closer towards DIQKD. 

As for theoretical developments, several important results were derived which helped to narrow the gap between the experimental devices and the theoretical noise tolerance. 
In particular, one such result was the \term{entropy accumulation theorem} (EAT)~\cite{DFR20,DF19}. Earlier DIQKD security proofs either required the assumption that the device behaviour across the protocol rounds is independent and identically distributed (IID)~\cite{PAB+09}, or they achieved lower keyrates as compared to the IID case~\cite{PM13,LPT+13,VV14,NBS+18,JMS20,arx_Vid17}. The EAT was used to show~\cite{ARV19} that for many DIQKD protocols, essentially the same keyrates can be attained against general attacks as in the IID case, hence achieving both high keyrates and strong security statements (i.e.~without requiring an IID assumption). It also provided explicit bounds on the finite-size behaviour, hence concretely addressing the question of what sample size would be required for a secure demonstration of DIQKD.

With these developments in mind, the aim of this work is to describe a variety of techniques that we have studied in order to further improve the keyrates and noise tolerance of DIQKD. 
In this chapter, we shall begin by first laying out the core ideas behind DIQKD.
With this framework in mind, we will be better positioned to describe earlier progress on the topic and the contributions presented in this work, which we shall list at the end of this chapter. 

\begin{remark}
After preparation of this thesis, several experimental demonstrations of DIQKD were announced~\cite{arx_NDN+21,arx_ZLR+21,arx_LZZ+21}, with various advantages and disadvantages in each case. We will not discuss them here, directing the interested reader to those works for more information. This thesis will remain focused on describing a number of proof techniques available for DIQKD, instead of analyzing these experiments in detail.
\end{remark}

\section{DIQKD devices}
\label{sec:introdevices}

The basic elements of DIQKD can be outlined as follows. The honest users (Alice and Bob) each hold a device that has the following functionality: it can receive some share of a quantum state from an external source, accept a classical input from the user, then based on this input value, perform some measurement on the quantum state and return the resulting output. 
Without loss of generality, we can suppose the states are supplied by an adversarial third party Eve, who keeps some side-information about the distributed states.
Usually, the devices would need to be used multiple times sequentially, producing a string of outputs corresponding to some string of inputs. (In some protocols~\cite{JMS20,arx_Vid17}, the entire input string is instead supplied to the devices in a single shot, receiving a corresponding string of outputs. This can be referred to as a \term{parallel-input} scenario.) 
We refer to each usage as a round of the protocol. Alice and Bob then perform some public classical communication, after which they will either generate a shared key that is ``secret'' with respect to Eve, or abort the protocol (if their data indicates they cannot generate a shared secret key).

To draw a comparison, such devices are basically similar to those required for standard QKD protocols (formulated as entanglement-based setups rather than prepare-and-measure setups). In that setting, the aim is to generate a secure key despite the fact that the states may be supplied by an untrusted third party Eve. However, an important assumption in standard QKD is that the measurements performed for a given input are trusted --- for instance, in the entanglement-based version of the~\cite{BB84} protocol, or early security proofs for the~\cite{Eke91} protocol, they must be specific Pauli measurements on qubits. 
In that sense, we can refer to such protocols as \term{device-dependent}, since they rely on conditions about the measurement device behaviour.
The goal of device-independent QKD is to find a way to generate a secure key with even weaker assumptions; namely, one allows the measurements to be (almost) untrusted as well.

\subsection{From Bell inequalities to security}
\label{sec:Bellineq}

To gain some initial insight, it is helpful to first restrict to a simpler ``IID'' situation. In device-dependent QKD, this would basically be the scenario referred to as \term{collective attacks}, in which the quantum state supplied to Alice and Bob is assumed to be IID across all the uses of the devices, 
i.e.~for $n$ uses of the devices, the underlying state 
is of the form $\rho_{\qA \qB}^{\otimes n}$ where the $\qA$ and $\qB$ registers are held by Alice and Bob respectively.\footnote{One could instead consider a more general setting where all parties act on the same Hilbert space, and we only require that Alice and Bob's measurements commute. However, this introduces various complications beyond the scope of this work; see e.g.~\cite{arx_SW08,NPA08,arx_JNV20}.} Eve's side-information is modelled by allowing her to keep an arbitrary extension of that state,
which we can assume to be a purification without loss of generality (if the global state is not pure, we can simply give Eve an additional purifying register). Since all purifications are isometrically equivalent, we can focus specifically on purifications of the form $\rho_{\qA \qB \qE}^{\otimes n}$ where $\rho_{\qA \qB \qE}$ is a purification of $\rho_{\qA \qB}$.
In the DI setting, the choice of how to extend the definition of collective attacks is perhaps not entirely unique. However, we choose to follow the approach used in an early work~\cite{PAB+09}: we define it as imposing the same constraint on the states as in device-dependent QKD, but also the additional constraints that in each round, the measurements act only on the state in that round, and that whenever the same input is supplied, the same measurement is performed. In other words, when Alice supplies some input $x\in\mathcal{X}$ and gets an output $a\in\mathcal{A}$, this is described by a POVM element $\pvm_{a|x}$ (acting on register $\qA$ of that round) that is the same in every round. Similarly, Bob's measurements are described by POVM elements $\pvm_{b|y}$ 
(for output $b\in\mathcal{B}$ given input $y\in\mathcal{Y}$) 
acting on register $\qB$ of that round. 
While we have described these as general POVMs, it is often possible to reduce the analysis to projective measurements; we defer the details to Appendix~\ref{app:proj}.

With this scenario in mind, the goal in DIQKD is to produce a secure key with only minimal assumptions on $\pvm_{a|x}$ and $\pvm_{b|y}$. To see why this might even be possible in the first place, we turn to the notion of Bell inequalities~\cite{Bell64}, which are inequalities regarding the input-output distribution produced by the devices. Still focusing on the collective-attacks setting for now, we can consider a single round and let $\pr{ab|xy}$ denote the probability of the parties getting outputs $a$ and $b$ if inputs $x$ and $y$ respectively are supplied. Given this model, it is easy to see that
\begin{align}
\pr{ab|xy} = \tr{\pvm_{a|x}\otimes\pvm_{b|y} \rho_{\qA \qB}}.
\label{eq:qprobs}
\end{align}
A Bell inequality is an inequality that holds for distributions $\pr{ab|xy}$ admitting a \term{local hidden variable} (LHV) model\footnote{There is some variation in terminology here; for instance, they may be referred to as local-realistic models or simply ``classical'' models, and different meanings have been assigned to the terms in different works. However, this is not the focus of our work.}, meaning that they can be written in the form
\begin{align}
\pr{ab|xy} = 
\sum_{\lambda} \pr{a|x\lambda} \pr{b|y\lambda} \pr{\lambda},
\label{eq:LHV}
\end{align}
for some ``hidden'' discrete random variable $\lambda$. (More generally, one could integrate over a continuous $\lambda$, but this makes no difference as long as the input and output sets are finite, because the set of LHV-compatible distributions is convex.)
An important example of a Bell inequality is the {CHSH inequality}~\cite{CHSH69}, for the case where 
$a,b\in\{-1,+1\}$ and $x,y\in\{0,1\}$:
\begin{align}
\expval{A_0 B_0} + \expval{A_0 B_1} + \expval{A_1 B_0} - \expval{A_1 B_1} \leq 2,
\label{eq:CHSHclass}
\end{align}
where 
$\expval{A_x B_y} \defvar \sum_{ab} ab\pr{ab|xy}$ is the correlation between the outputs of measurements $x$ and $y$.
In some contexts, it is easier to label the outputs as $\{0,1\}$ rather than $\{-1,+1\}$ --- we shall specify this when necessary. For brevity, we shall refer to the LHS of~\eqref{eq:CHSHclass} as the \term{CHSH value}.

Famously, Bell inequalities can be violated by the distributions produced by some entangled quantum states and measurements, proving that such distributions cannot be explained by LHV models --- this phenomenon is often referred to as \term{quantum nonlocality}. For the purposes of DIQKD, however, the critical fact of interest is that only entangled states can violate Bell inequalities. Hence if the honest parties can certify that $\pr{ab|xy}$ violates some Bell inequality, then they know $\rho_{\qA\qB}$ must be entangled, 
\emph{regardless of what the measurements were}.
This suggests some possibility of generating a shared secret key from $\rho_{\qA\qB}$ (putting aside the question of whether all entangled states allow key distillation~\cite{GW00}).

Alternatively, there is another perspective that may be of interest from a foundational point of view. It is based on the observation that 
distributions of the form~\eqref{eq:qprobs} produced by quantum theory are \term{non-signalling} (NS), in the following sense. Focusing on e.g.~Alice, for an arbitrary $\pr{ab|xy}$ we can compute the marginal distribution $\pr{a|xy} = \sum_b \pr{ab|xy}$ of Alice's output, which \textit{a priori} could depend on both inputs $xy$. However, the NS conditions are the requirement that this distribution is independent of Bob's input $y$ --- put another way, this means there is a well-defined distribution for Alice's output given her input alone, which we can denote as $\pr{a|x}$.\footnote{Notice that for quantum distributions in particular (as in~\eqref{eq:qprobs}), we have $\pr{a|x}=\tr{\pvm_{a|x} \rho_{\qA} }$, i.e.~this is precisely the distribution given by quantum theory for Alice's measurements on her reduced state.} Analyzing Bob's situation similarly, the NS conditions can be stated as:
\begin{align}
\begin{gathered}
\forall a,x,\quad \pr{a|x} \defvar \pr{a|xy} = \sum_b \pr{ab|xy} \text{ is ``well-defined'' (i.e.~independent of $y$)}, \\
\forall b,y,\quad \pr{b|y} \defvar \pr{b|xy} = \sum_a \pr{ab|xy} \text{ is ``well-defined'' (i.e.~independent of $x$)}.
\end{gathered}
\label{eq:NS}
\end{align}
Returning to the topic of DI cryptography, roughly speaking it can be proven that if $\pr{ab|xy}$ is produced by an underlying non-signalling model (such as quantum theory) and also violates a Bell inequality, then the outputs must contain some ``fundamental randomness'' that cannot be predicted by an adversary. This serves as the first step towards producing a secret key; furthermore, since the proof only relies on the NS conditions, it also covers any potential post-quantum theories\footnote{There are also some other approaches~\cite{CR11} for arguing that no post-quantum theory satisfying some plausible postulates can break the security of quantum cryptography, in some sense. However, interpreting these results can be somewhat delicate, and we will not aim to do so here.} satisfying those conditions. However, there are some subtleties to be aware of, and we defer this discussion to Appendix~\ref{app:foundations} (see also~\cite{woodheadthesisnodoi}).

Hence the idea of a DIQKD protocol can be roughly described as having Alice and Bob collect input-output statistics from their devices in some rounds to estimate the probabilities $\pr{ab|xy}$ (with the IID assumption, these estimates can be arbitrarily accurate given enough rounds), 
then checking whether this distribution violates a Bell inequality ``strongly enough'' for them to generate a secret key. 
Of course, this is still a rather incomplete description (for instance, we have not yet explained exactly how strong a Bell violation is needed to certify secret key generation), but we shall defer further details until
Sec.~\ref{sec:protsketch}--\ref{sec:proofsketch}.

For efficiency in the protocol, Alice and Bob may not need to estimate all the individual probabilities, but rather just one or more ``Bell parameters'', e.g.~the CHSH value.
We can write each such parameter in the form $\constr_j = \sum_{abxy} c^{(j)}_{abxy} \pr{ab|xy}$ for some coefficients $c^{(j)}_{abxy}\in\mathbb{R}$.\footnote{In principle, one could also allow these coefficients to be complex, but this is inconvenient in some of our later analysis, e.g.~the Lagrange dual in Chapter~\ref{chap:singlernd}. Hence we focus only on the real-valued case, since in any case, any Bell parameter based on complex-valued
coefficients could be equivalently described as a pair of Bell parameters with real-valued coefficients instead. Another more general possibility would be to consider nonlinear functions of $\pr{ab|xy}$, but we do not do so in this work.} In terms of the measurements $\pvm_{a|x},\pvm_{b|y}$, this can be viewed as measuring the expectation values of some (hermitian) ``Bell observables'',
\begin{align}
\Gamma_j(\pvm_{a|x},\pvm_{b|y})\defvar \sum_{abxy} c^{(j)}_{abxy} \pvm_{a|x}\otimes\pvm_{b|y}.
\label{eq:bellop}
\end{align}
A special case would be the scenario of \term{nonlocal games}, where Alice and Bob use some specified input distribution and they are considered to win the game if the input-output combinations $abxy$ satisfy some predicate $V(a,b,x,y)$. In that case, the winning probability of such a game is indeed a parameter of the form $\sum_{abxy} c_{abxy} \pr{ab|xy}$, for some coefficients $c_{abxy}$ determined by the input distribution and the predicate $V(a,b,x,y)$.

Going beyond collective attacks, one would aim to prove security without the IID assumption on the state and measurements. In device-dependent QKD, this would be referred to as the scenario of \term{coherent attacks}. For DIQKD, we take this to mean that in each round, the state Eve distributes in each round can have arbitrary structure; furthermore, the measurement performed by each device can act not only on the state it has just received, but also on registers storing (possibly quantum) memory from previous rounds. (If we work in the parallel-input setting instead,
then there is no notion of ``previous'' rounds
--- Eve simply distributes a ``large'' state to the devices in a single shot, and each device performs a ``large'' measurement on its share of the state, which is allowed to depend on the entire input string supplied to it by the corresponding honest party.)
This poses a number of challenges for security proofs; e.g.~we do not even have a specific (single-round) distribution $\pr{ab|xy}$ to speak of that describes all the rounds.
Fortunately, these obstacles can be overcome to some extent, as we shall outline in Sec.~\ref{sec:sketchbeyond}. More subtly, though, the coherent-attacks setting in DIQKD introduces a problem regarding device reuse~\cite{BCK13}, which we shall soon discuss.

\begin{remark}
Some care needs to be taken regarding the collective-attacks assumption --- if one naively designs a protocol taking that assumption for granted, the resulting protocol can be completely insecure against trivial non-IID attacks. For instance, consider a protocol where the probabilities $\pr{ab|xy}$ are only estimated from the first $k$ rounds, with the remaining rounds used for key generation. This can be proven secure (for suitable parameter choices) under the collective-attacks assumption; however, it is obviously insecure in practice since Eve could just behave honestly on the first $k$ rounds and send completely insecure states in the remaining rounds. In light of this, it could be said that the appropriate way to treat the collective-attacks assumption would be to \emph{first} design a protocol, then ``reasonably conjecture'' that collective attacks are (near-)optimal against that protocol, in which case one could proceed with a security proof based on that assumption. For instance, if we instead use a random subset of the rounds (not publicly announced beforehand) to estimate $\pr{ab|xy}$, it is at least less obvious how to outperform collective attacks.
\end{remark}

\subsection{Required assumptions}
\label{sec:assumptions}

While our above discussion indicates that the measurements 
can be ``untrusted'', an important caveat is that we do still need to enforce some constraints on them. Starting with those that were already implicit in the above description, we have assumed that the quantum registers held by Alice and Bob's devices have a tensor-product separation, and the measurements act only on their respective Hilbert spaces (and local memory registers, when coherent attacks are allowed). Also, we have imposed that in each round, the measurement of Alice's device cannot depend on Bob's choice of input, and vice versa (this was implicit in the notation $\pvm_{a|x},\pvm_{b|y}$ itself in the collective-attacks case).\footnote{For coherent attacks, whether each party's measurement can depend on the other party's input in \emph{previous} rounds would depend on the nature of the security proof --- the main proof approach used in this work, based on the EAT, indeed allows for this in a certain sense. In contrast, the parallel-input scenarios of~\cite{JMS20,arx_Vid17} require that the output string does not depend on any part of the other party's input string, so an implementation of those protocols would have to justify that assumption by spacelike separation or shielding. The former would seem to require distributing and/or storing a large number of entangled states simultaneously, and is hence currently impractical.} In the context of loophole-free Bell tests~\cite{HBD+15,SMC+15,GVW+15,RBG+17}, this is enforced by having spacelike separation between the measurements. However, that is a very stringent requirement, and in the context of DIQKD, it may be worth considering whether there are easier ways to enforce this condition --- for instance, by implementing appropriate ``shielding'' measures on the devices, to prevent them from leaking unwanted information. (Some security proofs have been developed that allow limited leakage of the inputs~\cite{SPM13,arx_JK21}, but we will not discuss them in detail here.)

In fact, the question of shielding the devices ties in to another vital assumption, which prevents us from claiming we can prove security for \emph{completely} untrusted devices. Namely, we also need to impose the constraint that the outputs of the devices are not simply leaked to Eve (for instance, if the devices were arbitrarily adversarial, they could simply contain transmitters to broadcast this information). Hence the devices do need to be sufficiently characterized to certify that there is no such leakage. As with the inputs, it may be that this could be achieved using some shielding measures, and if so, it may be expedient to use these measures to justify both of these ``no-leakage'' assumptions. 

However, when coherent attacks are allowed, the issue of ensuring that the outputs are not leaked in \emph{any} sense is more subtle than it appears. In~\cite{BCK13}, a \term{memory attack} was constructed based on this point. The relevant scenario is as follows: suppose that the devices are used for an instance of a DIQKD protocol, and a secure key has been generated. Now, if the devices are reused for a second instance of the protocol (or for that matter, some other protocol), it could be possible that the device outputs in this second instance are correlated with the device outputs in the first instance due to memory effects, and thus (indirectly) with the key generated in the first instance. Since the second instance would typically involve public communication computed based on the device outputs, this can hence leak information about the first instance's key. The attack in~\cite{BCK13} is something of a ``maximally adversarial'' model which leaks specific key bits via the public communication, but in a more general information-theoretic sense, any ``non-negligible'' correlations between the first instance's key and the second instance's public communication could cause some problems in formalizing the implications of the security definitions (we discuss this in more detail at the end of Sec.~\ref{sec:securitydef}).

In order to avoid leakage in this form, one option would be to impose a condition along the lines of requiring that in each instance of a DIQKD protocol, the devices do not measure any registers containing any form of ``secret data'' from previous protocols, such as the outputs or the keys.\footnote{An extreme way to ensure this would be to enforce that the devices are never used again after completing a DIQKD protocol instance, preventing any memories they might retain from being accessed in the future, but this seems impractical.} This would likely also enforce that there are no correlations between the outputs across different protocol instances, which is perhaps somewhat in tension with the idea that we allow the devices to retain memory \emph{within} each instance of the protocol (when allowing coherent attacks in the DIQKD sense). However, one could perhaps argue that given enough time between protocol instances, devices that are not ``actively adversarial'' (informally speaking) would not retain any significant memory effects across instances. Alternatively, some ideas have been proposed to work around this problem by using specialized encodings of the public communication~\cite{MS14}. 

\begin{remark}
This ``leakage via memory effects'' issue only becomes possible due to the combination of (1) allowing coherent attacks, and (2) working in the DI setting for QKD. If we restrict to collective attacks in DIQKD, recall that this inherently involves assuming that the device measurements are acting on the states they have received in that round, which are supplied by Eve and can thus be taken to be independent of any data that Eve does not already know. On the other hand, if we allow coherent attacks but stick to the ``basic'' form of device-dependent QKD, such memory effects are not an issue because the measurements are taken to be fully characterized; in particular, this again usually means they are assumed to genuinely act on the states they have just received in each round. In some ways, this suggests that merely the notion of assuming the measurements act on specific registers may be stronger than it appears, because it could implicitly be claiming that some memory effects are being disallowed. This issue is discussed further in a writeup following this thesis, where we note that allowing memory effects may also invalidate some standard proof techniques for device-dependent QKD.
\end{remark}

In light of the above points, it is worth clarifying that DIQKD currently seems unlikely to guarantee security against truly ``adversarially designed'' devices. Rather, one could take the perspective that it is intended for use with devices that are provided by a trustworthy source, but with less stringent requirements for characterizing or certifying the devices, 
in that one does not have to ensure the devices are performing specific measurements.
For that matter, the assumptions listed above for DIQKD can be viewed as being strictly weaker than those that would be involved in ensuring that the measurements are fully characterized, i.e.~they would have to be checked anyway as part of the latter. DIQKD could also be viewed as helping to improve resistance to QKD hacking techniques~\cite{FQT+07,GLL+11,JSK+15}, some of which (on an abstract level) can be viewed as forcing the devices to perform measurements other than the intended ones. On the other hand, some hacking techniques instead basically have the goal of forcing the devices to leak information about the inputs and/or outputs (for instance, some forms of Trojan-horse attacks~\cite{JSK+15}), and these would still be an issue to be wary of in DIQKD.

Before moving on, we state for completeness a few assumptions that are required for the ``classical'' parts of a DIQKD protocol. Following the presentation in~\cite{ARV19}, these are:
\begin{itemize}
\item The honest parties can (locally) generate trusted randomness that is independent of all other registers in the protocol. (This may be used for choosing their inputs, as well as some other classical processing steps.)
\item The honest parties have trusted post-processing units to perform classical computations.
\item The honest parties have an authenticated public channel to perform all classical communication.
\end{itemize}
There are some possibilities for partially relaxing the first assumption~\cite{Hall11,SPM13,KA20,arx_JK21}, but we leave this for other work. 

Finally, in this work we will assume that all the quantum systems the devices act on are finite-dimensional. However, we allow the dimension to be arbitrary, consistent with the idea that the measurements are ``uncharacterized''. 
This is just to ensure validity of the theorems we use, some of which have not been explicitly proven for  infinite-dimensonal systems, though in any case it seems likely that they should generalize in some form to that scenario.
(While separations between finite- and infinite-dimensional quantum behaviour have been found in the context of Bell violations~\cite{arx_JNV20}, it currently seems implausible that such differences could be exploited for explicit attacks on DIQKD protocols.)

\section{Other DI protocols}
\label{sec:otherDI}

Apart from DIQKD, there are a variety of other protocols that can be performed in a device-independent manner. 
Here we shall
briefly highlight a few of them. To begin with, there is the task of \term{device-independent randomness expansion} (DIRE)~\cite{CK11,arx_colbeckthesis}, where the goal is to start with a short random seed and produce a longer string of random bits, which is secret with respect to some adversary's side-information. Similar to DIQKD, the idea here is that the short random seed can be used to choose inputs for devices of the same form as in DIQKD. If the results violate a Bell inequality, then outputs should have some ``private randomness'' by similar arguments as before, which can (with some classical processing) be extracted into a secret key.

However, DIRE differs from DIQKD in several respects. For instance, DIQKD aims to generate a \emph{shared} secret key between Alice and Bob, but in DIRE the intent is only to generate a single secret key. In line with this, in DIRE it is usually assumed that Alice and Bob have a private communication channel rather than a public one; alternatively, one could take the view that both devices are placed ``in the same lab'' (with appropriate shielding from each other) and the honest user(s) can directly read off the outputs of both devices. 
Also, in DIQKD the honest parties have unlimited sources of local randomness, whereas in DIRE they are limited to the seed randomness, which places some constraints on the input distributions they can use.

Apart from DIRE, there is the slightly different task of \term{device-independent randomness generation}, also called device-independent random number generation (DIRNG). This is very similar to DIRE, but instead of a (private) short random seed, the honest parties are given an unbounded supply of public randomness that is assumed to be independent of the devices, and they aim to generate a secret (i.e.~private) key. This is usually an easier task than DIRE, since there is no limit on the amount of starting randomness they can use. 

Another related but distinct task is \term{device-independent randomness amplification} (DIRA)~\cite{CR12,BRG+16,KA20}, where the honest parties are instead given an unbounded supply of ``imperfect'' randomness,  
in the sense that it may be correlated
with the state supplied to the devices. As compared to DIRE and DIRNG, the analysis of DIRA can potentially be more challenging, because the most basic (and thus better-studied) setting for Bell nonlocality is where the inputs are independent of the state, as is implicit in our equations~\eqref{eq:qprobs}--\eqref{eq:LHV}. While there are indeed some results studying Bell nonlocality in scenarios where correlations are allowed between the state and inputs~\cite{Hall11,KHS+12,PK13,PG16}, additional work is needed to convert this notion to a security proof for DIRA~\cite{CR12,BRG+16,KA20}.

The above tasks are perhaps the most well-studied DI protocols. There are a variety of other protocols that have been proposed, but we shall not attempt to present the full spectrum here. Instead, we simply list a few examples: DI protocols have been developed (with partial security) for weak coin flipping~\cite{SCA+11}, bit commitment~\cite{ACK+14}, random access codes~\cite{CKK+16}, weak string erasure~\cite{KW16}, and XOR oblivious transfer~\cite{arx_KST20}. Also, some varieties of ``certified deletion'' tasks have been studied~\cite{FM18,arx_KT20}, where the goal is for one party to prove that they have deleted some information --- intuitively, this is impossible using classical information because it can always be copied, but the no-cloning property of quantum states suggests the possibility of achieving this with quantum information. One point in common for these protocols is that they are in the setting of two-party cryptography (except for~\cite{arx_KT20}, which additionally includes an adversary Eve), where there are only two parties but either of them can potentially be dishonest. 

Finally, it is worth mentioning the concept of \term{self-testing}~\cite{PR92,BMR92,MY98} (also referred to as \term{rigidity}), which is the remarkable fact that some nonlocal distributions $\pr{ab|xy}$ can only be attained by an essentially unique state (up to local isometries and ancillas), and in some cases a similar statement holds for the measurements as well. ``Robust'' versions of these results are also known, which allow some deviation from the specified distribution. This is a very powerful statement on a theoretical level, since it certifies a particular state and/or measurements based only on the distribution $\pr{ab|xy}$, assuming that distribution can be estimated accurately. However, some self-testing statements are not entirely ``operational'' (though a recent work~\cite{arx_CHM21} provides a framework to address this), and converting self-testing results into security proofs for concrete protocols tends to result in rather loose final bounds~\cite{FM18,arx_KST20}. Still, it can serve as a guide to gain some initial intuition for prospective DI protocols.

\section{Previous work}
\label{sec:history}

We now present a rough overview of past developments in the topic of DIQKD, though it would be impractical to cover all results on this topic, and hence we may focus more on some specific works.
The first proposal which sparked interest in the concept of DIQKD could be considered to be~\cite{BHK05}, which studied the scenario where Eve is restricted only by the non-signalling principle. (Alternatively, one could view the notion of self-testing initially developed in~\cite{PR92,BMR92,MY98} as an important precursor. However, while this concept plays a part in some more recent DI security proofs~\cite{FM18,arx_KST20}, those early results on self-testing did not appear to initiate much further development at the time into full DIQKD protocols.) Subsequently, a security proof was developed in~\cite{ABG+07,PAB+09} 
against quantum adversaries restricted to collective attacks. 
Under that assumption, the authors derived a closed-form bound on the asymptotic keyrate, which was tight for the protocol in question.

The analysis for coherent attacks proved to require substantially more effort. For instance, a security proof for this scenario was developed in~\cite{VV14} (and another in~\cite{LPT+13} for a modified protocol), but the asymptotic keyrates obtained from these approaches are lower than in the collective-attacks scenario. Also, we remark that it seems that the security proofs for DIRNG in~\cite{PM13,NBS+18}
should basically apply to DIQKD as well, though they would again yield a lower asymptotic keyrate compared to collective attacks.\footnote{Furthermore, while this approach does not rely on an IID assumption, there are some technical issues regarding whether it truly covers coherent attacks, in that the authors note it may require an assumption that Eve's state has ``decohered'' to classical side-information by the time the protocol is implemented.} The development of the entropy accumulation theorem in~\cite{DFR20} was what allowed for the first security proof~\cite{ARV19} that achieved the same asymptotic keyrates against coherent attacks as for collective attacks (in protocols where the devices are used sequentially and error correction is performed using one-way communication). Shortly afterwards, a somewhat similar result known as the \term{quantum probability estimation} (QPE) framework~\cite{KZB20,ZKB18,ZFK20} was developed for DIRNG/DIRE, though it does not appear to have been applied yet to DIQKD --- we do not focus on this approach here, leaving it for future work.

The parallel-input scenario (for coherent attacks) appears to be the most challenging so far, which is perhaps to be expected since it allows very general dependencies\footnote{However, it would not be strictly accurate to say that it includes the sequential scenario as a special case, because in the latter we have the aforementioned possibility of past inputs being communicated between the devices after each round, which is not captured in the parallel-input scenario.} across the protocol rounds. Security proofs for this setting were developed in~\cite{JMS20,arx_Vid17}, but they also did not achieve the same asymptotic keyrates as compared to collective attacks, and it is currently an open question whether this is even possible in the parallel-input setting. We remark, however, that this proof approach allowed for an extension to a more general DIQKD scenario where the devices are allowed to leak some (fairly small) amount of information about the inputs~\cite{arx_JK21}. 

\begin{remark}
In device-dependent QKD, the question of security proofs against coherent attacks (for protocols with permutation symmetry, at least) can be resolved by e.g.~applying suitable \term{de Finetti theorems}~\cite{rennerthesis} or the \term{postselection technique}~\cite{CKR09}, which allow reductions to collective attacks. However, the existing versions of such theorems have a dependence on the system dimensions, which is a problem in DIQKD where the system dimensions are unbounded. Hence different techniques were developed for DIQKD, though there is still ongoing work~\cite{AR15,arx_JT21} on the possibility of de Finetti theorems suitable for DIQKD. (Another proof approach~\cite{TL17} for device-dependent QKD is based on \term{entropic uncertainty relations} for smoothed entropies, which yield tighter finite-size bounds. However, this is also not straightforward to generalize to DIQKD, though the approach in~\cite{LPT+13} is based on this underlying idea.)
\end{remark}

While this covers the developments in proof techniques against coherent attacks, another important question is finding protocols that achieve better asymptotic keyrates in the first place, even if one has to make the collective-attacks assumption in doing so.
The early work~\cite{PAB+09} obtained a tight bound on the keyrate for their protocol specifically (which was based only on the CHSH inequality and did not involve various possible techniques for improving the keyrates), but there was little progress in deriving similar bounds for more general situations. 
One general approach~\cite{PAM+10,NPS14,BSS14} (developed for DIRNG rather than DIQKD, but the approach works for both) was based on bounding some guessing-probability values.
However, this approach has the drawback that the resulting bounds on the asymptotic keyrate are not tight, at least for the scenario of collective attacks.
(Though the corresponding state of progress in non-IID security proofs should be kept in mind --- the security proofs in~\cite{PM13,NBS+18} yield asymptotic keyrates that can be expressed in terms of the guessing probability, hence it was fairly natural to study it.) Other than this, progress on this topic appears to have been fairly limited,
apart from the more specialized quantities that were studied in the security proofs of~\cite{LPT+13,VV14,JMS20,arx_Vid17} against coherent attacks, which did not yield very tight bounds.

From the experimental perspective, the realization of loophole-free Bell tests~\cite{HBD+15,SMC+15,GVW+15,RBG+17} could be said to have unlocked the first core requirement for DI protocols, i.e.~achieving a Bell violation. (Though as previously mentioned, in DI protocols it may not be strictly necessary to close the signalling loophole via spacelike separation, so in fact there may be some room for allowing more relaxed requirements on this point.) By combining this milestone with the tighter finite-size keyrate bounds given by the EAT or QPE frameworks, experimental demonstrations of DIRNG/DIRE were subsequently achieved~\cite{LZL+18,ZSB+20,LLR+21,SZB+21,LZL+21}. 
There have also been various proposals for techniques to further improve the performance of Bell-test implementations~\cite{MSS20}.
However, DIQKD has thus far remained slightly out of reach (we give a quick intuitive explanation for this in Sec.~\ref{sec:sketchDIRE}, based on the asymptotic keyrates).
Our goal in this work is to develop methods to try to overcome this difficulty.

\begin{remark}
Recently, another topic of interest has been the question of \emph{upper} bounds on DIQKD keyrates~\cite{KWW20,arx_WDH19,AL21,CFH21,FBL21,arx_KHD21}, including examples of distributions $\pr{ab|xy}$ that violate a Bell inequality but for which the DIQKD keyrate is provably zero for a very large class of possible protocols~\cite{FBL21}. However, we will not attempt to cover this subject in this work.
\end{remark}

\section{Current contributions}

Having established a picture of some of the existing results in DIQKD, we now turn to the contributions that we shall present in the subsequent chapters. 
As a preliminary, in Chapter~\ref{chap:overview} we first give an overview of the important concepts in DIQKD, including an outline of the typical protocol structure and some description of how to construct a security proof. 
We will also introduce some techniques that improve the keyrates and noise tolerances of DIQKD, such as \term{noisy preprocessing}, \term{random key measurements}, and \term{advantage distillation}. Some of these techniques were previously studied in device-dependent QKD, but corresponding proofs for DIQKD were only achieved fairly recently (beginning with the works~\cite{HST+20,SGP+21,TLR20}).
We do not present new results in this chapter, apart from the brief overview of those techniques.

With this framework in mind, in Chapter~\ref{chap:singlernd} we describe several methods to bound the asymptotic keyrates, with the goal being to obtain bounds that are tighter than those based on guessing probability, but also applicable to more general scenarios than just the CHSH-based protocol of~\cite{PAB+09}, including the possibility of incorporating noisy preprocessing and random key measurements. While none of these methods simultaneously achieve the twin goals of tight bounds and applicability to all nonlocality scenarios, they do offer various improvements over the previous approaches, which we shall describe. The results in this chapter are based on the approaches we derived in~\cite{TSG+21,arx_TSB+20}, as well as a short summary of our results from~\cite{HST+20,SBV+21,SGP+21}. (We also briefly describe some independent works~\cite{BFF21,WAP21,arx_BRC21,arx_BFF21,arx_MPW21} with similar aims, some of which achieved tighter bounds.) 

Next, in Chapter~\ref{chap:finite} we present a finite-size security proof based on the EAT, as well as some detailed analysis of how it compares to the results for collective attacks. While the former was already studied in~\cite{ARV19} and our approach is essentially similar, our contribution is to somewhat modify the analysis in order to relax the theoretical guarantees required in the error-correction step. Furthermore, our approach accounts for noisy preprocessing and random key measurements, as well as achieving tighter finite-size bounds by using improved versions of the EAT~\cite{DF19} and some statistical bounds~\cite{LLR+21}. We also present some potential protocol modifications which may improve the keyrates. 
Several of the results in this chapter and the previous one are also relevant for DIRNG/DIRE. 
The results in this chapter are based on~\cite{arx_TSB+20}.

In Chapter~\ref{chap:AD}, we turn to the topic of advantage distillation in DIQKD. 
This refers to the use of two-way communication in place of the one-way error-correction step in typical DIQKD protocols.
As compared to one-way error correction, much fewer proof techniques have been developed in this setting --- for instance, all of the approaches we previously described for analyzing coherent attacks do not seem to be easily applicable when advantage distillation is considered. Therefore, in this chapter we restrict ourselves to the collective-attacks model only, and prove some results in this setting. Specifically, we derive a sufficient condition for the security of a particularly prominent advantage distillation protocol, and compare the resulting noise thresholds against previous ones.
This chapter is based on our results in~\cite{TLR20}.

Finally, we conclude in Chapter~\ref{chap:conclusion} with some discussion of the prospects for DIQKD using these techniques, as well as some further questions that should be worth pursuing.

\chapter{Conceptual overview}
\label{chap:overview}

We shall now present a broad overview of the overall concepts and framework required for DIQKD. This is intended as a somewhat pedagogical description, which may be helpful for better understanding of the results in later chapters. The organization of this chapter is as follows. After listing basic notation and definitions in Sec.~\ref{sec:notation}, we shall introduce in Sec.~\ref{sec:minmaxent} some important entropic quantities and their operational implications. In Sec.~\ref{sec:protsketch}, we shall give a rough outline of typical DIQKD protocols. This lets us lay out in Sec.~\ref{sec:securitydef} the security definitions we shall use, and in Sec.~\ref{sec:proofsketch} we sketch how to prove that a DIQKD protocol satisfies these definitions. Finally, in Sec.~\ref{sec:sketchDIRE} we briefly comment on how these proof techniques apply to DIRNG and DIRE.

\section{Notation and definitions}
\label{sec:notation}

\begin{table}
\caption{List of notation}\label{tab:notation}
\def\arraystretch{1.5} 
\setlength\tabcolsep{.28cm}
\begin{tabular}{c l}
\toprule
\textit{Symbol} & \textit{Definition} \\
\toprule
$\log$ & Base-$2$ logarithm \\
\hline
$H(\cdot)$ & Base-$2$ von Neumann entropy \\
\hline
$D(\cdot \Vert \cdot)$ & Base-$2$ quantum relative entropy \\
\hline
$\norm{\cdot}_p$ & Schatten $p$-norm \\
\hline
$\left|\cdot\right|$ & Absolute value of operator; $\left|M\right| \defvar \sqrt{M^\dagger M}$ \\
\hline
$X\geq Y$ (resp.~$X>Y$) & $X-Y$ is positive semidefinite (resp.~positive definite)\\
\hline
$\dop{=}(A)$ (resp.~$\dop{\leq}(A)$) & Set of normalized (resp.~subnormalized) states on register $A$ \\
\hline
$\idk_A$ & Maximally mixed state on register $A$ \\
\hline
$\delta_{j,k}$ & Kronecker delta \\
\hline
$\floor{\cdot}$ (resp.~$\ceil{\cdot}$) & Floor (resp.~ceiling) function \\
\hline
$\upto{n}$ & Indices from $1$ to $n$, i.e.~$\{1,2,\dots,n\}$ \\
\hline
$A_{\upto{n}}$ & Registers $A_1 \dots A_n$ \\
\hline
$\no{\Omega}$ & Complement (i.e.~negation) of an event $\Omega$ \\
\toprule
\end{tabular}
\def\arraystretch{1}
\end{table}

We list some basic notation in Table~\ref{tab:notation}.
Apart from the notation in 
that table,
we will also need to use some other concepts, which we shall define below, and briefly elaborate on in some cases. 
In this work, we will assume that all systems are finite-dimensional, but we will not impose any bounds on the system dimensions unless otherwise specified. 

\begin{definition}\label{def:freq}
(Frequency distributions) For a string $\str{z}\in\mathcal{Z}^n$ on some alphabet $\mathcal{Z}$, $\freq_{\str{z}}$ denotes the following probability distribution on $\mathcal{Z}$:
\begin{align}
\freq_{\str{z}}(z) \defvar \frac{1}{n} \sum_{j=1}^{n} \delta_{z,z_j}.
\end{align}
\end{definition}

\begin{definition}
(2-universal hashing) A \term{2-universal family of hash functions} is a set $\mathcal{H}$ of functions from a set $\mathcal{X}$ to a set $\mathcal{Y}$, such that if $h$ is drawn uniformly at random from $\mathcal{H}$, then
\begin{align}
\pr{h(x) = h(x')} \leq \frac{1}{|\mathcal{Y}|} \qquad \forall x\neq x'.
\label{eq:2universal}
\end{align}
\end{definition}
Qualitatively, 2-universal hashing is a procedure which is intended to have some properties similar to choosing a random function $\mathcal{X} \to \mathcal{Y}$ to produce the hash. Specifically,~\eqref{eq:2universal} is the statement that the probability of two distinct inputs hashing to the same value (i.e.~having a hash collision) is no higher than that of the latter process. 2-universal hashing plays an important role in several steps of QKD protocols --- most straightforwardly, it is used to verify whether the honest parties have obtained matching keys; however, it is also used at other points, which we shall later explain. 

\newpage
\begin{definition}
A state $\rho \in \dop{\leq}(CQ)$ is said to be a \term{classical-quantum state} (cq state for short) if it is in the form 
\begin{align}
\rho_{CQ} = \sum_c \alpha_c \pure{c} \otimes \sigma_c,
\label{eq:cq}
\end{align}
for some normalized states $\sigma_c \in \dop{=}(Q) $ and weights $\alpha_c \geq 0$.
Analogously, we can define states that are qc, ccq, cqq and so forth. It may be convenient to absorb the weights $\alpha_c$ into the states $\sigma_c$, writing them as subnormalized states $\omega_c = \alpha_c\sigma_c \in \dop{\leq}(Q)$ instead. 
\end{definition}

\begin{definition}\label{def:cond}
(Conditioning on classical events) For a classical-quantum state $\rho \in \dop{\leq}(CQ)$ written in the form
$\rho_{CQ} = \sum_c \pure{c} \otimes \omega_c$ 
for some $\omega_c \in \dop{\leq}(Q)$,
and an event $\Omega$ defined on the register $C$, we will define the following ``conditional states'':
\begin{align}
\rho_{\land\Omega} \defvar \sum_{c\in\Omega} \pure{c} \otimes \omega_c, \qquad\qquad \rho_{|\Omega} \defvar \frac{\tr{\rho}}{\tr{\rho_{\land\Omega}}} \rho_{\land\Omega} = \frac{
\sum_{c} \tr{\omega_c}
}{\sum_{c\in\Omega} \tr{\omega_c}} \rho_{\land\Omega} .
\end{align}
We informally refer to these states as the subnormalized and normalized conditional states respectively (the latter is perhaps a slight misnomer if $\tr{\rho}<1$, but this issue does not arise in this work). The process of taking subnormalized conditional states is commutative and ``associative'', in the sense that for any events $\Omega,\Omega'$ we have $(\rho_{\land\Omega})_{\land\Omega'} = (\rho_{\land\Omega'})_{\land\Omega} = \rho_{\land(\Omega\land\Omega')}$; hence for brevity we will denote all of these expressions as
\begin{align}
\rho_{\land\Omega\land\Omega'} \defvar (\rho_{\land\Omega})_{\land\Omega'} = (\rho_{\land\Omega'})_{\land\Omega} = \rho_{\land(\Omega\land\Omega')}.
\end{align}
On the other hand, some disambiguating parentheses are needed when combined with taking normalized conditional states.
\end{definition}

In light of the preceding two definitions, it is reasonable to write a normalized cq state $\rho \in \dop{=}(CQ)$ in the form
\begin{align}
\rho_{CQ} = \sum_c \pr{c} \pure{c} \otimes \rho_{Q|c},
\label{eq:cqnorm}
\end{align}
where $\pr{c}$ is the probability distribution describing $C$, and $\rho_{Q|c}$ can indeed be interpreted as the normalized conditional state on $Q$ corresponding to the register $C$ taking value $c$, i.e.~$\rho_{Q|c} = \tr[C]{\rho_{|\Omega}}$ where $\Omega$ is the event $C=c$.

\begin{definition}
(Distinguishability measures) For $\rho,\sigma \in \dop{=}(A)$, the \term{trace distance} and \term{fidelity} are respectively
\begin{align}
d(\rho,\sigma) \defvar \frac{1}{2}\norm{\rho-\sigma}_1 , \qquad 
F(\rho,\sigma) \defvar \norm{\sqrt{\rho}\sqrt{\sigma}}_1 .
\end{align}
For $\rho,\sigma \in \dop{\leq}(A)$, the \term{generalized fidelity} is
\begin{align}
\gF(\rho,\sigma) \defvar \norm{\sqrt{\rho}\sqrt{\sigma}}_1 + \sqrt{(1-\tr{\rho})(1-\tr{\sigma})},
\end{align}
and the \term{purified distance} is $\pd(\rho,\sigma)\defvar\sqrt{1-\gF(\rho,\sigma)^2}$. 
If either $\rho$ or $\sigma$ is normalized, the fidelity and generalized fidelity are equal.
\end{definition}

The trace distance and fidelity satisfy the \term{Fuchs--van de Graaf inequalities}, 
\begin{align}
d(\rho,\sigma) + F(\rho,\sigma) \geq 1, \qquad d(\rho,\sigma)^2 + F(\rho,\sigma)^2 \leq 1.
\label{eq:fuchsvdg}
\end{align}

\section{Smoothed entropies and their operational relevance}
\label{sec:minmaxent}

Our security proofs will make use of smoothed min- and max-entropies, 
which we shall now define and list some properties of.
For the definitions, we follow the presentation in~\cite{DFR20,DF19}, which can be shown to be equivalent to the definitions in~\cite{Tom16,TL17}. Note that we will not need to explicitly use these definitions in this work, but we state them for completeness and to ensure consistency with other work.

\begin{definition}
For $\rho\in\dop{\leq}(AB)$, the \term{min- and max-entropies of $A$ conditioned on $B$} are
\begin{align}
\Hmin(A|B)_\rho &\defvar 
-\log 
\min_{\substack{\sigma \in \dop{\leq}(B) \suchthat\\ \ker(\rho_B)\subseteq\ker(\sigma_B)}} 
\norm{\rho_{AB}^\frac{1}{2}
(\id_A \otimes \sigma_{B})
^{-\frac{1}{2}}}_\infty^2,
\\ 
\Hmax(A|B)_\rho &\defvar \log 
\max_{\sigma \in \dop{\leq}(B)} 
\norm{\rho_{AB}^\frac{1}{2}
(\id_A \otimes \sigma_{B})
^\frac{1}{2}}_1^2, 
\end{align}
where in the first equation the 
$(\id_A \otimes \sigma_{B})
^{-\frac{1}{2}}$ term
should be understood in terms of the Moore-Penrose generalized inverse.
In both equations, 
the optimum is indeed attained (see 
e.g.~Sec.~6.1.2 and Sec.~6.1.3
of~\cite{Tom16}), and it can be attained by a normalized state, so $\dop{\leq}(B)$ can be replaced by $\dop{=}(B)$ without loss of generality.
When $\tr{\rho}=1$, these definitions 
respectively coincide with the R\'{e}nyi entropies $\widetilde{H}^\uparrow_\infty$ and $\widetilde{H}^\uparrow_{1/2}$ 
in the notation of~\cite{Tom16}.
\end{definition}

As their names suggest, the min- and max-entropies are respectively lower and upper bounds on the von Neumann entropy:
\begin{align}
\Hmin(A|B)_\rho \leq H(A|B)_\rho \leq \Hmax(A|B)_\rho.
\end{align}
The min-entropy has a simple operational interpretation for normalized cq states, as shown in e.g.~\cite{KRS09} (which also describes operational interpretations of min- and max-entropy that apply to fully quantum systems). Namely, for a 
cq state $\rho \in \dop{=}(CQ)$ in the form~\eqref{eq:cqnorm}, 
we have 
\begin{align}
\Hmin(C|Q)_\rho = -\log \pg(C|Q)_\rho, \text{ where }
\pg(C|Q)_\rho \defvar \max_{\{\Lambda_c\}} \sum_c \pr{c} \tr{
\rho_{Q|c} 
\Lambda_c},
\label{eq:Hminop}
\end{align}
with the maximization taking place over POVMs $\{\Lambda_c\}$ on $Q$. As suggested by the notation, $\pg(C|Q)_\rho$ is just the optimal probability of guessing $C$ given access to system $Q$.
This operational interpretation is often useful when studying DI protocols, as we shall explain in Sec.~\ref{sec:sketchcoll} and~\ref{sec:securityAD}.

Apart from this fairly straightforward relation, there are also more elaborate interpretations of these entropies that are vital for QKD. Informally, the min-entropy $\Hmin(C|Q)_\rho$ of a cq state $\rho$ approximately corresponds to a lower bound on the length of 
secret key that can be produced from the system $C$, against an adversary with access to the system $Q$. As for the max-entropy $\Hmax(C|Q)_\rho$, it approximately corresponds to an upper bound on the minimum number of bits that must be communicated to a party holding the system $Q$, in order to produce a guess for $C$ that is correct with high probability.
These tasks are often referred to as \term{privacy amplification} and \term{one-way error correction} respectively.
However, it turns out that these bounds are much too pessimistic when accounting for the fact that a small amount of error is allowed in these tasks --- instead, we can get near-optimal bounds (that we shall soon describe in Facts~\ref{fact:LHL}--\ref{fact:EC} below) by considering the \emph{smoothed} versions of these entropies, which are defined by taking a optimization over neighbouring\footnote{Earlier versions of the smoothed entropies (e.g.~\cite{rennerthesis}) were defined using trace distance as the smoothing parameter; however, the purified distance is often more convenient for theoretical analysis, and we follow that convention here.} states: 
\begin{definition}
For $\rho\in\dop{\leq}(AB)$ and $\eps\in\left[0,\sqrt{\tr{\rho_{AB}}}\right)$, the \term{$\eps$-smoothed min- and max-entropies of $A$ conditioned on $B$} are
\begin{align}
\Hmin^\eps(A|B)_\rho \defvar
\max_
{\substack{\tilde{\rho} \in \dop{\leq}(AB) \suchthat\\ \pd(\tilde{\rho},\rho)\leq\eps}}
\Hmin(A|B)_{\tilde{\rho}}, 
\qquad
\Hmax^\eps(A|B)_\rho \defvar
\min_
{\substack{\tilde{\rho} \in \dop{\leq}(AB) \suchthat\\ \pd(\tilde{\rho},\rho)\leq\eps}} 
\Hmax(A|B)_{\tilde{\rho}}.
\end{align}
\end{definition}

By construction, we have $\Hmin^\eps(A|B)_\rho \geq \Hmin(A|B)_\rho$ and $\Hmax^\eps(A|B)_\rho \leq \Hmax(A|B)_\rho$, and equality holds for $\eps=0$.
Importantly, the gap in each inequality can be rather significant when $\eps>0$, a fact which is cleanly demonstrated by a result known as the (quantum) \term{asymptotic equipartition property} (AEP). In the form presented in~\cite{DFR20} (other versions were previously derived in e.g.~\cite{TCR09}, but they are slightly less tight in the regimes of interest for our work), the AEP states the following:
\begin{fact}\label{fact:AEP}
(Asymptotic equipartition property, as presented in Corollary~4.10 of~\cite{DFR20}) Consider any $\sigma \in \dop{=}(AB)$, and let 
$\str{A}\str{B}$ denote $n$ copies of the registers $AB$.
Then for any $\eps\in(0,1)$, we have 
\begin{align}
\Hmin^\eps(\str{A}|\str{B})_{\sigma^{\otimes n}} 
> n H(A|B)_\sigma - \sqrt{n} \, (2\log(1+2 \dim(A) ))\sqrt{\log\frac{2}{\eps^2}}, \label{eq:AEPmin} \\
\Hmax^\eps(\str{A}|\str{B})_{\sigma^{\otimes n}} 
< n H(A|B)_\sigma + \sqrt{n} \, (2\log(1+2 \dim(A) ))\sqrt{\log\frac{2}{\eps^2}}, \label{eq:AEPmax}
\end{align}
where $ \dim(A) $ is the dimension of $A$.
\end{fact}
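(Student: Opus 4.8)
The statement in question is Fact~\ref{fact:AEP}, the asymptotic equipartition property. Since the excerpt explicitly attributes this to Corollary~4.10 of~\cite{DFR20}, the natural ``proof'' here is a citation plus an indication of how such a bound is established; I would not attempt a self-contained derivation of the second-order term. The plan is as follows.

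The strategy is to pass through a one-parameter family of R\'enyi entropies interpolating around the von Neumann entropy, and then optimize the interpolation parameter. Concretely, for $\alpha$ slightly above (resp.~below) $1$, one works with the sandwiched R\'enyi conditional entropy $\widetilde{H}^\uparrow_\alpha(A|B)_\sigma$, which is additive under tensor products, so $\widetilde{H}^\uparrow_\alpha(\str{A}|\str{B})_{\sigma^{\otimes n}} = n\,\widetilde{H}^\uparrow_\alpha(A|B)_\sigma$. The first step is a Taylor-type expansion of $\alpha \mapsto \widetilde{H}^\uparrow_\alpha(A|B)_\sigma$ about $\alpha=1$, where it equals $H(A|B)_\sigma$; the first derivative is controlled in terms of a variance-like quantity, and crucially the remainder can be bounded uniformly by something of the form $(\alpha-1)\,(\log(1+2\dim(A)))^2$ (this is where the dimension of $A$, and only $A$, enters). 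The second step relates the smoothed min-/max-entropy of $\sigma^{\otimes n}$ at smoothing level $\eps$ to $\widetilde{H}^\uparrow_\alpha(\str{A}|\str{B})_{\sigma^{\otimes n}}$ via a standard smoothing lemma: $\Hmin^\eps \geq \widetilde{H}^\uparrow_\alpha - \frac{1}{\alpha-1}\log\frac{2}{\ez^2}$ for $\alpha>1$ (and the mirror-image inequality for $\Hmax$ with $\alpha<1$), where $\ez$ is a rescaled smoothing parameter. Combining the two steps gives a bound of the shape $n H(A|B)_\sigma - n(\alpha-1)c^2 - \frac{1}{\alpha-1}\log\frac{2}{\ez^2}$ with $c = \log(1+2\dim(A))$.

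The final step is to optimize over $\alpha-1>0$. Balancing the two error terms $n(\alpha-1)c^2$ and $\frac{1}{\alpha-1}\log\frac{2}{\ez^2}$ gives the choice $\alpha - 1 \sim \frac{1}{c}\sqrt{\frac{1}{n}\log\frac{2}{\ez^2}}$, and substituting back produces the combined penalty $2c\sqrt{n}\sqrt{\log\frac{2}{\ez^2}} = \sqrt{n}\,(2\log(1+2\dim(A)))\sqrt{\log\frac{2}{\ez^2}}$, which is exactly the second-order term in~\eqref{eq:AEPmin}--\eqref{eq:AEPmax}. One must also check that the optimal $\alpha$ stays within the range where the expansion is valid (it does, for $n$ large enough, and the strict inequality in the statement absorbs the edge cases); the $\Hmax$ case is entirely symmetric, using $\alpha<1$.

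The main obstacle — and the reason I would simply cite~\cite{DFR20} rather than reprove it — is the uniform remainder estimate in the R\'enyi expansion: showing that the second derivative of $\alpha\mapsto\widetilde{H}^\uparrow_\alpha$ near $\alpha=1$ is bounded by $(\log(1+2\dim(A)))^2$ uniformly over all states $\sigma$ (not just in terms of a state-dependent variance) requires a careful operator-analytic argument, and getting the constant $2\log(1+2\dim(A))$ rather than something looser is precisely the improvement over the earlier versions in~\cite{TCR09} noted in the text. Everything else (additivity, the smoothing lemma, the optimization) is routine.
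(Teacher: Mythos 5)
The paper states this as a Fact and gives no proof of its own, simply citing Corollary~4.10 of~\cite{DFR20}; your proposal does the same, and your sketch of the underlying argument (additivity of the sandwiched R\'enyi entropies, the smoothing bound $\Hmin^\eps \geq \widetilde{H}^\uparrow_\alpha - \tfrac{1}{\alpha-1}\log\tfrac{2}{\eps^2}$, the dimension-dependent remainder estimate near $\alpha=1$, and the optimization over $\alpha$) accurately reflects how the cited result is obtained. This matches the paper's treatment, so nothing further is needed.
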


The AEP essentially states that for any nonzero value of the smoothing parameter, the smoothed min-entropy 
of $n$ IID copies of a state $\sigma$ is approximately lower-bounded 
by $n$ times the von Neumann entropy, up to some $O(\sqrt{n})$ corrections which would become relatively negligible at large $n$. In contrast, the unsmoothed min-entropy can be shown to satisfy 
\begin{align}
\Hmin(\str{A}|\str{B})_{\sigma^{\otimes n}} 
= n \Hmin(A|B)_\sigma.
\end{align}
Hence in the large-$n$ limit, the difference between $\Hmin^\eps(\str{A}|\str{B})_{\sigma^{\otimes n}}$ and $\Hmin(\str{A}|\str{B})_{\sigma^{\otimes n}}$ is approximately the difference between $n H(A|B)_\sigma$ and $n \Hmin(A|B)_\sigma$, which can be quite substantial.
A similar result holds for the smoothed max-entropy.
This implies that for privacy amplification and error correction, a significant improvement can be achieved by using the smoothed min- and max-entropies rather than the unsmoothed versions.

With this in mind, let us now turn to precise statements regarding those tasks. 
First, for privacy amplification, the central result is the \term{Leftover Hashing Lemma} (LHL), which we state below in the form given in~\cite{TL17} (obtained via a small modification of the proof in~\cite{rennerthesis}):
\begin{fact}\label{fact:LHL}
(Leftover Hashing Lemma, as presented in Prop.~9 of~\cite{TL17}) Consider any $\sigma \in \dop{\leq}(CQ)$ where $C$ is a classical $n$-bit register. Let 
$\mathcal{H}$ be 
a 2-universal family of hash functions from $\mathbb{Z}_2^n$ to $\mathbb{Z}_2^\ell$, 
and let $H$ be a register of dimension $|\mathcal{H}|$.
Define the state
\begin{align}
\omega_{KCQH} \defvar \mathcal{E}\!\left(\sigma_{CQ} \otimes 
\idk_H
\right),
\end{align}
where the map $\mathcal{E}$ represents the (classical) process of applying the hash function specified in the register $H$ to the register $C$, and recording the output in register $K$.
Then for any $\es \in \left[0,\sqrt{\tr{\sigma_{CQ}}}\right)$, we have
\begin{align}
\frac{1}{2}\norm{\omega_{KQH} - \idk_K \otimes\omega_{QH}}_1 \leq 2^{-\frac{1}{2}(\Hmin^{\es}(C|Q)_\sigma - \ell + 2)} + 2\es.
\label{eq:LHL}
\end{align}
\end{fact}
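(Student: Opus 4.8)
The plan is to prove Fact~\ref{fact:LHL} in two stages: first establish the unsmoothed case $\es=0$ as the technical core, and then lift it to arbitrary $\es$ by a short perturbation argument using standard properties of the purified distance.

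\textbf{Stage 1: the unsmoothed bound.} Suppose first $\es=0$, so the target is $\tfrac12\norm{\omega_{KQH}-\idk_K\otimes\omega_{QH}}_1 \le 2^{-\frac12(\Hmin(C|Q)_\sigma-\ell+2)}$. Since $\omega_{KQH}$ and $\idk_K\otimes\omega_{QH}$ are both classical on $H$ with the same uniform $H$-marginal, the left-hand side equals $\mathbb{E}_{h}\bigl[\tfrac12\norm{\omega^{(h)}_{KQ}-\idk_K\otimes\omega^{(h)}_{Q}}_1\bigr]$, where $\omega^{(h)}$ is the state obtained by hashing with the fixed function $h$. I would bound this average by: (i) fixing any $\tau_Q\in\dop{=}(Q)$ and applying a H\"older-type inequality $\norm{X_{KQ}}_1 \le \sqrt{\dim(K)}\,\norm{(\idk_K\otimes\tau_Q)^{-1/4}X_{KQ}(\idk_K\otimes\tau_Q)^{-1/4}}_2$ (with $\dim(K)=2^\ell$) to pass to a weighted $2$-norm; (ii) taking $\mathbb{E}_h$ of the resulting squared $2$-norm, expanding the square, and observing that the cross-term with $\idk_K\otimes\omega_Q$ exactly cancels the only negative contribution, leaving a ``collision'' sum $\sum_{c,c'}\prnoscale{h(c)=h(c')}\operatorname{Tr}[\widetilde\omega_c\widetilde\omega_{c'}]$ with $\widetilde\omega_c=\tau_Q^{-1/4}\omega_c\tau_Q^{-1/4}$; (iii) estimating the $c\neq c'$ terms via the $2$-universality inequality~\eqref{eq:2universal} together with $\operatorname{Tr}[\widetilde\omega_c\widetilde\omega_{c'}]\ge 0$, which collapses the sum to $\sum_c\operatorname{Tr}[\widetilde\omega_c^2]=\operatorname{Tr}\!\bigl[((\id_C\otimes\tau_Q^{-1/4})\sigma_{CQ}(\id_C\otimes\tau_Q^{-1/4}))^2\bigr]$. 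An optimal choice of $\tau_Q$ bounds this by $2^{-\Hmin(C|Q)_\sigma}$ (in fact by the possibly smaller $2^{-H_2^\uparrow(C|Q)_\sigma}$, using that the conditional collision entropy dominates the min-entropy). Jensen's inequality ($\mathbb{E}\sqrt{\cdot}\le\sqrt{\mathbb{E}\cdot}$) then yields $\tfrac12\norm{\omega_{KQH}-\idk_K\otimes\omega_{QH}}_1 \le \tfrac12\sqrt{2^{\ell}\,2^{-\Hmin(C|Q)_\sigma}}=2^{-\frac12(\Hmin(C|Q)_\sigma-\ell+2)}$. This is essentially the argument of~\cite{rennerthesis}; the only extra care needed for $\sigma\in\dop{\leq}(CQ)$ concerns the Moore--Penrose inverse and the subnormalized definitions of the entropies.

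\textbf{Stage 2: smoothing.} For general $\es$, let $\widetilde\sigma_{CQ}\in\dop{\leq}(CQ)$ attain the maximum in $\Hmin^{\es}(C|Q)_\sigma$, so $\pd(\widetilde\sigma,\sigma)\le\es$ and $\Hmin(C|Q)_{\widetilde\sigma}=\Hmin^{\es}(C|Q)_\sigma$; applying the dephasing channel on $C$ (which fixes $\sigma$, cannot increase $\pd$, and cannot decrease the min-entropy) we may assume $\widetilde\sigma$ is classical on $C$, so Stage~1 applies to it and gives $\tfrac12\norm{\widetilde\omega_{KQH}-\idk_K\otimes\widetilde\omega_{QH}}_1 \le 2^{-\frac12(\Hmin^{\es}(C|Q)_\sigma-\ell+2)}$, where $\widetilde\omega_{KCQH}=\mathcal{E}(\widetilde\sigma_{CQ}\otimes\idk_H)$. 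Now $\sigma_{CQ}\mapsto\omega_{KQH}$ and $\widetilde\sigma_{CQ}\mapsto\widetilde\omega_{KQH}$ arise from the same composition of CPTP maps (tensoring with $\idk_H$, the hashing channel $\mathcal{E}$, tracing out $C$), so monotonicity of the purified distance gives $\pd(\omega_{KQH},\widetilde\omega_{KQH})\le\pd(\sigma,\widetilde\sigma)\le\es$ and likewise $\pd(\idk_K\otimes\omega_{QH},\idk_K\otimes\widetilde\omega_{QH})=\pd(\omega_{QH},\widetilde\omega_{QH})\le\es$. Since $d(\cdot,\cdot)\le\pd(\cdot,\cdot)$ (a consequence of~\eqref{eq:fuchsvdg}), the triangle inequality for the trace distance gives
\[\tfrac12\norm{\omega_{KQH}-\idk_K\otimes\omega_{QH}}_1 \le \es + 2^{-\frac12(\Hmin^{\es}(C|Q)_\sigma-\ell+2)} + \es,\]
which is exactly~\eqref{eq:LHL}.

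\textbf{Expected main obstacle.} The substantive work is entirely in Stage~1, specifically the choice of weighted $2$-norm (equivalently, the auxiliary state $\tau_Q$) that makes the $2$-universality estimate collapse cleanly onto a min/collision-entropy quantity, and the bookkeeping needed to keep the bound valid for subnormalized $\sigma$ (generalized inverses, trace factors). Stage~2 is routine manipulation with standard properties of the purified distance, and the only mild subtlety there is justifying that the smoothing optimizer may be taken classical on $C$.
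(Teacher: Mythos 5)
Your proposal is correct and takes essentially the same route as the source: the paper does not prove this Fact but cites it as Prop.~9 of~\cite{TL17} (a small modification of the argument in~\cite{rennerthesis}), and your Stage~1 is exactly that collision-entropy/weighted-$2$-norm argument while your Stage~2 reproduces the smoothing perturbation the paper itself sketches informally at the end of Sec.~\ref{sec:minmaxent}. The constants check out ($\tfrac12\sqrt{2^{\ell-\Hmin}}=2^{-\frac12(\Hmin-\ell+2)}$ and the two purified-distance triangle-inequality terms give the $+2\es$), and your handling of the two subtleties you flag (subnormalized states via generalized inverses, and dephasing the smoothing optimizer to make it classical on $C$) is sound.
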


In the above theorem, we should interpret the register $K$ as the secret key, and the trace-distance term in~\eqref{eq:LHL} quantifies how close it is to a uniformly random value that is independent of the ``side-information'' registers $QH$. 
The bound tells us that as long as we choose the length $\ell$ of the key $K$ to be somewhat less than $\Hmin^\eps(C|Q)_\sigma$, the trace-distance term will be forced to be small. This is more or less the definition of a secret key (see Sec.~\ref{sec:securitydef}), and hence we can interpret it as saying that the procedure specified in the theorem can produce a secret key of length slightly less than $\Hmin^\eps(C|Q)_\sigma$. (Notice that the register $H$ is included in the ``side-information'' term in~\eqref{eq:LHL}, so $H$ can be publicly announced in the process; however, it must be originally independent from the state $\sigma$.) 

Next, for one-way error correction, we have the following result from~\cite{RR12} (similar results were obtained earlier in~\cite{rennerthesis,RW05}, but again, this version is tighter for the parameter regimes in this work):
\begin{fact}\label{fact:EC}
(One-way error correction, as presented in Theorem~1 of \cite{RR12}) Suppose Alice and Bob respectively hold registers $C$ and $Q$ of a cq state $\sigma \in \dop{=}(CQ)$, and the task is for Alice to send $\ell$ bits to Bob such that he can produce a guess for $C$ that is correct with probability at least $1-\eps$. Then for any $\ez\in[0,\eps)$, there exists a protocol which achieves this with
\begin{align}
\ell = 
\Hmax^{\ez}(C|Q)_\sigma + 2\log\frac{1}{\eps-\ez} + 4.
\end{align}
\end{fact}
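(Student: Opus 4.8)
The plan is to implement the usual ``random binning'' construction --- Alice hashes, Bob decodes by a pretty-good measurement --- and then optimize the decoding error in terms of $\Hmax$; this is the max-entropy counterpart of the hashing-plus-operator-inequality argument behind the Leftover Hashing Lemma (Fact~\ref{fact:LHL}), and produces the same $\tfrac12$-in-the-exponent behaviour. Fix a 2-universal family $\mathcal{H}$ of functions from the alphabet of $C$ to $\{0,1\}^\ell$, so that a uniform $h\in\mathcal{H}$ satisfies $\pr{h(c)=h(c')}\le 2^{-\ell}$ for $c\ne c'$ by~\eqref{eq:2universal}. In the protocol, Alice sends the $\ell$-bit string $M\defvar h(C)$, and Bob --- holding $Q$, $M$, and a publicly fixed description of $h$ --- outputs a guess $\hat C$ obtained from a square-root (``pretty good'') measurement on $Q$, restricted to the candidates $\{c:h(c)=M\}$ (the precise ensemble used is fixed below). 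The bound to be proved controls the error averaged over $h\in\mathcal{H}$, so a final averaging argument fixes one good hash to be hard-wired into the protocol, leaving the communication at exactly $\ell$ bits.

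First I would \textbf{reduce to the unsmoothed state}. Choose $\tilde\sigma\in\dop{\le}(CQ)$, which may be taken cq (dephasing the $C$-register cannot increase the purified distance to the already-cq state $\sigma$, nor increase $\Hmax(C|Q)$), with $\pd(\tilde\sigma,\sigma)\le\ez$ and $\Hmax(C|Q)_{\tilde\sigma}=\Hmax^{\ez}(C|Q)_\sigma=:h_{\max}$; take Bob's decoder to be the square-root measurement built from the conditional states of $\tilde\sigma$. The error probability of any fixed encoder--decoder pair is an affine functional of the underlying state with coefficients in $[0,1]$, so by the Fuchs--van de Graaf inequalities~\eqref{eq:fuchsvdg} (which give $d(\tilde\sigma,\sigma)\le\pd(\tilde\sigma,\sigma)\le\ez$) the error against $\sigma$ exceeds the error against $\tilde\sigma$ by at most $\ez$. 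It therefore suffices to show that, against $\tilde\sigma$, this decoder achieves error at most $\eps-\ez$ whenever $\ell=h_{\max}+2\log\tfrac{1}{\eps-\ez}+4$; the $\ez$ reabsorbed from smoothing then brings the total to $\eps$.

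The core step is the \textbf{one-shot decoding bound} for $\tilde\sigma$: the hash-averaged failure probability of the square-root decoder is at most $2^{-\frac12(\ell-h_{\max}-4)}$. The route I would take runs through a fidelity estimate: show that the square-root decoder on $\tilde\sigma$ succeeds with a fidelity to the ``ideal'' (perfectly-recovered) state that is governed by a max-entropy-type quantity, with the $2^{-\ell}$ from 2-universality ensuring that wrong candidates $c'\ne c$ enter only weighted by $\pr{h(c)=h(c')}$. After summing over candidates the residual $L^2$/fidelity term is bounded by $2^{h_{\max}}=2^{\Hmax(C|Q)_{\tilde\sigma}}$ using the cq form of the max-entropy, $2^{\Hmax(C|Q)_{\tilde\sigma}}=\max_{\sigma_Q\in\dop{\le}(Q)}\big(\sum_c\norm{\sqrt{\tilde\sigma_c}\sqrt{\sigma_Q}}_1\big)^2$, evaluated at a suitable $\sigma_Q$ (essentially proportional to $\big(\sum_c\sqrt{\tilde\sigma_c}\big)^2$). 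Converting the fidelity bound into an error-probability bound (via $\pd(\cdot,\cdot)^2=1-\gF(\cdot,\cdot)^2$ together with Fuchs--van de Graaf) is what introduces the square root, hence the factor $2$ in $2\log\tfrac{1}{\eps-\ez}$. Substituting $\ell=h_{\max}+2\log\tfrac{1}{\eps-\ez}+4$ gives failure probability $\le 2^{-\log(1/(\eps-\ez))}=\eps-\ez$, which with the smoothing reduction and the derandomization of $h$ yields the claim.

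\textbf{Main obstacle.} The skeleton above is routine, but the work --- and the reason the cited proof is technical --- is establishing the decoding bound \emph{with precisely these constants}: controlling the loss incurred by using the square-root measurement rather than the optimal one, pinning down the exact $\Hmax$-to-fidelity identity for \emph{subnormalized} cq states, and carefully tracking the conversions between purified distance, fidelity, and error probability (in both the decoding and the smoothing steps) so that the overhead lands exactly on $2\log\tfrac{1}{\eps-\ez}+4$ rather than something looser. This bookkeeping, rather than any single conceptual jump, is the hard part.
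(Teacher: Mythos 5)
First, a point of orientation: the thesis does not prove this statement at all --- it is imported verbatim as a Fact from Theorem~1 of~\cite{RR12} --- so there is no in-paper argument to compare yours against. Judged on its own terms, your write-up is an architecture rather than a proof. The smoothing reduction to $\tilde\sigma$ and the final derandomization over $h$ are standard and essentially fine (modulo the small point that one should invoke the known lemma that the optimizer of $\Hmax^{\ez}(C|Q)_\sigma$ for a cq state can itself be taken cq; "dephasing cannot increase $\Hmax$" is not quite the right justification, since $C$ is the register whose entropy is being measured, not the conditioning register). But the entire content of the theorem is the one-shot decoding inequality you state in the third paragraph --- that the hash-averaged error of the square-root decoder against $\tilde\sigma$ is at most $2^{-\frac12(\ell-\Hmax(C|Q)_{\tilde\sigma}-4)}$ --- and for that step you only gesture at "a fidelity estimate" and the cq formula for $2^{\Hmax}$, then explicitly defer the work to your "main obstacle" paragraph. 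Asserting the inequality with exactly the right constants and then observing that substituting $\ell$ yields $\eps-\ez$ is circular as a proof: the overhead $2\log\frac{1}{\eps-\ez}+4$ is precisely what has to be established. This matters because a head-on pretty-good-measurement analysis (Hayashi--Nagaoka style) most naturally produces bounds in terms of hypothesis-testing or collision-type entropies, not $\Hmax$; extracting $\Hmax$ with a clean $\tfrac12$ in the exponent requires a further argument that you have not supplied.

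For contrast, the argument in~\cite{RR12} is organized around the duality between information reconciliation and privacy amplification: one purifies the state, observes that decoding $C$ from $Q$ succeeds exactly to the extent that a conjugate observable can be made secret from the purifying system, applies the Leftover Hashing Lemma (Fact~\ref{fact:LHL}) to that conjugate problem, and converts $\Hmin$ into $-\Hmax$ via the entropic duality for purifications. That detour is where the $\Hmax$ dependence, the $\tfrac12$-exponent, and hence the constants $2\log\frac{1}{\eps-\ez}+4$ actually come from. Your plan is not unreasonable --- direct one-shot decoder analyses of this flavour do exist in the literature --- but as written the decisive inequality is named rather than proven, so the proposal has a genuine gap at its core.
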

Technically, there is something of a limitation in the above result, in that the protocol it specifies may be difficult to implement in practice (in contrast, Fact~\ref{fact:LHL} describes a relatively simple protocol for privacy amplification). Fortunately, there do exist practical error-correction protocols with performance somewhat close to this threshold, at least heuristically. We discuss further details in Sec.~\ref{sec:EC} (see also the error-correction protocol implemented in the~\cite{arx_NDN+21} experiment).

\begin{remark}
A subtlety in the above error-correction theorem is that for most results in this vein, the procedure that Alice and Bob actually implement will depend on a full description of the state $\sigma$ (or at least some ``detailed'' information about it) --- it is not sufficient to simply have the value of $\Hmax^{\ez}(C|Q)_\sigma$. As an extreme example to demonstrate this, let $\sigma^{\mathrm{copy}}$ be a state where $Q$ is simply a classical copy of $C$, and let $\sigma^{\mathrm{flip}}$ be a state where $Q$ is the bitwise complement of $C$ (taking $C$ to be a bitstring register). Then we trivially have $\Hmax^{\ez}(C|Q)_{\sigma^{\mathrm{copy}}}=\Hmax^{\ez}(C|Q)_{\sigma^{\mathrm{flip}}} = 0$. 
However, the ``error-correction'' procedure that works for $\sigma^{\mathrm{copy}}$ (trivially, Bob just outputs $Q$ as his guess for $C$, without receiving any communication from Alice) fails completely for $\sigma^{\mathrm{flip}}$. Hence we see that \emph{only} knowing the value of $\Hmax^{\ez}(C|Q)_\sigma$ is insufficient, so some care needs to be taken when applying this theorem.

On that note, the protocol in the above theorem only guarantees that Bob's guess is correct with high probability for the particular state $\sigma$ that the protocol is designed to work on (and possibly some other states, but we put this aside for now). If it turns out that Alice and Bob share some other state, the protocol does not (in fact, quite obviously \emph{cannot}, given the example above) by itself guarantee that Bob's guess is correct with high probability. That particular protocol also does not give any indication when Bob's guess is incorrect. Hence additional steps are often necessary to deal with this when implementing error correction in a full protocol for QKD (in which the states may not behave as intended, due to dishonest behaviour), which we shall discuss when providing a protocol outline in Sec.~\ref{sec:protsketch} below.

{To some extent, these issues are related to the fact that the error-correction protocols we discussed above do not have an abort outcome ``built in'' (we will analyze this as a separate feature in Sec.~\ref{sec:protsketch}). For protocols with an abort outcome, it is possible to make statements that hold more ``universally'' for arbitrary states $\sigma$, but the parameter $\eps$ usually has a different interpretation in that case (see~\cite{rennerthesis} for details) --- roughly, it is a bound on the probability that Bob's guess is wrong \emph{and} the protocol does not abort (a caveat: this is very different from the probability that Bob's guess is wrong \emph{conditioned} on the protocol not aborting; see Sec.~\ref{sec:securitydef} for some difficulties with such bounds).}
\end{remark}

The bounds in the above two theorems can be proven to be almost tight~\cite{RR12}, 
so in some sense the smoothed min- and max-entropies are indeed the ``right'' quantities to describe these tasks.
We can gain some informal intuition of why the concept of smoothing is useful in these tasks as follows (it basically uses the fact that some ``error'' is allowed, i.e.~the RHS of~\eqref{eq:LHL}, or the probability of guessing wrongly in error correction). 
Taking privacy amplification as an example, let us suppose one has already proven an ``unsmoothed LHL'' that states it is possible to produce an $\eps$-secret key of length (approximately) equal to the min-entropy $\Hmin(C|Q)$, where by $\eps$-secret we mean the RHS of~\eqref{eq:LHL} has value $\eps$. 
Now take a state $\sigma$ that instead has \emph{smoothed} min-entropy $\Hmin^{\es}(C|Q)_\sigma = \kappa$. By definition, this means there is some $\es$-close state $\sigma'$ such that $\Hmin(C|Q)_{\sigma'} = \kappa$, and by the ``unsmoothed LHL'', privacy amplification on $\sigma'$ can produce an $\eps$-secret key of approximate length $\kappa$. But since $\sigma$ and $\sigma'$ are $\es$-close, this implies\footnote{Technically, recalling the definitions used in this work, the ``closeness'' measure between $\sigma$ and $\sigma'$ is purified distance, while the secrecy condition is based on trace distance, so this argument is implicitly relying on the fact that the purified distance upper bounds the trace distance~\cite{Tom16}. If one chooses to use other combinations of distance measures, the Fuchs--van de Graaf inequalities can be used to convert the parameters appropriately.} that applying the same procedure to $\sigma$ produces an $(\eps+O(\es))$-secret key of the same length $\kappa$ (since the same procedure was applied). Hence from the ``unsmoothed LHL'', we have obtained an LHL that gives key lengths based on $\es$-smoothed min-entropy instead, at the cost of only an additional $O(\es)$ ``penalty'' to the secrecy --- this is in fact a fairly standard approach to proving the LHL in this form; see e.g.~\cite{rennerthesis,TL17}. Recall from our AEP example that smoothing by just a small value of $\es$ can significantly increase the value of the smoothed min-entropy, so this is a ``useful tradeoff'' to make in practice. Similar reasoning applies for error correction, with the caveat that the error-correction procedure to consider is the optimal one for the state that attains the minimum in the smoothing, \emph{not} the optimal one for the original state.

\section{Typical protocol outline}
\label{sec:protsketch}

We now give an outline of the rough structure that the majority of current DIQKD protocols follow, incorporating some techniques that have recently been proven to improve the keyrates. However, note that there remains some flexibility in the exact manner in which each of these steps is implemented, which leads to subtle differences in protocol performance and security proofs. We shall briefly discuss some of these issues below.
\let\oldthealgorithm\thealgorithm 
\renewcommand{\thealgorithm}{(sketch)}
\begin{savenotes}
\begin{breakablealgorithm}
\caption{} 
\label{prot:outline}
\begin{algorithmic}[1]
\State Alice and/or Bob choose a small subset of protocol rounds that will be used to decide whether to accept or abort, typically referred to as \term{test rounds}. The rest of the rounds shall be referred to as \term{generation rounds}.
\State \textbf{Measurement:} In each round, Alice and Bob's devices each receive some share of a quantum state. Alice and Bob select inputs to their devices according to some distributions, which may depend on whether the round is a test or generation round, 
and record the corresponding device outputs. After all the measurements have been performed, they publicly announce the inputs that were used (we shall denote them as $\str{X}$ and $\str{Y}$).
\State \textbf{Sifting:} Alice and Bob may erase\footnote{For ease of later analysis, here we interpret ``erasing'' to mean overwriting their output with some deterministic value. Other protocols may describe this step in terms of ``discarding'' these rounds instead, but this would result in the number of rounds left after sifting being a random variable, which is inconvenient to describe in our subsequent proofs.} the outputs of some generation rounds (informally, those in which weaker correlations are expected --- the most straightforward example would be erasing all rounds in which the inputs correspond to the honest measurements being in different bases).
\State \textbf{Noisy preprocessing:} For all remaining generation rounds, Alice may add a small amount of trusted noise to her output.
\State \textbf{Error correction:} At this point, Alice holds some string $\str{A}$, and Bob holds some string $\str{B}$. Alice and Bob publicly communicate some bits $\str{L}$ for error correction, in the sense that Bob can use $\str{L}$ and $\str{B}$ to produce a guess $\tilde{\str{A}}$ for Alice's string $\str{A}$. Depending on the nature of the error correction protocol, it may be possible for the protocol to abort at this step. 
\State \textbf{Parameter estimation:} 
Using the test rounds, Alice and/or Bob compute the frequencies of each input-output combination (or more broadly, statistical estimators of one or more parameters of the form described in~\eqref{eq:bellop}). Based on whether these statistics fall within some tolerance range, 
they decide whether to accept or abort the protocol.
\State \textbf{Privacy amplification:} If the protocol has not aborted by this point, Alice and Bob apply a privacy amplification procedure on $\str{A}$ and $\tilde{\str{A}}$ respectively to obtain final keys $K_A$ and $K_B$.
\end{algorithmic}
\end{breakablealgorithm}
\end{savenotes}
\let\thealgorithm\oldthealgorithm 
\addtocounter{algorithm}{-1} 

In the parameter-estimation step of the above sketch, the range of accepted values for the observed statistics would be based on the honest specification, although typically with some tolerance interval (to avoid having the honest implementation abort too frequently from statistical fluctuations). For instance, if the honest devices are expected to achieve a CHSH value of $\constr_\mathrm{exp}$, then typically the parameter-estimation step would accept if the corresponding ``statistical estimate'' (informally speaking) is greater than $\constr_\mathrm{exp}-\dtol$ for some $\dtol>0$. We shall denote the number of rounds in the protocol as a parameter $n$, and denote the length of the final keys produced as $\lkey$ --- in general, the value of $\lkey$ (and possibly some other parameter choices in the protocol) would vary with $n$.

With this general structure in mind, we can define the finite-size \term{keyrate} (for a fixed $n$) and the \term{asymptotic keyrate} of such a protocol to be respectively
\begin{align}
\operatorname{rate}_n \defvar \frac{\lkey}{n},  \qquad \operatorname{rate}_\infty \defvar \lim_{n\to\infty} \operatorname{rate}_n.
\label{eq:defkeyrates}
\end{align}
{(To be more precise, some definitions of the asymptotic keyrate technically require taking the $\eps\to0$ limit of some security parameter $\eps$, but we defer this discussion to Sec.~\ref{sec:sketchasympt}, after introducing the security definitions.)} 

The above outline incorporates two recently studied techniques that can improve the keyrates, which we shall refer to as follows:
\begin{itemize}
\item \textbf{Random key measurements:} 
This refers to having Alice and Bob selecting random inputs in generation rounds, in constrast to more commonly-studied protocols~\cite{PAB+09,ARV19} that used a deterministic input in all generation rounds.\footnote{A notable exception is the protocol in~\cite{JMS20}. However, it achieves this by exploiting the rather specific structure of the nonlocal game chosen for that protocol, in which (for the honest implementation) \emph{every} input pair produces some output bits that are highly correlated. Hence it is not obvious how to generalize this approach to DIQKD protocols based on other nonlocal games.} The latter approach has the advantage of usually making the sifting step redundant --- by always using highly correlated measurements in generation rounds, no rounds need to be erased.\footnote{To be more accurate, the idea of using deterministic generation-round inputs was originally an \emph{improvement} on the initial \cite{BB84}~protocol, precisely because it removes the sifting in generation rounds. It is rather surprising that there are some situations in which this turns out to not be optimal.} However, it was discovered in~\cite{SGP+21} that there are situations in which probabilistic choices of generation-round inputs can yield higher keyrates, even after accounting for the sifting-induced loss.

\item \textbf{Noisy preprocessing:} 
This was previously studied in the context of standard QKD, where it was shown that adding some trusted noise to one party's outputs can in fact increase the keyrates. Recently, the analogous result was proven for DIQKD as well~\cite{HST+20,WAP21,SBV+21}. 

\end{itemize}
\noindent We will return to these techniques in Sec.~\ref{sec:sketchimpr} to give some informal intuition of why they can improve the keyrates. 

Also, note that the outline here is mainly focused on protocols in which the error-correction step consists of Bob producing a guess for Alice's string $\str{A}$ directly. We shall informally refer to such protocols as one-way protocols, because in such protocols the error-correction communication $\str{L}$ typically\footnote{The exception would be protocols such as Cascade, where bits are communicated in both directions but the final goal is still for Bob to produce a guess for $\str{A}$.} consists only of a single string sent from Alice to Bob. However, there can be more elaborate procedures, often involving communication in both directions, in which the goal of that step is for both Alice and Bob to generate some \emph{new} strings that are (ideally) highly correlated, and may be rather different from Alice's original string $\str{A}$. Such procedures may be referred to broadly as {advantage distillation}, and we discuss this in Chapter~\ref{chap:AD}. 

As mentioned above, concrete implementations of DIQKD protocols may differ in the details of various steps in the above sketch. We list some of them as follows:

\begin{itemize}
\item First is the question of how the test rounds are chosen. One approach may be to select a subset of fixed size uniformly at random. Another approach would be to make the test/generation decision for each round in an IID manner. Roughly speaking, the former approach usually performs better when working under the assumption of collective attacks; however, without that assumption, the proof technique used in this work (based on the EAT) appears to require the latter approach. We return to this in more detail in Chapter~\ref{chap:finite}. 

\item 
Another issue is the question of whether \emph{both} Alice and Bob know which rounds are test rounds, which may be needed for them to have the option of using different input distributions in test versus generation rounds. Importantly, this cannot simply be agreed on by public communication before the quantum states are distributed to the devices --- if Eve knows this information, she can simply send honest states in test rounds and completely insecure states in generation rounds. 
A trivial way to address this without imposing additional protocol requirements is simply for one or both of the parties to always use the same input distribution, without knowing which rounds are test rounds until the inputs are publicly announced after the measurements. For simplicity, the main protocol we study in this work 
(Protocol~\ref{prot:DIQKD} in Chapter~\ref{chap:finite}) 
uses this approach. However, this has the drawback that either a particular input for each party must be used with high probability (in which case the parameter estimation must account for this biased input distribution), or there are substantial losses in the sifting step (since Alice and Bob will be measuring in poorly correlated bases for a large fraction of generation rounds, as was the case for e.g.~the initial \cite{BB84}~protocol). We discuss various alternatives in Sec.~\ref{sec:preshared}.

\item 
There are several possible options in the error correction step, not merely in terms of the encoding used to construct $\str{L}$, but rather the requirements and guarantees of the error correction procedure. In this work, we focus on approaches where Bob first produces his guess $\tilde{\str{A}}$ (which is intended to be correct with high probability given the honest devices), and then performs a ``verification'' procedure (see Sec.~\ref{sec:proofsketch}) that ensures that if his guess is wrong, the protocol aborts with high probability, even if the devices are not behaving honestly. On the other hand, some other error correction procedures do not perform a verification step in this fashion, but instead implicitly require the parties to first estimate whether the error rate 
is sufficiently low, and abort if it is not (see e.g.~\cite{rennerthesis}). If one uses such an approach, it may be expedient to swap the order of the error correction and parameter estimation steps, such that the error rate estimation is incorporated into parameter estimation.

\item
The above outline does not explicitly specify how the test-round outputs are communicated to perform parameter estimation. The most straightforward approach is simply to publicly announce these outputs; however, in the non-IID case we cannot rule out the possibility that this communication leaks information about the generation-round outputs, which slightly reduces the final key length (see Sec.~\ref{sec:sketchbeyond}). To reduce this effect, an alternative approach is for Bob's guess $\tilde{\str{A}}$ in the error-correction step to also include the test-round outputs, not just the generation-round outputs. This helps to reduce the amount of leakage, because the error-correction string is typically shorter than the raw data. We mainly focus on this approach in this work, in order to optimize the finite-size keyrates, but it comes at the cost of additional steps in the security proof to account for the possibility that Bob's guess for the test-round data may be wrong. (Also, it precludes the option mentioned above of performing parameter estimation before error correction.) A related question is whether the test rounds themselves are included when performing privacy amplification, the effects of which we briefly discuss in Sec.~\ref{sec:coll}. 
Note, however, that these points only affect the keyrates by an amount on the order of the fraction of test rounds, which typically must be chosen to be small in the non-IID case (see Sec.~\ref{sec:coll}), though it may still be relevant for potential experimental demonstrations.

\item 
Another issue is whether the length of the final key is fixed before the protocol begins. In principle, one could consider a protocol where $\lkey$ is chosen as a function of the statistics computed during the parameter estimation step --- informally, a shorter $\lkey$ might be chosen if the statistics indicate more potential eavesdropping. Similarly, it might be possible to adjust the length of the error-correction string $\str{L}$ depending on some observed estimate of the error rate. However, for simplicity we will focus here on protocols where $\lkey$ is a fixed value, chosen before the protocol begins. (This essentially means we will design the protocol such that it simply aborts if there is not enough entropy to securely produce a key of length $\lkey$, even if it might potentially be possible to produce a shorter key.) In line with this, we will focus on protocols where the length of $\str{L}$ has an upper bound that is fixed beforehand, and cannot be changed even if the error rate in a run of the protocol is higher than expected. (To account for the fact that Bob may not guess $\str{A}$ correctly in that case, recall we will include a step that ensures the protocol aborts with high probability if the guess is wrong.)

\end{itemize}

\subsection{Noisy honest devices}

Note that in the above outline, we have implicitly relied on the idea that some honest implementation has been specified, which produces a shared secret key (with high probability) when everything proceeds as intended. 
We highlight that in general, this honest implementation does not need to be free of noise; for instance, it does not necessarily have to involve distributing a perfect maximally entangled state to the devices.
To avoid ambiguity, we shall use the terms \term{honest} devices and \term{noiseless} devices --- the former refers to the behaviour we expect ``in practice'' when there is no \emph{unintended} deviation from the specified implementation, while the latter refers to the behaviour when the state and measurements are in some idealized ``perfect'' form, free of any form of noise. As a rough rule of thumb, most security proofs only really require a description of the honest devices; the noiseless devices merely serve as a convenient ``intermediate'' step to describe the honest devices.

To give a concrete example, 
many basic DIQKD protocols~\cite{PAB+09,ARV19,HST+20} make use of the CHSH value. In that case, a commonly used simple model for the noiseless devices would be for them to implement IID states and measurements, 
specifically the following ones (for brevity we describe the measurements as hermitian observables rather than specifying the individual projectors, using $X,Y,Z$ to denote the Pauli matrices):
\begin{align}\label{eq:noiselessCHSH}
\begin{array}{l}
\text{State} \\
\ket{\Phi^+}\defvar (\ket{00}\pm\ket{11})/\sqrt{2}, \\
{}
\end{array}
\qquad
\begin{array}{l}
\text{Alice's measurements} \\
x=0:\, Z, \\
x=1:\, X, 
\end{array}
\qquad
\begin{array}{l}
\text{Bob's measurements}\\
y=0:\, (Z+X)/\sqrt{2}, \\
y=1:\, (Z-X)/\sqrt{2} .
\end{array}
\end{align}
This combination of state and measurements is useful because it achieves the maximum possible CHSH value within quantum theory (and due to self-testing properties of the CHSH inequality, is in fact essentially the only one that does so, up to local isometries/ancillas). Strictly speaking, in those protocols Bob usually has an additional measurement ($y=2$), corresponding to the same basis as Alice's $x=0$ measurement (i.e.~a Pauli-$Z$ measurement). However, this is used only to improve the key generation rates and is not involved in the CHSH value.

On the other hand, the honest devices in those protocols are usually taken to be somewhat more ``realistic'' devices that do not perfectly implement the above states and measurements. A simple model for such devices would be \term{depolarizing noise}, 
in which the honest devices are described by a parameter $\q \in [0,0.5]$, producing the output distribution
\begin{align}
\pr{ab|xy} = (1-2\q) \prnonoise[ab|xy] + 2\q\, \frac{1}{|\mathcal{A}| |\mathcal{B}|}.
\label{eq:depol}
\end{align}
Here, $\prnonoise[ab|xy]$ refers to some noiseless distribution, which could for instance be produced by~\eqref{eq:noiselessCHSH}, but more generally can be some other noiseless implementation. 
If the outputs are binary-valued and some input pair $\tilde{x} \tilde{y}$ satisfies $\prnonoise[ab|\tilde{x} \tilde{y}] = \delta_{a,b}/2$ for the noiseless distribution, then the parameter $\q$ is exactly the probability of Alice and Bob obtaining different outputs under the distribution $\pr{ab|\tilde{x} \tilde{y}}$. This is sometimes called the \term{quantum bit error rate} (QBER); however, some care should be taken here because $\q$ may not be equal to the QBER in other situations. While this model is very simple, it can be motivated by noting that in the 2-input 2-output setting, the distributions produced by the devices can always be forced to be in the form~\eqref{eq:depol} with $\prnonoise[ab|xy]$ given by~\eqref{eq:noiselessCHSH}, by applying a ``depolarization''/``twirling'' procedure~\cite{MAG06} (this is not necessarily optimal, but it is always an option). 

We can interpret~\eqref{eq:depol} as saying that with probability $2\q$, the noiseless distribution is replaced by uniformly random outputs.
Another perspective is that the honest devices implement the noiseless measurements (assumed to be Pauli measurements for simplicity), but instead of noiselessly sharing the state $\ket{\Phi^+}$, they share the \term{Werner state}
\begin{align}
\varrho_\q \defvar (1-2\q)\pure{\Phi^+} + 2\q\, \frac{\id}{4}.
\label{eq:werner}
\end{align}

Another basic noise model would be \term{limited detection efficiency}. This is described by specifying some noiseless distribution $\prnonoise[ab|xy]$ and a detection-efficiency parameter $\eta \in [0,1]$. The honest devices in this case produce a distribution which is based on $\prnonoise[ab|xy]$, but where each party's output (for any input) is replaced by an ``erasure'' symbol $\perp$ with probability $1-\eta$ (independently of the other party). This serves as a simplistic model for single-photon-pair implementations, in which each photon is independently lost with probability $1-\eta$
(combining the effects of fibre loss and detector inefficiency into the single parameter $\eta$).

However, for protocols based on CHSH (for instance), one would need to ensure the distribution has only 2 outputs. For such cases in this work, assuming that $\prnonoise[ab|xy]$ is a 2-output distribution with output labels $\{-1,+1\}$, we shall usually use the approach of deterministically mapping the $\perp$ output for the honest devices to (say) the $-1$ output.\footnote{Another option is to postselect only on the rounds where both parties had a successful detection. However, this introduces the Bell-test ``detection loophole'' if implemented naively. There are various approaches towards addressing this loophole, but in this work we stick to the simple solution of forcibly mapping all the outputs to binary values.} This can be alternatively viewed as applying a classical Z-channel to the outputs of $\prnonoise[ab|xy]$, flipping $+1$ to $-1$ with probability $1-\eta$. (In some situations, it may be useful to preserve the $\perp$ output for some input values; we will discuss this when it arises.)

The limited detection efficiency model is more complicated to analyze than depolarizing noise, because for any fixed value of $\eta$, a different choice of noiseless distribution $\prnonoise[ab|xy]$ is needed to yield the highest CHSH value for the honest distribution~\cite{Ebe93}. (While higher CHSH values do not necessarily correspond to better DIQKD keyrates, they can serve as a useful starting guide.) Furthermore, when studying more realistic photonic implementations, the noiseless distribution $\prnonoise[ab|xy]$ would not be produced by two-qubit systems like in~\eqref{eq:noiselessCHSH}, but rather more complicated states in a photonic Fock space. We do not discuss this here, but examples of such analysis can be found in~\cite{TWF+18,HST+20}.

\section{Security definitions}
\label{sec:securitydef}

In this work, we follow the security definition used in e.g.~\cite{ARV19}. 
Qualitatively, the concepts involved in this definition are: \term{completeness}, meaning that the honest devices will accept with high probability, and \term{soundness}, meaning that the devices remain ``secure'' (possibly by aborting) even in the presence of dishonest behaviour. These notions are formalized as follows:

\begin{definition} \label{def:secure}
Consider a DIQKD protocol such that at the end, the honest parties either \term{accept} (producing keys $K_A$ and $K_B$ of length $\lkey$ for Alice and Bob respectively) or \term{abort} (producing an abort symbol $\perp$ for all parties).
It is said to be 
$\ecom$-complete and $\esound$-sound if the following properties hold:
\begin{itemize}
\item (Completeness) The honest implementation aborts with probability at most $\ecom$.
\item (Soundness) For any implementation of the protocol, we have 
\begin{align}\label{eq:sound}
\pr{\mathrm{accept}} \frac{1}{2} \norm{\sigma_{K_A K_B E'} - \left(\frac{1}{2^{\lkey}} \sum_k \pure{kk}_{K_A K_B} \right) \otimes \sigma_{E'}}_1 \leq \esound,
\end{align}
where $\sigma$ denotes the normalized state conditioned on the protocol accepting, and $E'$ denotes all side-information registers available to the adversary at the end of the protocol.
\end{itemize}
\end{definition}

The completeness condition is an intuitively sensible requirement for a protocol, and more formally, it rules out some trivial protocols that fulfill the soundness condition (such as a protocol that simply always aborts, hence achieving $\pr{\mathrm{accept}}=0$ and thus $\esound=0$).
The soundness condition basically states that $\sigma_{K_A K_B E'}$ is close in trace distance (up to a factor of $\pr{\mathrm{accept}}$)\footnote{This prefactor might appear to be an unacceptable weakening of the security definition, since it allows $\sigma_{K_A K_B E'}$ to be extremely different from $2^{-\lkey} \sum_k \pure{kk}_{K_A K_B} \otimes \sigma_{E'}$ when $\pr{\mathrm{accept}}$ is small. However, as we shall shortly discuss, this definition indeed turns out to be sufficient to ensure security when composing QKD with other protocols. Furthermore, any ``reasonable'' security definition regarding the states \emph{conditioned} on accepting must in fact have some dependence on $\pr{\mathrm{accept}}$ (albeit not necessarily in this exact form) --- this can be seen by noting that Eve could supply completely classical states/devices that give her perfect knowledge of all outputs, in which case even the state conditioned on accepting is trivially insecure. (The protocol's accept probability for such a state is presumably minuscule, but is typically still nonzero due to simple statistical fluctuations, so it is still mathematically valid to condition on the accept event.)
Hence it is impossible to impose a ``reasonable'' security statement regarding such conditional states unless some dependence on $\pr{\mathrm{accept}}$ is included. 
} to a state where 
the honest parties' keys are (1)~always equal, and (2)~completely independent of Eve's side-information $E'$. In fact, it is often convenient to split the soundness condition into a pair of slightly simpler conditions roughly corresponding to those two properties. To make this precise, consider the following definitions:

\begin{definition}
A DIQKD protocol as described above is said to be 
$\ecorr$-correct and $\esecr$-secret if
the following properties hold:
\begin{itemize}
\item (Correctness) For any implementation of the protocol, we have
\begin{align}
\pr{K_A \neq K_B \land \mathrm{accept}} \leq \ecorr.
\label{eq:correct}
\end{align}
\item (Secrecy) For any implementation of the protocol, we have 
\begin{align}
\pr{\mathrm{accept}} \frac{1}{2} \norm{\sigma_{K_A E'} - \idk_{K_A} \otimes \sigma_{E'}}_1 \leq \esecr,
\label{eq:secrecy}
\end{align}
where $\sigma$ is as described in Definition~\ref{def:secure}, and $\idk_{K_A}$ denotes the maximally mixed state (i.e.~a uniformly random key for Alice).
\end{itemize}
\end{definition}
\noindent It is quite straightforward to show (see e.g.~\cite{arx_PR14}) that if a DIQKD protocol is both $\ecorr$-correct and $\esecr$-secret, then it is $(\ecorr+\esecr)$-sound. (Note that the secrecy condition only 
involves $K_A$,
but the intuition is that if correctness holds as well, then 
the state is close to one where both $K_A$ and $K_B$ are independent of $E'$, up to the $\pr{\mathrm{accept}}$ prefactor.) 
Hence in a security proof, we can prove correctness and secrecy separately, then add the corresponding parameters to get the value of $\esound$.

Of course, when choosing security definitions, an important consideration should be their operational relevance. The reason we have chosen Definition~\ref{def:secure} is that in the case of device-dependent QKD, it is sufficient to imply \term{composable security}~\cite{MR11,arx_PR14,arx_PR21} (we discuss this in more detail in Appendix~\ref{app:comp}). 
Broadly, this is the notion of ensuring that security still holds when protocols are composed with each other --- the idea is that if a protocol satisfies a composable security definition, then it can be ``safely'' used in place of some other idealized functionality in any larger protocol. For instance, in the case of QKD, the ideal functionality would be (glossing over some details) one that simply outputs a perfect shared secret key whenever it does not abort. 

Here, because Definition~\ref{def:secure} is based on trace distance, one can give an even more concrete operational statement. 
To give a rough overview (see Appendix~\ref{app:comp} or~\cite{arx_PR14,VPdR19,arx_PR21} for more detailed exposition): it can be shown that if a QKD protocol is both $\ecom$-complete and $\esound$-sound, then it is \term{$\eps$-secure} in a composable sense, with $\eps = \ecom + \esound$.
We can then derive the following guarantee: if we use an $\eps$-secure QKD protocol in place of the ideal functionality, then the maximum probability of \emph{any} ``failure'' event (what constitutes a ``failure'' can be arbitrary, as long as it is a well-defined event) cannot increase by more than $\eps$. The core idea behind this interpretation is that the trace distance is operationally related to distinguishing probability. Hence if two states are close in trace distance, then they must be ``difficult to distinguish'', which implies that the probability of any event based on the two states cannot differ substantially between them. However, this is not the only consideration, as the full analysis is more involved, based on the \term{Abstract Cryptography} framework~\cite{MR11} (which allows more general metrics in principle, though they may not have the above operational interpretation).

On the other hand, for DIQKD against coherent attacks, the situation is more complicated because of the memory issue discussed previously. 
If there are any non-negligible correlations between the public communication in later instances and secret keys in earlier instances, this causes problems if one tries to prove that the definition is composable, because information about the earlier secret keys is being leaked. Hence to restore the same operational implications for DIQKD as in device-dependent QKD, one would likely have to impose some kind of constraint on the device memories, as stated previously. 
However, strictly speaking, the Abstract Cryptography framework has not been fully formalized in the context of DIQKD devices (i.e.~devices that perform some ``uncharacterized'' measurements). Still, it seems plausible that it could be formalized in a way that allows us to impose the constraint on the device memory across instances, and in that case prove that the same operational implications hold as in device-dependent QKD. 

\begin{remark}
We highlight that even without constraints on the device memory across protocol instances, we can still prove that the DIQKD protocols we study satisfy Definition~\ref{def:secure} --- there are no issues in proving that it holds as a ``standalone'' security definition. The memory issue only affects the operational {implications} of this definition when protocols are composed, not the question of whether the definition itself is satisfied.
\end{remark}

\section{Security proof outline}
\label{sec:proofsketch}

We now give a rough sketch of how to prove a protocol is secure in the sense defined above, i.e.~that it satisfies completeness and soundness. The latter will be proven by showing that the protocol satisfies correctness and secrecy, as previously mentioned.

\paragraph{Completeness:}
This is usually straightforward to show, as it is only based on the honest behaviour, which is typically IID. Hence as long as there is enough ``tolerance'' in the parameter estimation step, and the error-correction string $\str{L}$ is chosen to be long enough, it is easy to argue that the honest protocol accepts with high probability. (We shall discuss below exactly how long $\str{L}$ needs to be in order to ensure this.)

\paragraph{Correctness:}
This is also straightforward to prove for error-correction procedures that follow the structure we have previously outlined, i.e.~those that involve a step to verify whether Bob's guess is correct. More precisely, this step consists of Alice sending a 2-universal hash of $\str{A}$ (together with her choice of hash function) to Bob, who compares it with the hash of his guess $\tilde{\str{A}}$, and accepts if and only if the hashes match. By the defining property of 2-universal hashing, this ensures that the protocol aborts with high probability if $\str{A} \neq \tilde{\str{A}}$, and it is easy to show that this implies the correctness property is satisfied (since $K_A$ and $K_B$ are produced from $\str{A}$ and $\tilde{\str{A}}$ respectively). Importantly, notice that this argument does not rely on \emph{any} other properties of the error-correction procedure, such as the probability of producing a correct guess when the devices behave dishonestly. Focusing on error-correction procedures that include this ``verification hash'' lets one prove the correctness property in a very simple way, ``decoupled'' from all other properties of the error-correction step.

\paragraph{Secrecy:}
The main challenge lies in proving this property. 
For ease of explanation, we will first describe in Sec.~\ref{sec:sketchcoll} a proof sketch under the collective-attacks assumption, then later outline in Sec.~\ref{sec:sketchbeyond} how that assumption can be dropped. 
With the collective-attacks assumption, the initial quantum state shared between Alice, Bob and Eve before the measurements are performed is of the form $\rho_{\qA \qB \qE}^{\otimes n}$, where $\qA,\qB,\qE$ are quantum registers. In the subsequent analysis, we will be referring to several different registers Alice holds, so to help reduce confusion, we explicitly list the notation here:
\begin{itemize}
\item $\qA$: Single-round quantum register to be measured by Alice
\item $\str{A}$: Full classical output string Alice holds (after noisy preprocessing and sifting)
\item $\hat{A}$: Single-round classical value Alice holds (after noisy preprocessing and sifting)
\item $\hat{A}_x$: Single-round classical value Alice holds (after noisy preprocessing and sifting), conditioned on her choosing input $x$
\end{itemize}

\subsection{Collective-attacks analysis}
\label{sec:sketchcoll}

Consider the point in the protocol just before the privacy amplification step.
The state storing all the registers of interest at this point\footnote{In a minor abuse of notation, we are using $\rho$ to denote both the pre-measurement quantum state and the state at this point --- note that in particular, the only common register between the states in the two situations is Eve's side-information $\str{E}$, which has the same reduced state in both cases, as long as Eve does not perform any operations on it. (We can indeed assume this without loss of generality because for the purposes of the secrecy definition~\eqref{eq:secrecy}, it is always to Eve's advantage to ``preserve'' her side-information as long as possible, i.e.~she only applies reversible operations to it, in which case we might as well assume she does not operate on it.)} is of the form $\rho_{\str{A}\tilde{\str{A}}\str{B}\str{X}\str{Y}\str{L}\str{E}}$, where $\str{E}$ denotes all the quantum side-information that Eve stored. 
Let us assume for simplicity that the protocol is such that 
all test-round outputs are publicly announced for parameter estimation, and that apart from that,
the only publicly communicated information is the error-correction data $\str{L}$ and the inputs $\str{X}\str{Y}$. (Protocols involving other one-way public communication can usually be accounted for with small modifications to this proof sketch.)

Recall that the protocol aborts during parameter estimation if the observed statistics lie outside of some range of accepted values. 
Since we have imposed the collective-attacks assumption, in each round we have some well-defined single-round state $\rho_{\qA \qB \qE}$ and possible measurements $\pvm_{a|x}, \pvm_{b|y}$. This yield some corresponding parameter values $\tr{\Gamma_j(\pvm_{a|x},\pvm_{b|y}) \rho_{\qA \qB}}$ (see~\eqref{eq:bellop}) --- note that these are just abstract values that cannot be ``directly'' observed, but for our subsequent argument we only require that these are well-defined values. 
For the discussion in this section, let us avoid ambiguity by following statistical terminology conventions: the term \term{parameter} will refer to such underlying abstract properties of the process producing the data, while the term \term{statistic} will refer to random variables that can be actually observed (or computed from the observations) in the protocol.

With this in mind, let us suppose that the accept condition in parameter estimation is that for every $j$, the \emph{statistic} corresponding to an ``estimate'' of $\tr{\Gamma_j(\pvm_{a|x},\pvm_{b|y}) \rho_{\qA \qB}}$ (we remain slightly informal about exactly how the statistic is computed) lies inside some interval $[\constr_j^\mathrm{min},\constr_j^\mathrm{max}]$. 
Now choose some value $\dsou>0$ (we shall shortly discuss the significance of this value), and note that there are two exhaustive possibilities for the \emph{parameters} 
$\tr{\Gamma_j(\pvm_{a|x},\pvm_{b|y}) \rho_{\qA \qB}}$:
\begin{enumerate}
\item \label{case:abort} $\tr{\Gamma_j(\pvm_{a|x},\pvm_{b|y}) \rho_{\qA \qB}} \notin [\constr_j^\mathrm{min}-\dsou,\constr_j^\mathrm{max}+\dsou]$ for at least one $j$
\item \label{case:accept} $\tr{\Gamma_j(\pvm_{a|x},\pvm_{b|y}) \rho_{\qA \qB}} \in [\constr_j^\mathrm{min}-\dsou,\constr_j^\mathrm{max}+\dsou]$ for all $j$
\end{enumerate}
Qualitatively, case~\ref{case:accept} is just the statement that the true parameter values lie ``$\dsou$-close to''\footnote{To avoid confusion, note that this sense of ``close to'' should not be conflated with the previously mentioned tolerance for deviations from the honest behaviour (which in this notation is essentially captured by the difference between $\constr_j^\mathrm{min}$ and $\constr_j^\mathrm{max}$). They are distinct parameters --- informally, the former describes deviations from the accepted set of values, while the latter describes deviations from the honest behaviour, or in other words it \emph{defines} the accepted set of values. In our detailed security proofs later, the latter will be quantified by a parameter $\dtol$ (distinct from $\dsou$), which we will use to basically take $\constr_j^\mathrm{min}=\constr_j^{\mathrm{exp}}-\dtol$,  $\constr_j^\mathrm{max}=\constr_j^{\mathrm{exp}}+\dtol$ for some ``expected'' parameter value $\constr_j^{\mathrm{exp}}$.} the range of values accepted in parameter estimation, while case~\ref{case:abort} is simply the complement of this possibility.
In case~\ref{case:abort}, by exploiting the IID assumption it is a straightforward exercise in classical statistics to show that the protocol aborts with high probability (given suitable choices for the number of test rounds and the value of $\dsou$). 
This implies the secrecy definition~\eqref{eq:secrecy} is trivially satisfied in that case, due to the $\pr{\mathrm{accept}}$ prefactor. Hence the rest of this proof sketch is devoted to case~\ref{case:accept}, which is rather more involved (essentially, we will need bounds on the trace-distance term in~\eqref{eq:secrecy} rather than merely $\pr{\mathrm{accept}}$ alone). 
However, we will not immediately need to use the fact that we are in case~\ref{case:accept} --- instead, we shall first find some way to express the key length $\lkey$ in terms of single-round quantities.

\begin{remark}
Note that the cases are defined purely in terms of the \emph{parameters}, not the \emph{statistics}.
It may be tempting to think that the observed statistics are used to deduce which of the two cases holds. 
However, this is not the right way to view the proof structure here --- using those statistics alone, it is not even possible to assign a probability to each case (this would require an explicit (Bayesian) prior on the device behaviours, which is not part of the security definitions we use). Rather, our security argument does not rely on identifying which case holds, but merely on the fact that one of them must hold, and proving that the secrecy definition is satisfied in either case.
\end{remark}

Let us assume for this sketch that the protocol produces the final key using only the generation rounds --- the collective-attacks assumption implies that the publicly announced test-round data is completely ``decoupled'' from the generation rounds, so this allows us to focus our analysis on the latter. We will also assume that the number of generation rounds is fixed at some value $m$, i.e.~the test rounds are selected by taking a uniformly random subset of size $n-m$. Let $\str{A}_g\str{B}_g\str{X}_g\str{Y}_g\str{E}_g$ denote the generation rounds within $\str{A}\str{B}\str{X}\str{Y}\str{E}$. 
Assuming the generation-round inputs are chosen with an IID distribution (note that in particular, this covers protocols that use a fixed input in all such rounds), the state on registers $\str{A}_g\str{B}_g\str{X}_g\str{Y}_g\str{E}_g$ has an IID structure as well: $\rho_{\str{A}_g\str{B}_g\str{X}_g\str{Y}_g\str{E}_g} = \rho_{\hat{A} \hat{B} X Y \qE}^{\otimes m}$, where $\hat{A}$ denotes a single round of Alice's final output string (after sifting and noisy preprocessing), and similarly $\hat{B}$ for Bob (after sifting).

With these points in mind, let us compute an expression for $\lkey$.
The privacy amplification step involves Alice trying to produce a secret key from $\str{A}_g$ against Eve's side-information $\str{X}_g\str{Y}_g\str{L}\str{E}_g$. (As briefly noted previously, the secrecy condition only involves Alice's key, so it suffices to consider her side of the protocol at this step.) Recalling the Leftover Hashing Lemma (Fact~\ref{fact:LHL}), this means that the final key will satisfy the secrecy definition as long as $\lkey$ is chosen such that\footnote{It might seem slightly odd to state an upper bound on $\lkey$, given that here we are constructing secure protocols and hence proving \emph{lower} bounds on achievable keyrates. A perspective that may be helpful is to view~\eqref{eq:roughPA} as saying that we should be choosing $\lkey$ as close to that bound as possible, but no higher, since in that case we would be unable to guarantee security via the Leftover Hashing Lemma.}
\begin{align}
\lkey \lesssim \Hmin^\eps(\str{A}_g|\str{X}_g\str{Y}_g\str{L}\str{E}_g),
\label{eq:roughPA}
\end{align} 
up to some asymptotically negligible corrections, depending on the desired level of security and choice of $\eps$. (A full security proof would need to account for the fact that we condition on the parameter-estimation and error-correction steps accepting, which can affect various entropies, but we put aside this issue for this sketch --- roughly speaking, the effect of conditioning on such events is approximately ``cancelled out'' by the $\pr{\mathrm{accept}}$ prefactor in~\eqref{eq:secrecy}.)

Our proof sketch will hence be based on bounding the RHS of~\eqref{eq:roughPA}. 
The first step is to apply a rather ``coarse'' chain rule for the smoothed min-entropy (assuming $\len(\str{L})$ is a fixed constant, and focusing on one-way error correction for simplicity; see Sec.~\ref{sec:mainprot} for further discussion),\footnote{In our protocol outline, we allowed for protocols where $\str{L}$ can have length \emph{up to} some maximum value $k$, rather than being \emph{exactly} a fixed length. This might appear to introduce a technical loophole of leaking additional information through the length of $\str{L}$. However, for the purposes of this analysis we can handle it by noticing that the set of bitstrings of length up to $k$ has cardinality $2^{k+1}-1$, and can hence can be embedded (injectively) in the set of bitstrings of length exactly $k+1$. Therefore, for such protocols it suffices to subtract $k+1$ instead of $\len(\str{L})$ in~\eqref{eq:roughchainrule}, at least if we do not worry about directions of communication in the two-way case.}
\begin{align}
\Hmin^\eps(\str{A}_g|\str{X}_g\str{Y}_g\str{L}\str{E}_g) \geq \Hmin^\eps(\str{A}_g|\str{X}_g\str{Y}_g\str{E}_g) - \len(\str{L}),
\label{eq:roughchainrule}
\end{align}
i.e.~giving the system $\str{L}$ to Eve causes the conditional smoothed min-entropy to decrease by no more than $\len(\str{L})$ bits. 
We now note that the IID structure on $\str{A}_g\str{X}_g\str{Y}_g\str{E}_g$ lets us apply the AEP to bound the first term:
\begin{align}
\Hmin^\eps(\str{A}_g|\str{X}_g\str{Y}_g\str{E}_g) \geq m H(\hat{A}|XY\qE) - O(\sqrt{m}).
\label{eq:roughHminAEP}
\end{align}

As for the $\len(\str{L})$ term, 
we should allow $\str{L}$ to be long enough such that in the \emph{honest} protocol, Bob's guess for $\str{A}$ is correct with high probability. This is precisely the point addressed by Fact~\ref{fact:EC} regarding error correction, which tells us that the optimal length of $\str{L}$ required for this is approximately 
\begin{align}
\len(\str{L}) \approx \Hmax^\eps(\str{A}_g|\str{B}_g\str{X}_g\str{Y}_g)_\mathrm{hon},
\end{align} 
again up to some corrections\footnote{We shall implicitly absorb the length of the ``verification hash'' into this expression --- note that the required length for that hash only depends on the desired correctness parameter, and hence is a constant independent of $m$.} depending on the desired success probability and choice of $\eps$. Note that this is computed with respect to the honest behaviour, because we are only looking for an estimate of how long we should allow $\str{L}$ to be for that case (recall that we assume the verification of whether $\tilde{\str{A}} \neq \str{A}$, in case of unintended behaviour, takes place as a second step). 
This means we can exploit the IID structure on $\str{A}_g\str{B}_g\str{X}_g\str{Y}_g$ for the honest states to apply the AEP:
\begin{align}
\Hmax^\eps(\str{A}_g|\str{B}_g\str{X}_g\str{Y}_g)_\mathrm{hon} \leq m H(\hat{A}|\hat{B}XY)_\mathrm{hon} + O(\sqrt{m}),
\label{eq:roughHmaxAEP}
\end{align}
yielding an upper bound on the required $\len(\str{L})$ in terms of the single-round state.

Putting together all the above inequalities, we see that the secrecy condition will be satisfied as long as $\lkey$ is chosen such that
\begin{align}
\lkey \lesssim m (H(\hat{A}|XY\qE) - H(\hat{A}|\hat{B}XY)_\mathrm{hon}) - O(\sqrt{m}).
\label{eq:roughlkeyprelim}
\end{align}
While this informal description is mainly to capture the asymptotic behaviour, the constants in the inequalities and approximations we have used above can all be tracked to yield an explicit expression in terms of $m$, hence yielding finite-size bounds.\footnote{In some sense, there are two sources of finite-size effects in this analysis. The first, though we did not describe it in detail, is the ``statistical'' fact that in order to ensure the abort probability in case~\ref{case:abort} is high, the protocol must have sufficiently many test rounds. The second is the use of the AEP in~\eqref{eq:roughHminAEP} and~\eqref{eq:roughHmaxAEP}, i.e.~the ``entropic'' fact that the smoothed min/max-entropy of $m$ copies is in general not exactly equal to $m$ times the von Neumann entropy of one copy, but rather can differ by up to $O(\sqrt{m})$. In this analysis, these effects are treated separately, but it is possible in principle that some analysis which incorporates both these corrections simultaneously might achieve better finite-size bounds. The entropy accumulation theorem we present later indeed incorporates both effects together, although it is not entirely clear whether it achieves tighter bounds in doing so (its main purpose is instead to account for non-IID effects). 
On a different note, this could arguably be a reason why a finite-size experiment achieving a Bell violation may sometimes still be unable to achieve DIRNG/DIRE/DIQKD --- when certifying a Bell violation, one only needs to account for the ``statistical'' effect, while these DI protocols also need to account for the ``entropic'' effect.}
However, this does not yet describe a concrete choice of $\lkey$ that we can specify in a protocol description, since $H(\hat{A}|XY\qE)$ is a value that depends on the (potentially dishonest) states and measurements. To obtain this, we finally use the fact that we have restricted our attention to case~\ref{case:accept}, i.e.~where the states and measurements have parameters 
$\tr{\Gamma_j(\pvm_{a|x},\pvm_{b|y}) \rho_{\qA \qB}}$
sufficiently ``close to'' the accepted range. That means that we can ensure~\eqref{eq:roughlkeyprelim} holds by setting
\begin{align}
\lkey \approx m \left(
\inf_{
\mathcal{S}_{\mathrm{close}}
} H(\hat{A}|XY\qE) - H(\hat{A}|\hat{B}XY)_\mathrm{hon}
\right) - O(\sqrt{m}),
\label{eq:roughlkey}
\end{align}
where $\mathcal{S}_{\mathrm{close}}$ denotes the set\footnote{Pedantically, this collection might be a proper class rather than a set, if one allows states and measurements on arbitrary Hilbert spaces (though if we restrict to finite-dimensional Hilbert spaces, and take only one Hilbert space for each dimension as a representative, it can be argued that this is indeed a valid set via the axiom schema of replacement). In any case, all relevant quantities in the security proof are well-defined even when $\mathcal{S}_{\mathrm{close}}$ is a proper class: note that rigorously speaking, the optimization should be understood as referring to the infimum of a set of real numbers (i.e.~indeed a valid set, since it is a subset of $\mathbb{R}$) defined by an existence quantifier taken over the class of ``allowed'' states and measurements. 
For brevity, however, we continue referring to $\mathcal{S}_{\mathrm{close}}$ (and similar collections) using set terminology and notation.} of single-round states and measurements captured within case~\ref{case:accept}.

Hence we have finally reduced the expression for an achievable length of secret key to~\eqref{eq:roughlkey}, which basically only involves single-round quantities. Note that the $H(\hat{A}|\hat{B}XY)_\mathrm{hon}$ term is easy to compute, since it is simply based on the honest behaviour. The main challenge is the first term,
\begin{align}
\inf_{\mathcal{S}_{\mathrm{close}}} H(\hat{A}|XY\qE).
\label{eq:optsimple}
\end{align}
While this is simple enough to state as an optimization problem, it is highly nontrivial to solve in the DI setting, because no bound is placed on the system dimensions --- this means there are effectively infinitely many optimization variables. Furthermore, the optimization is nonconvex because we need to optimize over both the states and the measurements. Hence there may be local minima in the optimization, which means that even if we assume some upper bound on the system dimensions, we still cannot assume that numerical algorithms will find the true minimum. 
A substantial portion of this thesis (Chapter~\ref{chap:singlernd}) is devoted to new techniques for solving this optimization. 

We remark that the guessing-probability-based approach of~\cite{PAM+10,NPS14,BSS14} can basically be viewed as bounding this optimization via the simple inequality\footnote{When $\hat{A}$ is binary-valued and uniform, then a tighter bound $H(\hat{A}|XY\qE) \geq 2 (1-\pg(\hat{A}|XY\qE))$ holds, via a relation to trace distance derived as Theorem~14 of~\cite{BH09}.} $H(\hat{A}|XY\qE) \geq \Hmin(\hat{A}|XY\qE) = -\log \pg(\hat{A}|XY\qE)$, applying the operational interpretation of min-entropy. The guessing probability 
has the convenient property that when bounding it, Eve's side-information can be taken to be classical without loss of generality, since she needs to measure her state to produce her guess. This allows one to bound it arbitrarily tightly using a hierarchy of semidefinite programming (SDP) relaxations developed in the context of quantum nonlocality, known as the NPA hierarchy~\cite{NPA08} (we use this SDP hierarchy as well in Sec.~\ref{sec:NPO}). Conveniently, this SDP formulation easily accounts for arbitrary Bell inequalities, or combinations of several Bell parameters~\cite{NPS14,BSS14}. However, the resulting bound is not very tight because it computes the min-entropy rather than the von Neumann entropy, and the approaches we present in Chapter~\ref{chap:singlernd} are intended to try to avoid this problem.

\subsection{Asymptotic behaviour}
\label{sec:sketchasympt}

Our above sketch can easily be used to find an expression for the asymptotic keyrate. First note that in the limit of large $n$, it is possible to choose the fraction of test rounds to be arbitrarily small\footnote{The intuition is that (for the IID case at least) the accuracy of parameter estimation when $n$ is large mainly depends on the absolute number of test rounds, rather than their proportion out of the total number of rounds. 
Hence for a desired accuracy level, we can fix the number of test rounds as a constant, in which case the fraction of test rounds approaches zero as $n$ increases. See~\cite{DF19} or Sec.~\ref{sec:finitekeylength} for further discussion.}, so we have $m\approx n$. 
Furthermore, by carefully choosing the tolerances accepted in the protocol as well as the $\dsou$ value (as a function of $n$; see the discussion below Theorem~\ref{th:collective} for explicit expressions, or above~\eqref{eq:asympt} for the EAT version), it is possible to have $\mathcal{S}_{\mathrm{close}}$ converge (informally) to $\mathcal{S}_{\mathrm{exact}}$, where $\mathcal{S}_{\mathrm{exact}}$ denotes the set of single-round states and measurements producing \emph{exactly} the same parameter values $\vec{\constr}$ as the honest devices. (Note that $\mathcal{S}_{\mathrm{exact}}$ is defined by the constraint of producing the same values of $\vec{\constr}$ as the honest devices, not that the states and measurements are exactly the honest ones; the latter would be too trivial.)
In that case, dividing both sides of~\eqref{eq:roughlkey} by $n$ and taking the limit $n\to\infty$ yields
\begin{align}
\operatorname{rate}_\infty = \inf_{\mathcal{S}_{\mathrm{exact}}} H(\hat{A}|XY\qE) - H(\hat{A}|\hat{B}XY)_\mathrm{hon} .
\label{eq:devwin}
\end{align}
This result is sometimes known as the Devetak-Winter bound~\cite{DW05}, though we remark that a different proof technique was used in that work, and it was studying a somewhat simpler context where the state $\rho_{\hat{A}\hat{B}\qE}$ is known \textit{a priori}.\footnote{More specifically:~\cite{DW05} proves that if we are given IID copies of a \emph{known} cqq state $\rho_{\hat{A}\qB\qE}$, then there exists a protocol (using one-way communication) such that we can distill a secret key at asymptotic rate $I(\hat{A}:\qB) - I(\hat{A}:\qE)$, which can be rewritten into the equivalent formula $H(\hat{A}|\qE)-H(\hat{A}|\qB)$, closely resembling the formula we obtained above. However, since their result is based on considering a fixed $\rho_{\hat{A}\qB\qE}$, in their formula it is not necessary to take the infimum in the first term or specify that the second term is computed ``honestly''.} 

Strictly speaking, there is a technicality we glossed over when introducing the asymptotic keyrate, in that the definition~\eqref{eq:defkeyrates} does not explicitly require that the protocol is secure in any sense (the definition is mathematically valid without this requirement, but it would then seem to be a rather useless concept). 
To get a more ``useful'' statement, we require some structure along the following lines: the protocol description involves two parameters $n$ and $\eps$, such that the protocol uses $n$ rounds and is both\footnote{The choice here to use the same $\eps$ in both conditions is slightly arbitrary; e.g.~one could instead require $\ecom$-completeness and $\esound$-soundness with $\ecom+\esound=\eps$ (though the class of protocols fulfilling this version is potentially harder to analyze). Note that it is not enough to just require $\eps$-soundness alone, because the completeness condition is required to rule out the following absurd protocol: Alice and Bob simply use the public classical channel to agree that with probability $\eps$ they will set their ``key'' to be the all-$0$ string of length $n$, and otherwise (with probability $1-\eps$) they abort. This is trivially $\eps$-sound since it has $\pr{\mathrm{accept}}=\eps$, and the key length is always $\lkey=n$ (in fact any arbitrary $\lkey$ is possible with this construction). However, it is clearly not a reasonable protocol.} $\eps$-complete and $\eps$-sound, producing a key of length $\lkey(n,\eps)$ (here we shall explicitly denote the dependence of $\lkey$ on these parameters). With this, we can define the finite-size (and ``finite-security'') keyrate and asymptotic keyrate more precisely as 
\begin{align}
\operatorname{rate}^\eps_n \defvar \frac{\lkey(n,\eps)}{n},  \qquad 
\operatorname{rate}^0_\infty \defvar \lim_{\eps\to 0} \lim_{n\to\infty} \operatorname{rate}^\eps_n.
\end{align}

The $\eps\to0$ limit in the asymptotic keyrate definition reflects the qualitative idea that we want the protocol to achieve ``arbitrarily good security'' given enough rounds.\footnote{It is important to take the limits in this order: if the $\eps\to0$ limit is taken first, typically this would only give a trivial value of zero. This is because for QKD/DIQKD scenarios (and many other cryptography or coding tasks), it is impossible to achieve an arbitrarily $\eps$-secure key of nonzero length using only a fixed number of rounds, and hence $\lim_{\eps\to 0} \operatorname{rate}_{n,\eps} = 0$ holds for each $n$, after which the $n\to\infty$ limit is trivial.}
However, for most security proofs following the approaches we outline in this section, the first limit $\lim_{n\to\infty} \operatorname{rate}_{n,\eps}$ is already a constant independent of $\eps$ (as long as $\eps$ is not $0$ or $1$), as we shall show in more detail later (e.g.~the derivation of~\eqref{eq:asympt} in Chapter~\ref{chap:finite}). Hence for brevity, when discussing the asymptotic keyrates in this thesis, we often omit the $\eps\to0$ limit, as in the earlier definition~\eqref{eq:defkeyrates}.

\subsection{Intuition for keyrate improvements}
\label{sec:sketchimpr}

Before proceeding on, we remark that based on the expression~\eqref{eq:devwin}, we can try to gain some understanding of why noisy preprocessing and random key measurements can improve the keyrate. For noisy preprocessing, observe that adding noise to Alice's outcome would increase both $H(\hat{A}|XY\qE)$ and $H(\hat{A}|\hat{B}XY)_\mathrm{hon}$. However, it turns out that in some situations, the first term increases more than the second term, hence improving the overall keyrate.

As for random key measurements (i.e.~choosing the input $X$ according to some non-deterministic distribution), it is helpful to first expand the expression for $H(\hat{A}|XY\qE)$ slightly. 
Let us focus only on the generation rounds, following the arguments in the previous section. 
Also, we shall assume that protocol is such that the sifting step consists of erasing the outputs of rounds in which $X\neq Y$ (most sifting procedures can be phrased this way with an appropriate input labelling). 
In that case, observe that $H(\hat{A}|\qE; X=x,Y=y) = 0$ whenever $x\neq y$. Furthermore, when $x=y$, we can write $H(\hat{A}|\qE; X=x,Y=y) = H(\hat{A}|\qE; X=x)$, i.e.~the conditioning on $y$ is unnecessary, due to the no-signalling property.\footnote{Slight care is needed here, since it might appear that the sifting procedure (which involves $Y$) could affect the claim of no-signalling. This is addressed by the following argument: let $\rho'_{xy}$ denote the state that would be produced on registers $\hat{A} \qE$ if we perform the measurements for inputs $xy$ followed by noisy preprocessing immediately, \emph{without} the sifting step. We can validly invoke the no-signalling property on this state to claim that $H(\hat{A}|\qE)_{\rho'_{xy}}$ does not depend on $y$. Now observe that for the specific case where $x=y$, nothing is performed during the sifting step in the protocol, and hence the state in the protocol is the same as $\rho'_{xy}$. Therefore the entropy $H(\hat{A}|\qE)$ of that state does not depend on $y$ either.}
For brevity, let $\hat{A}_x$ denote a register that stores Alice's output (after sifting and noisy preprocessing) if she uses input $x$. 
Then we can write
\begin{align}
H(\hat{A}|XY\qE) &= \sum_{xy} \pr{X=x,Y=y} H(\hat{A}|\qE; X=x,Y=y) \nonumber\\
&= \sum_{x} \pr{X=Y=x} H(\hat{A}_x|{\qE}) 
\nonumber\\
&= \pr{X=Y} \sum_{x} \keyw_x H(\hat{A}_x|{\qE}) ,
\label{eq:randomkeyEve}
\end{align}
introducing the coefficients $\keyw_x = \pr{X=x|X=Y}$, which satisfy $\sum_x \keyw_x = 1$.
(If the protocol includes the test rounds in the privacy amplification step, a similar analysis can be performed, 
though the coefficients $\keyw_x$ would have a different form.) 
The prefactor $\pr{X=Y}$ is basically what is often referred to as the \term{sifting factor}.

We can analyze Bob's entropy $H(\hat{A}|\hat{B}XY)_\mathrm{hon}$ similarly as well. For this informal sketch, we restrict our attention to the simple example of depolarizing noise\footnote{Other noise models, most notably limited detection efficiency, may not benefit as much from the random-key-measurements technique; see~\cite{SGP+21}.},
in a scenario where the noiseless distribution is given by the state and measurements in~\eqref{eq:noiselessCHSH}, except that Bob's measurements $y=0,1,2,3$ are instead $Z,X,(Z+X)/\sqrt{2},(Z-X)/\sqrt{2}$ respectively.
For that model, $H(\hat{A}_x|\hat{B}_x)_\mathrm{hon}$ is independent of $x$ and hence we just have $H(\hat{A}|\hat{B}XY)_\mathrm{hon} = \pr{X=Y} H(\hat{A}_0|\hat{B}_0)_\mathrm{hon}$.
In that case, we see that the asymptotic keyrate can be written as
\begin{align}
\operatorname{rate}_\infty = \pr{X=Y} \left(\inf_{\mathcal{S}_{\mathrm{exact}}} \sum_{x} \keyw_x H(\hat{A}_x|{\qE}) - H(\hat{A}_0|\hat{B}_0)_\mathrm{hon} \right).
\label{eq:randomkeyrate}
\end{align}

To show the advantage offered by this approach, we compare~\eqref{eq:randomkeyrate} to a protocol which always uses the input $x=0$ for the generation rounds, where the keyrate would be
\begin{align}
\inf_{\mathcal{S}_{\mathrm{exact}}} H(\hat{A}_0|{\qE}) - H(\hat{A}_0|\hat{B}_0)_\mathrm{hon}.
\label{eq:A0keyrate}
\end{align}
The key observation is that for~\eqref{eq:A0keyrate}, Eve can optimize her attack to gain maximal information for the $x=0$ measurement specifically; in contrast, when random key measurements are used, Eq.~\eqref{eq:randomkeyrate} tells us that Eve is forced to make a ``tradeoff'' across the information she can gain for each of the inputs $x$ (recall that she has to decide on what state to use before learning which input $x$ would be chosen in that round). Of course, there may be DIQKD scenarios where Eve can gain just as much information for all the inputs $x$ as compared to the optimal attack on $x=0$ alone\footnote{This is indeed the case for e.g.~the six-state protocol in standard QKD, so the keyrate for that protocol is not improved by this technique.}, but it turns out that for some simple cases (most notably protocols based on the CHSH value), it is indeed the case that not all $H(\hat{A}_x|{\qE})$ can simultaneously be equal to the worst-case value of $H(\hat{A}_0|{\qE})$. This hence allows for potential improvement of the keyrate. 
\begin{remark}
We have glossed over the fact that the $\pr{X=Y}$ prefactor in~\eqref{eq:randomkeyrate} may reduce the improvement here. However, notice that in the simple noise models where~\eqref{eq:randomkeyrate} holds, this merely rescales the keyrate by a nonzero factor. Our above argument indicates that it is possible to simultaneously have 
\begin{align}
\inf_{\mathcal{S}_{\mathrm{exact}}} \sum_{x} \keyw_x H(\hat{A}_x|{\qE}) - H(\hat{A}_0|\hat{B}_0)_\mathrm{hon} > 0,
\quad
\inf_{\mathcal{S}_{\mathrm{exact}}} H(\hat{A}_0|{\qE}) - H(\hat{A}_0|\hat{B}_0)_\mathrm{hon} < 0, 
\end{align}
in which case the former is still higher than the latter even after rescaling by $\pr{X=Y}$.
\end{remark}

\subsection{Beyond collective attacks}
\label{sec:sketchbeyond}

Moving on to coherent attacks, there are several difficulties to address. First, as mentioned previously, one cannot speak of a single distribution $\pr{ab|xy}$ that holds for all rounds. Because of this, while we can still compute various frequencies using the input-output statistics in the protocol, it is harder to say what these frequencies tell us about the devices themselves. Furthermore, there is also no well-defined notion of a single-round state $\rho_{\qA \qB \qE}$ that we can compute the entropies of. Another issue is that we can no longer assume that the test rounds are decoupled from the generation rounds, so any public communication of the test-round outputs may in principle leak information about the generation rounds --- fortunately, this is not too hard to address (though it can have nontrivial effects on the finite-size keyrates), as we shall shortly explain.

Until fairly recently, there were no proof approaches that yielded the same asymptotic keyrates for coherent attacks as compared to collective attacks. The key theoretical result that allowed this to be achieved was the entropy accumulation theorem, proven in~\cite{DFR20} (and subsequently improved in~\cite{DF19}). The EAT also has the powerful property of simultaneously accounting for the parameter estimation issue.

While there are various technical conditions to address when applying the EAT, we can informally summarize how to apply it in DIQKD.
First, instead of the case analysis considered at the start of Sec.~\ref{sec:sketchcoll}, we split into cases where the parameter-estimation step accepts with probability less than some small value $\eEA$, or greater than $\eEA$.\footnote{A very recent work~\cite{arx_Dup21} indicates that this case analysis can be avoided, by proving a Leftover Hashing Lemma for R\'{e}nyi entropies rather than smoothed min-entropy, but we will not use this in this work.} The first case trivially satisfies the secrecy definition for the same reasons as before. As for the second case, we apply the EAT as follows. Suppose we can find some value $h\in\mathbb{R}$ with the following property: for every round in the protocol, if the devices produce a distribution $\pr{ab|xy}$ in that round that falls within the accepted tolerances in the parameter-estimation step, then $H(\hat{A}|XY\qE) \geq h$. (In contrast to the collective-attacks analysis, this is a more ``abstract'' quantity for now, which is not meant to correspond exactly to the actual devices.)
Informally, the EAT then states that the state conditioned on accepting in the parameter-estimation step satisfies a bound much like the AEP:\footnote{Here, the bound involves all the rounds rather than just the generation rounds, because the test and generation rounds cannot be assumed to be independent. This is also why it depends on $n$ rather than $m$.}
\begin{align}
\Hmin^\eps(\str{A}|\str{X}\str{Y}\str{L} \allE) \geq n h - O(\sqrt{n}),
\label{eq:roughHminEAT}
\end{align}
now denoting Eve's side-information as $\allE$ since it may not have a tensor-product structure. Like the AEP, the EAT gives an explicit expression for the constant in the $O(\sqrt{n})$ term --- it depends on 
the accept probability 
(which one should expect), as well as some other more technical properties of the lower bound on $H(\hat{A}|XY\qE)$ (i.e.~strictly speaking, we do not only need the single value $h$ when applying the EAT, but also somewhat more complicated properties). 

Notice that this statement has already accounted for the statistical estimation issues, because it connects the value $h$ (which was computed based on the abstract distributions for individual rounds) to the parameter-estimation step (which uses the observed input-output statistics in the actual protocol). Also, since the bound is similar to the AEP, we see (skipping some details) that by using this bound~\eqref{eq:roughHminEAT} in place of the AEP bound~\eqref{eq:roughHminAEP}, we can obtain essentially the same asymptotic rate~\eqref{eq:devwin} in the end --- the analysis of the error-correction terms is mostly unchanged since it is based on the honest behaviour. 
Regarding the aforementioned issue about the test rounds, this can be addressed by noting that publicly communicating the test-round outputs cannot decrease the conditional min-entropy by more than the number of communicated bits, due to the same simple chain-rule bound as in~\eqref{eq:roughchainrule}. If the fraction of test rounds is small, then the number of bits required for this is fairly small as well, so it does not affect the keyrates in the asymptotic limit. (Unfortunately, it can have some impact on the finite-size keyrates, the extent of which we will discuss in Sec.~\ref{sec:coll}.)

We can also consider the proof techniques of~\cite{JMS20,arx_Vid17,arx_JK21} for the parallel-input scenario.
To give a highly simplified overview, the approach used in these works is to define a single-round 3-player nonlocal game where Alice and Bob try to win a 2-player nonlocal game and Eve tries to guess their outputs. By applying parallel-repetition theorems or threshold theorems or direct-product theorems, one obtains a bound on the probability of Eve guessing the outputs (conditioned on Alice and Bob winning ``enough'' rounds) when $n$ copies of this game are played in parallel, which can be converted into a bound on the min-entropy via the operational interpretation~\eqref{eq:Hminop}. With this in mind, we can also understand to some extent why this approach yields a looser bound than the collective-attacks bound~\eqref{eq:roughHminAEP} (from the AEP) --- firstly, the single-round analysis is based on guessing probability, which cannot be converted exactly to the von Neumann entropy; secondly, parallel-repetition theorems usually do not certify exactly the same bound (in terms of the constants involved) for $n$ parallel rounds as compared to $n$ IID rounds. 

\section{DIRNG and DIRE}
\label{sec:sketchDIRE}

The above security proofs for DIQKD can more or less be generalized to DIRNG/DIRE. 
While we will not go into detail here, we can see the broad picture --- 
following an analogous argument, one finds that the length of secret key that can be produced is approximately $\Hmin^\eps(\str{A}\str{B}|\str{X}\str{Y} \allE)$, since both parties' outputs can be used and there is no need to implement error correction. (There have been recent proposals~\cite{arx_TSB+20,arx_BRC21} to improve over this result by using a ``seed recovery'' step that we describe in Sec.~\ref{sec:preshared}, but we defer further discussion until that point.) Furthermore, by applying the AEP or the EAT~\cite{ARV19} (depending on the allowed class of attacks), we see that the single-round quantity that we need to analyze in this case is basically $H(\hat{A}\hat{B}|XY\qE)$ instead, i.e.~the entropy of both parties' outputs. Many of the techniques we present in this work can also be applied to this entropy with minimal modifications, so they can serve to prove achievable keyrates for DIRNG/DIRE as well.

From this, we can also see why DIQKD may be more difficult to realize than DIRNG/DIRE. Namely, the asymptotic keyrate expression~\eqref{eq:devwin} for DIQKD has the error-correction term $H(\hat{A}|\hat{B}XY)_\mathrm{hon}$, which is not present in DIRNG/DIRE (as it is only based on a single term, $H(\hat{A}\hat{B}|XY\qE)$). Hence a given set of experimental devices may be able to achieve positive asymptotic keyrates for DIRNG/DIRE, but not for DIQKD. There is also some further advantage for DIRNG/DIRE since the asymptotic keyrate involves the entropy of both parties' outputs, not just Alice's. (This does not consider the possibility of DIQKD advantage-distillation protocols as in Chapter~\ref{chap:AD}, where the asymptotic keyrate is not given by~\eqref{eq:devwin}. However, the keyrates for such protocols are much less well understood apart from some somewhat loose upper bounds, so it is difficult to draw a comparison there.)

\chapter{Entropy bounds}
\label{chap:singlernd}

\newcommand{\pinch}{\bar{\calZ}}
\newcommand{\xo}{0}
\newcommand{\xg}{\bar{x}}
\newcommand{\nat}{\mathrm{nat}}
\newcommand{\lambnew}{\tilde{\lambda}}
\newcommand{\csch}{\operatorname{csch}}

In this chapter, we will develop methods to solve the optimization~\eqref{eq:optsimple} sketched in the previous chapter. We begin by precisely defining the problem we aim to address. After doing so, we describe in Sec.~\ref{sec:lagrange} how to transform it to a dual problem with useful structure, and in Sec.~\ref{sec:entdual} we transform the objective function into a more convenient form. With these general techniques in mind, in Sec.~\ref{sec:qubit} we present an algorithm for computing arbitrarily tight bounds on~\eqref{eq:optsimple} in 2-input 2-output scenarios, making use of a reduction to qubit systems observed in e.g.~\cite{PAB+09}. Finally, in Sec.~\ref{sec:NPO} we describe some techniques that can be applied for more general nonlocality scenarios. The results in this chapter are mostly based on~\cite{arx_TSB+20,TSG+21}, and we closely follow the phrasing and presentation in those works. We also briefly describe our results from~\cite{HST+20,SBV+21,SGP+21}, as well as some independent works~\cite{WAP21,BFF21}.

We highlight that although many of the approaches presented in this chapter are numerical, we are not merely heuristically solving the optimization~\eqref{eq:optsimple} (this runs the risk of getting trapped in local minima). Rather, these approaches we present are designed to yield ``secure'' lower bounds, in the sense that the values returned are guaranteed to be lower bounds on the optimization~\eqref{eq:optsimple}, even if they might not be tight in some cases.

\begin{remark}
After preparation of this thesis, several subsequent works~\cite{arx_BFF21,arx_MPW21} developed new numerical methods to address this optimization. As compared to the approaches presented here, these new approaches are more computationally efficient and yield tighter bounds. We direct the interested reader to those works for more information.
\end{remark}

To allow noisy preprocessing to be described in a relatively simple way, we assume for the purposes of Sec.~\ref{sec:lagrange}--\ref{sec:qubit} that Alice's key-generating measurements only have 2 outcomes, in which case noisy preprocessing simply consists of flipping her output with some probability $\p$ (noisy preprocessing on larger numbers of outcomes become more complicated, as it would correspond to more general stochastic maps rather than a simple probabilistic bitflip\footnote{Even without going to the extent of allowing arbitrary stochastic maps, a slightly more general possibility would be to consider different flipping probabilities for each input $x$, and our analysis in Sec.~\ref{sec:lagrange}--\ref{sec:qubit} can easily be modified to handle this option.
However, for simplicity of notation we will not explicitly denote this possibility.}).
Instead of solving the optimization over the set $\mathcal{S}_{\mathrm{close}}$ (which was rather informally defined thus far), we focus on a slightly simpler and more precisely stated problem: given some choice of noisy-preprocessing value $\p$
and any values $\nu_j \in \mathbb{R}$, compute or lower-bound the function
\begin{align}
\breve{\lin}_\p(\vec{\constr}) \defvar \,
\begin{gathered}
\inf_{\rho_{ABE},\pvm_{a|x},\pvm_{b|y}} \sum_{x} \keyw_x H(
\hat{A}_x
|{E}) \\ 
\suchthat\; \tr{\vec{\Gamma}(\pvm_{a|x},\pvm_{b|y}) \rho_{AB}} = \vec{\constr} \;,
\end{gathered} 
\label{eq:mainopt}
\end{align}
where the various terms should be understood as follows:
\begin{itemize}
\item $\rho_{ABE}$ is a tripartite state of arbitrary dimension.
\item $\pvm_{a|x}$ is a projector\footnote{We can assume all the measurements are projective, by considering a suitably chosen joint Stinespring dilation of all the measurements --- see Appendix~\ref{app:proj} for detailed discussion.} corresponding to Alice's outcome $a$ for input $x$, and $\pvm_{b|y}$ is a projector corresponding to Bob's outcome $b$ for input $y$. For the purposes of this optimization, this means they are orthogonal projectors satisfying the constraints $\sum_a \pvm_{a|x} = \id_{A}$ for every $x$ and $\sum_b \pvm_{b|y} = \id_{B}$ for every $y$. 
\item $\keyw_x$ are non-negative values corresponding to the conditional probabilities described in Sec.~\ref{sec:sketchimpr}. For the purposes of this optimization, they can simply be treated as some arbitrary constants whose values are fixed by the protocol.
\item $\hat{A}_x$ is a register storing the value of Alice's output for measurement input $x$, \emph{after} applying noisy preprocessing.
\item $\vec{\Gamma}(\pvm_{a|x},\pvm_{b|y})$ denotes a tuple of operators of the form $\Gamma_j(\pvm_{a|x},\pvm_{b|y}) = \sum_{abxy} c^{(j)}_{abxy} \pvm_{a|x}\otimes\pvm_{b|y}$ for some constants $c^{(j)}_{abxy} \in \mathbb{R}$ (recall~\eqref{eq:bellop}).
The notation $\tr{\vec{\Gamma}(\pvm_{a|x},\pvm_{b|y}) \rho_{AB}}$ is shorthand for the tuple $\left( \tr{{\Gamma}_1(\pvm_{a|x},\pvm_{b|y}) \rho_{AB}}, \tr{{\Gamma}_2(\pvm_{a|x},\pvm_{b|y}) \rho_{AB}}, \dots \right)$.
\end{itemize}

The optimization~\eqref{eq:optsimple} informally sketched in the previous chapter can be viewed as minimizing $\breve{\lin}_\p(\vec{\constr})$ over all values of $\vec{\constr}$ ``close to'' the accepted set\footnote{An alternate perspective is that the equality constraints in~\eqref{eq:mainopt} can be straightforwardly modified into inequality constraints describing the optimization over $\mathcal{S}_{\mathrm{close}}$. However, it turns out that some of our security arguments, such as those based on the EAT, are in fact easier to describe in terms of the optimization in the form~\eqref{eq:mainopt}.}, so this is essentially sufficient --- we will explain in the next chapter how to convert a solution to the precisely stated optimization~\eqref{eq:mainopt} into a full security proof. (Also, strictly speaking the objective functions of~\eqref{eq:mainopt} and~\eqref{eq:optsimple} differ by a factor of $\pr{X=Y}$; see
\eqref{eq:randomkeyEve}. However, this is simply a constant for the purposes of the optimization, hence we ignore it here.)

We highlight that in the optimization~\eqref{eq:mainopt}, noisy preprocessing is applied only to the terms in the objective function, not in the constraint terms (recall that the operators $\Gamma_j(\pvm_{a|x},\pvm_{b|y})$ were defined without noisy preprocessing). This corresponds to the fact that in the protocol sketch we described, noisy preprocessing is only applied to the generation rounds, not the test rounds. In principle, one could design a protocol where noisy preprocessing is applied in the test rounds, but this appears to complicate the analysis and the potential keyrate improvements are at most on the order of the test-round fraction (this can be seen from our later discussion in Sec.~\ref{sec:coll} regarding the test rounds).

A few observations can be made about the optimization~\eqref{eq:mainopt}. Firstly, we can simply restrict the optimization to pure $\rho_{ABE}$, since if $\rho_{ABE}$ is mixed then we can just give Eve the purifying system. 
Also, if we allow arbitrary values of $\vec{\constr}$ in the expression, the optimization may turn out to be infeasible. Following standard conventions in optimization theory, we take $\breve{\lin}_\p(\vec{\constr}) = +\infty$ in that case, i.e.~the infimum of an empty set is $+\infty$.
Finally, with this convention, note that $\breve{\lin}_\p$ is a convex function (which the~~$\breve{}$~~in the notation is intended to indicate), despite the fact that it is defined via a nonconvex optimization. This follows from the fact Eve can perform ``classical mixtures'' of strategies. More precisely: take any $\eps>0$, any $w\in[0,1]$, and two values $\vec{\constr}_0, \vec{\constr}_1$ such that the optimization~\eqref{eq:mainopt} is feasible. By definition of $\breve{\lin}_\p$, there exists some strategy achieving the value $\vec{\constr}_0$ on the constraints with conditional entropy $\sum_{x} \keyw_x H(\hat{A}_x|{E}) = t_0$ for some value $t_0 \leq \breve{\lin}_\p(\vec{\constr}_0) + \epsilon$. Doing the same for $\vec{\constr}_1$, we now note that if Eve has a classical register $E'$ in the state $w\pure{0}+(1-w)\pure{1}$ and uses it to determine which of those strategies to implement (including providing copies of $E'$ to the users' devices so they can implement the corresponding measurements), this yields a strategy that achieves value $w\vec{\constr}_0 + (1-w)\vec{\constr}_1$ on the constraints, and entropy $\sum_{x} \keyw_x H(\hat{A}_x|{E}E') = w t_0+(1-w)t_1$. By definition of $\breve{\lin}_\p$, this implies 
\begin{align}
\breve{\lin}_\p(w\vec{\constr}_0 + (1-w)\vec{\constr}_1) \leq w t_0+(1-w)t_1 \leq w \breve{\lin}_\p(\vec{\constr}_0) +(1-w)\breve{\lin}_\p(\vec{\constr}_1) + \epsilon,
\end{align}
proving that $\breve{\lin}_\p$ is convex since $\epsilon>0$ was arbitrary.

\begin{remark}
The above convexity argument relies on the fact that we do not bound the system dimensions, because implementing this strategy required the classical ancilla $E'$ to ``keep track of'' the strategy used. If we forcibly impose a bound on the dimensions, this can have the unpleasant effect that $\breve{\lin}_\p$ is no longer a convex function (in fact, even the set of values of $\vec{\constr}$ such that $\breve{\lin}_\p(\vec{\constr})$ is feasible may not be a convex set~\cite{DW15}).
\end{remark}

The fact that $\breve{\lin}_\p$ is convex suggests that there may be some potential in considering techniques from convex optimization, despite the fact that the optimization problem is not itself convex. In particular, an important tool is the \term{Lagrange dual}, which we shall now describe.

\section{Lagrange dual}
\label{sec:lagrange}

\newcommand{\dom}{\operatorname{dom}}
\newcommand{\relint}{\operatorname{relint}}
\newcommand{\lagdual}{s}
\newcommand{\onecon}[1]{#1'}

In general, any constrained optimization is lower bounded by its Lagrange dual, a property known as \term{weak duality}. For the optimization~\eqref{eq:mainopt}, the Lagrange dual takes the form
\begin{align}
\sup_{\vec{\lambda}}
\inf_{\rho_{ABE},\pvm_{a|x},\pvm_{b|y}} \left( \sum_{x} \keyw_x H(\hat{A}_x|{E})\right) - \vec{\lambda} \cdot \left(\tr{\vec{\Gamma}(\pvm_{a|x},\pvm_{b|y}) \rho_{AB}} - \vec{\constr} \right) ,
\label{eq:dual}
\end{align}
where the supremum takes place over all $\vec{\lambda} \in \mathbb{R}^N$ (letting $N$ be the number of constraints).
If this bound is tight (i.e.~equal to 
$\breve{\lin}_\p(\vec{\constr})$), 
this is referred to as \term{strong duality}. It is easily shown that the Lagrange dual is always a convex function of the constraint values $\vec{\constr}$. It turns out that because $\breve{\lin}_\p$ is convex as well, we can show that strong duality holds for almost all values of $\vec{\constr}$, in a sense we shall now make precise.

Since we are applying techniques from convex optimization,
we will follow conventions in that field and take the codomain of $\breve{\lin}_\p$ to be $[0,+\infty]$ (negative values can be ignored because the objective function is non-negative). 
It is also conventional to define the ``domain'' of $\breve{\lin}_\p$ to be the set of values $\vec{\constr}$ such that $\breve{\lin}_\p(\vec{\constr}) < +\infty$. We shall denote this as $\dom(\breve{\lin}_\p)$, and in our case this is exactly equal to the set of $\vec{\constr}$ that are achievable by the states and measurements we consider 
(this follows from the fact that our objective function is always finite).
Finally, for any subset $\mathcal{C}$ of an Euclidean space, let $\relint(\mathcal{C})$ denote its relative interior (basically, its interior relative to its affine hull; see e.g.~\cite{BV04v8} for a full definition).
We shall now show that strong duality holds for all $\vec{\constr}\in \relint(\dom(\breve{\lin}_\p))$.

\begin{lemma}\label{lem:duality}
For all $\vec{\constr}\in \relint(\dom(\breve{\lin}_\p))$, the expression~\eqref{eq:dual} is equal to $\breve{\lin}_\p(\vec{\constr})$.
\end{lemma}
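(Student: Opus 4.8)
The plan is to invoke the standard convex-analysis fact that a convex function equals its Lagrange dual (i.e.\ its biconjugate, or equivalently strong duality holds) at every point of the relative interior of its domain. Concretely, I would proceed as follows.

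First I would set up the conjugate-function machinery. Define the (convex) function $\breve{\lin}_\p : \mathbb{R}^N \to [0,+\infty]$ as in the excerpt, and recall it was already shown to be convex, with $\dom(\breve{\lin}_\p)$ equal to the set of achievable constraint values $\vec{\constr}$. The key observation is that the inner infimum in~\eqref{eq:dual}, as a function of $\vec{\lambda}$, is essentially $-\breve{\lin}_\p^*(-\vec{\lambda})$ up to the $\vec{\lambda}\cdot\vec{\constr}$ term, where $\breve{\lin}_\p^*$ is the Legendre--Fenchel conjugate. More precisely, I would show that
\begin{align}
\inf_{\rho,\pvm_{a|x},\pvm_{b|y}} \left( \sum_x \keyw_x H(\hat{A}_x|E) - \vec{\lambda}\cdot \tr{\vec{\Gamma}\,\rho_{AB}} \right) = \inf_{\vec{\constr}' \in \dom(\breve{\lin}_\p)} \left( \breve{\lin}_\p(\vec{\constr}') - \vec{\lambda}\cdot\vec{\constr}' \right) = -\breve{\lin}_\p^*(\vec{\lambda}),
\end{align}
where the first equality uses that for a fixed target value $\vec{\constr}'$ of the constraint operators, the optimal objective value is by definition $\breve{\lin}_\p(\vec{\constr}')$, and that ranging over all strategies is the same as ranging over all achievable $\vec{\constr}'$ and then over strategies achieving that value. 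Adding back the $\vec{\lambda}\cdot\vec{\constr}$ term, expression~\eqref{eq:dual} becomes $\sup_{\vec{\lambda}} \left( \vec{\lambda}\cdot\vec{\constr} - \breve{\lin}_\p^*(\vec{\lambda}) \right) = \breve{\lin}_\p^{**}(\vec{\constr})$, the biconjugate.

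Then the lemma reduces to the claim $\breve{\lin}_\p^{**}(\vec{\constr}) = \breve{\lin}_\p(\vec{\constr})$ for $\vec{\constr} \in \relint(\dom(\breve{\lin}_\p))$. This is a classical result: for any convex function $f$ that is proper (not identically $+\infty$, never $-\infty$ --- both clear here since the objective is non-negative and finite, and there is at least one feasible point, namely the honest strategy), the biconjugate $f^{**}$ agrees with $f$ everywhere except possibly on the relative boundary of $\dom(f)$; in particular $f^{**} = f$ on $\relint(\dom(f))$. I would cite this from a standard reference such as \cite{BV04v8} (or Rockafellar), possibly restating it as: a convex function is lower semicontinuous and hence equals its closure on the relative interior of its domain, and the closure of a proper convex function is its biconjugate. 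The only mild subtlety is checking that $\breve{\lin}_\p$ is proper and that the supremum over $\vec{\lambda}\in\mathbb{R}^N$ in~\eqref{eq:dual} (with no sign restriction) is the right object --- since the constraints in~\eqref{eq:mainopt} are equalities, the dual variables are unrestricted, matching the unconstrained conjugate exactly, so no complications from a conic dual arise.

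The main obstacle I anticipate is the first identity above --- carefully justifying that "infimum over strategies" can be reorganized as "infimum over achievable $\vec{\constr}'$, then infimum over strategies achieving $\vec{\constr}'$" --- and, relatedly, making sure the infimum in the definition of $\breve{\lin}_\p$ (which need not be attained, and is taken over a possibly-proper-class of Hilbert spaces) causes no issues. This is handled by the same reasoning already used in the excerpt for the convexity argument: everything is phrased in terms of infima of sets of real numbers indexed by strategies, so the reorganization is just Fubini-type shuffling of infima, valid unconditionally. An $\epsilon$-argument (pick a strategy achieving within $\epsilon$ of $\breve{\lin}_\p(\vec{\constr}')$) makes this fully rigorous without needing attainment. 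Given that, the rest is a direct appeal to the cited convex-analysis fact, so I expect the proof to be short.
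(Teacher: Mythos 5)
Your proposal is correct and is essentially the paper's own argument in different packaging: the paper identifies the dual with the convex envelope by explicitly invoking the existence of a subgradient of $\breve{\lin}_\p$ at relative-interior points and plugging that subgradient in as the choice of $\vec{\lambda}$, which is precisely the content of the biconjugate theorem $f^{**}=f$ on $\relint(\dom f)$ that you cite. The one step you flag as the main obstacle --- reorganizing the infimum over strategies as an infimum over achievable $\vec{\constr}'$ followed by an infimum over strategies attaining $\vec{\constr}'$ --- is handled in the paper exactly as you propose, via the substitution $\vec{\constr}'=\vec{g}(x)$ for each feasible point $x$.
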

\begin{proof}
Abstractly, $\breve{\lin}_\p(\vec{\constr})$ is a constrained optimization of the form
\begin{align}
\breve{\lin}_\p(\vec{\constr}) = \,
\begin{gathered}
\inf_{x\in D} f(x) \\
\suchthat\; g_j(x) = \constr_j \quad \forall j \;,
\end{gathered} 
\end{align}
where $D$ is some optimization domain (which could in principle be a class rather than a set, though we will continue using set notation for it), 
and $f$ and $g_j$ are functions (or class functions) $D\to\mathbb{R}$. Denoting the corresponding Lagrange dual~\eqref{eq:dual} as $\breve{\lagdual}_\p$, it takes the form
\begin{align}
\breve{\lagdual}_\p(\vec{\constr}) = \sup_{\vec{\lambda}} \inf_{x\in D} \left[f(x) - \vec{\lambda}\cdot\vec{g}(x) + \vec{\lambda}\cdot\vec{\constr}\right].
\end{align}

Take any $\vec{\constr} \in \relint(\dom(\breve{\lin}_\p))$. Our desired claim would follow by showing that $\breve{\lagdual}_\p(\vec{\constr}) \geq \breve{\lin}_\p(\vec{\constr})$ (the opposite inequality always holds since the Lagrange dual is always a lower bound on the optimization).
To prove this, we use the fact that since $\breve{\lin}_\p$ is convex and $\vec{\constr} \in \relint(\dom(\breve{\lin}_\p))$, 
there exists a \term{subgradient} of $\breve{\lin}_\p$ at $\vec{\constr}$ (see e.g.~\cite{SB14} Prop.~2.2.2). Explicitly, this means there exists $\vec{\lambda}^{\star}$ such that 
\begin{align}
\breve{\lin}_\p(\vec{\constr}\,') \geq 
\vec{\lambda}^{\star}\cdot\left(\vec{\constr}\,' - \vec{\constr}\right) 
+ \breve{\lin}_\p(\vec{\constr}) \qquad\forall\, \vec{\constr}\,' 
. 
\label{eq:subgrad}
\end{align}
(Basically, the RHS is an affine lower bound on $\breve{\lin}_\p$ that also touches the graph of $\breve{\lin}_\p$ at $\vec{\constr}$.)
Since the Lagrange dual $\breve{\lagdual}_\p$ is a supremum over all possible $\vec{\lambda}$, it is lower-bounded by setting $\vec{\lambda}=\vec{\lambda}^{\star}$, which yields
\begin{align}
\breve{\lagdual}_\p(\vec{\constr}) 
&\geq \inf_{x\in D} \left[f(x) - \vec{\lambda}^{\star}\cdot\vec{g}(x) + \vec{\lambda}^{\star}\cdot\vec{\constr}\right] \label{eq:infx}\\
&\geq \inf_{\vec{\constr}\,'
} \left[\breve{\lin}_\p(\vec{\constr}\,') - \vec{\lambda}^{\star}\cdot\vec{\constr}\,' + \vec{\lambda}^{\star}\cdot\vec{\constr}\right] \label{eq:infgamma} \\
&\geq \breve{\lin}_\p(\vec{\constr}) ,
\end{align}
where the second line holds\footnote{A more detailed argument shows that in fact we have equality in this line, i.e.~$\eqref{eq:infx} = \eqref{eq:infgamma}$.} because for each $x\in D$ in the optimization~\eqref{eq:infx}, we can set $\vec{\constr}\,'=\vec{g}(x)$ 
to obtain a $\vec{\constr}\,'$ such that $f(x) - \vec{\lambda}^{\star}\cdot\vec{g}(x) \geq \breve{\lin}_\p(\vec{\constr}\,') - \vec{\lambda}^{\star}\cdot\vec{\constr}\,'$, and the third line follows from~\eqref{eq:subgrad}. 
This is the desired result.
\end{proof}

While we have only shown that strong duality holds for $\vec{\constr} \in \relint(\dom(\breve{\lin}_\p))$, this is sufficient to cover practical implementations, since any $\vec{\constr} \in \dom(\breve{\lin}_\p)$ that is not in the relative interior would instead be a relative boundary point of the set of distributions achievable by the states and measurements we consider. This is a set of measure zero that is of little relevance after we take into account the accepted tolerances during parameter estimation. (Alternatively, by putting a particular form of dimension restriction on the set of allowed states and measurements, one can ensure strong duality holds for all $\vec{\constr} \in \dom(\breve{\lin}_\p)$, but we defer the details to~\cite{TSG+21}.)

This is important because the Lagrange dual has some convenient structure, and the fact that strong duality holds means that there is no loss incurred by studying it in place of the original optimization.
We first remark that it is not strictly necessary to solve the optimization over $\vec{\lambda}$ in the Lagrange dual --- since that optimization is a supremum, any choice of $\vec{\lambda}$ yields a valid lower bound on $\breve{\lin}_\p(\vec{\constr})$. (Of course, finding better values of $\vec{\lambda}$ yields tighter bounds, but the point is that it is still a valid lower bound even if $\vec{\lambda}$ is not optimal.) Following up on this, a useful property is that for each choice of $\vec{\lambda}$, we can interpret the resulting lower bound on $\breve{\lin}_\p(\vec{\constr})$ to be a function of the parameters $\vec{\constr}$, in the sense that we can write
\begin{align}
\breve{\lin}_\p(\vec{\constr}) \geq
\vec{\lambda}\cdot\vec{\constr} + c_{\vec{\lambda}}, 
\label{eq:duallin}
\end{align}
where
\begin{align}
c_{\vec{\lambda}} \defvar \inf_{\rho_{ABE},\pvm_{a|x},\pvm_{b|y}} \left(\sum_{x} \keyw_x H(\hat{A}_x|{E})\right) - \vec{\lambda} \cdot \tr{\vec{\Gamma}(\pvm_{a|x},\pvm_{b|y}) \rho_{AB}}.
\label{eq:intercept}
\end{align}
Importantly, the lower bound~\eqref{eq:duallin} is affine (when viewed as a function of $\vec{\constr}$). This is a useful property for the purposes of security proofs based on the EAT, as we shall later see.

\newcommand{\lbnd}{L}
Another useful property is regarding protocol design, as follows. 
Putting aside technical details, recall that the motivation behind the optimization~\eqref{eq:mainopt} is essentially to compute achievable keyrates for a DIQKD protocol 
that estimates various Bell parameters $\tr{\Gamma_j(\pvm_{a|x},\pvm_{b|y}) \rho_{AB}}$, and accepts if the estimates are ``sufficiently close to'' some values $\constr_j$.
For such a protocol\footnote{We shall take the values 
$\vec{\constr}$ 
to have been fixed by the protocol, so we shall treat them as constants for the rest of this discussion, without explicitly denoting the $\vec{\constr}$-dependence of values such as the optimal choice of $\vec{\lambda}$ (if the supremum is attained) and the resulting lower bound $\lbnd$.}, any choice of $\vec{\lambda}$ in the Lagrange dual~\eqref{eq:dual} yields some lower bound $\lbnd$ on $\breve{\lin}_\p(\vec{\constr})$, with the value
\begin{align}
\lbnd &=\inf_{\rho_{ABE},\pvm_{a|x},\pvm_{b|y}} \left( \sum_{x} \keyw_x H(\hat{A}_x|{E})\right) - \vec{\lambda}^{} \cdot \left(\tr{\vec{\Gamma}(\pvm_{a|x},\pvm_{b|y}) \rho_{AB}} - \vec{\constr} \right),
\label{eq:multbnd}
\end{align}
and thus also some lower bound $\tilde{\lbnd}$ on the asymptotic keyrate via~\eqref{eq:devwin}. It now turns out that if the original protocol involves estimating multiple Bell parameters, we can use this value of $\vec{\lambda}$ to construct a new DIQKD protocol that 
also achieves an asymptotic keyrate of at least $\tilde{\lbnd}$, but only needs to estimate a single Bell parameter $\tr{\onecon{\Gamma}(\pvm_{a|x},\pvm_{b|y}) \rho_{AB}}$ (and accepts if the estimate is ``sufficiently close to'' some value $\onecon{\constr}$), while leaving all other aspects of the protocol unchanged. This new protocol could be easier to implement in practice, since it only needs to estimate one parameter.

Explicitly, this is achieved by taking $\onecon{\Gamma}(\pvm_{a|x},\pvm_{b|y}) \defvar \vec{\lambda}^{} \cdot \vec{\Gamma}(\pvm_{a|x},\pvm_{b|y})$ and $\onecon{\constr} \defvar \vec{\lambda}^{} \cdot \vec{\constr}$. To show that the asymptotic keyrate of this new protocol is at least the same value $\tilde{\lbnd}$, it would suffice to show that the minimum value of the optimization
\begin{align}
\begin{gathered}
\inf_{\rho_{ABE},\pvm_{a|x},\pvm_{b|y}} \sum_{x} \keyw_x H(
\hat{A}_x
|{E}) \\ 
\suchthat\; \tr{\onecon{\Gamma}(\pvm_{a|x},\pvm_{b|y}) \rho_{AB}} = \onecon{\constr} 
\end{gathered} 
\label{eq:oneconopt}
\end{align}
is at least $\lbnd$. (The error-correction term $H(\hat{A}|\hat{B}XY)_\mathrm{hon}$ in the keyrate~\eqref{eq:devwin} can be assumed to remain the same in the new protocol, since it is based on the honest behaviour, which was unchanged.) To do so, we simply note that the optimization~\eqref{eq:oneconopt} is lower-bounded by its Lagrange dual, which takes the form
\begin{align}
&\sup_{\lambda}
\inf_{\rho_{ABE},\pvm_{a|x},\pvm_{b|y}} \left( \sum_{x} \keyw_x H(\hat{A}_x|{E})\right) - \lambda \left(\tr{\onecon{\Gamma}(\pvm_{a|x},\pvm_{b|y}) \rho_{AB}} - \onecon{\constr} \right) \nonumber\\
\geq& \inf_{\rho_{ABE},\pvm_{a|x},\pvm_{b|y}} \left( \sum_{x} \keyw_x H(\hat{A}_x|{E})\right) -  \left(\tr{\onecon{\Gamma}(\pvm_{a|x},\pvm_{b|y}) \rho_{AB}} - \onecon{\constr} \right)
\nonumber\\
=& \lbnd,
\end{align}
which is the desired lower bound.\footnote{To consider the ``converse'' question of whether the new protocol can achieve a higher asymptotic keyrate than the original protocol: this is not possible in principle (at least, for keyrates computed via~\eqref{eq:devwin}),
since the true value of the original optimization~\eqref{eq:mainopt} is always higher than or equal to the new optimization~\eqref{eq:oneconopt} --- observe that every feasible point of the former is a feasible point of the latter.
At most, the optimization~\eqref{eq:oneconopt} might perhaps be easier to analyze or more stable under numerical approaches; however, the main advantage offered by the new protocol is likely to still be just the simplified parameter-estimation step (and possibly associated reductions in finite-size effects).}

\section{Entropic dual}
\label{sec:entdual}

We now aim to simplify the $H(\hat{A}_x|E)$ terms in the optimization. 
This can be done by rewriting them based on the approach in~\cite{WLC18}. Specifically, recall that we have assumed Alice's generation-round outputs are binary, and in the noisy preprocessing step, her output is flipped with probability $\p$. In that case, observe that the state produced on $\hat{A}_x E$ by performing Alice's measurement $x$ on $\rho_{ABE}$ and then applying noisy preprocessing could also be obtained by the following process: append an ancilla $T$ in the state
\begin{align}
\ket{\phi_\p}_T \defvar \sqrt{1-\p} \ket{0}_T + \sqrt{\p} \ket{1}_T,
\end{align}
apply a pinching channel $\mathcal{P}(\sigma_T)\defvar\sum_t \pure{t}_T \sigma_T \pure{t}_T$ to $T$, then perform a measurement on $AT$ described by the projectors
\begin{align}
\tilde{\pvm}_{a|x} \defvar \pvm_{a|x}\otimes\pure{0}_T + \pvm_{a\oplus 1|x}\otimes\pure{1}_T,
\label{eq:NPPprojs}
\end{align}
and store the outcome in $\hat{A}_x$ without further processing. However, 
the pinching channel on $T$ can in fact be omitted, because the subnormalized conditional states produced on $E$ would still be the same without it (in the following, we leave some tensor factors of $\id$ implicit for brevity, and consider an arbitrary state $\sigma_{ABET}$ before applying the pinching channel):
\begin{align}
&\tr[ABT]{\tilde{\pvm}_{a|x} 
\,\mathcal{P}_T (\sigma_{ABET})\,
\tilde{\pvm}_{a|x}}\nonumber\\
=&\tr[ABT]{\tilde{\pvm}_{a|x} \left(\sum_{t} 
\pure{t}_T
\sigma_{ABET} 
\pure{t}_T
\right) \tilde{\pvm}_{a|x}}\nonumber\\
=& \sum_{t} \tr[ABT]{(\pvm_{a\oplus t|x}\otimes\pure{t}_T)  \sigma_{ABET}(\pvm_{a\oplus t|x}\otimes\pure{t}_T) }\nonumber\\
=& \sum_{t,t'} \tr[ABT]{(\pvm_{a\oplus t|x}\otimes\pure{t}_T)  \sigma_{ABET}\left(\pvm_{a\oplus t'|x}\otimes\pure{t'}_T\right) }\nonumber\\
=& \tr[ABT]{\tilde{\pvm}_{a|x} \sigma_{ABET}\tilde{\pvm}_{a|x} }.
\end{align}
Hence we no longer consider the pinching channel on $T$, i.e.~we simply study the situation where we immediately perform the projective measurement~\eqref{eq:NPPprojs} on the state $\hat{\rho}_{ABET} \defvar \rho_{ABE} \otimes \pure{\phi_\p}_T$ state and store the outcome in register $\hat{A}_x$. As mentioned previously, we can take $\rho_{ABE}$ to be pure without loss of generality, in which case $\hat{\rho}_{ABET}$ is pure as well\footnote{It was important to remove the pinching channel because if $\hat{\rho}_{ABET}$ were a mixed state, then Eq.~\eqref{eq:Hduality} would be replaced by $D(\hat{\rho}_{ABT} \Vert \calZ_x(\hat{\rho}_{ABT})) = H(\hat{A}_x|EE')$, where $E'$ purifies $\hat{\rho}_{ABET}$. While $H(\hat{A}_x|EE')$ is a valid lower bound on $H(\hat{A}_x|E)$, it also turns out to be a trivial one, because the value of $T$ is ``copied'' into $E'$, 
removing the advantage of noisy preprocessing (in which Eve is not supposed to know whether the bitflip has occurred).}, and must hence obey the following relation (derived in e.g.~Theorem~1 of~\cite{Col12}):
\begin{align}
H(\hat{A}_x|E) = D(\hat{\rho}_{ABT} \Vert \calZ_x(\hat{\rho}_{ABT})) = D(\calG({\rho}_{AB}) \Vert \calZ_x(\calG({\rho}_{AB}))),
\label{eq:Hduality}
\end{align}
where
\begin{align}
\calG(\sigma_{AB}) &\defvar \sigma_{AB} \otimes \pure{\phi_\p}_T, \\
\calZ_x(\sigma_{ABT}) &\defvar \sum_a (\tilde{\pvm}_{a|x}\otimes\id_{B}) \sigma_{ABT} (\tilde{\pvm}_{a|x}\otimes\id_{B}) .
\end{align}
Importantly, this expression for $H(\hat{A}_x|E)$ is entirely in terms of the reduced state $\rho_{AB}$, and is convex with respect to $\rho_{AB}$. 

\section{Qubit reduction}
\label{sec:qubit}

We now specialize to 2-input 2-output scenarios. In such cases, 
we have the following ``qubit reduction''~\cite{PAB+09,HST+20}: if 
one finds
a convex function that lower-bounds the right-hand-side of Eq.~\eqref{eq:mainopt} with its optimization restricted to states $\rho_{ABE}$ of dimension $2\times2\times4$ and Pauli measurements (i.e.~rank-1 projective measurements), 
then this function is also a lower bound on 
$\breve{\lin}_\p$.
Note that there is a subtlety in this condition that there are only 2 inputs and 2 outputs; namely, this condition only needs to apply for the measurements involved in the optimization~\eqref{eq:mainopt}, which are the measurements used in the test rounds. In many DIQKD protocols, there are other inputs that Bob uses only in the generation rounds, in order to have stronger correlations with Alice's output $\hat{A}$ (in other words, keeping the value of $H(\hat{A}|\hat{B}XY)_\mathrm{hon}$ low). In terms of the full protocol, Bob is allowed to have more than 2 inputs, and furthermore the inputs that are only used in generation rounds are allowed to have more than 2 outcomes --- it is only the test rounds that are restricted to be 2-input 2-output.\footnote{A vital element of this argument is the fact that for the proof approaches we have outlined, it suffices to bound the entropy of Alice's outputs only (since the $H(\hat{A}|\hat{B}XY)_\mathrm{hon}$ term in~\eqref{eq:devwin} is instead computed from the honest behaviour); in particular, when solving~\eqref{eq:mainopt} we do not need to consider measurements that Bob only performs in generation rounds. This is not necessarily the case for advantage-distillation protocols, as we discuss in Chapter~\ref{chap:AD}.} 

With this reduction, it is possible to obtain bounds on the optimization~\eqref{eq:mainopt} in various scenarios, as long as they are 2-input 2-output. Note that the basic~\cite{PAB+09} protocol corresponds to taking $\p=0$, $\keyw_x=\delta_{x,0}$ (i.e.~only the $x=0$ input is used for generation rounds), and imposing only the CHSH value as a constraint. We now give a quick overview of the subsequent developments on this topic, all of which are fairly recent. 

Firstly, in~\cite{HST+20} the optimization was solved in closed form\footnote{There was a flaw in the proof that the final bound is convex, which was corrected by~\cite{WAP21}, though it does not change the final expression.} for the case where noisy preprocessing is included, but still only using $x=0$ for key generation and only considering the CHSH value. This was achieved by following a similar approach to~\cite{PAB+09}, though the analysis was substantially more complicated because the noisy preprocessing obstructs some of the simplifications used in~\cite{PAB+09} (for example, being able to explicitly compute the eigenvalues of some conditional states). Explicitly, the final bound obtained is
\begin{align}
\breve{\lin}_\p({\constr}) = 1 - \binh\left(\frac{1+\sqrt{(\constr/2)^2 - 1}}{2}\right) + \binh\left(\frac{1+\sqrt{1 - \p(1-\p) (8-\constr^2)}}{2}\right),
\end{align}
where $\constr$ is the CHSH value. It was also noted in that work that for the limited-detection-efficiency noise model, one can slightly improve the keyrates by using the observation in~\cite{ML12}: namely, instead of forcing \emph{all} the measurements to have binary outcomes, we can keep the $\perp$ outcome for Bob's generation measurement(s).
(This is in line with what we noted earlier, that Bob's measurements for generation rounds are in fact allowed to have more than 2 outcomes.)
This somewhat improves the keyrate by keeping the $H(\hat{A}|\hat{B}XY)_\mathrm{hon}$ term smaller, thanks to the data-processing inequality.

Subsequently, in~\cite{WAP21} tight closed-form bounds were derived for a more general case: again, noisy preprocessing is allowed while still restricting the generation input to $x=0$ only, but now the bound is with respect to a generalized CHSH value,
\begin{align}
\alpha \left(\expval{A_0 B_0} + \expval{A_0 B_1}\right) + \expval{A_1 B_0} - \expval{A_1 B_1},
\label{eq:tiltCHSH}
\end{align}
for $\alpha\in\mathbb{R}$. The proof approach in this work also served to somewhat simplify the analysis in~\cite{PAB+09}. Similar results were obtained independently in~\cite{SBV+21}, though the approach in that work more closely followed that of~\cite{PAB+09}, and was partially numerical in the $|\alpha|<1$ regime.

Separately, the random-key-measurements technique was proposed in~\cite{SGP+21}, corresponding to allowing arbitrary choices of $\keyw_x$. This work did not include noisy preprocessing, and only considered the CHSH value. It was found that by using random key measurements, the loophole-free Bell tests in~\cite{HBD+15,RBG+17} would be able to achieve positive asymptotic keyrates in principle (we will discuss this in more detail in the next chapter, where we also address finite-size effects). In contrast to the closed-form results above, this work used a numerical algorithm to obtain secure lower bounds on~\eqref{eq:mainopt}. It is rather similar to the one that we shall present next; however, it made use of an inequality that is not provably tight in that context, and later results~\cite{WAP21,arx_TSB+20} indicated that indeed there was a small loss introduced by using that inequality. 

Given all these techniques, a natural question is whether they can all be combined and analyzed together. This was the goal of the algorithm we developed in~\cite{arx_TSB+20}. In summary, its advantage compared to the results listed above is that it achieves all of the following simultaneously: it applies to arbitrary 2-input 2-output protocols, it accounts for noisy preprocessing and random key measurements, and it provably converges to a tight bound. The drawback is that it is a rather computationally intensive numerical algorithm (though we stress again that the resulting bounds are secure; it is not a heuristic minimization). We now describe this algorithm in some detail. 
For completeness,
we will also list in Sec.~\ref{sec:depolthresh} the depolarizing-noise thresholds obtained by the approaches of~\cite{HST+20,WAP21,SBV+21,SGP+21,arx_TSB+20}, for easier comparison.
(Very recently, a much more computationally efficient method to solve this optimization under the qubit reduction was developed; refer to~\cite{arx_MPW21} for further details.)

\subsection{Secure numerical algorithm}

As mentioned above, our goal would be to find convex function that lower-bounds the right-hand-side of Eq.~\eqref{eq:mainopt} with its optimization restricted to states $\rho_{ABE}$ of dimension $2\times2\times4$ and Pauli measurements.
We note (by following the same arguments as before, but with the optimization domain restricted) that if we pick some $\vec{\lambda}$ and solving the optimization~\eqref{eq:intercept} for the value $c_{\vec{\lambda}}$ over such states and measurements, we would obtain such a lower bound via Eq.~\eqref{eq:duallin}, which is trivially convex since it is affine. Hence it suffices to study the optimization~\eqref{eq:intercept} restricted to such states and measurements. Furthermore, the resulting bounds are still tight in the same sense as before, in that taking the supremum over choices of $\vec{\lambda}$ yields the convex envelope of this restricted version of the optimization~\eqref{eq:mainopt} (except possibly for $\vec{\constr}$ not in the interior of the quantum-achievable values).

Since we have reduced the analysis to Pauli measurements, it is convenient to use the convention where the measurements have output values in $\{-1,+1\}$ instead of $\mathbb{Z}_2$, and define the corresponding hermitian observables\footnote{This notation is similar to the notation $A_j,B_j$ we will use to denote registers in individual rounds in the next chapter; however, we should not need to use these two notations in close proximity and hence there should be little risk of confusion.} $A_x \defvar \sum_a a \pvm_{a|x}$ and $B_y \defvar \sum_b b \pvm_{b|y}$. For 2-input 2-output scenarios, the full probability distribution $\pr{ab|xy}$ is completely parametrized by the 4 correlators $\expvaltr{(A_x \otimes B_y)}$ and the 4 marginals $\expvaltr{(A_x \otimes \id)}$, $\expvaltr{(\id \otimes B_y)}$ (as long as it satisfies the NS conditions; see Appendix~\ref{app:corrs} for details of this parametrization). Without loss of generality, one can assume that the marginals are zero~\cite{PAB+09,HST+20}\footnote{Essentially, this is because one could consider a virtual \term{symmetrization step} in which Alice and Bob jointly flip their outputs using a uniformly random public bit, forcing their marginals to be zero. As argued in~\cite{SR08,PAB+09,HST+20}, the entropy of their original outputs conditioned on Eve's side-information is equal to the entropy of their symmetrized outputs conditioned on Eve's side-information and the publicly communicated bit, and hence we can analyze the latter instead (note that the virtual symmetrization does not need to be physically performed; it merely serves as an intermediate construction to study the entropies).}, hence it suffices to focus on the correlators $\expvaltr{(A_x \otimes B_y)}$. In terms of the optimization~\eqref{eq:mainopt}, this means that it suffices most generally to consider situations where there are 4 constraints, corresponding to the observables
\begin{align}
\Gamma_{xy}(\pvm_{a|x},\pvm_{b|y}) = A_x \otimes B_y, 
\quad \text{ for } x,y\in\{0,1\}.
\end{align}
Considering other forms of constraints in 2-input 2-output scenarios is essentially equivalent to making specific choices of Lagrange multipliers $\vec{\lambda}$ for this 4-constraint formulation. For instance, imposing a constraint based on the CHSH value
is equivalent to restricting to Lagrange-multiplier combinations of the form $(\lambda_{00},\lambda_{01},\lambda_{10},\lambda_{11}) = (\lambda,\lambda,\lambda,-\lambda)$ for some $\lambda\in\mathbb{R}$.

We still have the freedom to choose the basis in which to express the optimization. Following~\cite{SGP+21}, we can use the measurement axes of Alice's two Pauli measurements to define the $X$-$Z$ plane on her system, taking $A_0=Z$ and $A_1 = \cos(\thA)Z + \sin(\thA)X$ for some $\thA \in [0,\pi]$ 
(values of $\thA$ in $[\pi,2\pi]$ can be brought into this range by rotating our axis choice by $\pi$ around the $Z$-axis). 
Analogously, we can choose a basis for Bob such that $B_0=Z$ and $B_1 = \cos(\thB)Z + \sin(\thB)X$ with $\thB \in [0,\pi]$. (This is a different basis choice from the one in~\cite{PAB+09,HST+20} that allows a reduction to {Bell-diagonal} 
$\rho_{AB}$. That choice involves more parameters for the measurements, but fewer parameters for the state.)

With this in mind, we rewrite the optimization~\eqref{eq:intercept} as 
\begin{align}\label{eq:qubitopt}
\begin{gathered}
\min_{\thA} \min_{\thB} 
\min_{\rho_{AB}} 
F_\mathrm{obj}(\thA,\thB,\rho_{AB}),
\text{ where } \\ 
F_\mathrm{obj}(\thA,\thB,\rho_{AB}) \defvar \left( \sum_{x\in\{0,1\}} \keyw_x D(\calG({\rho}_{AB}) \Vert \calZ_x(\calG({\rho}_{AB})))\right)
- \vec{\lambda} \cdot \tr{\vec{\Gamma}(
\thA,\thB
) \rho_{AB}}.
\end{gathered}
\end{align}
(The minima are attained because the objective function is continuous and the domain is compact.)
Furthermore, the objective function is invariant under the substitutions $\rho_{AB} \to (Y \otimes Y) \rho_{AB} (Y \otimes Y)$ and $\rho_{AB} \to \rho_{AB}^*$, which implies~\cite{PAB+09} we can restrict the optimization to $\rho_{AB}$ that are ``almost'' Bell-diagonal --- specifically, with respect to the Bell basis $\{\ket{\Phi^+}, \ket{\Psi^-}, \ket{\Phi^-}, \ket{\Psi^+}\}$ where $\ket{\Phi^\pm} = (\ket{00}\pm\ket{11})/\sqrt{2}$ and $\ket{\Psi^\pm} = (\ket{01}\pm\ket{10})/\sqrt{2}$, we can take
\begin{align}
\rho_{AB} = 
\begin{pmatrix}
L_{\Phi^+} & \ell_1 & 0 & 0 \\
\ell_1 & L_{\Psi^-} & 0 & 0 \\
0 & 0 & L_{\Phi^-} & \ell_2 \\
0 & 0 & \ell_2 & L_{\Psi^+}
\end{pmatrix},
\label{eq:almostdiag}
\end{align}
for some $L_{\Phi^+} , L_{\Psi^-} , L_{\Phi^-} , L_{\Psi^+} , \ell_1 , \ell_2 \in \mathbb{R}$.

The optimization~\eqref{eq:qubitopt} comprises of minimizations over two measurement angles and the 
Alice--Bob state (restricted to the form~\eqref{eq:almostdiag}).
Our numerical approach is based on the fact that each of these optimizations can be securely lower-bounded individually. The details are somewhat tedious and hence we defer them to Appendix~\ref{app:qubitnumerics}; however, the key idea in each case can be briefly summarized as follows:
\begin{itemize}
\item Minimization over $\thA$: First, we find a continuity bound\footnote{This is computed by using a continuity bound derived in~\cite{SBV+21}, which bounds the maximum difference in von Neumann entropy between two states in terms of the fidelity between the states. To the best of our knowledge, this was a novel result (previous continuity bounds such as the Fannes-Audenaert inequality and its refinements~\cite{Win16} were based on trace distance; however, we found them to yield worse bounds for our purposes).} for $F_\mathrm{obj}(\thA,\thB,\rho_{AB})$ with respect to $\thA$. Then by computing the value of $F_\mathrm{obj}(\thA,\thB,\rho_{AB})$ on a sufficiently fine grid of values for $\thA$, we can use the continuity bound to bound the maximum difference between the true minimum and the minimum value on the grid. 

\item Minimization over $\thB$: We observe that  $F_\mathrm{obj}(\thA,\thB,\rho_{AB})$ is an affine function of the variables $r_Z\defvar\cos(\thB)$, $r_X\defvar\sin(\thB)$, which satisfy the constraints $r_Z^2+r_X^2=1$ and $r_X \geq 0$. These constraints describe a semicircular arc.
By using this fact, it is possible to show that this minimization can be relaxed to a minimization over a finite number of points (which correspond to vertices of a polytope that contains the semicircular arc); an approach which was used in~\cite{SGP+21}.

\item Minimization over $\rho_{AB}$: 
In the previous section, we applied the approach of~\cite{WLC18} to express the objective function in a form that is convex with respect to $\rho_{AB}$. Hence the minimization can be securely lower-bounded using the Frank-Wolfe algorithm~\cite{FW56} (this was the approach applied in~\cite{WLC18}\footnote{Recently, an improvement over this approach was achieved using an interior-point algorithm~\cite{arx_HIL+21}, which was found to converge substantially faster. In principle, this could be used to replace the Frank-Wolfe algorithm here as well, though we leave this for future work.} for device-dependent QKD, where the task is precisely to solve the minimization over the state, for fixed measurements).
\end{itemize}
Slight care is needed to correctly combine these individual approaches, but we discuss this as well in Appendix~\ref{asec:algorithmsummary}. 

In general, this approach faces the difficulty that it needs to optimize over the choice of Lagrange multipliers $\vec{\lambda}$. Given that our approach for solving~\eqref{eq:qubitopt} for a specific choice of Lagrange multipliers is already highly computationally intensive (requiring about $5000$ core-hours to achieve the level of accuracy in the bounds~\eqref{eq:certbnds} below for each value of $\p$), it would be impractical to also optimize over the Lagrange multipliers while doing so. It is more feasible to first optimize the Lagrange multipliers while using a simple heuristic algorithm to estimate the minimizations, then certify the final result using our approach for solving~\eqref{eq:qubitopt}. (This is essentially the same perspective as presented in~\cite{SBV+21}.)

We remark that this approach can also yield arbitrarily tight bounds for DIRE, where (as noted in Chapter~\ref{chap:overview}) the goal would typically be to find lower bounds on (weighted sums of) the two-party entropies $H(\hat{A}_x \hat{B}_y|E)$. This is because by the same arguments as above, we have~\cite{Col12,TSG+21}
\begin{align}
\begin{gathered}
H(\hat{A}_x \hat{B}_y|E) = D({\rho}_{AB} \Vert \pinch_{xy}({\rho}_{AB})),
\\ \text{where }
\pinch_{xy}(\sigma_{AB}) \defvar \sum_a (\pvm_{a|x} \otimes \pvm_{b|y}) \sigma_{AB} (\pvm_{a|x} \otimes \pvm_{b|y}) .
\end{gathered}
\label{eq:entdualDIRE}
\end{align}
(Here we omit the parts corresponding to noisy preprocessing, since it is not relevant for DIRE.) This expression can be bounded in the same way as we have just described above, though the objective function would no longer be affine with respect to $(r_Z,r_X)$, and hence the optimization over Bob's measurements would also have to be approached using a continuity bound. This should yield a substantial improvement over previous results, which simply used the CHSH-based bound of~\cite{PAB+09} to bound the entropy of only one party's outputs~\cite{LLR+21}, or which consider the full distribution and bound the entropy of both outputs but use inequalities that are not tight~\cite{BRC20,TSG+21,BFF21}. In addition, the fact that it allows for the random-key-measurement approach could yield further improvements, though there are some technicalities that we address at the end of Sec.~\ref{sec:preshared}. We remark that this question was recently explored using heuristic numerical methods, in~\cite{arx_BRC21}.

\subsection{Resulting bounds}
\label{sec:1rndresults}

\begin{figure}
\centering
\subfloat[$\p=0$]{
\includegraphics[width=0.49\textwidth]{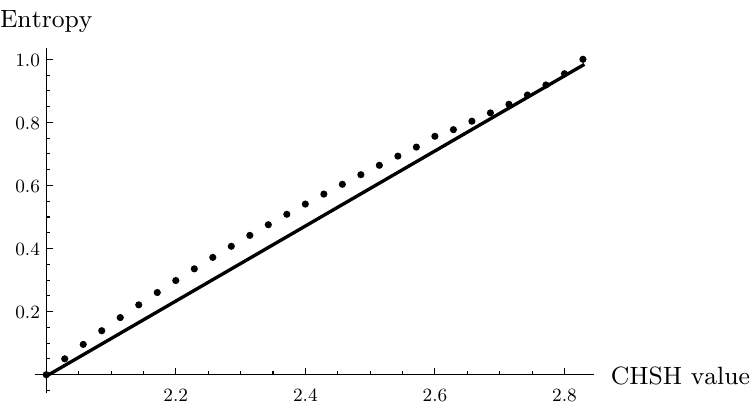}
} 
\subfloat[$\p=0.2$]{
\includegraphics[width=0.49\textwidth]{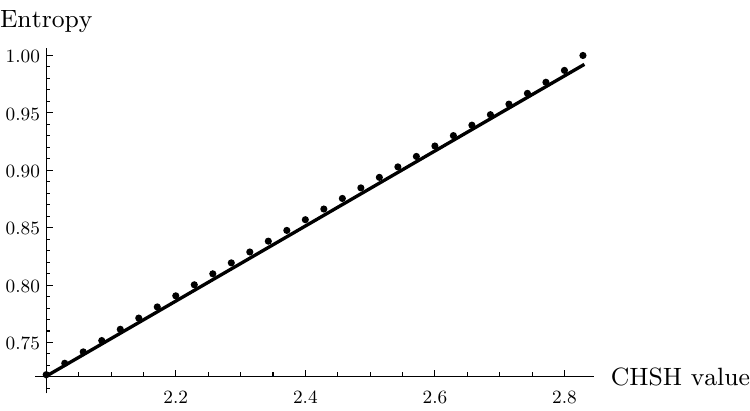}
}
\caption{(From~\cite{arx_TSB+20}) 
The solid lines are the certified lower bounds we obtained (Eq.~\eqref{eq:certbnds}), 
while the points indicate the results of heuristically solving the optimization \eqref{eq:mainopt} over qubit states and measurements (with just the CHSH value as the constraint). 
As previously discussed, the tight bound in each case would be given by the convex envelope of the curve traced out by the points, assuming that the heuristics have found the true minimum. However, we can see that in each case that curve appears to be nonconvex over the interval $[2,2.75]$ (approximately), and its convex envelope would be affine over that interval --- specifically, it would be given by the linear interpolation between the feasible points at the ends of that interval. The certified bound is almost flush with this linear interpolation, indicating that it is basically tight over this interval.
}
\label{fig:entbnds}
\end{figure}

Putting aside the precise details of the approach, we now turn to the results obtained.
We focus on the situation where $\keyw_0=\keyw_1=1/2$ and the only constraint imposed is the CHSH value, because this is the relevant situation for the full protocols studied in the next chapter. More precisely, we consider the described optimization with a single constraint corresponding to the operator
\begin{align}
\Gamma(\thA,\thB) = A_0\otimes B_0 + A_0\otimes B_1 + A_1\otimes B_0 - A_1\otimes B_1,
\end{align}
with the constraint value $\constr$ being the CHSH value. (As previously mentioned, this can be implemented in the formulation where are 4 constraint operators $\Gamma_{x,y}(\thA,\thB)$ by simply restricting to Lagrange-multiplier choices of the form $(\lambda_{00},\lambda_{01},\lambda_{10},\lambda_{11}) = (\lambda,\lambda,\lambda,-\lambda)$.) Each choice of the associated Lagrange multiplier $\lambda$ yields an affine lower bound of the form ${\lambda}{\constr} + c_{{\lambda}}$, as noted in Eq.~\eqref{eq:duallin}.
Importantly, some heuristic computations (also observed in~\cite{SGP+21} for the $\p=0$ case) suggest that the true bound $\breve{\lin}_\p({\constr})$ in this situation is in fact affine over a wide range of CHSH values --- we show this in Fig.~\ref{fig:entbnds}, which displays the results of heuristic minimizations compared to our certified bound in some cases. In particular, this range on which the bound is affine covers all currently experimentally reasonable values. 
This has the implication that there is a single ${\lambda}$ that yields an affine lower bound which is equal to $\breve{\lin}_\p({\constr})$ (i.e.~it is tight) over this entire range; specifically, it is simply the value of ${\lambda}$ corresponding to the gradient of $\breve{\lin}_\p({\constr})$ in this range. This greatly simplifies our task since we only need to solve the optimization for this specific value of ${\lambda}$.

We focused on several values of noisy preprocessing, ranging from $\p=0$ to $\p=0.45$.
In each case, we first solved the minimization~\eqref{eq:intercept} heuristically for some selection of values of ${\lambda}$, in order to estimate the choice of $\lambda$ that yields a tight bound over the range of CHSH values in which $\breve{\lin}_\p({\constr})$ is affine. Then using our algorithm to get certified bounds on the corresponding $c_{\lambda}$ in~\eqref{eq:intercept}, we arrived at the final bounds
\begin{align}
\begin{gathered}
\breve{\lin}_{0}({\constr}) \geq 1.190(\constr-2) - 0.00454, \qquad
\breve{\lin}_{0.2}({\constr}) \geq 0.327(\constr-2) + 0.72063, \\
\breve{\lin}_{0.3}({\constr}) \geq 0.139(\constr-2) + 0.88051, \qquad
\breve{\lin}_{0.4}({\constr}) \geq 0.0341(\constr-2) + 0.97055, \\
\breve{\lin}_{0.45}({\constr}) \geq 0.00855(\constr-2) + 0.992487,
\end{gathered}
\label{eq:certbnds}
\end{align}
some of which are shown in Fig.~\ref{fig:entbnds}. 

\begin{figure}
\centering
\includegraphics[width=0.6\textwidth]{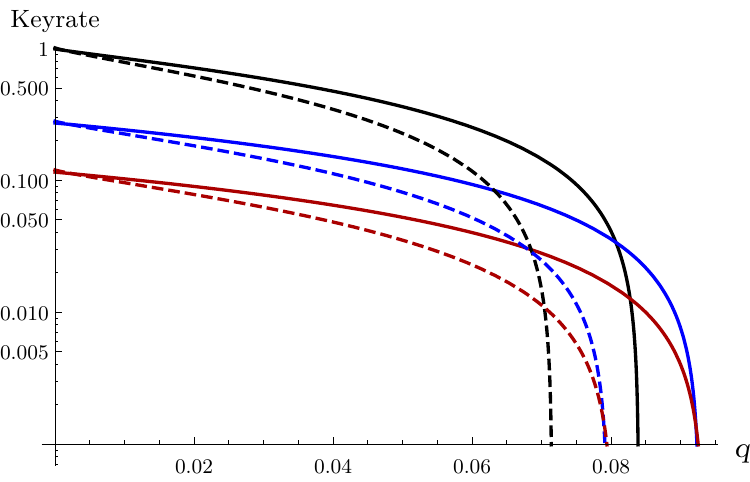}
\caption{(From~\cite{arx_TSB+20}) 
Lower bounds on asymptotic keyrates under depolarizing noise $\q$, on a vertical log scale. The black, blue, and red solid curves show the asymptotic keyrate given by the expression~\eqref{eq:keyratedepol}, for noisy-preprocessing values of $\p=0$, $0.2$ and $0.3$ respectively, based on the bounds~\eqref{eq:certbnds}. 
For comparison, the dashed curves show the corresponding asymptotic keyrates of the protocol in~\cite{HST+20}, which does not use the random-key-measurement method (the $\p=0$ case is equivalent to the~\cite{PAB+09} protocol). The solid curves intersect the horizontal axis at $\q=8.39\pct$, $9.26\pct$ and $9.33\pct$, in order of increasing $\p$. The first value is a minor improvement over~\cite{SGP+21} (despite being effectively the same protocol), likely because our algorithm provably converges to a tight keyrate bound (for fixed $\p$). The last value exceeds all previous bounds~\cite{HST+20,SGP+21,WAP21,SBV+21} for depolarizing-noise tolerance (see Sec.~\ref{sec:depolthresh}).
It is also close to the upper bound of $9.57\pct$ that we derive in Appendix~\ref{asec:optthresh}.}
\label{fig:keyrates}
\end{figure}

So far, these are only the bounds on the conditional entropy against Eve. We now turn to the question of the resulting keyrates. For simplicity, let us focus on the depolarizing-noise scenario, and consider a protocol where the asymptotic keyrate is given by
\begin{align}
\operatorname{rate}_\infty = 
\breve{\lin}_{0}({\constr})
- H(\hat{A}_0|\hat{B}_0)_\mathrm{hon} 
,
\label{eq:keyratedepol}
\end{align}
with $\constr$ still denoting the CHSH value.
This is just the expression~\eqref{eq:randomkeyrate} mentioned in Sec.~\ref{sec:sketchasympt}, taking $\pr{X=Y}=1$ and $\keyw_0=\keyw_1=1/2$. It may seem unclear how to ensure that $\pr{X=Y}=1$ in an actual protocol, but we will address this in the next chapter by explicitly presenting a protocol (Protocol~\ref{prot:preshared}) that achieves this asymptotic keyrate (at least in terms of \emph{net} key generation) by using a pre-shared key. In Fig.~\ref{fig:keyrates}, we plot the keyrates given by applying our bounds~\eqref{eq:certbnds} to this expression.
From the $\p=0.3$ bound, we obtain a depolarizing-noise threshold of $9.33\pct$.
However, we were unable to obtain better thresholds using the higher values of $\p$, for reasons we shall now discuss.

\subsection{Tightness of bounds}

For each of the bounds in Eq.~\eqref{eq:certbnds}, there is a feasible point of the optimization for $c_{\lambda}$ which yields a value within $0.005$ (or less, for higher values of $\p$) of the certified results shown in Eq.~\eqref{eq:certbnds}, so these bounds on entropy are very close to optimal in terms of absolute error.
In terms of the depolarizing-noise thresholds that they yield, taking the convex envelope of some of the feasible points shown in Fig.~\ref{fig:entbnds}
yields the result that the thresholds for $\p=0.2$ and $\p=0.3$ cannot be improved by more than about $0.1$ percentage points, so those thresholds are very close to optimal as well. 
However, larger values of $\p$ face the issue that the asymptotic keyrate becomes extremely low, which makes the horizontal intercept (i.e.~the depolarizing-noise threshold) very sensitive to changes in $\breve{\lin}_{\p}$ --- even a small absolute error in this bound results in a significant change in the threshold value. Therefore, the thresholds we obtained from the certified bounds with $\p=0.4$ and $\p=0.45$ in Eq.~\eqref{eq:certbnds} were only $9.32\pct$ and $9.10\pct$ respectively, worse than the results for $\p=0.3$. Heuristic computations suggest that the true thresholds for those cases might be approximately $9.46\pct$ and $9.50\pct$ respectively, but using our algorithm to certify these values would require it to converge to tolerances that currently appear impractical. Hence a different approach may be needed to find the true thresholds for these values of $\p$. From a practical perspective though, such improvements may be of limited use, because the very low asymptotic rates mean that the finite-size keyrate would likely be zero until extremely large sample sizes.

In any case, we note that for depolarizing noise at least, the threshold value cannot be improved much further by \emph{any} protocol choices within the framework we have presented in this section, e.g.~by using the full distribution as constraints (which also encompasses the use of modified CHSH inequalities~\cite{WAP21,SBV+21}), or adjusting the values of $\keyw_x$. This is essentially because our bounds are very close to the linear interpolation between the points $(2,\binh(\p))$ and $(2\sqrt{2},1)$, as can be seen from Fig.~\ref{fig:entbnds}. Intuitively speaking, the bound on the entropy against Eve in the depolarizing-noise scenario cannot exceed this linear interpolation (because Eve can always perform classical mixtures of strategies in order to attain every point on this linear interpolation), which means that our bounds are close to the highest bounds that are even possible in principle. We discuss this in detail in Appendix~\ref{asec:optthresh}.

\subsection{Comparison of noise thresholds}
\label{sec:depolthresh}

For easier reference, we briefly list the depolarizing-noise thresholds at which the keyrates given by the various approaches become zero. Unless otherwise specified, all these results are based on the CHSH value only.
\def\arraystretch{1.5} 
\begin{center}
\begin{tabular}{L{105mm} C{20mm}}
Basic~\cite{PAB+09} protocol: & $7.14\pct$ \\
Noisy preprocessing only~\cite{HST+20,WAP21}: & $8.08\pct$ \\
Noisy preprocessing and modified CHSH as in~\eqref{eq:tiltCHSH}~\cite{WAP21}: & $8.34\pct$ \\
Random key measurements only\protect\footnotemark~\cite{SGP+21,arx_TSB+20}:& $8.39\pct$ \\
Noisy preprocessing and random key measurements~\cite{arx_TSB+20}: & $9.33\pct$
\end{tabular}
\end{center}
\footnotetext{The value stated here is the one derived in~\cite{arx_TSB+20}, as it is slightly better than the result obtained in~\cite{SGP+21} based on a bound that may not be tight.} 
\def\arraystretch{1}
Also, there is an upper bound of $9.57\pct$ (see Appendix~\ref{asec:optthresh}) on the noise threshold for this family of protocols.

For limited detection efficiency, the situation is more complicated, because of the fact that the choice of noiseless behaviour needs to be tailored as a function of $\eta$. 
To roughly summarize, focusing on the case where the noiseless distribution is produced from a two-qubit state (see~\cite{SBV+21} for results for a more sophisticated photonic model), the best detection efficiency threshold achieved without using any of the techniques described here was $86.5\pct$ (though a number of other values have been cited; see Appendix~\ref{app:detthresh} for a brief summary of the variants).
The results of~\cite{HST+20,WAP21,SBV+21} showed that the threshold can be lowered to $82.6\pct$ by using noisy preprocessing and keeping the $\perp$ outcome for Bob's generation measurement~\cite{ML12}. They also found that using modified CHSH inequalities did not improve this threshold specifically as compared to using the CHSH value (though it gives small improvements in other situations). These works do not implement random key measurements. 
On the other hand,~\cite{SGP+21} found that the random-key-measurements technique on its own does not improve the threshold significantly as compared to the basic~\cite{PAB+09} protocol, although it does somewhat increase the keyrates in the regime where the keyrate is already positive. 

When these techniques are combined, it is unclear how much the detection-efficiency threshold is changed, because of the complexity of optimizing the various parameters. With some heuristic exploration in~\cite{arx_TSB+20}, we were unable to improve the threshold significantly as compared to the~\cite{HST+20,WAP21,SBV+21} values. However, these heuristic results may not have found the true optimal choice of noiseless behaviour --- see also the results of~\cite{arx_BFF21,arx_MPW21}, which found some improvements by considering statistics other than the CHSH value.

\section{General scenarios}
\label{sec:NPO}

We now turn to the question of how to go beyond 2-input 2-output scenarios. Currently, this is still something of an open question --- in~\cite{TSG+21}, we developed an approach to handle this; however, the bound is not tight, and does not consider noisy preprocessing (though in light of this, we do allow more than 2 outputs for the key-generating measurements, in principle). A subsequent work~\cite{BFF21} by other authors has similar limitations, though it is significantly better in terms of computational efficiency. (The bounds in these two works seem to perform well in different noise regimes.) Very recently, the same authors developed a computationally efficient method which converges to arbitrarily tight bounds on this optimization~\cite{arx_BFF21}, and this currently appears to be the state of the art on this topic.
Still,~\cite{TSG+21} was our first attempt to address the optimization~\eqref{eq:mainopt} without going through the indirect approach via min-entropy~\cite{PAM+10,NPS14,BSS14} or the specialized approach of the qubit reduction~\cite{PAB+09}, and we shall thus focus on presenting it here.

To give an overall picture, the basic idea behind this approach (and also that of~\cite{BFF21,arx_BFF21}) is to find a way to lower bound the optimization~\eqref{eq:mainopt} in terms of an expression of the form
\begin{align}
\begin{gathered}
\operatorname*{optimize}_{\rho, \vec{M} } \tr{Q\big(\vec{M}\big)\, \rho } \\ 
\suchthat\; \tr{R_j\big(\vec{M}\big)\, \rho} = 0 \;,
\end{gathered} 
\label{eq:NPO}
\end{align}
where the optimization (which can be a supremum or infimum) takes place over states $\rho$ and operators $M_k$ of arbitrary dimension, and the functions $Q$ and $R_j$ are (noncommutative) polynomials in the operators $M_k$. 
The operators $M_k$ may also be required to satisfy some algebraic constraints such as orthogonality conditions or commutation relations.
Such an optimization may be referred to as a \term{noncommutative polynomial optimization}. 

Importantly, such optimizations in the context of nonlocality have been a subject of some interest, and an approach was developed for relaxing such optimizations to a hierarchy of SDPs, known as the NPA hierarchy~\cite{NPA08}. This is an infinite sequence of SDPs such that each SDP provides a secure bound on the optimization~\eqref{eq:NPO}, and in some scenarios of interest (such as computing the maximum Bell value that can be obtained by quantum systems on arbitrary Hilbert spaces), it can be proven that this sequence of bounds converges to the true optimal value. Hence by reducing our optimization to the form~\eqref{eq:NPO}, we can apply the NPA hierarchy to obtain secure lower bounds on our main optimization. The guessing-probability bounds derived in~\cite{PAM+10,NPS14,BSS14} were based on this SDP hierarchy as well, though as noted in Chapter~\ref{chap:overview}, this does not yield a very tight bound on the von Neumann entropy.

We now turn to the details. 
In this section, one should keep in mind that while we do not have the qubit reduction, we do still impose the condition that all operators are finite-dimensional (to avoid technical issues in some of the theorems we use).\footnote{The NPA hierarchy itself is a notable exception to this restriction, since in fact it converges to the value for infinite-dimensional systems (and strictly speaking, only imposes commutation constraints between the measurements, rather than a tensor-product structure). Hence in principle, the NPA hierarchy may in fact not converge to precisely the value of interest we consider in this work, which is restricted to systems of arbitrary finite dimension, with a tensor-product structure between Alice and Bob. However, since it bounds a larger set than we consider in this work, the resulting bounds are still guaranteed to be secure, which is sufficient for security proofs (though the results may not be tight).} However, the bounds we derive will hold for all finite dimensions, and are independent of the dimension. Also, for the following calculations, it will be easier to work with entropies in base $e$ rather than base $2$. We shall denote this with the subscript ${}_{\nat}$, i.e.~we have $H(\hat{A}_x|E) = (\log e) H_{\nat}(\hat{A}_x|E)$ and so on. Hence in place of~\eqref{eq:mainopt}, we study
\begin{align}
\begin{gathered}
\inf_{\rho_{ABE},\pvm_{a|x},\pvm_{b|y}} \sum_{x} \keyw_x H_{\nat}(
\hat{A}_x
|{E}) \\ 
\suchthat\; \tr{\vec{\Gamma}(\pvm_{a|x},\pvm_{b|y}) \rho_{AB}} = \vec{\constr} \;.
\end{gathered} 
\label{eq:mainoptnat}
\end{align}
Obviously, this optimization is equivalent to~\eqref{eq:mainopt} after rescaling by $\log e$. 

\subsection{Using the Lagrange dual}

We first present an approach based on the Lagrange dual.
As mentioned above, we restrict our attention to the simpler situation where there is no noisy preprocessing. In that case, rather than Eq.~\eqref{eq:Hduality}, we have a simpler relation (again by Theorem~1 of~\cite{Col12}):
\begin{align}\label{eq:Hdualsimple}
\begin{gathered}
H(\hat{A}_x|E) = D({\rho}_{AB} \Vert \pinch_x({\rho}_{AB})),
\\ \text{where }
\pinch_x(\sigma_{AB}) \defvar \sum_a (\pvm_{a|x} \otimes \id_B) \sigma_{AB} (\pvm_{a|x} \otimes \id_B) .
\end{gathered}
\end{align}
Furthermore, for now we will focus on the case where $\keyw_{\xo}=1$ and $\keyw_x=0$ otherwise 
(in short, $\keyw_x=\delta_{x,\xo}$), 
deferring the case of arbitrary $\keyw_x$ to Sec.~\ref{sec:randbasisgen}. 
We prove the following result for this situation:\footnote{For now we only need the $\xg=0$ case of the formula~\eqref{eq:oppoly}, but for later use we shall write this expression for arbitrary $\xg \in \mathcal{X}$.}
\begin{theorem}\label{th:lagbnd}
When noisy preprocessing is not applied, and $\keyw_x=\delta_{x,\xo}$,
the optimization~\eqref{eq:mainoptnat} is lower bounded by
\begin{align}
\sup_{\vec{\lambda}}
\; -\frac{1}{e} 
\left(
\sup_{\sigma_{AB},\pvm_{a|x},\pvm_{b|y}} \tr{\exp\left( \ln(\pinch_{\xo}(\sigma_{AB})) + \vec{\lambda} \cdot\vec{\Gamma}(\pvm_{a|x},\pvm_{b|y}) \right)} 
\right)
+ \vec{\lambda} \cdot \vec{\constr} ,
\label{eq:lagbnd}
\end{align}
and this bound is tight for all $\vec{\constr}$ such that strong duality holds in the Lagrange dual.
Furthermore, this is in turn lower bounded by 
\begin{align}
\sup_{\vec{\lambda}}\; -\frac{1}{e} 
\left(
\sup_{\sigma_{AB},\pvm_{a|x},\pvm_{b|y}} \tr{\sigma_{AB} \, Q_{\xo}(\pvm_{a|x},\pvm_{b|y})}
\right)
+ \vec{\lambda} \cdot \vec{\constr} ,
\label{eq:lagbndGT}
\end{align}
where $Q_{\xo}(\pvm_{a|x},\pvm_{b|y})$ is an operator polynomial defined via the following expression\footnote{The expression $\alpha \csch \alpha$ has a removable discontinuity at $\alpha=0$; it should be understood to be ``filled in'' by replacing it with its limit value, i.e.~we take $0 \csch 0 \defvar \lim_{\alpha\to0} \alpha \csch \alpha = 1$.} (the indices $\vec{a},\vec{a}'$ in the summation here range over the set $\mathcal{A}^{|\mathcal{X}|}$, and similarly the indices $\vec{b},\vec{b}'$ range over the set $\mathcal{B}^{|\mathcal{Y}|}$):
\begin{align}
Q_{\xg}(\pvm_{a|x},\pvm_{b|y}) \defvar 
\sum_{\vec{a}\vec{b}\vec{a}'\vec{b}'} 
e^{
\alpha^+_{\vec{a}\vec{b}\vec{a}'\vec{b}'} 
}
\alpha^-_{\vec{a}\vec{b}\vec{a}'\vec{b}'} \csch \alpha^-_{\vec{a}\vec{b}\vec{a}'\vec{b}'} 
\pinch_{\xg} \left(
\left(\prod_{xy} \pvm_{a'_x|x} \otimes \pvm_{b'_y|y}  \right)^\dagger
\left(\prod_{xy} \pvm_{a_x|x} \otimes \pvm_{b_y|y} \right)
\right),
\label{eq:oppoly}
\end{align}
where
$\alpha^{\pm}_{\vec{a}\vec{b}\vec{a}'\vec{b}'} \defvar \frac{1}{2} \sum_{jxy} \lambda_j \left(c^{(j)}_{a_xb_yxy} \pm c^{(j)}_{a'_xb'_yxy}\right) $.
The two operator products in~\eqref{eq:oppoly} must be taken in the same order over $xy$; any orderings satisfying that condition yield a valid lower bound.
\end{theorem}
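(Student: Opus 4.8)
The plan is to combine the operator identity for the conditional entropy from Sec.~\ref{sec:entdual}, Lagrange duality, and a Golden--Thompson-type operator inequality that turns an operator exponential into a noncommutative polynomial in the projectors. First, since $\keyw_x=\delta_{x,\xo}$ and noisy preprocessing is absent, \eqref{eq:Hdualsimple} lets me rewrite the objective of \eqref{eq:mainoptnat} as $D_{\nat}({\rho}_{AB}\Vert\pinch_{\xo}({\rho}_{AB}))$ (the base-$e$ relative entropy), so \eqref{eq:mainoptnat} becomes the minimization of this quantity over states and measurements subject to $\tr{\vec{\Gamma}(\pvm_{a|x},\pvm_{b|y})\rho_{AB}}=\vec{\constr}$. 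I then pass to its Lagrange dual: weak duality gives the stated lower bound at once, and strong duality --- hence the asserted tightness --- holds on $\relint(\dom(\breve{\lin}_0))$ by the subgradient argument already used for Lemma~\ref{lem:duality}, specialized to $\p=0$, $\keyw_x=\delta_{x,\xo}$. It is convenient to keep the trivial constraint $\tr{\rho_{AB}}=1$ among the $\constr_j$, as its multiplier will be used for rescaling below.

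Next I evaluate, for fixed $\vec\lambda$ and fixed measurements, the inner infimum $\inf_{\rho}[\,D_{\nat}(\rho\Vert\pinch_{\xo}(\rho))-\tr{\Lambda\rho}\,]$, writing $\Lambda\defvar\vec{\lambda}\cdot\vec{\Gamma}(\pvm_{a|x},\pvm_{b|y})$. Because $\pinch_{\xo}$ is a pinching, the Pythagorean-type identity $D_{\nat}(\rho\Vert\omega)=D_{\nat}(\rho\Vert\pinch_{\xo}(\rho))+D_{\nat}(\pinch_{\xo}(\rho)\Vert\omega)$ holds for every $\pinch_{\xo}$-invariant (``block-diagonal'') $\omega$, so $D_{\nat}(\rho\Vert\pinch_{\xo}(\rho))=\inf_{\omega}D_{\nat}(\rho\Vert\omega)$ over normalized block-diagonal $\omega$. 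Swapping the two infima, applying the Gibbs variational principle $\inf_{\rho:\tr{\rho}=1}[\,\tr{\rho\ln\rho}-\tr{\rho H}\,]=-\ln\tr{e^{H}}$ with $H=\ln\omega+\Lambda$, and using that the normalized block-diagonal states are exactly $\{\pinch_{\xo}(\sigma):\tr{\sigma}=1\}$, reproduces the Lagrange dual as $\sup_{\vec\lambda}[\,\vec{\lambda}\cdot\vec{\constr}-\ln\sup_{\sigma,\pvm}\tr{e^{\ln\pinch_{\xo}(\sigma)+\Lambda}}\,]$. The elementary bound $-\ln t\ge -t/e$ (the tangent line at $t=e$) then yields \eqref{eq:lagbnd}; optimizing over the extra normalization multiplier turns that tangent-line step into an equality after the $\sup_{\vec\lambda}$, so \eqref{eq:lagbnd} is tight wherever strong duality holds.

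For \eqref{eq:lagbndGT} it suffices to show, for each fixed $\sigma,\pvm$, that $\tr{e^{\ln\pinch_{\xo}(\sigma)+\Lambda}}\le\tr{\sigma\,Q_{\xo}(\pvm_{a|x},\pvm_{b|y})}$. Set $\sigma_0\defvar\pinch_{\xo}(\sigma)$ and $c_{abxy}\defvar\sum_j\lambda_j c^{(j)}_{abxy}$, so $\Lambda=\sum_{xy}O_{xy}$ with $O_{xy}=\sum_{ab}c_{abxy}\pvm_{a|x}\otimes\pvm_{b|y}$. I would establish a Golden--Thompson/operator-interpolation inequality of the shape $e^{\ln\sigma_0+\Lambda}\le\sigma_0^{1/2}R(\Lambda)\sigma_0^{1/2}$, where $R(\Lambda)$ is read off from $e^{\Lambda}$ through the divided-difference kernel of $\ln$ (equivalently, of the inverse Fr\'echet derivative of $\exp$). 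Expanding with the path operators $P_{\vec a\vec b}\defvar\prod_{xy}\pvm_{a_x|x}\otimes\pvm_{b_y|y}$ (any fixed ordering, common to both factors), which satisfy $\sum_{\vec a\vec b}P_{\vec a\vec b}=\id$, and inserting two such resolutions of identity produces precisely the fourfold sum over $\vec a\vec b\vec a'\vec b'$ together with the coefficient $\tfrac{(-S)-(-S')}{e^{-S}-e^{-S'}}=e^{\alpha^{+}_{\vec a\vec b\vec a'\vec b'}}\alpha^{-}_{\vec a\vec b\vec a'\vec b'}\csch\alpha^{-}_{\vec a\vec b\vec a'\vec b'}$, where $S=\sum_{xy}c_{a_xb_yxy}$ and $S'=\sum_{xy}c_{a'_xb'_yxy}$; as a consistency check, $\alpha^{-}\csch\alpha^{-}\to1$ and the bound collapses to the identity $e^{\ln\sigma_0+\Lambda}=\sigma_0 e^{\Lambda}$ when all measurements commute. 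Taking the trace and using self-adjointness of $\pinch_{\xo}$, $\tr{\sigma_0^{1/2}R\sigma_0^{1/2}}=\tr{\sigma_0 R}=\tr{\sigma\,\pinch_{\xo}(R)}$, identifies $Q_{\xo}=\pinch_{\xo}(R)$, which is exactly \eqref{eq:oppoly} at $\xg=\xo$; the manipulation is insensitive to which measurement plays the role of the key-generating one, so the formula is stated for general $\xg$.

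The main obstacle is this last step --- proving the operator inequality $e^{\ln\sigma_0+\Lambda}\le\sigma_0^{1/2}R(\Lambda)\sigma_0^{1/2}$ and carrying out the path expansion so the coefficients come out exactly as the $\csch$-kernel. Since the projectors for different inputs need not commute they cannot be simultaneously diagonalized, which is precisely what forces the fourfold sum and the product-operator structure in place of a single orthogonal-projector expansion, and why every common ordering of the two products yields a valid (possibly non-tight) lower bound rather than equality. By contrast the remaining ingredients are routine: Coles' identity is quoted from \cite{Col12}, the duality argument mirrors Lemma~\ref{lem:duality}, and the Gibbs variational principle together with $-\ln t\ge -t/e$ are elementary.
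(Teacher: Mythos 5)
Your derivation of~\eqref{eq:lagbnd} is sound and close in spirit to the paper's, though routed slightly differently: the paper introduces a free $\sigma$ via $D_{\nat}(\rho\Vert\pinch_{\xo}(\rho))=\inf_\sigma D_{\nat}(\rho\Vert\pinch_{\xo}(\sigma))$ and then solves the infimum over $\rho$ in closed form to land directly on the $-\tfrac{1}{e}\tr{\exp(\cdot)}$ expression, whereas you pass through the Gibbs variational principle (obtaining $-\ln\tr{\exp(\cdot)}$, essentially the route of Theorem~\ref{th:gibbsbnd}) and then apply $-\ln t\geq -t/e$, recovering tightness by optimizing a normalization multiplier. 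That works, but note your tightness argument silently requires the constraint $\tr{\rho_{AB}}=1$ (i.e.\ the operator $\id=\sum_{ab}\pvm_{a|x}\otimes\pvm_{b|y}$ for some fixed $xy$) to be available among the $\Gamma_j$; this is harmless but should be stated, since the theorem's $\sup_{\vec{\lambda}}$ ranges only over multipliers of the given constraints.

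The genuine gap is in the passage to~\eqref{eq:lagbndGT}. Your argument rests entirely on the conjectured \emph{operator} inequality $e^{\ln\sigma_0+\Lambda}\leq\sigma_0^{1/2}R(\Lambda)\sigma_0^{1/2}$, which you neither prove nor cite, and which I do not believe is available: Golden--Thompson-type statements are generally valid only under the trace or in log-majorization form, not in operator order. What is actually needed here is the multivariate Golden--Thompson \emph{trace} inequality of Sutter--Berta--Tomamichel (Fact~\ref{fact:MTI}), $\tr{\exp(\sum_j M_j)}\leq\int\der t\,\beta(t)\tr{\bigl|\prod_j\exp\bigl(\tfrac{1+it}{2}M_j\bigr)\bigr|^2}$, applied with $M_{j^\star}=\ln\pinch_{\xo}(\sigma)$ (moved adjacent to the trace by cyclicity) and one summand $M_{xy}=\sum_{ab}\lambnew_{abxy}\,\pvm_{a|x}\otimes\pvm_{b|y}$ per input pair. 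The grouping by $xy$ is essential: each such $M_{xy}$ is a combination of a single family of \emph{orthogonal} projectors resolving the identity, so its exponential expands exactly into a linear combination of those projectors. The $|\cdot|^2$ structure is what forces the two identically ordered operator products and the primed/unprimed index pairs, and the coefficient $e^{\alpha^+}\alpha^-\csch\alpha^-$ arises as the Fourier transform $\int\der t\,e^{i\alpha t}\beta(t)=\alpha\csch\alpha$ of the specific weight $\beta(t)=\tfrac{\pi}{2}(\cosh(\pi t)+1)^{-1}$ appearing in that theorem. Your algebraic identification of this coefficient with a divided-difference kernel is a true identity but supplies no proof mechanism, and your ``path operators'' $\prod_{xy}\pvm_{a_x|x}\otimes\pvm_{b_y|y}$, while summing to $\id$, are not orthogonal projections, so ``inserting two resolutions of identity'' is not a legitimate spectral expansion. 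Without invoking Fact~\ref{fact:MTI} (or genuinely proving your operator inequality, which I expect fails), the bound~\eqref{eq:lagbndGT} is not established.
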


This theorem gives us a method to compute secure lower bounds on the optimization, since in the second expression~\eqref{eq:lagbndGT}, for any fixed choice of $\vec{\lambda}$ we can apply the NPA hierarchy to bound the inner optimization in that expression. 
Optimizing over values of $\vec{\lambda}$ yields tighter bounds --- again, it suffices to do this optimization heuristically, since 
every feasible $\vec{\lambda}$ 
yields a secure lower bound. Also, one can experiment with different orderings of the operator product to obtain better bounds. 
Finally, we remark that when implementing this approach, it is useful to make a slight further simplification to~\eqref{eq:oppoly}: if we denote the first index in the operator product as $x^\star y^\star$, then we have (recalling both operator products were to be taken in the same order) 
\begin{align}
&\left(\prod_{xy} \pvm_{a'_x|x} \otimes \pvm_{b'_y|y}  \right)^\dagger
\left(\prod_{xy} \pvm_{a_x|x} \otimes \pvm_{b_y|y}  \right) \nonumber\\
=& \dots \Big(\pvm_{a'_{x^\star}|x^\star} \otimes \pvm_{b'_{y^\star}|y^\star} \Big)
\Big(\pvm_{a_{x^\star}|x^\star} \otimes \pvm_{b_{y^\star}|y^\star} \Big) \dots \nonumber\\
=& \dots \delta_{a'_{x^\star},a_{x^\star}}\delta_{b'_{y^\star},b_{y^\star}}
\Big(\pvm_{a_{x^\star}|x^\star} \otimes \pvm_{b_{y^\star}|y^\star} \Big) \dots , 
\label{eq:oppolysimp}
\end{align}
which lets us eliminate some terms from the sum using the Kronecker deltas 
(specifically, we can omit all terms such that $a'_{x^\star} \neq a_{x^\star}$ or $b'_{y^\star} \neq b_{y^\star}$). 

In the remainder of this section, we present a proof of Theorem~\ref{th:lagbnd}, in a somewhat expository fashion.
We begin by again using the fact that the optimization~\eqref{eq:mainoptnat} is lower-bounded by the Lagrange dual~\eqref{eq:dual} (with the trivial modification of having $H_{\nat}(\hat{A}_x|E)$ in place of $H(\hat{A}_x|E)$).
We put aside for now the supremum over the Lagrange multipliers $\vec{\lambda}$, with the implicit understanding that to get as tight a bound as possible, we will have to optimize over $\vec{\lambda}$ at the end. In that case, the $\vec{\lambda}\cdot\vec{\constr}$ term can be treated as a constant, so we omit it from further study as well. We can now simplify the remaining terms of interest in the Lagrange dual~\eqref{eq:dual} by applying~\eqref{eq:Hdualsimple} and then following the calculation in~\cite{CML16}: 
\begin{align}
&\inf_{\rho_{ABE},\pvm_{a|x},\pvm_{b|y}} 
H_{\nat}(\hat{A}_{\xo}|{E}) - \tr{\vec{\lambda} \cdot \vec{\Gamma}(\pvm_{a|x},\pvm_{b|y}) \rho_{AB}} \nonumber\\
=&\inf_{\rho_{AB},\pvm_{a|x},\pvm_{b|y}} D_{\nat}({\rho}_{AB} \Vert \pinch_{\xo}({\rho}_{AB})) - \tr{\vec{\lambda} \cdot\vec{\Gamma}(\pvm_{a|x},\pvm_{b|y}) \rho_{AB}} \nonumber \\
=& \inf_{\rho_{AB},\sigma_{AB},\pvm_{a|x},\pvm_{b|y}} D_{\nat}({\rho}_{AB} \Vert \pinch_{\xo}({\sigma}_{AB})) - \tr{\vec{\lambda} \cdot\vec{\Gamma}(\pvm_{a|x},\pvm_{b|y}) \rho_{AB}} \nonumber \\
=& \inf_{\sigma_{AB},\pvm_{a|x},\pvm_{b|y}} 
-\frac{1}{e}\tr{\exp\left( \ln(\pinch_{\xo}(\sigma_{AB})) + \vec{\lambda} \cdot\vec{\Gamma}(\pvm_{a|x},\pvm_{b|y}) \right)} ,
\label{eq:ZTFbound}
\end{align}
where the last line uses the fact that the infimum over $\rho_{AB}$ can be solved explicitly~\cite{ZTF14}. This gives us the first claimed lower bound~\eqref{eq:lagbnd} (after restoring the $\vec{\lambda}\cdot\vec{\constr}$ term in the Lagrange dual and moving the $-1/e$ prefactor out of the infimum, flipping it to a supremum in the process).
Furthermore, the series of calculations leading up to~\eqref{eq:ZTFbound} were all equalities, so this bound is tight as long as $\vec{\constr}$ is such that strong duality holds. 

Our next goal is to reduce the bound to an operator polynomial (eventually arriving at the expression~\eqref{eq:lagbndGT}). We begin with an informal observation: notice that in~\eqref{eq:ZTFbound}, it would be convenient if we could split the terms in the operator exponential, in the sense of claiming that
\begin{align}
\tr{\exp(M_1 + M_2 + \dots + M_N)} \overset{?}{=} \tr{\exp(M_1)\exp(M_2)\dots\exp(M_N)}.
\label{eq:matexp}
\end{align}
For instance, $\exp\left( \ln(\pinch_{\xo}(\sigma_{AB})) \right)$ on its own would just simplify to 
$\pinch_{\xo}(\sigma_{AB})$. Also, note that the $\vec{\lambda} \cdot\vec{\Gamma}(\pvm_{a|x},\pvm_{b|y})$ term
is a linear combination of some projectors $\pvm_{a|x} \otimes \pvm_{b|y}$,
and for each fixed input pair $xy$, the operators 
$\{\pvm_{a|x} \otimes \pvm_{b|y}\}_{ab}$ 
form a resolution of the identity (i.e.~they are orthogonal projectors satisfying $\sum_{ab} \pvm_{a|x} \otimes \pvm_{b|y} = \id_{AB}$). This implies that a useful equality holds: for each pair $xy$, and any coefficients $\mu_{ab} \in \mathbb{C}$, we have
\begin{align}
\exp \left(\sum_{ab} \mu_{ab}  \pvm_{a|x} \otimes \pvm_{b|y} \right)
= \sum_{ab} e^{\mu_{ab}} \pvm_{a|x} \otimes \pvm_{b|y} .
\end{align}
Hence if we were able to split the exponential by grouping the terms in $\vec{\lambda} \cdot\vec{\Gamma}(\pvm_{a|x},\pvm_{b|y})$ according to their values of $xy$, the resulting exponentials could be simplified into simple linear combinations of the projectors $\pvm_{a|x} \otimes \pvm_{b|y}$, and their product would be an operator polynomial as we would like to have.

Of course, the equality~\eqref{eq:matexp} does not hold except in some special cases, such as when all the operators commute. Fortunately, some \emph{inequalities} of roughly that form hold, when the RHS is written appropriately. The most well-known of these is likely the Golden--Thompson inequality,
\begin{align}
\tr{\exp(M_1 + M_2} \leq \tr{\exp(M_1)\exp(M_2)}.
\end{align}
However, a generalization of this inequality to arbitrarily many operators was developed fairly recently~\cite{SBT16}:
\begin{fact}\label{fact:MTI}
(Special case of Corollary~3.3 in~\cite{SBT16}) Let $M_1, M_2, \dots, M_N$ be hermitian operators on a finite-dimensional Hilbert space, and define the function $\beta(t) \defvar ({\pi}/{2})(\cosh(\pi t)+1)^{-1}$. Then
\begin{align}
\tr {\exp\left(\sum_j
M_j\right)}
\leq
\int_\mathbb{R}\der t\;\beta(t) \tr{\left| \prod_j \exp \left(\frac{1+it}{2}M_j\right) \right|^2} .
\end{align}
The inequality holds for any ordering of the operator product. 
\end{fact}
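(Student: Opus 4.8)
The statement is the specialization, to exponentials of finite-dimensional hermitian operators, of the multivariate Golden--Thompson inequality of Sutter, Berta and Tomamichel, and a complete proof recapitulates their complex-interpolation argument; the plan is to outline that strategy and isolate its crux. As a preliminary reduction I would rewrite both sides as squared Schatten $2$-norms, using $\tr{\exp(K)}=\norm{\exp(K/2)}_2^2$ and $\tr{\left|X\right|^2}=\norm{X}_2^2$, so that the claim becomes $\norm{\exp(\tfrac12\sum_j M_j)}_2^2\le\int_\mathbb{R}\der t\,\beta(t)\,\norm{\prod_j\exp(\tfrac{1+it}{2}M_j)}_2^2$. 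Since each $M_j$ is already hermitian and finite-dimensional, $\exp(M_j)$ is positive definite and no smoothing is needed. A useful sanity check to keep in mind throughout: for $n=2$ the right-hand side is $t$-independent, because the inner unitaries $\exp(\pm itM_2/2)$ commute back through the trace, and it collapses to $\tr{\exp(M_1)\exp(M_2)}$, so the inequality reduces to ordinary Golden--Thompson; for $n\ge 3$ the $t$-dependence is genuine and no $t$-free version survives.

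The analytic backbone is the operator-valued refinement, due to Hirschman, of the Hadamard three-lines theorem: for $G$ bounded and continuous on the strip $S=\{z:0\le\operatorname{Re}z\le1\}$, holomorphic in its interior, and $1/p(z)=(1-z)/p_0+z/p_1$, one has for $\theta\in(0,1)$
\[
\log\norm{G(\theta)}_{p(\theta)}\le\int_\mathbb{R}\der t\,\Bigl(\beta_{1-\theta}(t)\log\norm{G(it)}_{p_0}+\beta_{\theta}(t)\log\norm{G(1+it)}_{p_1}\Bigr),
\]
where $\beta_\theta$ is the Poisson kernel of the strip, $\beta_\theta(t)=\sin(\pi\theta)\big/\bigl(2\theta(\cos(\pi\theta)+\cosh(\pi t))\bigr)$. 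I would derive this operator version from the scalar Hirschman inequality by dualizing the Schatten norm, $\norm{G(z)}_{p(z)}=\sup\{\left|\tr{G(z)Y}\right|:\norm{Y}_{p(z)'}\le1,\ Y\text{ finite rank}\}$, and inserting the optimal $Y$ at $z=\theta$ into a Stein--Calder\'on analytic family $Y_z$ obtained from its polar decomposition by interpolating the eigenvalue exponents linearly in $z$, so that $z\mapsto\tr{G(z)Y_z}$ is holomorphic and bounded on $S$ with the correct boundary moduli; scalar Hirschman then applies. The point to note is that the kernel in Fact~\ref{fact:MTI} is the boundary limit $\beta(t)=\lim_{\theta\to0^+}\beta_\theta(t)=(\pi/2)(1+\cosh(\pi t))^{-1}$, so the argument must ultimately be run in the regime $\theta\to0^+$.

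The main obstacle is bridging the gap between products and the exponential of a sum. One applies the operator Hirschman inequality to the holomorphic family $G(z)=\prod_{j=1}^n\exp(\tfrac z2 M_j)$, whose restriction to $\operatorname{Re}z=1$ is precisely the ordered product $\prod_j\exp(\tfrac{1+it}{2}M_j)$ appearing on the right of the Fact, and whose restriction to $\operatorname{Re}z=0$ is a product of unitaries, contributing only a $t$-independent (and, with $p_0\to\infty$, vanishing) term. The difficulty is that the interior value $G(\theta)=\prod_j\exp(\tfrac\theta2 M_j)$ is not $\exp(\tfrac\theta2\sum_j M_j)$ unless the $M_j$ commute, so the bound produced is not yet the claimed one; closing this is the technical heart of the Sutter--Berta--Tomamichel argument, which supplies the missing link via an asymptotic spectral-pinching argument, expressing $\tr{\exp(\sum_j M_j)}$ as the limit of pinched ordered products and showing that the interpolation bound is stable under this limit. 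Making that stability rigorous, and controlling the order of the two limits (the pinching/Lie--Trotter limit and the kernel-degenerating limit $\theta\to0^+$), is the part I expect to require by far the most care. The final bookkeeping step is routine: one checks that $\lim$ and $\int$ commute via dominated convergence, using the uniform envelope $\norm{G(1+it)}_{p_1}\le\exp(\tfrac12\sum_j\norm{M_j})$ together with the pointwise convergence $\beta_\theta\to\beta$ under a uniform bound, after which assembling the pieces yields exactly $\norm{\exp(\tfrac12\sum_j M_j)}_2^2\le\int_\mathbb{R}\der t\,\beta(t)\,\norm{\prod_j\exp(\tfrac{1+it}{2}M_j)}_2^2$, i.e.\ Fact~\ref{fact:MTI}; independence of the ordering of the product is automatic since the factors of $G$ may be placed in any fixed order from the outset.
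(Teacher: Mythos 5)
The paper does not prove Fact~\ref{fact:MTI} at all: it is imported verbatim as a special case of Corollary~3.3 of~\cite{SBT16}, so there is no in-paper proof to compare against, and what you have written is essentially a reconstruction of the cited reference's complex-interpolation argument. As such, your overall plan is sound: the Stein--Hirschman operator three-lines theorem with the strip Poisson kernel $\beta_\theta$, applied to the analytic family $G(z)=\prod_j\exp(\tfrac z2 M_j)$ with $p_0=\infty$ (so the $\operatorname{Re}z=0$ boundary is unitary and contributes nothing) and the degenerate limit $\theta\to0^+$ producing $\beta=\beta_0$, is exactly the backbone of~\cite{SBT16}, and your $n=2$ sanity check (cyclicity kills the $t$-dependence and the bound collapses to ordinary Golden--Thompson) is correct.

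Two concrete points in your closing steps need repair. First, the bridge from the interior value $G(\theta)=\prod_j\exp(\tfrac\theta2 M_j)$ to $\exp(\tfrac12\sum_j M_j)$ in the interpolation proof of~\cite{SBT16} is \emph{not} spectral pinching: it is the Lie--Trotter product formula, exploited by choosing the interpolated exponent $p_\theta\propto 1/\theta$ so that $\norm{G(\theta)}_{p_\theta}$ equals $\norm{\,|\prod_j e^{\theta M_j/2}|^{1/\theta}}_{p}^{\theta}$ (up to the bookkeeping of exponents), whose $\theta\to0^+$ limit is $\norm{e^{\frac12\sum_j M_j}}_p^{\theta\text{-normalized}}$; the pinching method is a separate, alternative technique in that paper and does not produce the explicit kernel $\beta$. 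In particular there are not two limits to interchange --- the Lie--Trotter limit \emph{is} the $\theta\to0^+$ limit of the interpolation parameter --- so the delicate ``order of limits'' issue you anticipate does not arise. Second, the Hirschman bound you quote controls $\log\norm{G(\theta)}_{p(\theta)}$ by an integral of \emph{logarithms} of boundary norms, whereas the target inequality has the integral outside the trace; passing from the geometric (log-average) form to the stated arithmetic average requires one more step, namely Jensen's inequality $\exp\left(\int\der t\,\beta(t)\log f(t)\right)\leq\int\der t\,\beta(t)\,f(t)$, using that $\beta$ is a probability density. With those two corrections your outline matches the proof in~\cite{SBT16}; without them, the final ``assembling the pieces'' step does not literally go through as written.
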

Since the operator product ordering is arbitrary, this implies that for any ${j^\star} \in \upto{N}$, we can choose to place $M_{j^\star}$ at the last position in the operator product and slightly simplify the RHS of the above bound\footnote{In~\cite{TSG+21}, some of these expressions were written with the integral inside the trace for ease of presentation. In principle, moving an integral into a trace could cause convergence issues, so we avoid doing so in the description here. Though it can be argued that this is anyway not a problem for this particular theorem: since the operators are finite-dimensional, they can be treated as matrices, and we note that the matrix elements of 
$\left| \prod_j \exp \left(\frac{1+it}{2}M_j\right) \right|^2$ are bounded 
(over $t\in\mathbb{R}$). 
Together with the fact that $\int_\mathbb{R}\der t\;\beta(t)$ is absolutely convergent, this ensures the integral for each matrix element converges.} using trace cyclicity:
\begin{align}
\tr {\exp\left(\sum_j M_j\right)}
&\leq
\int_\mathbb{R}\der t\;\beta(t) \tr{\left| 
\left(
\prod_{j\neq {j^\star}} \exp \left(\frac{1+it}{2}M_j\right)
\right) 
\exp \left(\frac{1+it}{2}M_{j^\star}\right) \right|^2} \nonumber \\
&=
\int_\mathbb{R}\der t\;\beta(t) \tr{\exp(M_{j^\star}) \left| \prod_{j\neq {j^\star}} \exp \left(\frac{1+it}{2}M_j\right) \right|^2} .
\end{align}

In our context, this means we can choose $M_{j^\star} = \ln(\pinch_{\xo}(\sigma_{AB}))$ to get
\begin{align}
& \tr{\exp\left( \ln(\pinch_{\xo}(\sigma_{AB})) + \vec{\lambda} \cdot\vec{\Gamma}(\pvm_{a|x},\pvm_{b|y}) \right)} \nonumber\\
=& \tr{\exp\left( \ln(\pinch_{\xo}(\sigma_{AB})) + \sum_{xy} 
\sum_{ab} \lambnew_{abxy} 
\pvm_{a|x} \otimes \pvm_{b|y} \right)} \text{ where } \lambnew_{abxy} = \sum_j \lambda_j c^{(j)}_{abxy} \in \mathbb{R} \nonumber\\ 
\leq & \int_\mathbb{R}\der t\;\beta(t) \tr{\pinch_{\xo}(\sigma_{AB}) \left| \prod_{xy} \exp \left(\frac{1+it}{2} \sum_{ab} \lambnew_{abxy}  \pvm_{a|x} \otimes \pvm_{b|y} \right) \right|^2}. \nonumber\\
= &\int_\mathbb{R}\der t\;\beta(t) \tr{\sigma_{AB} \pinch_{\xo} \left(\left| \prod_{xy} \exp \left(\frac{1+it}{2} \sum_{ab} \lambnew_{abxy}  \pvm_{a|x} \otimes \pvm_{b|y} \right) \right|^2\right)} 
\text{ since $\pinch_{\xo}$ is self-adjoint} 
\nonumber\\ 
= & \int_\mathbb{R}\der t\;\beta(t) \tr{\sigma_{AB} \pinch_{\xo} \left(\left| \prod_{xy} \sum_{ab} 
\exp\left(\frac{1+it}{2}  \lambnew_{abxy}\right) 
\pvm_{a|x} \otimes \pvm_{b|y}  \right|^2\right)},
\label{eq:beforeexpand}
\end{align}
where in the last line we expanded the operator exponentials using~\eqref{eq:matexp}. 

In principle, this is now an operator polynomial. However, to bound it using the NPA hierarchy in practice, we would need to explicitly evaluate the integrals to obtain the coefficients in the polynomial. We now explain how to evaluate these coefficients such that we arrive at the formula~\eqref{eq:oppoly} --- this is a tedious but conceptually fairly straightforward process.
We begin by expanding
the product $\prod_{xy}$ over the sum $\sum_{ab}$ in~\eqref{eq:beforeexpand}, rewriting the expression as
\begin{align}
& \int_\mathbb{R}\der t\;\beta(t) \tr{\sigma_{AB} \pinch_{\xo} \left(\left| \sum_{
\vec{a}\vec{b}
} \prod_{xy} \exp\left(\frac{1+it}{2}  \lambnew_{a_x b_y xy}\right) \pvm_{a_x|x} \otimes \pvm_{b_y|y}  \right|^2\right)} \nonumber\\ 
= & \int_\mathbb{R}\der t\;\beta(t) \tr{\sigma_{AB} \pinch_{\xo} \left(\left| \sum_{\vec{a}\vec{b}} \exp\left(\sum_{xy} \frac{1+it}{2}  \lambnew_{a_x b_y xy}\right) \prod_{xy} \pvm_{a_x|x} \otimes \pvm_{b_y|y}  \right|^2\right)} \nonumber\\ 
= & \int_\mathbb{R}\der t\;\beta(t) \tr{\sigma_{AB} \pinch_{\xo} \left(\sum_{
\vec{a}\vec{b}\vec{a}'\vec{b}'
} \exp\left(\sum_{xy} \frac{1-it}{2}  \lambnew_{a'_x b'_y xy} + \frac{1+it}{2}  \lambnew_{a_x b_y xy}\right) 
\right.\right. \nonumber\\ 
&\qquad \left.\left.
\left(\prod_{xy} \pvm_{a'_x|x} \otimes \pvm_{b'_y|y}  \right)^\dagger
\left(\prod_{xy} \pvm_{a_x|x} \otimes \pvm_{b_y|y}  \right) \right)} ,
\label{eq:beforeint}
\end{align}
where in the last line both operator products must be taken in the same order over $xy$. Finally, by introducing the values $\alpha^{\pm}_{\vec{a}\vec{b}\vec{a}'\vec{b}'} \defvar \frac{1}{2} \sum_{xy} \left(\lambnew_{a_xb_yxy} \pm \lambnew_{a'_xb'_yxy}\right) = \frac{1}{2} \sum_{jxy} \lambda_j \left(c^{(j)}_{a_xb_yxy} \pm c^{(j)}_{a'_xb'_yxy}\right)$, 
and applying linearity of $\pinch_{\xo}$, $\operatorname{Tr}$ and $\int_\mathbb{R}\der t$
to pull out the summation and scalar coefficients, 
we see that~\eqref{eq:beforeint} is equal to 
\begin{align}
& \sum_{\vec{a}\vec{b}\vec{a}'\vec{b}'} 
\int_\mathbb{R}\der t
\exp\left(
\alpha^+_{\vec{a}\vec{b}\vec{a}'\vec{b}'} + i\alpha^-_{\vec{a}\vec{b}\vec{a}'\vec{b}'} t
\right) 
\beta(t) 
\tr{\sigma_{AB} \pinch_{\xo} \left(
\left(\prod_{xy} \pvm_{a'_x|x} \otimes \pvm_{b'_y|y}  \right)^\dagger
\left(\prod_{xy} \pvm_{a_x|x} \otimes \pvm_{b_y|y}  \right) \right)} \nonumber\\ 
= & \sum_{\vec{a}\vec{b}\vec{a}'\vec{b}'} 
e^{
\alpha^+_{\vec{a}\vec{b}\vec{a}'\vec{b}'} 
}
\alpha^-_{\vec{a}\vec{b}\vec{a}'\vec{b}'} \csch \alpha^-_{\vec{a}\vec{b}\vec{a}'\vec{b}'} 
\tr{\sigma_{AB} \pinch_{\xo} \left(
\left(\prod_{xy} \pvm_{a'_x|x} \otimes \pvm_{b'_y|y}  \right)^\dagger
\left(\prod_{xy} \pvm_{a_x|x} \otimes \pvm_{b_y|y}  \right) \right)} , \label{eq:afterint}
\end{align}
where the integrals were evaluated using $\int_\mathbb{R}\der t \, e^{i \alpha t} \beta(t) = \alpha \csch \alpha$ (note that this formula is consistent under the interpretation that $\alpha \csch \alpha = 1$ at $\alpha=0$). 
This yields the desired bound. Note that the only step of this proof that is not an equality is the generalized Golden--Thompson inequality, i.e.~this is the only step where the bound is not tight.

A slightly curious aspect of this approach is that the final infimum is over a state $\sigma$ that is not directly related to the state $\rho$ in the original optimization~\eqref{eq:mainoptnat}, but rather was introduced as an intermediate variable in the process of deriving~\eqref{eq:ZTFbound}. This results in some small drawbacks, which we shall now discuss and improve on.

\subsection{Using the Gibbs variational principle}
\label{sec:gibbs}

In the above analysis, the bound was tight up until~\eqref{eq:ZTFbound} (as long as strong duality holds). This was the reason we chose to present the above analysis first, since it shows that essentially the only loss of tightness comes from the generalized Golden--Thompson inequality (and when computing the bound in practice, the suboptimalities induced by using a finite NPA hierarchy level). 
Perhaps unexpectedly, however, it turns out that a better final result can be obtained using an approach where the initial steps are not tight by construction. 
Specifically, we prove the following theorem:
\begin{theorem}\label{th:gibbsbnd}
When noisy preprocessing is not applied, and $\keyw_x=\delta_{x,\xo}$,
the optimization~\eqref{eq:mainoptnat} is lower bounded by
\begin{align}
\sup_{\vec{\lambda}}
\; - \ln 
\left(
\sup_{
(\rho_{AB},\pvm_{a|x},\pvm_{b|y}) \in \mathcal{S}_{\vec{\constr}}
} \tr{\exp\left( 
\ln \pinch_{\xo}({\rho}_{AB}) +
\vec{\lambda} \cdot \vec{\Gamma}(\pvm_{a|x},\pvm_{b|y}) 
\right)} 
\right)
+ \vec{\lambda} \cdot \vec{\constr} .
\label{eq:gibbsbnd}
\end{align}
where $\mathcal{S}_{\vec{\constr}}$ denotes the set (or class) of states and measurements satisfying the constraints $\tr{\vec{\Gamma}(\pvm_{a|x},\pvm_{b|y}) \rho_{AB}} = \vec{\constr}$.
This bound is tight for all $\vec{\constr}$ such that strong duality holds in the Lagrange dual.
Furthermore, it is in turn lower bounded by 
\begin{align}
& \sup_{\vec{\lambda}}\; - \ln 
\left(
\sup_{
(\rho_{AB},\pvm_{a|x},\pvm_{b|y}) \in \mathcal{S}_{\vec{\constr}}
} \tr{\rho_{AB} \, Q_{\xo}(\pvm_{a|x},\pvm_{b|y})}
\right)
+ \vec{\lambda} \cdot \vec{\constr} .
\label{eq:gibbsbndGT}
\end{align}
where $Q_{\xo}(\pvm_{a|x},\pvm_{b|y})$ is as defined in~\eqref{eq:oppoly}.
\end{theorem}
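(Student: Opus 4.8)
The plan is to follow the skeleton of the proof of Theorem~\ref{th:lagbnd}, but to replace the intermediate chain ending in \eqref{eq:ZTFbound} — which introduced the auxiliary state $\sigma_{AB}$ and relaxed $\rho_{AB}$ to an arbitrary positive operator, producing the $-1/e$ prefactor — by a single application of the Gibbs variational principle to the physical state $\rho_{AB}$ itself. First I would reduce the problem exactly as at the start of Section~\ref{sec:NPO}: restricting the optimization~\eqref{eq:mainoptnat} to pure $\rho_{ABE}$ is without loss of generality (handing Eve a purifying system only lowers the objective), and for pure $\rho_{ABE}$ the duality formula~\eqref{eq:Hdualsimple} gives $H_{\nat}(\hat{A}_{\xo}|E)=D_{\nat}(\rho_{AB}\Vert\pinch_{\xo}(\rho_{AB}))$. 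Hence, with $\keyw_x=\delta_{x,\xo}$, the optimization~\eqref{eq:mainoptnat} equals $\inf_{(\rho_{AB},\pvm_{a|x},\pvm_{b|y})\in\mathcal{S}_{\vec{\constr}}} D_{\nat}(\rho_{AB}\Vert\pinch_{\xo}(\rho_{AB}))$.

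The heart of the argument is a pointwise bound. Fix $\vec{\lambda}$ and any $(\rho_{AB},\pvm_{a|x},\pvm_{b|y})\in\mathcal{S}_{\vec{\constr}}$. The Gibbs variational principle — equivalently, non-negativity of the quantum relative entropy applied to the Gibbs state $\exp(M)/\tr{\exp(M)}$ — states that $\tr{\rho_{AB} M}-\ln\tr{\exp(M)}\le -\tr{\rho_{AB}\ln\rho_{AB}}$ for every Hermitian $M$. Taking $M=\ln\pinch_{\xo}(\rho_{AB})+\vec{\lambda}\cdot\vec{\Gamma}(\pvm_{a|x},\pvm_{b|y})$ and rearranging gives
\begin{align}
D_{\nat}(\rho_{AB}\Vert\pinch_{\xo}(\rho_{AB}))\;\ge\;\tr{\rho_{AB}\,\big(\vec{\lambda}\cdot\vec{\Gamma}(\pvm_{a|x},\pvm_{b|y})\big)}-\ln\tr{\exp\big(\ln\pinch_{\xo}(\rho_{AB})+\vec{\lambda}\cdot\vec{\Gamma}(\pvm_{a|x},\pvm_{b|y})\big)}.
\end{align}
Since the point is feasible, $\tr{\rho_{AB}\,(\vec{\lambda}\cdot\vec{\Gamma})}=\vec{\lambda}\cdot\vec{\constr}$ is a constant; taking the infimum over $\mathcal{S}_{\vec{\constr}}$ and using monotonicity of $\ln$ turns the infimum of the right-hand side into $\vec{\lambda}\cdot\vec{\constr}-\ln\big(\sup_{(\rho_{AB},\pvm)\in\mathcal{S}_{\vec{\constr}}}\tr{\exp(\ln\pinch_{\xo}(\rho_{AB})+\vec{\lambda}\cdot\vec{\Gamma})}\big)$, and taking the supremum over $\vec{\lambda}$ yields~\eqref{eq:gibbsbnd}. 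In contrast to the usual Lagrange dual, the constraint is not "removed": it remains inside the inner supremum, which is precisely what allows the NPA hierarchy to exploit it later. (A minor technicality is that $\ln\pinch_{\xo}(\rho_{AB})$ must be read on the support of $\pinch_{\xo}(\rho_{AB})$, handled exactly as in the proof of Theorem~\ref{th:lagbnd}.)

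For the tightness claim I would not re-derive strong duality, but instead compare~\eqref{eq:gibbsbnd} with the already-established bound~\eqref{eq:lagbnd}. For each $\vec{\lambda}$, every feasible point of the inner supremum in~\eqref{eq:gibbsbnd} is a (possibly non-optimal) point of the \emph{unconstrained} inner supremum in~\eqref{eq:lagbnd} under the relabelling $\sigma_{AB}\mapsto\rho_{AB}$, so the inner supremum in~\eqref{eq:gibbsbnd} is no larger than that in~\eqref{eq:lagbnd}; combining this with the elementary tangent-line inequality $\ln y\le y/e$ (hence $-\ln y\ge -y/e$) shows that the bracketed expression in~\eqref{eq:gibbsbnd} is at least the bracketed expression in~\eqref{eq:lagbnd}, for every $\vec{\lambda}$. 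Therefore $\eqref{eq:lagbnd}\le\eqref{eq:gibbsbnd}$; since~\eqref{eq:gibbsbnd} is also a lower bound on $\inf_{\mathcal{S}_{\vec{\constr}}}D_{\nat}$ (as just shown) and~\eqref{eq:lagbnd} already equals it whenever strong duality holds, all three coincide in that case. Finally, the operator-polynomial bound~\eqref{eq:gibbsbndGT} follows from~\eqref{eq:gibbsbnd} by the same generalized Golden--Thompson computation as in~\eqref{eq:beforeexpand}--\eqref{eq:afterint}: that derivation is a purely operator-level inequality establishing $\tr{\exp(\ln\pinch_{\xo}(\sigma_{AB})+\vec{\lambda}\cdot\vec{\Gamma})}\le\tr{\sigma_{AB}\,Q_{\xo}(\pvm_{a|x},\pvm_{b|y})}$ for \emph{any} state $\sigma_{AB}$, with $Q_{\xo}$ as in~\eqref{eq:oppoly}, so applying it with $\sigma_{AB}=\rho_{AB}$, taking the supremum over $\mathcal{S}_{\vec{\constr}}$, then $-\ln$ and $\sup_{\vec{\lambda}}$, gives~\eqref{eq:gibbsbndGT}.

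I expect the main obstacle to be the tightness step rather than the Gibbs inequality itself: one has to be careful that the two inner optimizations being compared really are over comparable domains (modulo the harmless identification $\sigma_{AB}\leftrightarrow\rho_{AB}$, and noting that $\pinch_{\xo}$ and $\vec{\Gamma}$ depend on the measurements in both bounds in the same way), and that the support/full-rank subtleties in the logarithms and in the Gibbs state do not break the chain of inequalities. Everything after the pointwise Gibbs bound is essentially bookkeeping, largely inherited verbatim from the proof of Theorem~\ref{th:lagbnd}.
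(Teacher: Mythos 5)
Your proposal is correct and follows essentially the same route as the paper: lower-bound the entropy pointwise via the Gibbs variational principle with $M$ built from $\ln\pinch_{\xo}(\rho_{AB})$ and the Lagrange term, keep the constraint inside the inner optimization, prove tightness by sandwiching against~\eqref{eq:lagbnd} using $\ln y\le y/e$ and the inclusion of feasible sets, and obtain~\eqref{eq:gibbsbndGT} by the same generalized Golden--Thompson computation. The one genuine (minor) streamlining is that you apply the relative-entropy form of the Gibbs bound directly to $D_{\nat}(\rho_{AB}\Vert\pinch_{\xo}(\rho_{AB}))$, whereas the paper first writes $H_{\nat}(\hat{A}_{\xo}|E)=H(\pinch_{\xo}(\rho_{AB}))-H(\rho_{AB})$ and invokes the operator Jensen inequality for $\pinch_{\xo}^\dagger$; your shortcut is legitimate because $\ln\pinch_{\xo}(\rho_{AB})$ lies in the pinched algebra, so $\tr{\pinch_{\xo}(\rho_{AB})\ln\pinch_{\xo}(\rho_{AB})}=\tr{\rho_{AB}\ln\pinch_{\xo}(\rho_{AB})}$ and the Jensen step is in fact an equality here. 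One small slip: your verbal statement of the Gibbs principle should read $\tr{\rho_{AB}M}-\ln\tr{\exp(M)}\le \tr{\rho_{AB}\ln\rho_{AB}}$ (i.e.~$\le -H_{\nat}(\rho_{AB})$), not $\le -\tr{\rho_{AB}\ln\rho_{AB}}$; the displayed inequality you derive from it is nonetheless the correct one, so this is only a typo to fix.
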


As in the previous section, the resulting bound~\eqref{eq:gibbsbndGT} is in a form that can be tackled using the NPA hierarchy (in this case the optimization has 
the additional restriction $(\rho_{AB},\pvm_{a|x},\pvm_{b|y}) \in \mathcal{S}_{\vec{\constr}}$,
but this simply corresponds to imposing noncommutative polynomial constraints $\tr{{\Gamma}_j(\pvm_{a|x},\pvm_{b|y}) \rho_{AB}} = {\constr}_j$).

We now present the proof of this theorem. First, we state some results we shall use:
\begin{fact}\label{fact:Jensen}
(Operator Jensen inequality, as stated in~\cite{HP03})\footnote{There exist 
many other versions of this inequality (and the related Schwarz inequality), with various differences in the theorem conditions,
but we will not need them here.} 
Let $f$ be a continuous real-valued function on an interval $I$. Then $f$ is operator convex if and only if 
for every $N\in\mathbb{N}$
and any Hilbert space $\mathcal{H}$,
we have
\begin{align}
\sum_{j=1}^N Q_j^\dagger f \left(M_j\right) Q_j \geq 
f \left(\sum_{j=1}^N Q_j^\dagger M_j Q_j\right),
\end{align}
for any bounded self-adjoint operators $M_j$ on $\mathcal{H}$ with spectra contained in $I$, and operators $Q_j$ on $\mathcal{H}$ satisfying $\sum_{j=1}^N Q_j^\dagger Q_j = \id$.
\end{fact}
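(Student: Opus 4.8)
The statement is the classical operator Jensen inequality of Hansen and Pedersen, and the plan is to prove the two implications separately, with the real work lying entirely in the converse. The ``only if'' direction is immediate: to recover operator convexity from the displayed inequality, specialize to $N=2$, take $\mathcal{H}$ an arbitrary Hilbert space, $M_1 = A$, $M_2 = B$ self-adjoint with spectra in $I$, and $Q_j = \sqrt{\lambda_j}\,\id_{\mathcal{H}}$ with $\lambda_1,\lambda_2 \geq 0$ and $\lambda_1+\lambda_2 = 1$. Then $\sum_j Q_j^\dagger Q_j = \id_{\mathcal{H}}$, and the inequality reads $\lambda_1 f(A) + \lambda_2 f(B) \geq f(\lambda_1 A + \lambda_2 B)$, i.e.\ exactly the defining property of operator convexity. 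So it remains to show the converse: assuming $f$ is operator convex on $I$, the $N$-term inequality holds for every family $\{Q_j\}$ with $\sum_j Q_j^\dagger Q_j = \id$.

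For the converse I would proceed in two stages. First, reduce the $N$-operator statement to a single-isometry statement. Given $M_1,\dots,M_N$ on $\mathcal{H}$ with spectra in $I$ and $\{Q_j\}$ with $\sum_j Q_j^\dagger Q_j = \id_{\mathcal{H}}$, set $\mathcal{K} \defvar \mathcal{H}\otimes\mathbb{C}^N$, let $M \defvar \bigoplus_{j=1}^N M_j$ be the block-diagonal self-adjoint operator on $\mathcal{K}$ (its spectrum is $\bigcup_j \operatorname{spec}(M_j)\subseteq I$), and define $V\colon\mathcal{H}\to\mathcal{K}$ by $V\xi \defvar \sum_j (Q_j\xi)\otimes e_j$. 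Then $\|V\xi\|^2 = \sum_j\|Q_j\xi\|^2 = \langle\xi,(\sum_j Q_j^\dagger Q_j)\xi\rangle = \|\xi\|^2$, so $V$ is an isometry, and since the continuous functional calculus respects the block-diagonal decomposition, $V^\dagger g(M) V = \sum_j Q_j^\dagger g(M_j) Q_j$ for every continuous $g$ on $I$; in particular for $g=f$ and $g=\mathrm{id}$. Hence it suffices to prove: for any isometry $V\colon\mathcal{H}\to\mathcal{K}$ and self-adjoint $M$ on $\mathcal{K}$ with $\operatorname{spec}(M)\subseteq I$, operator convexity of $f$ gives $f(V^\dagger M V) \leq V^\dagger f(M) V$.

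The second stage handles the isometry case by a unitary dilation combined with the $\lambda=\tfrac12$ convexity inequality. Fix any $c\in I$, work on $\mathcal{H}\oplus\mathcal{K}$, and put $X \defvar (c\,\id_{\mathcal{H}})\oplus M$, whose spectrum is $\{c\}\cup\operatorname{spec}(M)\subseteq I$. Because $V$ is an isometry, $VV^\dagger$ is the orthogonal projection onto the range of $V$, so $(\id_{\mathcal{K}}-VV^\dagger)V = 0$; using this one verifies that
\[
U \defvar \begin{pmatrix} 0 & V^\dagger \\ V & -(\id_{\mathcal{K}}-VV^\dagger)^{1/2} \end{pmatrix}
\]
is self-adjoint on $\mathcal{H}\oplus\mathcal{K}$ with $U^2=\id$, hence a unitary, and likewise $W \defvar (-\id_{\mathcal{H}})\oplus \id_{\mathcal{K}}$ and $U' \defvar WUW$ are unitaries. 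A short block multiplication, together with $WXW = X$ (since $X$ is block-diagonal) and $f(X) = (f(c)\,\id_{\mathcal{H}})\oplus f(M)$, gives $\tfrac12(U^\dagger X U + U'^\dagger X U') = (V^\dagger M V)\oplus Y_0$ and $\tfrac12(U^\dagger f(X) U + U'^\dagger f(X) U') = (V^\dagger f(M) V)\oplus Y_1$ for self-adjoint $Y_0,Y_1$ we need not track (conjugation by $W$ flips the signs of off-diagonal blocks and so kills them after averaging). Now apply operator convexity at $\lambda=\tfrac12$ to $U^\dagger X U$ and $U'^\dagger X U'$, both of which have spectrum $\operatorname{spec}(X)\subseteq I$, using the unitary covariance $f(U^\dagger X U) = U^\dagger f(X) U$:
\[
f\!\left(\tfrac12\left(U^\dagger X U + U'^\dagger X U'\right)\right) \leq \tfrac12\left(U^\dagger f(X) U + U'^\dagger f(X) U'\right).
\]
Both sides are block-diagonal, so compressing to the $\mathcal{H}\oplus 0$ corner (equivalently, comparing the $(1,1)$ blocks) yields $f(V^\dagger M V) \leq V^\dagger f(M) V$, which is the reduced statement; unwinding the first-stage reduction then gives the theorem.

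The crux is the unitary-dilation construction in the second stage: everything hinges on $V$ being a genuine isometry, so that $VV^\dagger$ is the projection onto $\operatorname{ran}V$ and $U^2=\id$. This is exactly why the hypothesis is $\sum_j Q_j^\dagger Q_j = \id$ rather than merely $\leq\id$ --- for general contractions this argument breaks, and a separate reduction (typically needing $0\in I$ and $f(0)\leq 0$, via $V^\dagger V \le \id$) is required. The remaining ingredients --- that conjugation by a unitary commutes with the functional calculus, that the functional calculus splits over direct sums, and the two $2\times2$ block multiplications --- are routine; the only point needing a little care is to track that $\operatorname{spec}(X)\subseteq I$ at every step, so that $f(X)$, $f(U^\dagger X U)$ and the like are all well-defined.
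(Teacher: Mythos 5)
Your proof is correct: the ``only if'' direction by scalar weights, and the converse via the direct-sum reduction to a single isometry followed by the $2\times2$ unitary dilation and the midpoint-convexity averaging trick, all check out (including the block computations and the fact that $VV^\dagger$ is a projection). Note that the paper itself gives no proof of this statement --- it is quoted as a Fact from the cited reference [HP03] --- and your argument is essentially the standard Hansen--Pedersen dilation proof found there, so there is nothing to reconcile with the paper beyond that citation.
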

This yields an immediate corollary, using the fact that any completely positive unital map $\phi$ can be expressed in the form $\phi(M) = \sum_{j=1}^N Q_j M Q_j $ for some $Q_j$ satisfying $\sum_{j=1}^N Q_j^\dagger Q_j = \id$:
\begin{corollary}\label{corr:unital}
Let $f$ be a continuous real-valued function on an interval $I$, let $M$ be a bounded self-adjoint operator with spectrum contained in $I$, and let $\phi$ be a completely positive unital map. Then if $f$ is operator convex, we have $\phi(f(M)) \geq f(\phi(M))$.
\end{corollary}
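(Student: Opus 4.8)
The plan is to obtain Corollary~\ref{corr:unital} as essentially an immediate specialization of the operator Jensen inequality (Fact~\ref{fact:Jensen}), via a Kraus/Stinespring representation of $\phi$. First I would write $\phi$ in the form $\phi(X) = \sum_{j=1}^N Q_j^\dagger X Q_j$ for some operators $Q_j$ on $\mathcal{H}$ (such a decomposition exists for any completely positive map in the finite-dimensional setting we work in), and observe that the \emph{unitality} condition $\phi(\id) = \id$ is precisely the normalization $\sum_{j=1}^N Q_j^\dagger Q_j = \id$ appearing in the hypotheses of Fact~\ref{fact:Jensen}. With this in hand, I would apply Fact~\ref{fact:Jensen} with the constant choice $M_j \defvar M$ for every $j$: since $M$ is bounded and self-adjoint with spectrum contained in $I$, and $f$ is operator convex, the conclusion of the Fact reads
\[
\phi(f(M)) = \sum_{j=1}^N Q_j^\dagger f(M) Q_j \;\geq\; f\!\left(\sum_{j=1}^N Q_j^\dagger M Q_j\right) = f(\phi(M)),
\]
which is exactly the claimed inequality.

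The only preliminary point that needs checking is that $f(\phi(M))$ is well-defined, i.e.\ that $\sigma(\phi(M)) \subseteq I$. I would dispatch this with the standard order argument: setting $a \defvar \min \sigma(M)$ and $b \defvar \max \sigma(M)$, we have $a\,\id \le M \le b\,\id$, so by positivity (a consequence of complete positivity) and unitality of $\phi$ we get $a\,\id \le \phi(M) \le b\,\id$, hence $\sigma(\phi(M)) \subseteq [a,b] \subseteq I$. This makes the right-hand side meaningful and lets the functional calculus in Fact~\ref{fact:Jensen} be applied without issue.

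There is essentially no serious obstacle here — the corollary is a genuine corollary, as its placement right after Fact~\ref{fact:Jensen} indicates. The one thing I would be mildly careful about is matching the Kraus form exactly to the statement of Fact~\ref{fact:Jensen}: taking all the $M_j$ equal is what collapses the general operator Jensen inequality to the unital-map version, and the unitality of $\phi$ supplies precisely the constraint $\sum_j Q_j^\dagger Q_j = \id$ that the Fact requires. Since the whole section works with finite-dimensional operators, the sum over $j$ is finite and no convergence subtleties arise; for infinite dimensions one would instead cite a version of the operator Jensen inequality stated directly for unital completely positive maps, but that is not needed for our purposes.
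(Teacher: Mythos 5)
Your proof is correct and follows essentially the same route as the paper: express the completely positive unital map in Kraus form $\phi(X)=\sum_j Q_j^\dagger X Q_j$ with $\sum_j Q_j^\dagger Q_j=\id$ (unitality), and apply Fact~\ref{fact:Jensen} with all $M_j$ equal to $M$. The additional spectrum check $\sigma(\phi(M))\subseteq I$ is a harmless (and correct) extra detail that the paper leaves implicit.
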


\begin{fact}\label{fact:Gibbs}
(Gibbs variational principle) Let $M$ be a hermitian operator on a finite-dimensional Hilbert space. Then for any quantum state $\rho$,
\begin{align}
\tr{\rho M} - H_{\nat}(\rho) \geq - \ln \tr{e^{-M}}.
\end{align}
\end{fact}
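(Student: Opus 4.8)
The plan is to derive this from the non-negativity of the quantum relative entropy (Klein's inequality), using the Gibbs state attached to $M$ as the reference state. Since $M$ is hermitian on a finite-dimensional space, $e^{-M}$ is positive definite, so $Z\defvar\tr{e^{-M}}$ is a strictly positive finite number and $\sigma\defvar e^{-M}/Z$ is a well-defined full-rank quantum state. The key observation is that $-\ln Z$ is precisely the value of the functional $\rho\mapsto\tr{\rho M}-H_{\nat}(\rho)$ evaluated at $\sigma$, and the relative entropy measures exactly the gap between this functional at $\rho$ and at $\sigma$; hence non-negativity of relative entropy is equivalent to the claimed bound.

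Concretely, first I would record the identity $\ln\sigma=-M-(\ln Z)\id$, so that $\tr{\rho\ln\sigma}=-\tr{\rho M}-\ln Z$ for every state $\rho$. Next I would invoke Klein's inequality, $D_{\nat}(\rho\Vert\sigma)=\tr{\rho\ln\rho}-\tr{\rho\ln\sigma}\geq0$, which holds for all $\rho$ with no support condition needed precisely because $\sigma$ is full-rank. Substituting the identity and using $H_{\nat}(\rho)=-\tr{\rho\ln\rho}$ then gives
\begin{align}
0\leq D_{\nat}(\rho\Vert\sigma)=-H_{\nat}(\rho)+\tr{\rho M}+\ln Z,
\end{align}
which rearranges to exactly $\tr{\rho M}-H_{\nat}(\rho)\geq-\ln\tr{e^{-M}}$. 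As a bonus, the equality case of Klein's inequality shows the bound is saturated iff $\rho=\sigma$, so the Gibbs state is the unique minimizer, which is the usual full form of the principle.

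If one prefers not to cite Klein's inequality as a black box, I would prove it in the standard elementary fashion: take spectral decompositions $\rho=\sum_i p_i\pure{e_i}$ and $\sigma=\sum_j q_j\pure{f_j}$ with all $q_j>0$, expand $\tr{\rho\ln\rho-\rho\ln\sigma}=\sum_{i,j}|\inn{e_i}{f_j}|^2\left(p_i\ln p_i-p_i\ln q_j\right)$, apply the scalar estimate $x\ln x-x\ln y\geq x-y$ (a consequence of $\ln u\leq u-1$) termwise, and sum, using $\sum_j|\inn{e_i}{f_j}|^2=\sum_i|\inn{e_i}{f_j}|^2=1$ to obtain the telescoping lower bound $\sum_i p_i-\sum_j q_j=0$. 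Given these ingredients there is really no hard step; the only things to be careful about are the harmless convention $0\ln0=0$ for any vanishing eigenvalues of $\rho$, the sign bookkeeping between the entropy convention $H_{\nat}(\rho)=-\tr{\rho\ln\rho}$ and the relative-entropy convention, and the observation that all traces are automatically finite because we are in finite dimensions with $\sigma>0$. Any operator-theoretic subtlety (operator convexity of $t\ln t$, and so on) enters only through Klein's inequality itself, not through the remainder of the argument.
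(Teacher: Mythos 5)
Your proof is correct and is the standard derivation: identify $-\ln\tr{e^{-M}}$ as the value of the functional at the Gibbs state $\sigma=e^{-M}/\tr{e^{-M}}$ and observe that the gap is exactly $D_{\nat}(\rho\Vert\sigma)\geq0$, with the full-rank property of $\sigma$ (guaranteed in finite dimensions) disposing of any support issues. The paper states this result as a Fact without proof, so there is nothing to compare against; your argument, including the elementary proof of Klein's inequality, is complete and would serve as a valid self-contained justification.
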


Moving on to the proof, instead of starting with~\eqref{eq:Hdualsimple}, we shall use an intermediate step in the derivation of that relation in~\cite{Col12}, namely
\begin{align}
H(\hat{A}_x|E) = H(\pinch_x({\rho}_{AB})) - H({\rho}_{AB}).
\end{align}
With this equation, we can lower bound each $H_{\nat}(\hat{A}_x|E)$ term in the objective function of~\eqref{eq:mainoptnat} as follows:
\begin{align}
H_{\nat}(\hat{A}_x|E) 
&
= -\tr{\pinch_x({\rho}_{AB}) \ln \pinch_x({\rho}_{AB}) } - H_{\nat}({\rho}_{AB}) \nonumber\\
&= \tr{{\rho}_{AB} \pinch_x^\dagger (-\ln \pinch_x({\rho}_{AB}))} - H_{\nat}({\rho}_{AB}) \nonumber\\
&\geq \tr{{\rho}_{AB} (-\ln \pinch_x^\dagger \pinch_x({\rho}_{AB}))} - H_{\nat}({\rho}_{AB}) 
, \label{eq:afterjensen}
\end{align}
where in the last line we applied the fact that $\rho \geq 0$ together with
Corollary~\ref{corr:unital}
(the conditions are indeed satisfied, because $-\ln$ is operator convex and $\pinch_x^\dagger$ is a completely positive unital map since $\pinch_x$ is a completely positive trace-preserving map).

We now apply a trick that seems slightly arbitrary, though it can be somewhat motivated by the Lagrange dual considered in the previous section: note that for any tuple $\vec{\lambda}$, if we impose that we only consider states and measurements satisfying the constraints in the optimization~\eqref{eq:mainoptnat}, we would trivially have $\vec{\lambda} \cdot \left(\tr{\vec{\Gamma}(\pvm_{a|x},\pvm_{b|y}) \rho_{AB}} - \vec{\constr} \right) = 0$. Therefore, when the constraints are imposed, we can rewrite the last line of~\eqref{eq:afterjensen} as 
\begin{align}
&\tr{{\rho}_{AB} (-\ln \pinch_x^\dagger \pinch_x({\rho}_{AB}))} - H_{\nat}({\rho}_{AB}) - \vec{\lambda} \cdot \left(\tr{\vec{\Gamma}(\pvm_{a|x},\pvm_{b|y}) \rho_{AB}} - \vec{\constr} \right) \nonumber\\
=& \tr{{\rho}_{AB} \left( -\ln \pinch_x^\dagger \pinch_x({\rho}_{AB}) - \vec{\lambda} \cdot \vec{\Gamma}(\pvm_{a|x},\pvm_{b|y}) \right)} - H_{\nat}({\rho}_{AB}) 
+ \vec{\lambda} \cdot \vec{\constr}
\nonumber\\
\geq& - \ln \tr{\exp\left( 
\ln \pinch_x^\dagger \pinch_x({\rho}_{AB}) +
\vec{\lambda} \cdot \vec{\Gamma}(\pvm_{a|x},\pvm_{b|y}) 
\right)} 
+ \vec{\lambda} \cdot \vec{\constr} 
\;\text{ by Fact~\ref{fact:Gibbs}} \nonumber\\
=& - \ln \tr{\exp\left( 
\ln \pinch_x({\rho}_{AB}) +
\vec{\lambda} \cdot \vec{\Gamma}(\pvm_{a|x},\pvm_{b|y}) 
\right)} 
+ \vec{\lambda} \cdot \vec{\constr} ,\label{eq:onetermbnd}
\end{align}
where the simplification in the last line follows from the fact that $\pinch_x$ is self-adjoint and idempotent.

Hence each $H_{\nat}(\hat{A}_x|E)$ term in the objective function of~\eqref{eq:mainoptnat} is lower bounded by~\eqref{eq:onetermbnd}. Also, recall that $\vec{\lambda}$ was arbitrary, which means we can take a supremum over it to obtain tighter lower bounds. Focusing on the case where 
$\keyw_x=\delta_{x,\xo}$, we conclude that the optimization~\eqref{eq:mainoptnat} is lower bounded by:\footnote{Technically, our above reasoning allows the supremum over $\vec{\lambda}_{}$ to be taken inside the infimum here, which could give a better bound in theory (because of the max-min inequality). However, computing this supremum inside the infimum is not possible for the approach we take in this work, so we omit this version.}
\begin{align}
\sup_{\vec{\lambda}}
\inf_{
(\rho_{AB},\pvm_{a|x},\pvm_{b|y}) \in \mathcal{S}_{\vec{\constr}}
} 
- \ln \tr{\exp\left( 
\ln \pinch_{\xo}({\rho}_{AB}) +
\vec{\lambda} \cdot \vec{\Gamma}(\pvm_{a|x},\pvm_{b|y}) 
\right)} 
+ \vec{\lambda} \cdot \vec{\constr} ,
\end{align}
which is equal to the claimed bound~\eqref{eq:gibbsbnd} (after moving the infimum into the $-\ln$ function).

To prove the tightness of this bound, we compare it to the analogous bound~\eqref{eq:lagbnd} in the previous section, which was a tight bound by construction (whenever strong duality holds for $\vec{\constr}$). 
We observe that~\eqref{eq:gibbsbnd} is always \emph{lower} bounded by the tight expression~\eqref{eq:lagbnd} --- this follows from the fact that $\ln x \leq x/e$ 
and
the supremum over states and measurements in~\eqref{eq:gibbsbnd} is over a smaller set. This means that surprisingly, even though many of the inequalities used in deriving~\eqref{eq:gibbsbnd} are not tight \textit{a priori}, it turns out to give a tight bound after all (as long as strong duality holds). Finally, the second bound~\eqref{eq:gibbsbndGT} follows via exactly the same calculations as in the previous section, based on the generalized Golden--Thompson inequality. This completes the proof.

Note that while we have just argued that the bounds~\eqref{eq:lagbnd} and~\eqref{eq:gibbsbnd} in the two theorems are equal (whenever strong duality holds), this is not necessarily the case for the bounds~\eqref{eq:lagbndGT} and~\eqref{eq:gibbsbndGT} that we would compute using the NPA hierarchy. Specifically, by the same reasoning as in our above comparison of~\eqref{eq:lagbnd} and~\eqref{eq:gibbsbnd}, we see that the bound~\eqref{eq:gibbsbndGT} derived here is always at least as good as the analogous bound~\eqref{eq:lagbndGT} in the previous section. However, in this case the latter is not guaranteed to be tight (due to the use of the generalized Golden--Thompson inequality), and hence~\eqref{eq:gibbsbndGT} could in fact be a strict improvement over~\eqref{eq:lagbndGT}. Our explicit computations of these bounds using the NPA hierarchy show that this indeed often turns out to be the case (though the difference may mainly arise from the different sets the optimizations are taken over, rather than the inequality $\ln x \leq x/e$ --- the effect of the latter can typically be ``cancelled out'' in principle by optimizing over the Lagrange multipliers; see~\cite{TSG+21} for further details).

\subsection{Resulting bounds for DIQKD and DIRE}
\label{sec:resultsNPA}

We now turn to the question of the results provided by this method (focusing on~\eqref{eq:gibbsbndGT}, since it is a tighter bound). Unfortunately, it faces the challenge that the operator polynomial~\eqref{eq:oppoly} is generally of very high order. This has the consequence that a high NPA hierarchy level is needed to bound those terms, with the required NPA level increasing rapidly as the number of inputs/outputs increases. Because of this, we were only able to apply this method in its full generality for the basic 2-input 2-output setting, in which case level 6 of the hierarchy is sufficient (for some choices of the operator product order), with the NPA matrix having size $85\times85$ after applying some simplifying techniques (see~\cite{TSG+21}). It is also possible to take some of the $\lambda_j$ parameters to be zero in order to reduce the order of the operator polynomial, allowing us to compute bounds for some scenarios with more inputs and outputs~\cite{TSG+21}, but the results do not appear to be very tight.

For the 2-input 2-output scenario, we implemented this approach using the full output distribution $\pr{ab|xy}$ as constraints (see~\cite{TSG+21} for further details). It turns out that the resulting bounds on $H(\hat{A}_{\xo}|E)$ that we obtain are close to or slightly outperform the result of~\cite{PAB+09}, which is a \emph{tight} bound on that entropy given only the CHSH value. The fact that it slightly outperforms the~\cite{PAB+09} bound in some cases implies that it is not always optimal to bound $H(\hat{A}_{\xo}|E)$ using only the CHSH value, which is not unexpected, but shows that the approach here has potential to provide some improvement in DIQKD keyrates. Roughly speaking, our approach tends to perform well in situations with a moderate to high amount of noise. Speculatively, we might explain this by noting that the main step which introduces a gap is the generalized Golden--Thompson inequality, which is saturated when the operators commute --- it could be that as the amount of noise increases, the operators that attain the optimum become ``closer to commuting'' (this is, of course, a very informal notion for now), making that inequality tighter.

However, even in cases where the bound on $H(\hat{A}_{\xo}|E)$ is better than that of~\cite{PAB+09}, the improvement is fairly small. On the other hand, we found more significant improvements for DIRE instead (still in the 2-input 2-output setting).
While we have only described how to bound $H(\hat{A}_x|E)$ in the above presentation, the computations generalize easily to $H(\hat{A}_x \hat{B}_y|E)$, again by using the relation~\eqref{eq:entdualDIRE} mentioned in the previous approach for the 2-input 2-output scenario.
In contrast to our results for $H(\hat{A}_{\xo}|E)$, we found~\cite{TSG+21} that our bounds on $H(\hat{A}_0 \hat{B}_0|E)$ were a substantial improvement on all earlier results. To some extent, this may be because no tight bounds on $H(\hat{A}_0 \hat{B}_0|E)$ were known, allowing us more scope for improvement. Subsequently, the independent approach developed in~\cite{BFF21} showed further improvements, by optimizing the intended device behaviour to maximize $H(\hat{A}_0 \hat{B}_0|E)$; this was improved on again by their subsequent approach~\cite{arx_BFF21} to compute arbitrarily tight bounds. Heuristic numerical results and conjectured tight bounds for this setting were also proposed in a recent work~\cite{arx_BRC21}, which incorporates the random-key-measurements technique as well. 

\subsection{Accounting for random key measurements}
\label{sec:randbasisgen}

Finally, we discuss the applicability of these approaches when random key measurements are applied.
It might seem at first that the above approaches could directly handle this scenario as well, instead of having to focus on the case where $\keyw_x = \delta_{x,0}$. However, each of them runs into some difficulty. Namely, for the approach based on the Lagrange dual, the computations leading up to~\eqref{eq:ZTFbound} do not hold straightforwardly if there is a sum over multiple terms $D_{\nat}({\rho}_{AB} \Vert \pinch_{\xg}({\rho}_{AB}))$ rather than just a single term --- the problem is that there will be multiple states $\sigma$ to consider when replacing the second argument in each relative entropy term, in which case it is unclear whether the optimization over $\rho$ can be solved. (It is still possible to lower-bound the optimization by splitting the infimum over $(\rho_{AB},\pvm_{a|x},\pvm_{b|y})$ across the terms, allowing each infimum over $\rho$ to be solved individually; however, this is in general not a tight lower bound --- in fact, inspecting the bound produced that way indicates that it completely loses the advantage offered by using random key measurements.)

As for the approach based on the Gibbs variational principle, the arguments up to~\eqref{eq:onetermbnd} were valid for the $H_{\nat}(\hat{A}_x|E)$ terms individually. Hence by following the same line of reasoning in the subsequent arguments (and using a different tuple of coefficients $\vec{\lambda}_x$ for each term), we arrive at this result:\footnote{For clarity, here we switch to using a different index $\xg$ for the summation that corresponds to the summation in the objective function of~\eqref{eq:mainoptnat}, to avoid confusion with the ``purely notational'' indices $x$ in the operators $\pvm_{a|x}$.}
\begin{theorem}\label{th:gibbsrandbasis}
When noisy preprocessing is not applied, 
the optimization~\eqref{eq:mainoptnat} is lower bounded by
\begin{align}
\sup_{\vec{\lambda}_{\xg}}
\inf_{
(\rho_{AB},\pvm_{a|x},\pvm_{b|y}) \in \mathcal{S}_{\vec{\constr}}
} 
\sum_{\xg} \keyw_{\xg}
\left(
- \ln \tr{\exp\left( 
\ln \pinch_{\xg}({\rho}_{AB}) +
\vec{\lambda}_{\xg} \cdot \vec{\Gamma}(\pvm_{a|x},\pvm_{b|y}) 
\right)} 
+ \vec{\lambda}_{\xg} \cdot \vec{\constr} 
\right),
\label{eq:gibbsrandbasis}
\end{align}
where $\mathcal{S}_{\vec{\constr}}$ denotes the set (or class) of states and measurements satisfying the constraints $\tr{\vec{\Gamma}(\pvm_{a|x},\pvm_{b|y}) \rho_{AB}} = \vec{\constr}$.
Furthermore, it is in turn lower bounded by 
\begin{align}
\sup_{\vec{\lambda}_{\xg}}
\inf_{
(\rho_{AB},\pvm_{a|x},\pvm_{b|y}) \in \mathcal{S}_{\vec{\constr}}
} 
\sum_{\xg} \keyw_{\xg}
\left(
- \ln \tr{\rho_{AB} Q_{\xg}(\pvm_{a|x},\pvm_{b|y})} 
+ \vec{\lambda}_{\xg} \cdot \vec{\constr} 
\right),
\label{eq:gibbsrandbasisGT}
\end{align}
where $Q_{\xg}(\pvm_{a|x},\pvm_{b|y})$ is as defined in~\eqref{eq:oppoly}.
\end{theorem}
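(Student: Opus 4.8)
The plan is to mirror the proof of Theorem~\ref{th:gibbsbnd} almost verbatim, the only difference being that every manipulation used there up to~\eqref{eq:onetermbnd} is carried out \emph{separately} for each of the entropy terms $H_{\nat}(\hat{A}_{\xg}|E)$ appearing in the objective of~\eqref{eq:mainoptnat}, each with its own tuple of Lagrange multipliers $\vec{\lambda}_{\xg}$. Those earlier steps never used the restriction $\keyw_x=\delta_{x,\xo}$; they only ever manipulated a single fixed term $H_{\nat}(\hat{A}_{\xo}|E)$, so they apply unchanged with $\xo$ replaced by an arbitrary $\xg$.

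Concretely, I would first fix any $(\rho_{AB},\pvm_{a|x},\pvm_{b|y})\in\mathcal{S}_{\vec{\constr}}$ and any $\xg$, and reproduce the chain~\eqref{eq:afterjensen}: start from the (Coles) identity $H(\hat{A}_{\xg}|E) = H(\pinch_{\xg}({\rho}_{AB})) - H({\rho}_{AB})$, rewrite $-\tr{\pinch_{\xg}(\rho_{AB})\ln\pinch_{\xg}(\rho_{AB})}$ as $\tr{\rho_{AB}\,\pinch_{\xg}^\dagger(-\ln\pinch_{\xg}(\rho_{AB}))}$, and apply Corollary~\ref{corr:unital} (operator convexity of $-\ln$, and complete positivity plus unitality of $\pinch_{\xg}^\dagger$, since $\pinch_{\xg}$ is completely positive and trace-preserving) together with $\rho_{AB}\geq0$ to lower bound $H_{\nat}(\hat{A}_{\xg}|E)$ by $\tr{\rho_{AB}(-\ln\pinch_{\xg}^\dagger\pinch_{\xg}(\rho_{AB}))} - H_{\nat}(\rho_{AB})$. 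Then I would add the term $-\vec{\lambda}_{\xg}\cdot(\tr{\vec{\Gamma}(\pvm_{a|x},\pvm_{b|y})\rho_{AB}}-\vec{\constr})$ --- which vanishes on $\mathcal{S}_{\vec{\constr}}$ for \emph{any} tuple $\vec{\lambda}_{\xg}$, the constraint being a joint condition on the state and measurements that does not depend on $\xg$, so an independent multiplier per term is legitimate --- group it into the operator inside the trace, apply the Gibbs variational principle (Fact~\ref{fact:Gibbs}), and simplify $\ln\pinch_{\xg}^\dagger\pinch_{\xg}(\rho_{AB})$ to $\ln\pinch_{\xg}(\rho_{AB})$ using that $\pinch_{\xg}$ is self-adjoint and idempotent. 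This reproduces~\eqref{eq:onetermbnd} with $\xo\to\xg$ and $\vec{\lambda}\to\vec{\lambda}_{\xg}$.

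Next I would assemble these per-term bounds. Since $\keyw_{\xg}\geq0$, multiplying the $\xg$-th inequality by $\keyw_{\xg}$ and summing preserves them, so for every fixed family $(\vec{\lambda}_{\xg})_{\xg}$ and every $(\rho_{AB},\pvm_{a|x},\pvm_{b|y})\in\mathcal{S}_{\vec{\constr}}$ one gets a lower bound on $\sum_{\xg}\keyw_{\xg}H_{\nat}(\hat{A}_{\xg}|E)$ by the summand of~\eqref{eq:gibbsrandbasis}. Taking the infimum over $(\rho_{AB},\pvm_{a|x},\pvm_{b|y})\in\mathcal{S}_{\vec{\constr}}$ on both sides, and then --- since the resulting inequality holds for every family $(\vec{\lambda}_{\xg})_{\xg}$ --- the supremum over that family, yields~\eqref{eq:gibbsrandbasis}. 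For the second bound~\eqref{eq:gibbsrandbasisGT}, I would apply the generalized Golden--Thompson inequality (Fact~\ref{fact:MTI}) to each trace-exponential $\tr{\exp(\ln\pinch_{\xg}(\rho_{AB}) + \vec{\lambda}_{\xg}\cdot\vec{\Gamma}(\pvm_{a|x},\pvm_{b|y}))}$, repeating verbatim the computation of Sec.~\ref{sec:gibbs} running from~\eqref{eq:beforeexpand} to~\eqref{eq:afterint} (choosing $M_{j^\star}=\ln\pinch_{\xg}(\rho_{AB})$, expanding the operator product over the input pairs $xy$, and evaluating the integrals via $\int_\mathbb{R}\der t\,e^{i\alpha t}\beta(t) = \alpha\csch\alpha$), which replaces $\tr{\exp(\cdots)}$ by $\tr{\rho_{AB}\,Q_{\xg}(\pvm_{a|x},\pvm_{b|y})}$ with $Q_{\xg}$ exactly as in~\eqref{eq:oppoly}; once more $\keyw_{\xg}\geq0$ lets the term-wise inequality pass through the weighted sum, and the $\sup/\inf$ structure is untouched.

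The main thing to watch --- rather than a real obstacle --- is the bookkeeping around the multipliers and the order of $\sup$ and $\inf$: one must keep $\sup_{(\vec{\lambda}_{\xg})_{\xg}}$ \emph{outside} the infimum (as in the footnote to Theorem~\ref{th:gibbsbnd}, moving it inside would only improve the bound via the max-min inequality but is not accessible in our numerical approach). One should also note that, unlike the $\keyw_x=\delta_{x,\xo}$ case, no tightness claim is made: the argument in Sec.~\ref{sec:gibbs} that~\eqref{eq:gibbsbnd} coincides with the tight expression~\eqref{eq:lagbnd} went through the single-auxiliary-state reformulation~\eqref{eq:ZTFbound}, and with several relative-entropy terms one would need a separate auxiliary state per term, so there is no analogue of~\eqref{eq:ZTFbound} to lean on here. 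Everything else is a direct, if tedious, transcription of the two earlier proofs.
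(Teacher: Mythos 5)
Your proposal is correct and matches the paper's own argument: the paper likewise observes that the derivation up to~\eqref{eq:onetermbnd} applies to each $H_{\nat}(\hat{A}_{\xg}|E)$ term individually with its own multiplier tuple $\vec{\lambda}_{\xg}$, then forms the $\keyw_{\xg}$-weighted sum and applies the generalized Golden--Thompson computation term by term to obtain~\eqref{eq:gibbsrandbasisGT}. Your remarks on the quantifier ordering and the absence of a tightness claim are also consistent with the paper's discussion.
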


However, there is still an obstruction to overcome for this case --- the bound~\eqref{eq:gibbsrandbasisGT} is not a noncommutative polynomial optimization, because of the logarithms. (Notice that previously, we could get to the expression~\eqref{eq:gibbsbndGT} by using the monotonicity of $-\ln$ to move the optimization into the logarithm, but this trick does not work here unless we split the infimum across the sum, in which case we incur the same loss of tightness mentioned above.) 

Fortunately, it is possible to work around the issue in this case by using the relation 
\begin{align}
-\ln x = \sup_{\mu>0} \left(-\frac{x}{\mu} - \ln\mu + 1\right),
\end{align}
which can be interpreted geometrically as characterizing the convex function $-\ln$ via the envelope of its tangents (note that $x/\mu + \ln\mu - 1$ is simply the tangent to $\ln x$ at the point $x=\mu$). Therefore, this implies that~\eqref{eq:gibbsrandbasisGT} is equal to
\begin{align}
&\sup_{\vec{\lambda}_{\xg}}
\inf_{
(\rho_{AB},\pvm_{a|x},\pvm_{b|y}) \in \mathcal{S}_{\vec{\constr}}
} 
\sum_{\xg} \keyw_{\xg} \sup_{\mu_{\xg}>0}
\left(
- \frac{1}{\mu_{\xg}}\tr{\rho_{AB} Q_{\xg}(\pvm_{a|x},\pvm_{b|y})} 
- \ln \mu_{\xg} + 1
+ \vec{\lambda}_{\xg} \cdot \vec{\constr} 
\right), \nonumber\\
\geq&
\sup_{\mu_{\xg}>0}
\sup_{\vec{\lambda}_{\xg}}
\inf_{
(\rho_{AB},\pvm_{a|x},\pvm_{b|y}) \in \mathcal{S}_{\vec{\constr}}
} 
\sum_{\xg} \keyw_{\xg}
\left(
- \frac{1}{\mu_{\xg}}\tr{\rho_{AB} Q_{\xg}(\pvm_{a|x},\pvm_{b|y})} 
- \ln \mu_{\xg} + 1
+ \vec{\lambda}_{\xg} \cdot \vec{\constr} 
\right),
\label{eq:gibbsrandbasisfinal}
\end{align}
which \emph{can} indeed be solved as a noncommutative polynomial optimization (again, the suprema over $\mu_{\xg},\vec{\lambda}_{\xg}$ are to be handled heuristically, with more optimal values yielding tighter bounds). The downside is that because we have performed a max-min swap, there may again be some loss of tightness. We find that in practice, computing the bound~\eqref{eq:gibbsrandbasisfinal} for the depolarizing-noise scenario previously considered in Sec.~\ref{sec:qubit} does yield a slightly better bound as compared to the~\cite{PAB+09} expression (for $H(\hat{A}_{\xo}|E)$ alone); however, the improvement is rather small, and is substantially worse than the essentially tight bounds computed in Sec.~\ref{sec:qubit}.

\chapter{Finite-size analysis}
\label{chap:finite}

Having discussed a variety of methods for bounding the asymptotic keyrates, we now turn to the question of studying the finite-size effects. We begin by presenting in Sec.~\ref{sec:mainprot} a detailed description of the main protocol we consider. In Sec.~\ref{sec:finitekeylength}, we state our main theorem regarding this protocol, namely an explicit finite-size expression for the key length. We then give the proof of this theorem in Sec.~\ref{sec:finiteproof}, and discuss the resulting keyrates in Sec.~\ref{sec:finiteplots}. Finally, in Sec.~\ref{sec:mods} we describe various modifications of the main protocol that can improve its keyrates (and may also be useful in improving DIRNG/DIRE keyrates). The results in this chapter are based on~\cite{arx_TSB+20}, and the phrasing and presentation are essentially identical to that work.

\section{Detailed protocol description}
\label{sec:mainprot}

Our results in the previous chapter (specifically Sec.~\ref{sec:1rndresults}) suggest that for 2-input 2-output scenarios under the depolarizing-noise model, considering the CHSH inequality is sufficient to obtain near-optimal results, i.e.~there is not much scope for further improvement by considering modified CHSH inequalities. Hence the protocol we present only considers the CHSH inequality, though in principle the finite-size analysis we perform can be applied to protocols based on arbitrary Bell inequalities, e.g.~following the DIRE analysis in~\cite{BRC20}.

Rather than using the CHSH value in the form~\eqref{eq:CHSHclass}, we shall instead base the protocol on the \term{CHSH game}, which is somewhat easier to describe in a protocol. This is a nonlocal game where 
all inputs and outputs take values in $\{0,1\}$, and Alice and Bob's inputs are chosen independently and uniformly, with the win condition being $a\oplus b = xy$. Expressed in the form mentioned in Sec.~\ref{sec:introdevices} for ``Bell parameters'', the winning probability for this game (given a distribution $\pr{ab|xy}$) is
\begin{align}
w = \sum_{abxy} \frac{1}{4} \delta_{a\oplus b, xy} \pr{ab|xy},
\end{align}
where the factor of $1/4$ comes from the uniform input distribution.
This winning probability $w$ is essentially equivalent to the CHSH value $\constr$, in that they can be converted to each other by the simple relation $\constr = 8w-4$, under the NS conditions~\eqref{eq:NS}. With this game in mind, we can now describe the protocol (though the inputs for Bob will be relabelled somewhat in order to account for generation-round inputs):
\newpage
\begin{savenotes} 
\begin{breakablealgorithm}
\caption{} 
\label{prot:DIQKD}
The protocol is defined in terms of the following parameters (chosen before the protocol begins), which we qualitatively describe:
{\setlist[description]{leftmargin=2cm,labelindent=2cm,itemsep=0mm}
\begin{description}
\item $n$: Total number of rounds
\item $\gamma$: Probability of a test round
\item $\p$: Noisy-preprocessing bias
\item $\cmax$: Bound on ``leakage'' from error correction
\item $\wexp$: Expected winning probability for the (IID) honest devices
\item $\dtol$: Tolerated deviation from expected winning probability $\wexp$
\item $\lkey$: Length of final key
\end{description}}
\noindent The honest behaviour consists of $n$ IID copies of a device characterized by $\wexp$ and an error-correction parameter $h_\mathrm{hon}$ (details in Sec.~\ref{sec:hon}).\\
\begin{algorithmic}[1]
\State\label{step:1rnd}\textbf{Measurement:} For each $j \in \upto{n}$, perform the following steps: 
\begin{algsubstates}
\State Alice and Bob's devices each receive some share of a quantum state.
\State Alice chooses a uniform input $X_j \in \inX$. With probability $\gamma$, Bob chooses a uniform input $Y_j \in \inYt$, otherwise Bob chooses a uniform input $Y_j \in \inYg$.
\State Alice and Bob supply their inputs to their devices, and record the outputs as $A_j$ and $B_j$ respectively. 
\end{algsubstates}
\State Alice and Bob publicly announce their input strings $\str{X}$ and $\str{Y}$. 
\State \textbf{Sifting:} For all rounds such that $Y_j \in \inYg$ and $X_j \neq Y_j$, Alice and Bob overwrite their outputs with $A_j = B_j = 0$.
\State\label{step:NPP}\textbf{Noisy preprocessing:} For all rounds such that $Y_j \in \inYg$ and $X_j = Y_j$, Alice generates a biased random bit $F_j$ with $\pr{F_j=1} = \p$, and overwrites her output $A_j$ with $A_j \oplus F_j$.
\State\label{step:EC}\textbf{Error correction:} Alice and Bob publicly communicate some bits $\str{L}=(\str{L}_\mathrm{EC},\str{L}_\mathrm{h})$ 
for error correction as follows (see Sec.~\ref{sec:EC}): 
\begin{algsubstates}
\State\label{step:guess}Alice and Bob communicate some bits $\str{L}_\mathrm{EC}$ to allow Bob to produce a guess $\tilde{\str{A}}$ for $\str{A}$, such that the set of all possible values for $\str{L}_\mathrm{EC}$ has cardinality upper bounded by $2^{\cmax}$.\footnote{As discussed in Sec.~\ref{sec:proofsketch}, this accommodates the possibility of $\str{L}_\mathrm{EC}$ being of variable length, by choosing $\cmax$ to be $1$ larger than the maximum possible length of $\str{L}_\mathrm{EC}$ (in bits). To be even more precise, if two-way communication is allowed here, one would also need to account for the possible directions and orderings of communications when defining the set of possible values for $\str{L}_\mathrm{EC}$; we will not consider this question further in this work. (Note that~\cite{arx_TSB+20} contains a technical inaccuracy regarding this point, as it simply viewed $\cmax$ as an upper bound on the number of communicated bits, which is not exactly valid in light of this discussion.)}
\State\label{step:hash}Alice computes a 2-universal hash $\str{L}_\mathrm{h} = \hash(\str{A})$ of length $\ceil{\log(1/\eh)}$. She sends $\str{L}_\mathrm{h}$ (and the choice of hash function) to Bob.
\end{algsubstates} 
\State\label{step:PA}\textbf{Parameter estimation:} 
For all $j\in\upto{n}$, Bob sets 
$C_j=\perp$ if $Y_j \in \inYg$; otherwise he sets $C_j=0$ if $\tilde{A}_j \oplus B_j \neq X_j \cdot (Y_j-2)$ and $C_j=1$ if $\tilde{A}_j \oplus B_j = X_j \cdot (Y_j-2)$.
\State\label{step:abort}Bob checks whether $\str{L}_\mathrm{h}=\hash(\tilde{\str{A}})$, as well as whether the value $\str{c}$ on registers $\str{C}$ satisfies $\freq_\str{c}(1)\geq (\wexp-\dtol)\gamma$ and $\freq_\str{c}(0)\leq (1-\wexp+\dtol)\gamma$. If all those conditions hold, Alice and Bob proceed to the next step. Otherwise, the protocol aborts.
\State \textbf{Privacy amplification:} Alice and Bob apply privacy amplification (see Sec.~\ref{sec:PA}) on $\str{A}$ and $\tilde{\str{A}}$ respectively to obtain final keys $K_A$ and $K_B$ of length $\lkey$.
\end{algorithmic}
\end{breakablealgorithm}
\end{savenotes}
\newpage

The rounds in which $Y_j \in \inYt$ will be referred to as test rounds, and the rounds in which $Y_j \in \inYg$ will be referred to as generation rounds (though strictly speaking, the final key in this protocol is obtained from all the rounds, not merely the generation rounds alone).
In each round,
Eve is allowed to hold some extension of the state distributed to the devices. We will use $\allE$ rather than $\str{E}$ to denote the collection of all such quantum side-information she retains over the entire protocol, 
since it may not necessarily have a tensor-product structure. 

We briefly highlight a few aspects of this protocol. Firstly (as mentioned in Sec.~\ref{sec:protsketch}), we do not choose a random subset of fixed size as test rounds, but rather, each round is independently chosen to be a test or generation round, following~\cite{ARV19}. This is in order to apply the entropy accumulation theorem, which holds for processes that can be described using a sequence of maps. Furthermore, the parameter-estimation check is performed on \emph{both} $\freq_\str{c}(1)$ and $\freq_\str{c}(0)$. This was required in order to derive a critical inequality in the security proof (following~\cite{BRC20}), though in some cases it is possible to omit the $\freq_\str{c}(1)$ check (see Eq.~\eqref{eq:fminPEineq} and the subsequent discussion).

We now describe some of the individual steps in more detail.

\subsection{Error correction}\label{sec:EC}

We first discuss Step~\ref{step:hash}, because it will have an impact on our discussion of Step~\ref{step:guess}. Given any $\eh \in (0,1]$, 
if we consider a 2-universal family of hash functions where the output is a bitstring of length $\ceil{\log(1/\eh)}$, then the defining property of 2-universal hashing guarantees that
\begin{align}
\pr{\hash(\str{A})=\hash(\tilde{\str{A}})\middle|\str{A}\neq\tilde{\str{A}}} \leq \eh. \label{eq:hash}
\end{align}
In other words, the probability of getting matching hashes from different strings can be made arbitrarily small, by using sufficiently long hashes. Informally speaking, this gives us some laxity in Step~\ref{step:guess}, because regardless of how much the devices deviate from the honest behaviour, the guarantee~\eqref{eq:hash} will still hold, providing a final ``check'' on how bad the guess $\tilde{\str{A}}$ could be. 
Importantly, our later proof of the \emph{soundness} of the protocol will not rely on any guarantees regarding the procedure in Step~\ref{step:guess} --- only the \emph{completeness} of the protocol (i.e.~the probability that the honest devices mistakenly abort) requires such guarantees. 

We now study Step~\ref{step:guess}. We shall choose $\cmax$ as follows: it is the length of $\str{L}_\mathrm{EC}$ required (let us focus on fixed-length $\str{L}_\mathrm{EC}$ for simplicity) such that given the honest devices, Bob can use $\str{L}_\mathrm{EC}$ and $\str{B}$ to produce a guess $\tilde{\str{A}}$ 
satisfying
\begin{align}
\pr{\str{A} \neq \tilde{\str{A}}}_\mathrm{hon} \leq \ecEC.
\label{eq:comEC}
\end{align}
(In this section, we will use the subscript $_\mathrm{hon}$ to emphasize quantities computed with respect to an honest behaviour.)
We stress that while some preliminary characterization of the devices can be performed beforehand to choose a suitable $\cmax$, this parameter \emph{must not} be changed once the protocol begins. 

As noted back in Sec.~\ref{sec:minmaxent}, the question of what value should be chosen for $\cmax$ in order to achieve a desired $\ecEC$ can be addressed by the protocol in~\cite{RR12}. Explicitly,~\eqref{eq:comEC} can be satisfied as long as we choose $\cmax$ 
such that
\begin{align}
\cmax \geq 
\Hmax^{\ez}(\str{A}|\str{B}\str{X}\str{Y})_\mathrm{hon} + 2\log\frac{1}{\ecEC-\ez} + 4,
\label{eq:optEC}
\end{align}
where $\ez\in[0,\ecEC)$ is to be optimized over.
Since the honest behaviour is IID, 
the max-entropy can be bounded by the AEP~\eqref{eq:AEPmax}:\footnote{In this analysis, we deviated slightly from~\cite{ARV19} by using the error-correction protocol from~\cite{RR12} instead of~\cite{RW05}, and the AEP stated in~\cite{DFR20} rather than~\cite{TCR09}. Both of these yield slight improvements in the bounds (the former at large $n$, and the latter when
$\dim(\hat{A})$ is not too large).
}
\begin{align}
\Hmax^{\ez}(\str{A}|\str{B}\str{X}\str{Y})_\mathrm{hon} \leq n h_\mathrm{hon} + 
\sqrt{n} \, (2\log5)\sqrt{\log\frac{2}{\ez^2}},
\end{align}
where (applying the decomposition $H(Q|Q'C)=\sum_c \pr{c} H(Q|Q';C=c)$ for classical $C$, and using the ``single-round notation'' $\hat{A}\hat{B}XY$ since the behaviour is IID)
\begin{align}
h_\mathrm{hon} \defvar H(\hat{A}|\hat{B} XY)_\mathrm{hon} 
&= \frac{1-\gamma}{4} 
\sum_{z\in\inX} H(\hat{A} | \hat{B};X = Y = z)_\mathrm{hon}  
\nonumber \\
& \qquad 
+ \frac{\gamma}{4} 
\sum_{x\in\inX,y\in\inYt} 
H(\hat{A} | \hat{B};X = x, Y = y)_\mathrm{hon},
\label{eq:ECrate}
\end{align}
where the terms should be understood to refer to the $\hat{A}\hat{B}$ values \emph{after} noisy preprocessing (for the generation rounds). 

However, the protocol achieving the bound~\eqref{eq:optEC} may not be easy to implement. In practice, error-correction protocols typically achieve performance described by
\begin{align}
\cmax \approx \xi(n,\ecEC) n h_\mathrm{hon}, 
\end{align}
where $\xi(n,\ecEC)$ lies between $1.05$ and $1.2$ for ``typical'' values of $n$ and $\ecEC$. (More precise characterizations can be found in~\cite{TMP+17}, which gives for instance an estimate
\begin{align}
\cmax \approx \xi_1 n h_\mathrm{hon} + \tilde{\xi}(\ecEC,h_\mathrm{hon})\sqrt{n},
\end{align}
for a constant $\xi_1$ and a specific function $\tilde{\xi}$.) Furthermore, some protocols used in practice do not have a theoretical bound on $\ecEC$ (for a given $\cmax$), only heuristic estimates. 

Fortunately, as mentioned earlier, the choice of error-correction procedure in Step~\ref{step:guess} will have no effect in our proof of the soundness of Protocol~\ref{prot:DIQKD} (as long as $\cmax$ is a fixed parameter), only its completeness. This means that as long as we are willing to accept heuristic values for $\ecom$, we can use the heuristic values of $\ecEC$ provided by using some ``practical'' error-correction procedure in Step~\ref{step:guess}, and the value of $\esound$ (i.e.~how ``secure'' the protocol is) will be completely unaffected. The critical point to remember is that $\cmax$ is a value to be fixed before the protocol begins, and Alice and Bob \emph{must} only use exactly that many bits (or up to the corresponding maximum allowed number of bits, in the case of variable-length error-correction) when actually executing the protocol. With this in mind, we remark that while we mainly focus on protocols using one-way error correction, this is not quite a strict requirement --- in theory, one could use a procedure involving two-way communication (such as Cascade), as long as $\cmax$ accounts for \emph{all} the communicated bits, not just those sent from Alice to Bob.\footnote{We stress that care must be taken when interpreting this claim in the context of existing results regarding such protocols: for instance, some works on the Cascade protocol only analyze the number of bits sent from Alice to Bob, which would not be valid as an upper bound on $\cmax$ in our context.} Another possibility worth considering might be adaptive procedures that adjust to the noise level encountered during execution of the protocol, rather than the expected noise level (again, making sure to halt if they reach the maximum allowed number of bits corresponding to the pre-chosen $\cmax$ value). 

We remark that in our situation, if we assume that $\prnoscale{\hat{A} = \hat{B} |X = x, Y = y}_\mathrm{hon}$ is the same for all $x\in\inX,y\in\inYt$ (in which case it must equal $\wexp$), then for the $y\in\inYt$ terms in Eq.~\eqref{eq:ECrate} we have 
\begin{align}
H(\hat{A} | \hat{B};X = x, Y = y)_\mathrm{hon} = \binh\left(\prnoscale{\hat{A} \neq \hat{B} |X = x, Y = y}_\mathrm{hon}\right) = \binh(\wexp),
\end{align}
which lies in approximately $[0.600,0.811]$ for $\wexp\in[3/4,{(2+\sqrt{2})}/{4}]$.
If the protocol parameters are such that $\xi(n,\ecEC)\binh(\wexp)$ turns out to be fairly close to $1$,
there is not much loss incurred by simply sending the outputs of the test rounds directly rather than expending the effort to compute appropriate error-correction data.

\subsection{Privacy amplification}\label{sec:PA}

As noted in Sec.~\ref{sec:minmaxent}, privacy amplification can be addressed by the Leftover Hashing Lemma (Fact~\ref{fact:LHL}), which bounds the secrecy of the final key in terms of the difference between the smoothed min-entropy of the device outputs and the length of the key. 
Practically speaking, in the privacy amplification step of the protocol, Alice simply chooses a random function from the 2-universal family and publicly communicates it to Bob, followed by Alice and Bob applying that function to $\str{A}$ and $\tilde{\str{A}}$ respectively. 
As mentioned previously, since the hash choice $H$ is included in the ``side-information'' term in~\eqref{eq:LHL}, it is safe to publicly communicate it.

\subsection{Honest behaviour}
\label{sec:hon}

In general, the honest implementation consists of $n$ IID copies of a device characterized by 2 parameters, $\wexp$ and $h_\mathrm{hon}$. 
We will hence again describe the honest behaviour using single-round registers $\hat{A}\hat{B}XY$. 
$\wexp$ is the probability with which the device wins the CHSH game when supplied with uniformly random inputs $X \in \inX, Y \in \inYt$, and $h_\mathrm{hon}$ is defined in Eq.~\eqref{eq:ECrate}. While $h_\mathrm{hon}$ does not explicitly appear in the protocol description, it is implicitly used to define the parameter $\cmax$, as was described in Sec.~\ref{sec:EC}. (Since $h_\mathrm{hon}$ has a dependence on $\gamma$, strictly speaking it may be more precise to instead view the honest device behaviour as being parametrized by a tuple specifying all the individual entropies in Eq.~\eqref{eq:ECrate}, but for brevity we shall summarize this as the honest behaviour being parametrized by $h_\mathrm{hon}$. In the more specific models of honest devices described below, these entropies are expressed in terms of some simpler parameters.) 

As a simple example, we can take the honest devices to be described by depolarizing noise, with the noiseless distribution $\prnonoise[ab|xy]$ being the following:
the state is $\ket{\Phi^+}$, the measurements for inputs $X\in\inX,Y\in\inYt$ are those in~\eqref{eq:noiselessCHSH} (up to relabelling of Bob's inputs), and the measurements corresponding to Bob's inputs $Y\in\inYg$ are measurements in the same bases as Alice's measurements.
Under this noise model, we have simple expressions for $\wexp$ and $h_\mathrm{hon}$:
\begin{align}
\wexp&=(1-2\q)\frac{2+\sqrt{2}}{4} + 2\q\,\frac{1}{2}, \\
h_\mathrm{hon} & = \frac{1-\gamma}{2}\binh(\p+(1-2\p)\q) + \gamma \binh(\wexp),
\end{align}
where the $\p+(1-2\p)\q$ term is obtained by an explicit computation~\cite{WAP21}.

To focus on more specific experimental implementations, we will also consider the estimates given in~\cite{MvDR+19} for the Bell tests in~\cite{HBD+15} (resp.~\cite{RBG+17}): we characterize them via the parameters $\wexp = 0.797$ (resp.~$0.777$) and $\perr = 0.06$ (resp.~$0.035$), where $\perr$ is a parameter such that the probabilities \emph{before} noisy preprocessing satisfy\footnote{It can be shown that this is equivalent to taking $\prnoscale{\hat{A}\hat{B} |X = Y = z}_\mathrm{hon}$ to be independent of $z$ and taking the marginal distributions $\prnoscale{\hat{A} |X = Y = z}_\mathrm{hon}$, $\prnoscale{\hat{B} |X = Y = z}_\mathrm{hon}$ to be uniform, then defining $\perr \defvar \prnoscale{\hat{A} \neq \hat{B} |X = Y = z}_\mathrm{hon}$.}
\begin{align}
\prnoscale{\hat{A}\hat{B} |X = Y = z}_\mathrm{hon} = 
(1-2\perr)\frac{\delta_{\hat{A},\hat{B}}}{2} + 2\perr\, \frac{1}{4}
& \quad \text{ for all } z\in\inYg,
\end{align}
Furthermore, we take $\prnoscale{\hat{A}=\hat{B} |X = x, Y = y}_\mathrm{hon}$ to be the same for all $x\in\inX,y\in\inYt$ (in which case it must equal $\wexp$). Under this model, the expression~\eqref{eq:ECrate} for $h_\mathrm{hon}$ can be simplified (by the same computation as above):
\begin{align}
h_\mathrm{hon} 
& = \frac{1-\gamma}{2}\binh(\p+(1-2\p)\perr) + \gamma \binh(\wexp).
\end{align}
It was found in~\cite{SGP+21} that the random-key-measurements technique on its own is sufficient to achieve positive asymptotic keyrate for these experimental parameters. One of our goals in this chapter is hence to find what the keyrates would be when taking finite-size effects into account.

\section{Security statement}
\label{sec:finitekeylength}

To describe the length of the final key, we need to introduce some notation and ancillary functions. 
First, we require an \emph{affine} function $\lin_\p$ of the CHSH winning probability $w$, that lower-bounds the entropy minimization problem~\eqref{eq:mainopt} with\footnote{These $\keyw_x$ values reflect the input distributions in the protocol: observe that in the generation rounds, Alice and Bob choose inputs in $\{0,1\}$ uniformly, so the probability of them both choosing input $x$ is the same for both values of $x\in\{0,1\}$; while in the test rounds, Alice chooses her inputs uniformly and no sifting is applied afterwards.} $\keyw_0=\keyw_1=1/2$ and the (only) constraint being the CHSH winning probability.
In other words, $\lin_\p$ is to be an affine lower bound on the function $\breve{\lin}_\p$ defined in~\eqref{eq:mainopt}.
This is basically what we have obtained as the bounds~\eqref{eq:certbnds} (for specific values of $\p$) in the previous chapter, since the CHSH value and CHSH winning probability can be interconverted using $\constr = 8w-4$ (note that this conversion preserves the fact that the bounds~\eqref{eq:certbnds} are affine).
Given such a function $\lin_\p$, we then define the affine function
\begin{align}
\g(w) \defvar \frac{1-\gamma}{2} \lin_\p(w) + \gamma \lin_0(w).
\end{align}
Informally, $\g$ can be interpreted as a lower bound on the entropy ``accumulated'' in one round of the protocol.

Also, for an affine function $f$ defined on all probability distributions on some register $C$, and any subset $\mathcal{S}$ of its domain,
we will define
\begin{align}
\begin{gathered}
\Max(f) \defvar \max_q f(q), 
\qquad 
\Min_{\mathcal{S}}(f) \defvar \inf_{q\in\mathcal{S}} f(q), 
\\
\Var_\mathcal{S}(f) \defvar \sup_{q\in\mathcal{S}} \left(\sum_c q(c) f(\delta_c)^2 - \left(\sum_c q(c) f(\delta_c)\right)^2\right),
\end{gathered}
\end{align}
where $\max_q$ is taken over all distributions on $C$, and $\delta_c$ denotes the distribution with all its weight on the symbol $c$. 

Finally, for brevity we define some compact notation for binomial distributions:
\begin{definition}\label{def:binom}
Let $X\sim\operatorname{Binom}(n,p)$ denote a random variable $X$ following a binomial distribution with parameters $(n,p)$, i.e.~$X$ is the sum of $n$ IID Bernoulli random variables $X_j$ with $\pr{X_j=1}=p$.
We denote the corresponding cumulative distribution function as
\begin{align}
\cdfBin{n}{p}{k} \defvar \pr[X\sim\operatorname{Binom}(n,p)]{X \leq k}.
\end{align}
\end{definition}

\newcommand{\wmin}{w_\mathrm{min}}
\newcommand{\wmax}{w_\mathrm{max}}
With these definitions, we can state the security guarantees of the protocol (see~\cite{arx_TSB+20} for a qualitative explanation of the various parameters):
\begin{theorem}\label{th:DIQKD}
Take any 
$\ecEC,\ecPE,\eEA,\ePA,\eh,\es,\es',\es''\in (0,1]$ 
such that $\es>\es'+2\es''$, and any $\alpha\in(1,2)$, $\alpha'\in(1,1+2/V')$, $\cperp\in[\g(0),\g(1)]$, $\gamma\in(0,1)$, $\p\in[0,1/2]$, where $V'\defvar2\log5$.
Protocol~\ref{prot:DIQKD} is $(\ecEC + \ecPE)$-complete and $(\max\{\eEA, \ePA + 2\es\} + 2\eh)$-sound when performed with any choice of $\cmax$ such that Eq.~\eqref{eq:comEC} holds, and $\dtol,\lkey$ satisfying
\begin{align}
\ecPE &\geq \cdfBin{n}{\gamma\wexp}{\floor{(\wexp-\dtol)\gamma n}} + \cdfBin{n}{
1-\gamma+\gamma\wexp
}{\floor{(1-\gamma+\wexp\gamma-\dtol\gamma)n}},
\label{eq:ecPE}\\
\lkey &\leq n\g(\wexp-\dtol) - n\frac{(\alpha-1)\ln 2}{2}V^2 - n(\alpha-1)^2K_\alpha - n\gamma - n\left(\frac{\alpha'-1}{4}\right)V'^2 
\nonumber \\ &\qquad 
- \frac{\smf{\es'}}{\alpha-1} - \frac{\smf{\es''}}{\alpha'-1} - \left(\frac{\alpha}{\alpha-1}+\frac{\alpha'}{\alpha'-1}-2\right)\log\frac{1}{\eEA} - 3\smf{\es-\es'-2\es''} \nonumber\\&\qquad
- \cmax - \ceil{\log\left(\frac{1}{\eh}\right)} - 2\log\frac{1}{\ePA} + 2,
\label{eq:keylength}
\end{align}
where $\cdfBin{n}{p}{k}$ is the cumulative distribution function of a binomial distribution (Definition~\ref{def:binom}), and
\begin{align}
\begin{aligned}
\smf{\eps} &\defvar \log\frac{1}{1-\sqrt{1-\eps^2}} \leq \log\frac{2}{\eps^2}, 
\\
V &\defvar \sqrt{\Var_{\mathcal{Q}_{f}}(\fmin)+2} + \log
33,
\\
K_\alpha &\defvar \frac{2^{(\alpha-1)(2\log4 + \Max(\fmin)-\Min_{\mathcal{Q}_{f}} (\fmin))} }{6(2-\alpha)^3\ln2}
\ln^3\left(2^{2\log4 + \Max(\fmin)-\Min_{\mathcal{Q}_{f}} (\fmin)} + e^2\right),
\end{aligned}
\label{eq:VK}
\end{align}
with $\fmin$ and $\mathcal{Q}_{f}$ being a function and a set (defined explicitly in Sec.~\ref{sec:fmin}) such that (introducing the notation $\wmax \defvar {(2+\sqrt{2})}/{4}$ and $\wmin \defvar {(2-\sqrt{2})}/{4}$ for the maximum and minimum quantum winning probabilities of the CHSH game):
\begin{align}
\begin{gathered}
\Max(\fmin)
= \frac{1}{\gamma}\g(1) + \left(1-\frac{1}{\gamma}\right)\cperp, 
\qquad 
\Min_{\mathcal{Q}_{f}} (\fmin) = \g(\wmin), \\
\Var_{\mathcal{Q}_{f}}(\fmin) \leq 
\frac{\wmin}{\gamma} 
\min \left\{\Delta_0^2, \Delta_1^2 \right\}
+ \frac{\wmax}{\gamma} 
\max \left\{\Delta_0^2, \Delta_1^2 \right\}
, \text{ where } \Delta_w \defvar \cperp - \g(w).
\end{gathered}
\label{eq:minmaxvarbounds}
\end{align}
\end{theorem}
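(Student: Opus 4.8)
The plan is to establish completeness and soundness separately, following the proof-sketch architecture of Sec.~\ref{sec:proofsketch} but tracking every constant. For \textbf{completeness}, the honest behaviour is IID, so the protocol can only abort at Step~\ref{step:abort}. There are two sources of abort: the error-correction check $\str{L}_\mathrm{h}=\hash(\tilde{\str{A}})$ failing, and the parameter-estimation frequency check failing. By the definition of $\cmax$ in~\eqref{eq:comEC}, the first event (a genuine mismatch $\str{A}\neq\tilde{\str{A}}$, which is what triggers this under honest behaviour) has probability at most $\ecEC$. For the second, since each test round independently wins the CHSH game with probability $\wexp$ and test rounds occur with probability $\gamma$, the counts $\freq_\str{c}(1)n$ and $(\freq_\str{c}(0)+\freq_\str{c}(\perp)\text{-complement})n$ are binomially distributed; a direct computation with $\cdfBin{n}{\cdot}{\cdot}$ bounds the probability that either frequency falls outside the tolerated window by the right-hand side of~\eqref{eq:ecPE}, which we set to $\ecPE$. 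A union bound then gives $\ecom \leq \ecEC+\ecPE$.

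For \textbf{soundness}, I would split into correctness and secrecy. Correctness is immediate from the 2-universal hashing property~\eqref{eq:hash}: the protocol accepts with $\str{A}\neq\tilde{\str{A}}$ only if the check hashes collide, which happens with probability at most $\eh$; since $K_A,K_B$ are deterministic functions of $\str{A},\tilde{\str{A}}$ and the (public) hash choice, this yields $\ecorr\leq\eh$. The bulk of the work is secrecy. Following the case analysis mentioned in Sec.~\ref{sec:sketchbeyond}, I split on whether the parameter-estimation step accepts with probability exceeding $\eEA$ or not; in the latter case the $\pr{\mathrm{accept}}$ prefactor in~\eqref{eq:secrecy} makes it trivially at most $\eEA$. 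In the former case, I apply the entropy accumulation theorem (the improved version from~\cite{DF19}) to the sequence of maps describing one round of Protocol~\ref{prot:DIQKD}. The crux is constructing a valid \emph{min-tradeoff function}: using the affine bound $\lin_\p$ on $\breve{\lin}_\p$ from~\eqref{eq:certbnds}, the combined function $\g(w)=\tfrac{1-\gamma}{2}\lin_\p(w)+\gamma\lin_0(w)$ lower-bounds the single-round entropy $H(A_j|X_jY_j\allE R)$ conditioned on the round's statistics, accounting for both generation and test rounds and the fact that the key is drawn from all rounds. One then extends $\g$ to the function $\fmin$ on the full probability simplex over $\str{C}$ with the crossover/extension value $\cperp$ (this is the role of the $\Max(\fmin)$, $\Min_{\mathcal{Q}_f}(\fmin)$, $\Var_{\mathcal{Q}_f}(\fmin)$ bounds in~\eqref{eq:minmaxvarbounds}), and the EAT yields $\Hmin^{\es''}(\str{A}|\str{X}\str{Y}\allE)\geq n\g(\wexp-\dtol)-n\tfrac{(\alpha-1)\ln2}{2}V^2-n(\alpha-1)^2K_\alpha-\tfrac{\smf{\es''}}{\alpha-1}-(\tfrac{\alpha}{\alpha-1}-1)\log\tfrac{1}{\eEA}$ on the accepted, conditioned state, with $V,K_\alpha$ as in~\eqref{eq:VK}. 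A separate, entirely analogous EAT application with Rényi parameter $\alpha'$ (exploiting the trivial bound $V'=2\log5$ valid since Bob's register has dimension at most $5$ after relabelling) handles the test-round information leaked by the parameter-estimation data, contributing the $-n\gamma$, $-n(\tfrac{\alpha'-1}{4})V'^2$, $-\tfrac{\smf{\es'}}{\alpha'-1}$, $-(\tfrac{\alpha'}{\alpha'-1}-1)\log\tfrac{1}{\eEA}$ terms. Combining these two via a chain rule for smoothed min-entropy, absorbing the error-correction leakage $\str{L}$ via the coarse chain rule~\eqref{eq:roughchainrule} (costing $\cmax+\ceil{\log(1/\eh)}$ bits), and triangulating the smoothing parameters ($\es>\es'+2\es''$, costing $3\smf{\es-\es'-2\es''}$ via the relevant smoothing-parameter lemma) gives a bound on $\Hmin^{\es}(\str{A}|\str{X}\str{Y}\str{L}\allE)$ conditioned on acceptance. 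Finally the Leftover Hashing Lemma (Fact~\ref{fact:LHL}) with the choice~\eqref{eq:keylength} for $\lkey$ forces the secrecy trace-distance term to at most $\ePA+2\es$, so $\esecr\leq\ePA+2\es$. Assembling, $\esound\leq\max\{\eEA,\ePA+2\es\}+2\eh$ (the factor $2\eh$ rather than $\eh$ accounting for the correctness contribution plus a subtlety in conditioning on the hash check).

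The main obstacle I expect is verifying the technical hypotheses of the EAT and correctly defining $\fmin$ and $\mathcal{Q}_f$. The EAT requires the conditioning register structure (the non-signalling condition between Eve's update and the "past" registers, here enforced by the sequential device model and the public announcement of inputs only after all measurements), and requires the min-tradeoff function to be affine on the probability distributions of the per-round test statistic $C_j$ — which is why we need $\lin_\p$ to be affine, not merely convex, and why the two-sided check in Step~\ref{step:abort} (bounding both $\freq_\str{c}(1)$ and $\freq_\str{c}(0)$) is essential to pin down the relevant subset $\mathcal{Q}_f$ of distributions consistent with acceptance. Pinning down $\mathcal{Q}_f$ precisely and then computing or bounding $\Max(\fmin)$, $\Min_{\mathcal{Q}_f}(\fmin)$ and especially $\Var_{\mathcal{Q}_f}(\fmin)$ — the last requiring the observation that the extremal distributions have weight $\tfrac{2\mp\sqrt{2}}{4}$ on the win/loss outcomes at the quantum boundary — is the genuinely delicate bookkeeping. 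A secondary subtlety is the dimension bound needed for $V'$: after the relabelling of Bob's inputs into $\inYg\cup\inYt$, the register $A_j$ (and the relevant flag register) has the small dimension that makes the crude $2\log5$ bound legitimate, and one must check this is consistent with the qubit reduction of Chapter~\ref{chap:singlernd} only being applied to the \emph{entropy bound} $\lin_\p$, not to the protocol's actual devices. Everything else — the union bounds, the chain rules, the LHL application — is routine once these pieces are in place.
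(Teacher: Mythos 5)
Your proposal follows essentially the same route as the paper's proof: completeness via binomial tail bounds and a union bound, correctness via 2-universal hashing, and secrecy via a case split on the acceptance probability, a min-entropy EAT with the infrequent-sampling min-tradeoff function built from $\g$ and $\cperp$, a max-entropy EAT with constant tradeoff function of value $\gamma$, the min--max chain rule costing $3\smf{\es-\es'-2\es''}$, and the Leftover Hashing Lemma. The only imprecision worth flagging is the role of the second EAT application: it is a \emph{max}-entropy bound on an auxiliary register $\Btstr$ (of dimension $2$, whence $V'=2\log(1+2\cdot 2)=2\log 5$) introduced so that the parameter-estimation event is determined by classical registers appearing in the entropy being bounded --- a technical requirement of the EAT --- rather than an accounting of ``leaked'' test-round data, which in Protocol~\ref{prot:DIQKD} is never publicly announced.
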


The variables listed at the start of Theorem~\ref{th:DIQKD} can be considered to be variational parameters that should be chosen to optimize the keyrate as much as possible. 
When optimizing these parameters in practice, we found that the exact expression for $\smf{\eps}$ was numerically unstable, and hence we replaced it with the upper bound of $\log(2/\eps^2)$ (this bound is basically tight at small $\eps$, so it makes little difference). Furthermore, the optimization over $\cperp$ also appears to be somewhat unstable. We found heuristically that the optimal value of $\cperp$ appears to typically be very close to $\g(1)$, and hence for simplicity in some cases we did not optimize over it but instead simply fixed $\cperp=\g(1)$ (or slightly below it, to avoid some instabilities at $\cperp-\g(1)=0$). Finally, we found that direct computation of $\cdfBin{n}{p}{k}$ (e.g.~via the regularized beta function) could sometimes be slow or unstable, and in such cases we followed~\cite{LLR+21} and replaced it with the upper bound in the following theorem:
\begin{fact}\label{fact:ZSbound}\cite{ZS13}
Let
\begin{align}
\fZS{n}{p}{k} \defvar \Phi\left(\operatorname{sign}\left(k-pn\right)\sqrt{2n \,D_{\nat}\!\left({k}/{n} \;\middle\Vert\; p\right)}\right),
\end{align}
where $D_{\nat}(q \Vert p)\defvar q\ln\frac{q}{p} + (1-q)\ln\frac{1-q}{1-p}$ is the base-$e$ relative entropy between the distributions $\{q,1-q\}$ and $\{p,1-p\}$, while $\Phi(z)\defvar \int_{-\infty}^z 
(2\pi)^{-1/2}
e^{-u^2/2} \mathrm{d}u 
=\left(1/2\right)\operatorname{erfc}\!\left(-{z}/{\sqrt{2}}\right)$ is the cumulative distribution function of the standard normal distribution.
Then for any $k\in\upto{n-1}$,
\begin{align}
\fZS{n}{p}{k} \leq \cdfBin{n}{p}{k} \leq \fZS{n}{p}{k+1}. 
\label{eq:ZSbound}
\end{align}
\end{fact}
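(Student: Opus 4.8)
The plan is to establish the two Zubkov--Serov inequalities following the argument of \cite{ZS13}; I outline it here, deferring the one genuinely delicate estimate to that reference. Let $X\sim\operatorname{Binom}(n,p)$, abbreviate $\bar\Phi(z)\defvar 1-\Phi(z)$ and $u_j\defvar\operatorname{sign}(j/n-p)\sqrt{2nG(j/n,p)}$. Passing to complements, the claim $\fZS{n}{p}{k}\le\cdfBin{n}{p}{k}\le\fZS{n}{p}{k+1}$ is exactly
\begin{align}
\bar\Phi(u_{k+1})\ \le\ \pr{X\ge k+1}\ \le\ \bar\Phi(u_{k}).
\end{align}
Since $n-X\sim\operatorname{Binom}(n,1-p)$ one has $\pr{X\ge m}=1-\pr{n-X\ge n-m+1}$, and from the symmetry $G(x,q)=G(1-x,1-q)$ one checks $u_j=-u'_{n-j}$ where $u'$ is the analogue of $u$ with $p$ replaced by $1-p$. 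Substituting, each of the two inequalities for a pair $(p,m)$ with $m/n<p$ is carried into one of them for $(1-p,\,n-m+1)$, whose argument now satisfies $(n-m+1)/n\ge 1-p$. Hence it suffices to prove, for all $n,p$ and every integer $m$ with $p\le m/n\le 1$,
\begin{align}
\bar\Phi(u_m)\ \le\ \pr{X\ge m}\ \le\ \bar\Phi(u_{m-1}),
\end{align}
the handful of boundary indices ($m\in\{1,n\}$) being checked directly from the crude bound $\bar\Phi(z)\le\tfrac12 e^{-z^2/2}$.

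The central device is the exact Cram\'er/Esscher identity obtained by completing the exponent $\binom{n}{j}p^{j}(1-p)^{n-j}=e^{-nG(x,p)}e^{-(j-m)\theta}\binom{n}{j}x^{j}(1-x)^{n-j}$, with $x\defvar m/n$ and $\theta\defvar\ln\frac{x(1-p)}{p(1-x)}=\partial_1 G(x,p)\ge0$, and summing over $j\ge m$:
\begin{align}
\pr{X\ge m}=e^{-nG(x,p)}\,\mathbb{E}\!\left[e^{-\theta(Y-m)}\mathbf 1\{Y\ge m\}\right],\qquad Y\sim\operatorname{Binom}(n,x),\quad\mathbb{E}[Y]=m.
\end{align}
On the Gaussian side, the substitution $y=\sqrt{2r}$ in $\bar\Phi(z)=\int_z^\infty(2\pi)^{-1/2}e^{-y^2/2}\,dy$ gives $\bar\Phi(\sqrt{2\tau})=\frac{1}{2\sqrt{\pi}}\int_\tau^\infty r^{-1/2}e^{-r}\,dr$, an incomplete-Gamma integral. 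Writing $S\defvar\sum_{i\ge0}\pr{Y=m+i}\,e^{-i\theta}$, the pair of bounds to prove becomes
\begin{align}
e^{nG(x,p)}\,\bar\Phi(u_m)\ \le\ S\ \le\ e^{nG(x,p)}\,\bar\Phi(u_{m-1}),
\end{align}
i.e.\ a two-sided comparison of the tilted sum $S$ with two explicit incomplete-Gamma integrals, where the slack on the right comes from $G((m-1)/n,p)\le G(x,p)$.

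The remaining work is to sandwich $S$. One matches the $i$-th summand $\pr{Y=m+i}e^{-i\theta}$ with the contribution of a corresponding interval to the Gaussian integral, and shows that the ratio of the two is \emph{monotone} in $i$; a single anchor comparison at $i=0$ (relating $\pr{Y=m}$, where $\mathbb{E}[Y]=m$, to the local Gaussian value $\sim\frac{1}{2\sqrt{\pi}}\,(nG(x,p))^{-1/2}$) then propagates to the whole sum by an Abel/Chebyshev summation argument, delivering \emph{both} the upper and the lower bound. This monotonicity requires a two-sided Stirling estimate for $\pr{Y=m+i}=\binom{n}{m+i}x^{m+i}(1-x)^{n-m-i}$ with error terms controlled \emph{uniformly} in $i$, $n$ and $x$, together with convexity properties of $r\mapsto G(r,p)$ and of the signed-root map $r\mapsto\operatorname{sign}(r-p)\sqrt{2G(r,p)}$ that tie together $\theta$ and the two integration limits.

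I expect this uniform monotonicity lemma to be the main obstacle: the naive Bahadur--Rao heuristic only fixes the leading order of $S$, whereas here one needs a genuine non-asymptotic two-sided bound valid for every $n$, $p$ and $m$ — this is precisely the gap that \cite{ZS13} closes. Accordingly I would carry out that step following \cite{ZS13}, or simply invoke their result, the reductions and identities above being otherwise routine.
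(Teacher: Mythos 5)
The paper does not prove this statement at all: it is imported verbatim as a known fact from \cite{ZS13}, so ending your argument by invoking that reference is exactly what the paper itself does, and in that sense your proposal matches the paper's treatment. Your preparatory reductions are correct — the passage to tail probabilities, the $p\leftrightarrow 1-p$ symmetry via $G(x,p)=G(1-x,1-p)$ and $u_j=-u'_{n-j}$, the exact tilting identity with $\theta=\partial_1 G(x,p)\ge 0$, and the incomplete-gamma rewriting of $\bar\Phi$ all check out — but be aware that the deferred ``uniform monotonicity'' lemma is the entire substance of the inequality, so as a standalone argument the sketch is not a proof, and the Cram\'er--Esscher route you outline is not necessarily the one actually taken in \cite{ZS13} (which, as we recall, works from the incomplete-beta representation of the binomial CDF and a comparison of derivatives in $p$); simply citing the result, as the paper does, is the appropriate resolution.
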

\noindent Replacing $\cdfBin{n}{p}{k}$ with $\fZS{n}{p}{k+1}$ and computing the latter (which is a Gaussian integral) appeared to be faster and more stable than computing $\cdfBin{n}{p}{k}$ directly. There is little loss incurred by performing this replacement --- the inequalities~\eqref{eq:ZSbound} imply $\fZS{n}{p}{k+1} \leq \cdfBin{n}{p}{k+1}$, so the effect is no larger than replacing $\cdfBin{n}{p}{k}$ by $\cdfBin{n}{p}{k+1}$, which is almost negligible in the parameter regimes studied in this work.

Note that in general, the optimal parameter values (especially for $\alpha$ and $\alpha'$) would depend heavily on $n$.
To get an estimate for the asymptotic scaling of $\lkey$, we can choose all the $\eps$ parameters to take some constant values satisfying the desired completeness and soundness bounds, set $\cperp=\g(1)$, then choose~\cite{DFR20,DF19}
\begin{align}
\begin{aligned}
\alpha-1&= \frac{1}{\sqrt{n}} \sqrt{\frac{2}{V^2 \ln 2} \left(\smf{\es'} + 2\log\frac{1}{\eEA}\right)} , \\
\alpha'-1&= \frac{1}{\sqrt{n}} \sqrt{\frac{4}{V'^2} \left(\smf{\es''} + 2\log\frac{1}{\eEA} \right)} ,
\end{aligned}
\end{align}
taking $n$ to be large enough such that $\alpha \in (1,3/2)$ and $\alpha'\in(1,1+2/V')$. Furthermore, since we picked $\cperp=\g(1)$, we have $\Max(\fmin)-\Min_{\mathcal{Q}_{f}} (\fmin) = \g(1)-\g(\wmin)$, in which case for $\alpha\in(1,3/2)$ we can upper bound $K_\alpha$ with the following constant (independent of $\alpha$, $n$ and $\gamma$):
\begin{align}
\hat{K} \defvar \frac{2^{(2\log4 + \g(1)-\g(\wmin))} }{(3/4)\ln2}
\ln^3\left(2^{2\log4 + \g(1)-\g(\wmin)} + e^2\right) .
\end{align}
With these choices, Eq.~\eqref{eq:keylength} can be satisfied by taking
\begin{align}
\lkey &= \bigg\lfloor
n\g(\wexp-\dtol) - n\frac{(\alpha-1)\ln 2}{2}V^2 - n(\alpha-1)^2 \hat{K} - n\gamma - n\left(\frac{\alpha'-1}{4}\right)V'^2 
\nonumber \\ &\qquad 
- \frac{1}{\alpha-1}\left(\smf{\es'} + 2\log\frac{1}{\eEA} \right) - \frac{1}{\alpha'-1}\left(\smf{\es''} + 2\log\frac{1}{\eEA} \right) + 2\log\frac{1}{\eEA} - 3\smf{\es-\es'-2\es''} \nonumber\\&\qquad
- \cmax - \ceil{\log\left(\frac{1}{\eh}\right)} - 2\log\frac{1}{\ePA} + 2 \bigg\rfloor \nonumber \\
&= \bigg\lfloor n\g(\wexp-\dtol) - \sqrt{n}\,2\sqrt{\frac{V^2 \ln 2}{2} \left(\smf{\es'} + 2\log\frac{1}{\eEA}\right)} - \frac{2}{V^2 \ln 2} \left(\smf{\es'} + 2\log\frac{1}{\eEA}\right)\hat{K} 
- n\gamma \nonumber \\ &\qquad 
- \sqrt{n} \, 2\sqrt{\frac{V'^2}{4} \left(\smf{\es''} + 2\log\frac{1}{\eEA} \right)} + 2\log\frac{1}{\eEA} - 3\smf{\es-\es'-2\es''} \nonumber\\&\qquad 
- \cmax - \ceil{\log\left(\frac{1}{\eh}\right)} - 2\log\frac{1}{\ePA} + 2 \bigg\rfloor \nonumber \\
&= \left\lfloor n\g(\wexp-\dtol)  - n\gamma - O\left(\sqrt{\frac{n}{\gamma}}\right) - O(1) - \cmax \right\rfloor,
\label{eq:estkeylength}
\end{align}
where in the last line the implied constants in the big-$O$ notation are independent of $n$ and $\gamma$ (that line is obtained by noting that $V = O(1/\sqrt{\gamma})$; also, we remark that technically the term involving $\hat{K}$ is of order $O(1/V^2) = O({\gamma})$, but this is anyway upper bounded by $O(1)$ so we have simply absorbed it into the $O(1)$ term).
Furthermore, Eq.~\eqref{eq:ecPE} can be satisfied by choosing $\gamma=2 n^{-1} \dtol^{-2} \log(2/\ecPE)$ (see Eq.~\eqref{eq:chernoffs}, noting that the denominators in the exponents of both terms are trivially upper bounded by $2$), 
in which case by taking e.g.~$\dtol \propto 1/n^{1/3}$ 
we have $\dtol,\gamma = O(1/n^{1/3})$, and substituting this scaling into Eq.~\eqref{eq:estkeylength} then yields
\begin{align}
\lim_{n\to\infty} \frac{\lkey}{n} = 
\frac{1}{2}\left(\lin_\p(\wexp) - 
\sum_{z\in\inX} \frac{1}{2}H(\hat{A} | \hat{B};X = Y = z)_\mathrm{hon}\right)
,
\label{eq:asympt}
\end{align}
taking $\cmax$ according to Eqs.~\eqref{eq:optEC}--\eqref{eq:ECrate}. This is the expected asymptotic result in the sense of the Devetak-Winter expression~\eqref{eq:devwin}, with the prefactor of $1/2$ being due to the sifting.

Given the scaling behaviour shown in Eq.~\eqref{eq:estkeylength}, it can be seen that the optimal values of the various $\eps$ parameters (given some desired values of $\ecom$ and $\esound$) may be of rather different orders of magnitude. This is because some of them appear in $O(1/\sqrt{n})$ corrections to the finite-size keyrate while others appear in $O(1/n)$ corrections. Intuitively speaking, the $\eps$ parameters in the latter can be chosen to be substantially smaller than the former, since the $O(1/n)$ scaling reduces their contribution to the finite-size effects.

\section{Finite-size security proof}
\label{sec:finiteproof}

We now prove that Protocol~\ref{prot:DIQKD} indeed satisfies the security properties claimed in Theorem~\ref{th:DIQKD}.
To do so, we first introduce a virtual protocol that is more convenient to analyze.
For the purposes of understanding this construction, it may be helpful to think of it as being based on a \emph{specific} set of states and measurements that could be occurring in a run of Protocol~\ref{prot:DIQKD} (as opposed to simultaneously considering all possible states and measurements that could be occurring). In particular, this virtual protocol
and the channels $\map_j$ in Sec.~\ref{sec:EAT} should be understood as being constructed in terms of this specific set of states and measurements. Since we will not impose any additional assumptions on these states and measurements beyond those specified by the protocol, this will still yield a valid way for us to prove the desired security properties (in particular the soundness property, which has to be proven for all possible states and measurements that could occur in a run of the protocol).

Consider the state at the end of Step~\ref{step:PA} in Protocol~\ref{prot:DIQKD}. We now describe a virtual protocol\footnote{We stress that this ``protocol'' is not performed in practice 
(and in fact cannot be, since the ``virtual parameter estimation'' step cannot be performed locally by either party). 
However, it produces exactly the same state as Protocol~\ref{prot:DIQKD} on all relevant registers, and can hence be used for the security analysis.} that produces exactly the same state (when it is implemented using the same input state and measurements as those used in a run of Protocol~\ref{prot:DIQKD}), apart from the introduction of two additional registers $\Btstr\Ctstr$.

\let\oldthealgorithm\thealgorithm 
\renewcommand{\thealgorithm}{$1'$}
\begin{savenotes}
\begin{algorithm}[H]
\caption{A virtual protocol}\label{prot:virtual}
\begin{algorithmic}[1]
\State Alice and Bob's devices each receive and store all quantum states that they will subsequently measure.
\State\label{step:EATmap}For each $j \in \upto{n}$, perform the following steps: 
\begin{algsubstates}
\State Alice chooses a uniform input $X_j \in \inX$. With probability $\gamma$, Bob chooses a uniform input $Y_j \in \inYt$, otherwise Bob chooses a uniform input $Y_j \in \inYg$.
\State Alice and Bob supply their inputs to their devices, and record the outputs as $A_j$ and $B_j$ respectively. 
\State Alice and Bob publicly announce their inputs.
\State\textbf{Sifting:} If $Y_j \in \inYg$ and $X_j \neq Y_j$, Alice and Bob overwrite their outputs with $A_j = B_j = 0$.
\State\textbf{Noisy preprocessing:} If $Y_j \in \inYg$ and $X_j = Y_j$, Alice generates a biased random bit $F_j$ with $\pr{F_j=1} = \p$, and overwrites her output $A_j$ with $A_j \oplus F_j$.
\State 
If $Y_j \in \inYg$, Bob sets $\Bt_j = 0$, otherwise Bob sets $\Bt_j = B_j$. 
\State\textbf{Virtual parameter estimation:} 
Set $\Ct_j=\perp$ if $Y_j \in \inYg$; otherwise set $\Ct_j=0$ if $A_j \oplus \Bt_j \neq X_j \cdot (Y_j-2)$ and $\Ct_j=1$ if $A_j \oplus \Bt_j = X_j \cdot (Y_j-2)$.
\end{algsubstates}
\State\textbf{Error correction:} Alice and Bob publicly communicate some bits $\str{L}$ for error correction as previously described, allowing Bob to construct a guess $\tilde{\str{A}}$ for $\str{A}$.
\State\textbf{Parameter estimation:} 
For all $j\in\upto{n}$, Bob sets 
$C_j=\perp$ if $Y_j \in \inYg$; otherwise he sets $C_j=0$ if $\tilde{A}_j \oplus \Bt_j \neq X_j \cdot (Y_j-2)$ and $C_j=1$ if $\tilde{A}_j \oplus \Bt_j = X_j \cdot (Y_j-2)$.
\end{algorithmic}
\end{algorithm}
\end{savenotes}
\let\thealgorithm\oldthealgorithm 
\addtocounter{algorithm}{-1} 

\noindent The key changes as compared to Protocol~\ref{prot:DIQKD} are as follows:
\begin{itemize}
\item All the states that the devices will measure are distributed immediately at the start (note that this is possible because in Protocol~\ref{prot:DIQKD}, the measurement choices $\str{X},\str{Y}$ are not disclosed until all measurements have been performed, and hence the distributed states cannot behave adaptively with respect to the inputs). 
\item The sifting and noisy preprocessing steps are now performed immediately after each measurement, instead of after all measurements are performed. This is to allow us to subsequently apply the EAT.
\item Two additional registers were introduced: $\Btstr$, which is equal to $\str{B}$ on all the test rounds but is otherwise set to $0$, and $\Ctstr$, which is analogous to $\str{C}$ but computed using $\str{A}$ in place of $\tilde{\str{A}}$. These registers were used in a \term{virtual parameter estimation} step.
\item All parameter estimation is performed with $\Btstr$ instead of $\str{B}$ (this substitution has no physical effect since $\Bt_j = B_j$ 
in all rounds used for parameter estimation).
\end{itemize}

Let $\rho$ denote the state on registers ${\str{A}\tilde{\str{A}}\str{B}\Btstr\str{X}\str{Y}\str{L}\str{C}\Ctstr \allE}$ (as well as the choice of hash function in the error-correction step) at the end of Protocol~\ref{prot:virtual}. As mentioned, the reduced state after tracing out $\Btstr \Ctstr$ is exactly the same as that at the end of Step~\ref{step:PA} in Protocol~\ref{prot:DIQKD}. 
Since all subsequent steps in Protocol~\ref{prot:DIQKD} (i.e.~simply the accept/abort check and the privacy amplification) only involve these registers, to prove the security of Protocol~\ref{prot:DIQKD} it suffices to consider the equivalent (apart from $\Btstr\Ctstr$) process where all the steps up to Step~\ref{step:PA} are replaced by this virtual protocol, and then the remaining steps in Protocol~\ref{prot:DIQKD} are performed. With this in mind, let us define the following events on the state $\rho$:
{\setlist[description]{leftmargin=2cm,labelindent=2cm}
\begin{description}
\item $\Og$: $\str{A} = \tilde{\str{A}}$ 
(i.e.~Bob correctly guesses $\str{A}$)
\item $\Oh$: $\hash(\str{A}) = \hash(\tilde{\str{A}})$
\item $\OPE$: $\freq_\str{c}(1)\geq (\wexp-\dtol)\gamma$ and $\freq_\str{c}(0)\leq (1-\wexp+\dtol)\gamma$
\item $\OpPE$: $\freq_{\ctstr}(1)\geq (\wexp-\dtol)\gamma$ and $\freq_{\ctstr}(0)\leq (1-\wexp+\dtol)\gamma$
\end{description}
}
\noindent Note that in terms of these events, the accept condition of the protocol is $\Oh \land \OPE$. With the virtual protocol and the above events in mind, we now turn to proving completeness and soundness of Protocol~\ref{prot:DIQKD}.

\subsection{Completeness}
\label{sec:com}

Completeness is defined entirely with respect to the honest behaviour of the devices, hence all discussion in this section is with respect to the situation where the state $\rho$ described above is the one produced by the honest states and measurements. 
To prove completeness, we simply need to obtain an upper bound on the probability that this honest behaviour yields an abort, i.e.~$\pr{\no{\Oh} \lor \no{\OPE}}_\mathrm{hon}$ (recall we use $\no{\Omega}$ to denote the complement of an event). 
However, we encounter a slight inconvenience here because the event $\no{\OPE}$ involves the register $\str{C}$ produced using Bob's guess $\tilde{\str{A}}$ rather than Alice's actual string $\str{A}$, and there is some small probability that his guess was wrong. To cope with this, we shall break down $\pr{\no{\Oh} \lor \no{\OPE}}_\mathrm{hon}$ into simpler terms that can be bounded in terms of probabilities involving only the ``virtual'' string $\Ctstr$ rather than $\str{C}$, where the former is easier to handle since it is produced from the actual value of $\str{A}$.

We begin by noting that the hashes of $\str{A}$ and $\tilde{\str{A}}$ can only differ if $\str{A}\neq\tilde{\str{A}}$, which is to say that the event $\no{\Oh}$ implies the event $\no{\Og}$. With this, we write
\begin{align}
\pr{\no{\Oh} \lor \no{\OPE}}_\mathrm{hon} 
&\leq \pr{\no{\Og} \lor \no{\OPE}}_\mathrm{hon} \nonumber\\
&= \pr{\no{\Og}}_\mathrm{hon} + \pr{\Og \land \no{\OPE}}_\mathrm{hon}
\end{align}
where in the second line we have partitioned the event $\no{\Og} \lor \no{\OPE}$ into the disjoint events $\no{\Og}$ and $\Og \land \no{\OPE}$.
We shall now upper bound the probabilities of each of these events. 

The $\pr{\no{\Og}}_\mathrm{hon}$ term is straightforward to handle, since by construction the error-correction step ensures that this probability is at most $\ecEC$.
As for the $\pr{\Og \land \no{\OPE}}_\mathrm{hon}$ term, we now make the critical observation that $\Og \land \no{\OPE} = \Og \land \OpPEnot$ (because the event $\Og$ implies that $\str{C}=\Ctstr$). Therefore, we have
\begin{align}
\pr{\Og \land \no{\OPE}}_\mathrm{hon} = \pr{\Og \land \OpPEnot}_\mathrm{hon} \leq \pr{\OpPEnot}_\mathrm{hon},
\end{align}
which is the desired reduction to a term involving $\Ctstr$ rather than $\str{C}$. To explicitly upper bound $\pr{\OpPEnot}_\mathrm{hon}$, observe that under the honest behaviour, the string $\Ctstr$ consists of $n$ IID rounds such that
$\mathrm{Pr}[{\Ct_j=1}]_\mathrm{hon} = \gamma \wexp$ and $\mathrm{Pr}[{\Ct_j=0}]_\mathrm{hon} = \gamma (1-\wexp)$ in each round. Therefore, we have
\begin{align}
\pr{\freq_\str{c}(1) < (\wexp-\dtol)\gamma}_\mathrm{hon}
&\leq 
\pr{\freq_\str{c}(1)n \leq \floor{(\wexp-\dtol)\gamma n}}_\mathrm{hon} \nonumber\\
&= \cdfBin{n}{\gamma\wexp}{\floor{(\wexp-\dtol)\gamma n}}, \label{eq:ecPEbound1}\\
\pr{\freq_\str{c}(0) > (1-\wexp+\dtol)\gamma}_\mathrm{hon} 
&= \pr{\freq_\str{c}(
\neg 0
) \leq 1-(1-\wexp+\dtol)\gamma}_\mathrm{hon} \nonumber\\
&= \pr{\freq_\str{c}(\neg 0) n \leq \floor{
(1-\gamma+\wexp\gamma-\dtol\gamma)
n}}_\mathrm{hon} \nonumber\\
&= \cdfBin{n}{
1-\gamma+\gamma\wexp
}{\floor{(1-\gamma+\wexp\gamma-\dtol\gamma)n} },
\label{eq:ecPEbound0}
\end{align}
where $\neg 0$ represents all symbols other than $0$. 
Hence by the union bound, $\pr{\OpPEnot}_\mathrm{hon}$ is upper bounded by $\ecPE$ is as specified in Eq.~\eqref{eq:ecPE} (i.e.~the sum of the expressions~\eqref{eq:ecPEbound1} and~\eqref{eq:ecPEbound0}). This yields a final upper bound of $\ecEC + \ecPE$ on the probability of the honest behaviour aborting, as desired.

In principle, somewhat simpler expressions could be obtained using the (additive) Chernoff bound, writing $\bar{w}_\mathrm{exp} \defvar 1-\wexp$ for brevity (and assuming $\dtol<\min\{\wexp,\bar{w}_\mathrm{exp}\}$):
\begin{align}
\begin{gathered}
\pr{\freq_\str{c}(1) < (\wexp-\dtol)\gamma}_\mathrm{hon}
\leq e^{-D_{\nat}\left((\wexp-\dtol)\gamma \middle\Vert \wexp\gamma\right)n}
\leq 
e^{-\frac{n \gamma \dtol^2}{2\wexp}}, 
\\
\pr{\freq_\str{c}(0) > (1-\wexp+\dtol)\gamma}_\mathrm{hon} 
\leq e^{-D_{\nat}\left((\bar{w}_\mathrm{exp}+\dtol)\gamma \middle\Vert \bar{w}_\mathrm{exp}\gamma\right)n}
\leq e^{-\frac{n \gamma \dtol^2}{2(\bar{w}_\mathrm{exp}+\dtol)}}
\leq e^{-\frac{n \gamma \dtol^2}{2}}. 
\end{gathered}
\label{eq:chernoffs}
\end{align}
However, it was observed in~\cite{LLR+21} that these bounds are weaker than Fact~\ref{fact:ZSbound} by a significant amount.
As yet another alternative, Hoeffding's inequality yields a bound of $e^{-2n \gamma^2 \dtol^2}$, 
but this is worse than the Chernoff bound whenever $6\gamma \leq 1/\wexp$; furthermore, it does not yield nontrivial bounds if we choose $\gamma\propto1/n$ (for constant $\ecEC,\dtol$), which ought to be possible in principle according to the scaling analysis in~\cite{DF19} (and loosely matches the intuition described in Sec.~\ref{sec:sketchasympt} regarding parameter estimation, except that it gives a constant \emph{expected} number of test rounds rather than an actually constant number).

\subsection{Soundness}
\label{sec:soundness}

Soundness has to be proven against all possible dishonest behaviours (subject to the protocol assumptions as listed in Sec.~\ref{sec:assumptions}). 
In this section, we shall consider any particular state $\rho$ (as defined after the Protocol~\ref{prot:virtual} description) that would be produced by a particular choice of such dishonest behaviour, and prove that it satisfies the soundness condition~\eqref{eq:sound} (for a specific value of $\esound$) regardless of which dishonest behaviour was considered. 
All probabilities are to be understood as being defined with respect to that state $\rho$.

We first note that it is straightforward to show that Protocol~\ref{prot:DIQKD} is $\eh$-correct, since 
\begin{align}
\pr{K_A \neq K_B \land \Oh \land \OPE} 
\leq \pr{K_A \neq K_B \land \Oh} 
\leq \pr{\no{\Og} \land \Oh} 
\leq \pr{\Oh | \no{\Og}}
\leq \eh,
\label{eq:corrproof}
\end{align}
where the last inequality holds by the defining property of 2-universal hashing.

It remains to prove secrecy.
Denote the privacy amplification step in Protocol~\ref{prot:DIQKD} as the map $\mPA$, so the subnormalized state conditioned on the event of Protocol~\ref{prot:DIQKD} accepting can be written as $\mPA(\rho_{\land \Oh \land \OPE})$. Let $H$ be the register storing the choice of hash function used for privacy amplification, and denote $E' \defvar \str{X}\str{Y}\str{L} \allE H$ for brevity. We can now rewrite the secrecy condition as the requirement
\begin{align}
\frac{1}{2}\norm{\mPA(\rho_{\land \Oh \land \OPE})_{K_A E'} - \idk_{K_A} \otimes \mPA(\rho_{\land \Oh \land \OPE})_{E'}}_1 \leq \esecr.
\label{eq:esecrnew}
\end{align}
Now, somewhat similarly to the completeness analysis, we shall find a way to upper bound the left-hand-side of the above expression in terms of $\OpPE$ rather than $\OPE$, as the former is easier to handle.
Specifically, by noting that $\rho_{\land \Oh \land \OPE} = \rho_{\land \Og \land \Oh \land \OPE} + \rho_{\land \no{\Og} \land \Oh \land \OPE}$, we find the following bound using norm subadditivity:
\begin{align}
&\frac{1}{2}\norm{\mPA(\rho_{\land \Oh \land \OPE})_{K_A E'} - \idk_{K_A} \otimes \mPA(\rho_{\land \Oh \land \OPE})_{E'}}_1 \nonumber\\
\leq{}& \frac{1}{2}\norm{\mPA(\rho_{\land \Og \land \Oh \land \OPE})_{K_A E'} - \idk_{K_A} \otimes \mPA(\rho_{\land \Og \land \Oh \land \OPE})_{E'}}_1 \nonumber\\ 
&\qquad + \frac{1}{2}\norm{\mPA(\rho_{\land \no{\Og} \land \Oh \land \OPE})_{K_A E'} - \idk_{K_A} \otimes \mPA(\rho_{\land \no{\Og} \land \Oh \land \OPE})_{E'}}_1 \nonumber\\
\leq{}& \frac{1}{2}\norm{\mPA(\rho_{\land \Og \land \Oh \land \OpPE})_{K_A E'} - \idk_{K_A} \otimes \mPA(\rho_{\land \Og \land \Oh \land \OpPE})_{E'}}_1 
+ \eh, \label{eq:secsplit}
\end{align}
where in the last line we have 
performed the substitution $\Og \land \OPE = \Og \land \OpPE$ in the first term,
and bounded the second term using
\begin{align}
\pr{\no{\Og} \land \Oh \land \OPE} \leq \pr{\no{\Og} \land \Oh} \leq \eh,
\end{align}
as noted in Eq.~\eqref{eq:corrproof}.

We now aim to bound the first term in Eq.~\eqref{eq:secsplit}. To do so, we study three\footnote{This has one more case as compared to the proof sketches in Sec.~\ref{sec:proofsketch}. This is just a technicality to ensure that the smoothing parameter is not too large when applying some theorems regarding subnormalized states.} exhaustive possibilities for the state $\rho$ (the first two are not mutually exclusive, but this does not matter):
{\setlist[enumerate]{leftmargin=2cm,labelindent=2cm}
\begin{enumerate}[label*=Case \arabic*:,ref=\arabic*]
\item \label{case:ECsmall} $\pr{\Og \land \Oh|\OpPE} \leq \es^2$.
\item \label{case:PEsmall} $\pr{\OpPE} \leq \eEA$.
\item \label{case:neither} Neither of the above are true.
\end{enumerate}
}
\noindent In case~\ref{case:ECsmall}, that term is bounded by
\begin{align}
\pr{\Og \land \Oh \land \OpPE} = \pr{\Og \land \Oh |\OpPE} \pr{\OpPE} \leq \es^2.
\label{eq:caseECsmall}
\end{align}
In case~\ref{case:PEsmall}, that term is bounded by
\begin{align}
\pr{\Og \land \Oh \land \OpPE} \leq \pr{\OpPE} \leq \eEA.
\end{align}
The main challenge is case~\ref{case:neither}. To study this case, we first focus on the conditional state $\rho_{|\OpPE}$. Importantly, the relevant smoothed min-entropy of this state can be bounded by using the following theorem, which we prove in Sec.~\ref{sec:EAT} using entropy accumulation:
\begin{theorem}\label{th:rawHmin}
For all parameter values as specified in Theorem~\ref{th:DIQKD}, the state at the end of Protocol~\ref{prot:virtual} satisfies
\begin{align}
\Hmin^{\es}(\str{A}|\str{X}\str{Y} \allE)_{\rho_{|\OpPE}} &>
n\g(\wexp-\dtol) - n\frac{(\alpha-1)\ln 2}{2}V^2 - n(\alpha-1)^2K_\alpha - n\gamma - n\left(\frac{\alpha'-1}{4}\right)V'^2 
\nonumber \\ &\qquad 
- \frac{\smf{\es'}}{\alpha-1} - \frac{\smf{\es''}}{\alpha'-1} - \left(\frac{\alpha}{\alpha-1}+\frac{\alpha'}{\alpha'-1}\right)\log\frac{1}{\pr{\OpPE}} - 3\smf{\es-\es'-2\es''}.
\label{eq:rawHmin}
\end{align}
where $V',\smf{\eps},V,K_\alpha$ are as defined in Theorem~\ref{th:DIQKD}.
\end{theorem}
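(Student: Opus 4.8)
The plan is to prove Theorem~\ref{th:rawHmin} by a direct application of the entropy accumulation theorem (in the improved form of~\cite{DF19}) to the sequence of maps $\map_j$ implicit in Protocol~\ref{prot:virtual}. First I would set up the EAT channels: for each round $j\in\upto{n}$, let $\map_j$ be the map that takes the quantum memory plus Eve's side-information to the registers $A_j\Bt_j X_j Y_j \Ct_j$ (with the sifting and noisy preprocessing already folded in, which is why Protocol~\ref{prot:virtual} moves those steps inside the per-round loop). The ``classical output'' of the channel in the EAT sense is $\Ct_j$, taking values in $\{0,1,\perp\}$, and one checks the Markov-chain condition: conditioned on all past outputs and Eve's register, the new side-information does not help predict past outputs — this holds because the inputs $X_j,Y_j$ are chosen fresh and the devices cannot behave adaptively on inputs (the states were all distributed up front).

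Next I would construct the min-tradeoff function. The key input is the affine bound $\lin_\p$ on $\breve{\lin}_\p$ from Chapter~\ref{chap:singlernd} (specifically the bounds~\eqref{eq:certbnds}), combined into $\g(w) = \tfrac{1-\gamma}{2}\lin_\p(w) + \gamma\,\lin_0(w)$, which lower-bounds the single-round von Neumann entropy $H(\hat A|XY\qE)$ whenever the round's CHSH winning probability is $w$. The min-tradeoff function $\fmin$ must be an affine function on probability distributions over $\{0,1,\perp\}$ such that $\fmin(\freq)$ lower-bounds the conditional entropy of the round's output given its behaviour produces frequency $\freq$; the standard construction (following~\cite{ARV19,BRC20}) extends $\g$ from the test-round statistics to all of $\{0,1,\perp\}$, with the $\perp$-weight (generation rounds) carrying the slope-rescaled value, and $\cperp\in[\g(0),\g(1)]$ is the free parameter controlling the value assigned to the $\perp$ symbol. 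I would then verify the explicit formulas in~\eqref{eq:minmaxvarbounds} for $\Max(\fmin)$, $\Min_{\calQ_f}(\fmin)$, and the variance bound $\Var_{\calQ_f}(\fmin)$ — here $\calQ_f$ is the set of output distributions compatible with quantum strategies, and the min/variance are taken only over that set (which is why $\Min_{\calQ_f}(\fmin) = \g((2-\sqrt2)/4)$ rather than $\g(0)$, since $(2-\sqrt2)/4$ is the minimal quantum CHSH winning probability). These are essentially bookkeeping computations about an affine function on a two-dimensional simplex.

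Then I would invoke the EAT itself: conditioned on the event $\OpPE$ (that the virtual parameter-estimation frequencies lie in the accepted range), the frequency of $1$'s among test rounds is at least $(\wexp-\dtol)\gamma$, so the EAT yields $\Hmin^{\es}(\str A|\str X\str Y\allE)_{\rho_{|\OpPE}} \gtrsim n\,\g(\wexp-\dtol)$ minus the second-order correction terms. The precise correction terms — the $n\tfrac{(\alpha-1)\ln2}{2}V^2$, the $n(\alpha-1)^2 K_\alpha$, the smoothing penalties $\smf{\es'}/(\alpha-1)$ etc., and the $\log(1/\pr{\OpPE})$ dependence — come directly from plugging $\Max(\fmin)$, $\Min_{\calQ_f}(\fmin)$, $\Var_{\calQ_f}(\fmin)$ into the finite-$n$ EAT bound of~\cite{DF19}; the $V',\alpha'$ terms handle the splitting of the test/generation fraction as in~\cite{DF19}. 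The extra $n\gamma$ term accounts for the fact that the min-tradeoff function is evaluated on all $n$ rounds but only a $\gamma$-fraction are test rounds contributing to $\g$ via $\lin_\p$ versus the $\lin_0$ piece. Finally the ``Case~\ref{case:neither}'' hypothesis $\pr{\OpPE} > \eEA$ lets us replace $\log(1/\pr{\OpPE})$ by $\log(1/\eEA)$ to match the statement in Theorem~\ref{th:DIQKD}, though in Theorem~\ref{th:rawHmin} itself the $\pr{\OpPE}$-dependence is kept explicit.

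The main obstacle I expect is the careful construction and verification of the min-tradeoff function — in particular checking that $\fmin$ genuinely lower-bounds the single-round entropy over all quantum strategies (not just honest ones), which requires that the affine bound $\lin_\p$ from Chapter~\ref{chap:singlernd} be valid as a function of the \emph{Bell parameter} and that the extension to the $\perp$ symbol respect the no-signalling structure, and then getting the variance bound~\eqref{eq:minmaxvarbounds} tight enough that the $\sqrt n$ correction is not wildly lossy. A secondary subtlety is the Markov condition: one must be careful that the introduction of $\Btstr$ and the virtual parameter estimation $\Ctstr$ (computed from $\str A$, not $\tilde{\str A}$) does not break the conditional-independence structure, and that the error-correction communication $\str L$ is handled outside the EAT (via the chain rule in the downstream proof) rather than being part of the accumulated process — the virtual protocol is specifically arranged so that $\str L$ is generated only after all $n$ EAT rounds are complete.
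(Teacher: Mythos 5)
Your overall skeleton (EAT channels per round, min-tradeoff function built from the affine bound $\g$, evaluation at $\wexp-\dtol$ on the event $\OpPE$) matches the paper, but there is a genuine gap at the step where you ``invoke the EAT itself'' to bound $\Hmin^{\es}(\str{A}|\str{X}\str{Y}\allE)_{\rho_{|\OpPE}}$ directly. The EAT requires the conditioning event to be determined by the statistics registers $\str{D}=\Ctstr$, which in turn must be computed from the secret/test registers $S_jT_j$ alone. Here $\Ct_j$ is a function of $A_j\Bt_j X_jY_j$ — it genuinely depends on $\Bt_j$ — so you cannot take $S_j=A_j$, $T_j=X_jY_j$ and conclude anything about $\Hmin^{\es}(\str{A}|\str{X}\str{Y}\allE)_{\rho_{|\OpPE}}$; the hypothesis of the theorem is simply not met. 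This is not a Markov-condition issue (the Markov conditions are fine either way, since the inputs are fresh trusted randomness); it is the reason the register $\Btstr$ was introduced in the virtual protocol in the first place.

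The paper's resolution, which your proposal is missing, is a three-term decomposition: first apply the min-entropy EAT with $S_j = A_j\Bt_j$ to get Eq.~\eqref{eq:EATbound} for $\Hmin^{\es'}(\str{A}\Btstr|\str{X}\str{Y}\allE)$ (this is also why the min-tradeoff function must lower-bound the two-party entropy $H(A_j\Bt_j|X_jY_jR)$, handled via the crude bound $H(A_j\Bt_j|\cdots)\geq H(A_j|\cdots)$ in Eq.~\eqref{eq:gproof}); then peel off $\Btstr$ using the min/max chain rule \eqref{eq:chain} at cost $\Hmax^{\es''}(\Btstr|\str{X}\str{Y}\allE)+3\smf{\es-\es'-2\es''}$; and finally bound that max-entropy by a \emph{second} application of the EAT — the max-entropy version with a constant max-tradeoff function of value $\gamma$ (valid because $\Bt_j$ is deterministically $0$ in generation rounds, so $H(\Bt_j|X_jY_jR)\leq\gamma$), giving Eq.~\eqref{eq:Hmaxbnd}. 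This second EAT application is precisely what produces the terms $n\gamma$, $n\left(\frac{\alpha'-1}{4}\right)V'^2$, $\frac{\smf{\es''}}{\alpha'-1}$ and the $\frac{\alpha'}{\alpha'-1}$ part of the $\log\frac{1}{\pr{\OpPE}}$ coefficient in \eqref{eq:rawHmin}. Your attributions of these terms — $V',\alpha'$ to ``the splitting of the test/generation fraction'' and $n\gamma$ to the min-tradeoff function being ``evaluated on all $n$ rounds'' — are therefore incorrect: the $\gamma$-weighting between $\lin_\p$ and $\lin_0$ is already internal to the definition of $\g$, and without the $\Hmax(\Btstr|\cdots)$ step you have no mechanism that generates these corrections or the $3\smf{\es-\es'-2\es''}$ term at all.
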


To relate this to our state of interest, 
we use~\cite{TL17} Lemma~10, which states that for a state $\sigma\in\dop{\leq}(ZZ'QQ')
$ that is classical on registers $ZZ'$, an event $\Omega$ on the registers $ZZ'$, and any $\eps \in [0,\sqrt{\tr{\sigma_{\land\Omega}}})$,
we have
\begin{align}
\Hmin^\eps(ZQ|Q')_{\sigma_{\land\Omega}} \geq \Hmin^\eps(ZQ|Q')_{\sigma} .
\label{eq:condlemma}
\end{align}
In our context, we observe that the probability of the event $\Og \land \Oh$ on the normalized state $\rho_{|\OpPE}$ is $\pr{\Og \land \Oh|\OpPE}$, which is greater than $\es^2$ since we are in case~\ref{case:neither}. Hence the conditions
of the lemma are satisfied (identifying $\str{A}$ with $Z$, $\tilde{\str{A}}$ (and the error-correction hash choice) with $Z'$, $\str{X}\str{Y}\str{L} \allE$ with $Q'$, and leaving $Q$ empty), allowing us to obtain the bound
\begin{align}
\Hmin^{\es}(\str{A}|\str{X}\str{Y}\str{L} \allE)_{\big(\rho_{|\OpPE}\big) {}_{\land \Og \land \Oh} } &\geq \Hmin^{\es}(\str{A}|\str{X}\str{Y}\str{L} \allE)_{\rho_{|\OpPE}} \nonumber\\
&\geq \Hmin^{\es}(\str{A}|\str{X}\str{Y} \allE)_{\rho_{|\OpPE}} 
- \len(\str{L}) \nonumber\\
&\geq \Hmin^{\es}(\str{A}|\str{X}\str{Y} \allE)_{\rho_{|\OpPE}} 
- \cmax - \ceil{\log\left(\frac{1}{\eh}\right)}
,\label{eq:entPA}
\end{align}
where in the second line we have applied a chain rule for smoothed min-entropy (see e.g.~\cite{WTHR11} Lemma~11 or~\cite{Tom16} Lemma~6.8).

Putting together Eq.~\eqref{eq:entPA} and Theorem~\ref{th:rawHmin}, we find that for a key of length $\lkey$ satisfying Eq.~\eqref{eq:keylength}, we have\footnote{Keeping the $\log\left(1/\pr{\OpPE}\right)$ term here yields a slightly tighter result as compared to~\cite{ARV19,MvDR+19}, which instead used $\pr{\OpPE}\leq1$ to write an inequality in place of the equality in the second-last line of Eq.~\eqref{eq:finalPA}.}
\begin{align}
&\frac{1}{2}\left(\Hmin^{\es}(\str{A}|\str{X}\str{Y}\str{L} \allE)_{\big(\rho_{|\OpPE}\big) {}_{\land \Og \land \Oh} } - \lkey  + 2\right) + \log\frac{1}{\pr{\OpPE}} \nonumber\\
\geq{}&
\frac{1}{2} \left(- \left(\frac{\alpha}{\alpha-1}+\frac{\alpha'}{\alpha'-1}\right)\log\frac{1}{\pr{\OpPE}} +
2\log\frac{1}{\ePA} + 
\left(\frac{\alpha}{\alpha-1}+\frac{\alpha'}{\alpha'-1} - 2\right)\log\frac{1}{\eEA} \right) + \log\frac{1}{\pr{\OpPE}} \nonumber\\
={}& \log\frac{1}{\ePA} +\frac{1}{2} \left(\frac{\alpha}{\alpha-1}+\frac{\alpha'}{\alpha'-1}-2\right) \left(\log\frac{1}{\eEA} - \log\frac{1}{\pr{\OpPE}}\right) \nonumber\\
\geq{}& \log\frac{1}{\ePA}, 
\end{align}
where the last line holds because $\pr{\OpPE} \geq \eEA$ in case~\ref{case:neither} (and also $\frac{\alpha}{\alpha-1}+\frac{\alpha'}{\alpha'-1} -2 \geq 2 
> 0
$ since $\alpha,\alpha' \in (1,2)$).

This finally allows us to bound Eq.~\eqref{eq:secsplit}, since
we have $\es^2 \leq \pr{\Og \land \Oh|\OpPE} = \operatorname{Tr}[(\rho_{|\OpPE})_{\land \Og \land \Oh}]$ in case~\ref{case:neither}, and hence we can apply the Leftover Hashing Lemma to see that\footnote{We remark that the events $\OpPE$ and $\Og \land \Oh$ were treated somewhat differently in this analysis, in that 
while we introduced a parameter $\eEA$ to divide the analysis of $\OpPE$ into separate cases, there is no analogous condition for $\Og \land \Oh$ (the condition for case~\ref{case:ECsmall} is not analogous; it is merely a technicality to allow us to apply Eq.~\eqref{eq:condlemma} and the Leftover Hashing Lemma).
This is fundamentally because conditioning on $\OpPE$ via Eq.~\eqref{eq:rawHmin} has a ``larger'' effect on min-entropy compared to conditioning on $\Og \land \Oh$ (via Eq.~\eqref{eq:condlemma} modified for normalized conditional states). When substituting these effects into the final security parameter, one finds that the former worsens the security parameter by an amount that depends on $\log(1/\pr{\OpPE})$, while the latter does not, and hence we required a bound on $\pr{\OpPE}$ in the former. However, see~\cite{arx_Dup21} for a potential approach to avoid this case analysis.}
\begin{align}
&\frac{1}{2}\norm{\mPA(\rho_{\land \Og \land \Oh \land \OpPE})_{K_A E'} - \idk_{K_A} \otimes \mPA(\rho_{\land \Og \land \Oh \land \OpPE})_{E'}}_1 \nonumber\\
={}& \pr{\OpPE} \frac{1}{2}\norm{\mPA((\rho_{|\OpPE})_{\land \Og \land \Oh})_{K_A E'} - \idk_{K_A} \otimes \mPA((\rho_{|\OpPE})_{\land \Og \land \Oh})_{E'}}_1 \nonumber\\
\leq{}& \pr{\OpPE} \left(2^{-\frac{1}{2}\left(\Hmin^{\es}(\str{A}|\str{X}\str{Y}\str{L} \allE)_{\big(\raisebox{1.5\depth}{$\mathsmaller{\rho}$}|\OpPE\big) {\land \Og \land \Oh} } - \lkey + 2\right)} + 2\es \right) \nonumber\\
={}& 2^{-\frac{1}{2}\left(\Hmin^{\es}(\str{A}|\str{X}\str{Y}\str{L} \allE)_{\big(\raisebox{1.5\depth}{$\mathsmaller{\rho}$}|\OpPE\big)\land \Og \land \Oh } - \lkey + 2\right) - \log\left(1/\pr{\OpPE}\right)} + 2\es\pr{\OpPE} 
\nonumber\\
\leq{}& \ePA + 2\es. \label{eq:finalPA}
\end{align}

Since the three possible cases are exhaustive, we conclude that the secrecy condition is satisfied by choosing
\begin{align}
\esecr = \max\{\es^2, \eEA, \ePA + 2\es\} + \eh = \max\{\eEA, \ePA + 2\es\} + \eh.
\end{align}
Recalling that we have already shown the protocol is $\eh$-correct, we finally conclude that it is $(\max\{\eEA, \ePA + 2\es\} + 2\eh)$-sound.

\subsection{Entropy accumulation}\label{sec:EAT}

This section is devoted to the proof of Theorem~\ref{th:rawHmin}. The key theoretical tool in this proof is the entropy accumulation theorem, which we shall now briefly outline in the form stated in~\cite{DF19}. To do so, we shall first introduce \term{EAT channels} and \term{tradeoff functions}.

\begin{definition}\label{def:EATchann}
A \term{sequence of EAT channels} is a sequence 
$\{\map_j\}_{j\in\upto{n}}$ 
where each $\map_j$ is a channel from a register $R_{j-1}$ to registers $D_j S_j T_j R_j$, 
which satisfies the following properties:
\begin{itemize}
\item All $D_j$ are classical registers with a common alphabet $\mathcal{D}$, and all $S_j$ have the same finite dimension.
\item 
For each $\map_j$, the value of $D_j$ is determined from the registers $S_j T_j$ alone. Formally, this means 
$\map_j$ is of the form $\mathcal{P}_j \circ \map'_j$, where $\map'_j$ is a channel from $R_{j-1}$ to $S_j T_j R_j$, and $\mathcal{P}_j$ is a channel from $S_j T_j$ to $D_j S_j T_j$ of the form
\begin{align}
\mathcal{P}_j(\rho_{S_j T_j}) = \sum_{s\in\mathcal{S},t\in\mathcal{T}} (\Pi_{S_j,s} \otimes \Pi_{T_j,t}) \rho_{S_j T_j} (\Pi_{S_j,s} \otimes \Pi_{T_j,t}) \otimes \pure{d(s,t)}_{D_j},
\label{eq:nodisturb}
\end{align}
where $\{\Pi_{S_j,s}\}_{s\in\mathcal{S}}$ and $\{\Pi_{T_j,t}\}_{t\in\mathcal{T}}$ are families of orthogonal projectors on $S_j$ and $T_j$ respectively, and $d: \mathcal{S} \times \mathcal{T} \to \mathcal{D}$ is a deterministic function.
\end{itemize}
\end{definition}

\begin{definition}
Let $\fmin$ be a real-valued affine function defined on probability distributions over the alphabet $\mathcal{D}$. It is called a \term{min-tradeoff function} for a sequence of EAT channels $\{\map_j\}_{j\in\upto{n}}$ if for any distribution $q$ on $\mathcal{D}$, we have 
\begin{align}
\fmin(q) \leq \inf_{\sigma \in \Sigma_j(q)} H(S_j|T_jR)_\sigma \qquad \forall j\in\upto{n},
\end{align}
where $\Sigma_j(q)$ denotes the set of states of the form $(\map_j\otimes\idmap_{R})(\omega_{R_{j-1}R})$ such that the reduced state on $D_j$ has distribution $q$. 
Analogously, a real-valued affine function $\fmax$ defined on probability distributions over $\mathcal{D}$ is called a \term{max-tradeoff function} if 
\begin{align}
\fmax(q) \geq \sup_{\sigma \in \Sigma_j(q)} H(S_j|T_jR)_\sigma \qquad \forall j\in\upto{n}.
\end{align}
The infimum and supremum of an empty set are defined as $+\infty$ and $-\infty$ respectively.
\end{definition}
\noindent With these definitions, we can now state the theorem:
\begin{fact}\label{fact:EAT}
(Entropy accumulation theorem~\cite{DF19,LLR+21}) 
Consider a sequence of EAT channels $\{\map_j\}_{j\in\upto{n}}$ and a normalized state of the form $\rho = (\map_n \circ\dots \circ\map_1 \otimes \idmap_E)\left(\rho^0_{R_0 \allE}\right)$ satisfying the Markov conditions
\begin{align}
I(S_{\upto{j-1}} : T_j | T_{\upto{j-1}} \allE)_\rho 
= 0 \qquad \forall j\in\upto{n}.
\label{eq:markov}
\end{align}

\noindent Let $\fmin$ be a min-tradeoff function for $\{\map_j\}_{j\in\upto{n}}$ and consider any $h\in\mathbb{R}, \eps\in(0,1), \alpha\in(1,2)$. Then for any event $\Omega \subseteq \mathcal{D}^n$ such that 
$\fmin(\freq_{\str{d}}) \geq h$ for all $\str{d}\in\Omega$, we have\footnote{This expression differs slightly from those in~\cite{DF19,LLR+21} because we have not performed the simplifications based on $\smf{\eps} \leq \log(2/\eps^2)$ and $\alpha < 2$ (though the former is only a very small improvement for typical values of $\eps$).
We also remark that in~\cite{LLR+21}, a modification was made to the EAT to improve its dependence on the $\Var$ term, which we have not included here as it involves a further optimization that would significantly increase the complexity of our keyrate computations. 
(In an earlier version of this work, we stated that this modification did not make a difference here as our choice of affine $\fmin$ is essentially equal to the tight bound on $H(S_j|T_jR)$ in our scenario; however, this was in error as this equality does not hold on e.g.~any parts of the domain where $\fmin$ is negative.)}
\begin{align}
\Hmin^{\eps}(\str{S}|\str{T}\allE)_{\rho_{|\Omega}} 
>nh - n\frac{(\alpha-1)\ln 2}{2}V^2 - n(\alpha-1)^2K_\alpha 
- \frac{\smf{\eps}}{\alpha-1} - \frac{\alpha}{\alpha-1}\log\frac{1}{\pr{\Omega}}, 
\end{align}
where $\smf{\eps}$ is as defined in Eq.~\eqref{eq:VK}, and
\begin{align}
\begin{aligned}
V &\defvar \sqrt{\Var_{\mathcal{Q}}(\fmin)+2} + \log
(2\dim(S_j)^2+1),
\\
K_\alpha &\defvar \frac{2^{(\alpha-1)(2\log\dim(S_j) + \Max(\fmin)-\Min_{\mathcal{Q}} (\fmin))} }{6(2-\alpha)^3\ln2}
\ln^3\left(2^{2\log\dim(S_j) + \Max(\fmin)-\Min_{\mathcal{Q}} (\fmin)} + e^2\right),
\end{aligned}
\end{align}
with $\mathcal{Q}$ being the set of all distributions on $\mathcal{D}$ that could be produced by applying some EAT channel to some state.
\end{fact}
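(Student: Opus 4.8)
The plan is to reconstruct the proof of~\cite{DFR20}, with the refinements of~\cite{DF19,LLR+21}, whose guiding idea is never to estimate the smooth min-entropy directly but instead to work with the sandwiched conditional R\'enyi entropy $\widetilde H_\alpha^\uparrow$ for $\alpha$ slightly above $1$, because this quantity \emph{accumulates} additively along the sequence $\map_1,\dots,\map_n$, and to convert back to $\Hmin^\eps$ only at the very end. Concretely, first I would invoke the standard relation between smooth min-entropy and R\'enyi entropy (see e.g.~\cite{Tom16}): for the subnormalized state $\rho_{\land\Omega}$ obtained from $\rho$ by restricting the classical string $\str D$ to $\Omega$, and any $\alpha\in(1,2)$,
\[
\Hmin^\eps(\str S|\str T\allE)_{\rho_{|\Omega}} \;\ge\; \widetilde H_\alpha^\uparrow(\str S|\str T\allE)_{\rho_{\land\Omega}} \;-\; \frac{\smf{\eps}}{\alpha-1} \;-\; \frac{\alpha}{\alpha-1}\log\frac{1}{\pr{\Omega}},
\]
which already produces the $\smf\eps$ term and the $\log(1/\pr{\Omega})$ term of the statement; it then remains to show $\widetilde H_\alpha^\uparrow(\str S|\str T\allE)_{\rho_{\land\Omega}} \ge nh - n\tfrac{(\alpha-1)\ln 2}{2}V^2 - n(\alpha-1)^2 K_\alpha$.

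The heart of the argument is a \emph{chain rule} for sandwiched R\'enyi divergences adapted to the EAT channels. Writing $\rho_j$ for the state after $\map_1,\dots,\map_j$, I would establish a per-round inequality of the form
\[
\widetilde H_\alpha^\uparrow(\str S_{\le j}|\str T_{\le j}\allE)_{\rho_j} \;\ge\; \widetilde H_\alpha^\uparrow(\str S_{\le j-1}|\str T_{\le j-1}\allE)_{\rho_{j-1}} \;+\; \inf_{\omega}\,\widetilde H_\alpha^\uparrow(S_j|T_jR_j)_{(\map_j\otimes\idmap)(\omega)},
\]
the infimum running over all states $\omega$ that can be input to $\map_j$ together with a side register. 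This is exactly where the Markov conditions~\eqref{eq:markov} are used: they guarantee that adjoining the new conditioning register $T_j$ does not decrease the entropy of the earlier outputs $\str S_{\le j-1}$ given the earlier $\str T$'s, so the entropy peels off one round at a time without cross terms. Iterating down to $j=0$ rewrites the $n$-round R\'enyi entropy as a sum of $n$ single-round infima.

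Next I would feed in the min-tradeoff function. Each single-round contribution, in the $\alpha\to1$ limit, reduces to $\inf_{\sigma\in\Sigma_j(q)}H(S_j|T_jR)$ for the $D_j$-marginal $q$ arising at that round, which by hypothesis is at least $\fmin(q)$; a Taylor expansion in $\alpha-1$ contributes the $(\alpha-1)$ and $(\alpha-1)^2$ corrections, with coefficients bounded in terms of $\dim(S_j)$ and the spread of $\fmin$. The device that turns the frequency constraint into the leading term is that $\fmin$ is affine: writing $\fmin(q)=\vec\lambda\cdot q+c$ and using $\vec\lambda$ as a fixed ``direction'', the per-round bounds can be rearranged so that the $c$'s sum to $nc$ and the $\vec\lambda\cdot q$-parts sum to $\vec\lambda$ applied to the total $\str D$-statistics --- which is legitimate precisely because $D_j$ is classical and a deterministic function of $S_jT_j$. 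On the event $\Omega$ every realized string $\str d$ satisfies $\fmin(\freq_{\str d})\ge h$, so affineness upgrades the accumulated bound to $nh$, and the collected errors reassemble into $n\tfrac{(\alpha-1)\ln 2}{2}V^2$ and $n(\alpha-1)^2 K_\alpha$, with $V$ controlled by $\sqrt{\Var_{\mathcal{Q}}(\fmin)+2}+\log(2\dim(S_j)^2+1)$ and $K_\alpha$ by $\Max(\fmin)-\Min_{\mathcal{Q}}(\fmin)$, exactly as stated.

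The step I expect to be the main obstacle is establishing the R\'enyi chain rule with the right properties: it must be tight to first order in $\alpha-1$ (so that the $\alpha\to1$ limit recovers the single-round von Neumann bound, hence the Devetak--Winter-type asymptotics), it must carry an \emph{explicit}, $n$-independent control of the higher-order remainder, and it must remain valid in the presence of the side register $\allE$ and interlock correctly with the way the per-round $D_j$-statistics feed into the frequency constraint on $\Omega$. This requires delicate operator-algebra estimates on the derivatives of $\alpha\mapsto\widetilde H_\alpha^\uparrow$ and on the effect of adjoining $T_j$; the improved constants of~\cite{DF19,LLR+21} come precisely from sharper versions of this chain rule and of the accompanying variance bound. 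A secondary technicality is that the event restriction produces a subnormalized state, so throughout one must work with generalized fidelity and purified distance rather than normalized states, which is why the smoothing parameter is constrained and the $\smf\eps$ term (rather than its cruder surrogate $\log(2/\eps^2)$) appears.
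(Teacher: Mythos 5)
You should first note that the paper does not prove this statement at all: it is imported as a Fact, quoted from~\cite{DF19,LLR+21} (with the original proof in~\cite{DFR20}), so there is no in-paper proof to compare yours against. Judged as a reconstruction of the cited proof, your outline has the correct architecture: the reduction $\Hmin^{\eps}(\str{S}|\str{T}\allE)_{\rho_{|\Omega}} \geq \widetilde{H}_\alpha(\str{S}|\str{T}\allE)_{\rho_{\land\Omega}} - \smf{\eps}/(\alpha-1) - \frac{\alpha}{\alpha-1}\log(1/\pr{\Omega})$ is indeed how the $\smf{\eps}$ and accept-probability terms arise, the per-round R\'enyi chain rule is exactly the role played by the Markov conditions~\eqref{eq:markov} (without them the claim is false, as the paper itself remarks), and the $V$ and $K_\alpha$ coefficients do come from a second-order expansion in $\alpha-1$ of single-round R\'enyi quantities, with the~\cite{DF19,LLR+21} improvements sharpening precisely those two coefficients.

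The genuine gap is the step you yourself flag, and it is not only the chain rule: it is the mechanism that converts the \emph{global} event constraint $\fmin(\freq_{\str{d}})\geq h$ on $\Omega$ into \emph{per-round} entropic terms. Conditioned on $\Omega$ there is no well-defined single-round marginal $q$ to which the min-tradeoff bound can be applied, so the rearrangement you describe (``the $\vec\lambda\cdot q$ parts sum to $\vec\lambda$ applied to the total statistics'') does not go through as stated; applied literally, it would require per-round frequency information that the conditioned state does not provide. The actual proof tilts the state (equivalently, modifies the EAT channels by appending commuting weight factors, or dummy registers, encoding $\fmin(\delta_{d_j})$ round by round), so that the affineness identity $\fmin(\freq_{\str{d}})=\frac{1}{n}\sum_j \fmin(\delta_{d_j})$ turns the constraint on $\Omega$ into an additive $nh$ at the level of sandwiched R\'enyi divergences; the chain rule is then applied to these \emph{modified} channels, and it is the continuity analysis of the tilted single-round quantity that yields $\Var_{\mathcal{Q}}(\fmin)$ inside $V$ and the spread $\Max(\fmin)-\Min_{\mathcal{Q}}(\fmin)$ inside $K_\alpha$. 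Without this tilting construction the argument does not close, so your proposal is a faithful outline of the strategy of~\cite{DFR20,DF19} but not yet a proof.
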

\noindent Informally, the Markov conditions
impose the requirement that the register $T_j$ does not ``leak any information'' about the previous registers $S_{\upto{j-1}}$ beyond what is already available from $T_{\upto{j-1}} \allE$. (Without this Markov condition, one could for instance have channels such that $T_j$ simply contains a copy of $S_{j-1}$, in which case it would still be possible to derive a nontrivial min-tradeoff function, but the conclusion of the entropy accumulation theorem would be trivially false.) There is also an analogous EAT statement regarding the max-entropy~\cite{DFR20}, using a max-tradeoff function, though we will be using it slightly differently and will elaborate further on it at that point.

We now describe how the EAT can be used to prove Theorem~\ref{th:rawHmin}. First, note that to prove Theorem~\ref{th:rawHmin} it would be sufficient to consider only the registers $\str{A}\Btstr\str{X}\str{Y}\Ctstr \allE$ of the state $\rho$ (the conditioning event $\OpPE$ is determined by $\Ctstr$ alone). The reduced state on these registers is the same as that at the point when Step~\ref{step:EATmap} of Protocol~\ref{prot:virtual} has finished looping over $j$, since the subsequent steps do not change these registers, and thus we can equivalently study that state in place of $\rho_{|\OpPE}$.
From this point onwards, all smoothed min- or max-entropies refer to that state conditioned on $\OpPE$ (and normalized),
hence for brevity we will omit the subscript specifying the state.

Each iteration of Step~\ref{step:EATmap} of Protocol~\ref{prot:virtual} can be treated as a channel $\map_j$ in a sequence of EAT channels, by considering it to be a channel performing the following operations:
\begin{enumerate}
\item Alice generates $X_j$ as specified in Step~\ref{step:EATmap}.
Conditioned on the value of $X_j$, Alice's device performs some measurement on its share of the stored quantum state $R_{j-1}$ (which includes any memory retained from previous rounds), then performs sifting and noisy preprocessing on the outcome, storing the final result in register $A_j$. 
\item Bob's device behaves analogously, producing the registers $Y_j$ and $\Bt_j$ (we will not need to consider $B_j$).
\item The value of $\Ct_j$ is computed from $A_j \Bt_j X_j Y_j$.
\end{enumerate}
We highlight that in the above description of $\map_j$, the only ``unknowns'' are the measurements it performs on the input state on $R_{j-1}$ --- all other operations are taken to be performed in trusted fashion. 
(This is reasonable because these measurements and the stored states are the only untrusted aspects in the true protocol.) If we had simply considered completely arbitrary channels $\map_j$ producing the respective registers, it would not be possible to 
make a nontrivial security statement about the output.

Identifying $\Ct_j$ with $D_j$, $A_j\Bt_j$ with $S_j$, and $X_jY_j$ with $T_j$ in Definition~\ref{def:EATchann}, we see that these channels $\map_j$ indeed form a valid sequence of EAT channels: $\Ct_j$ is determined from $A_j \Bt_j X_j Y_j$ in the manner specified by Eq.~\eqref{eq:nodisturb}. Additionally, the state they produce always fulfills the Markov conditions, because the values of $X_jY_j$ in each round are generated independently of all preceding registers.

Intuitively, it seems that we could now use the EAT to bound $\Hmin^{\es}(\str{A}|\str{X}\str{Y} \allE)$.
However, there is a technical issue: to apply the EAT, the event $\OpPE$ must be defined entirely in terms of the (classical) registers that appear in the smoothed min-entropy term that we are bounding, which is not a condition satisfied by the registers $\str{A}\str{X}\str{Y}$ alone. This is where the register $\Btstr$ comes into play, following the same approach as~\cite{ARV19}: by a chain rule for the min- and max-entropies (\cite{VDT13} or \cite{Tom16} Eq.~(6.57)), 
we have for any $\es' + 2\es'' < \es$:
\begin{align}
\Hmin^{\es}(\str{A}|\str{X}\str{Y} \allE) &\geq \Hmin^{\es'}(\str{A}\Btstr|\str{X}\str{Y} \allE) - \Hmax^{\es''}(\Btstr|\str{A}\str{X}\str{Y} \allE) - 3\smf{\es-\es'-2\es''} \nonumber\\
&\geq \Hmin^{\es'}(\str{A}\Btstr|\str{X}\str{Y} \allE) - \Hmax^{\es''}(\Btstr|\str{X}\str{Y} \allE) - 3\smf{\es-\es'-2\es''}.
\label{eq:chain}
\end{align}
where the second line holds because the smoothed max-entropy satisfies a data-processing inequality (see e.g.~\cite{Tom16} Theorem~6.2).

The $\Hmax^{\es''}(\Btstr|\str{X}\str{Y} \allE)$ term admits a fairly simple bound, as follows: consider a sequence of EAT channels $\widetilde{\map}_j$ that are identical to $\map_j$ except that they do not produce the registers $A_j\Ct_j$. As before, these maps obey the required Markov conditions.
In addition, recall that for every round the register $\Bt_j$ is set deterministically to $0$ whenever $y\in\inYg$ (which happens with probability $1-\gamma$), hence we always have
\begin{align}
H(\Bt_j|X_j Y_j R)_{(\widetilde{\map}_j\otimes\idmap_{R})(\omega_{R_{j-1}R})} 
= \sum_{y} \pr{Y_j=y} H(\Bt_j|X_j R;Y_j =y)_{(\widetilde{\map}_j\otimes\idmap_{R})(\omega_{R_{j-1}R})} 
\leq \gamma .
\label{eq:fmax}
\end{align}
This means we can apply the max-entropy version of the EAT\footnote{Here we shall use the results from~\cite{DFR20}, because for a constant tradeoff function, this turns out to yield a slightly better bound as compared to the version of the EAT~\cite{DF19} stated here.
Strictly speaking, the reasoning used here is not a direct application of the EAT, because once again, the event $\OpPE$ is not defined on the registers $\Btstr\str{X}\str{Y}$ alone (attempting to address this by including $\str{A}$ in the conditioning registers could result in the Markov conditions not being fulfilled). Fortunately, the bound \eqref{eq:fmax} holds for our maps $\widetilde{\map}_j$ even without a constraint on the output distribution. Hence the reasoning is as follows, in terms of the equations and lemmas in~\cite{DFR20}: first apply Eq.~(32) \emph{without} the event-conditioning term (this is valid since \eqref{eq:fmax} holds without constraints), {then} condition on $\OpPE$ using 
Lemma~B.6 (noting that $\rho_{\Btstr\str{X}\str{Y} \allE} 
= \pr{\OpPE}(\rho_{|\OpPE})_{\Btstr\str{X}\str{Y} \allE} + \pr{\OpPEnot}(\rho_{|\OpPEnot})_{\Btstr\str{X}\str{Y} \allE}$),
and finally apply Lemma~B.10 to obtain Eq.~\eqref{eq:Hmaxbnd}. 
(Alternatively, one could use $\Ctstr$ instead of $\Btstr$. This was done in~\cite{MvDR+19} to slightly improve the bound in the block analysis, but it does not make a difference in our analysis. However, using $\Ctstr$ would seem to make it harder to sharpen the slightly crude bound used to obtain Eq.~\eqref{eq:gproof}.)
} with a \emph{constant} max-tradeoff function of value $\gamma$. Letting $V'=2\log(1+2
\dim(\Bt)
) = 2\log5$, 
this yields the following bound for any $\alpha' \in (1,1+2/V')$:
\begin{align}
\Hmax^{\es''}(\Btstr|\str{X}\str{Y} \allE) < n\gamma + n\left(\frac{\alpha'-1}{4}\right)V'^2 + \frac{\smf{\es''}}{\alpha'-1} + \frac{\alpha'}{\alpha'-1}\log\frac{1}{\pr{\OpPE}}.
\label{eq:Hmaxbnd}
\end{align}

The bulk of our task is to bound the $\Hmin^{\es'}(\str{A}\Btstr|\str{X}\str{Y} \allE)$ term. To do so, we will need an appropriate min-tradeoff function, which we shall now construct.

\subsubsection{Min-tradeoff function}
\label{sec:fmin}

Consider an arbitrary state of the form $(\map_j\otimes\idmap_{R})(\omega_{R_{j-1}R})$.
In this section, all entropies will be computed with respect to this state, and hence for brevity we will omit the subscript specifying the state.

We first note that
\begin{align}
H(A_j \Bt_j | X_j Y_j \Fj R) 
&= \frac{1-\gamma}{4} 
\sum_{z\in\inX} H(A_j | \Fj R;X_j = Y_j = z)  \nonumber \\
& \qquad + \frac{\gamma}{4} 
\sum_{y\in\inYt} \sum_{x\in\inX} 
H(A_j \Bt_j | \Fj R;X_j = x, Y_j = y), \label{eq:1rndmix}
\end{align}
where we have used the fact that $H(A_j \Bt_j | \Fj R;X_j = x, Y_j = y) = 0$ when $(x,y)=(0,1) \text{ or } (1,0)$, and $H(A_j \Bt_j | \Fj R;X_j = x, Y_j = y) = H(A_j | \Fj R;X_j = x, Y_j = y)$ when $(x,y)=(0,0) \text{ or } (1,1)$.

Let $w$ denote the probability that the state wins the CHSH game, conditioned on the game being played. 
Then by applying the simple but somewhat crude bound\footnote{This marks a point where the analysis could be slightly sharpened, in that if we had a tight bound on the ``two-party entropies'' $H(A_j \Bt_j | \Fj R;X_j = x, Y_j = y)$ rather than just the ``one-party entropies'' $H(A_j | \Fj R;X_j = x, Y_j = y)$, we could improve the second term in~\eqref{eq:gproof}. However, note that it would only improve the keyrate by $O(\gamma)$, because of the $\gamma$ prefactor on that term.} $H(A_j \Bt_j | \Fj R;X_j = x, Y_j = y) \geq H(A_j | \Fj R;X_j = x, Y_j = y)$ to the terms in the second sum in Eq.~\eqref{eq:1rndmix}, we get the bound\footnote{Recall that noisy preprocessing is not applied to the rounds with $Y_j \in \inYt$.}
\begin{align}
H(A_j \Bt_j | X_j Y_j \Fj R) 
\geq \frac{1-\gamma}{2} \lin_\p(w) + \frac{\gamma}{2} \sum_{y\in\inYt} \lin_0(w) 
= \g(w).
\label{eq:gproof}
\end{align}

We can now use the function $\g$ to construct a min-tradeoff function $\fmin$, with the domain of $\fmin$ being distributions on $\Ct_j$ (recall that this register is set to $\perp$ if $Y_j\in\inYg$,
and otherwise is set to $0$ or $1$ if the CHSH game is lost or won respectively). First observe that the channel is an \term{infrequent-sampling channel} in the sense described in~\cite{DF19,LLR+21}. By the argument in~\cite{LLR+21}, a valid min-tradeoff function $\fmin$ for the channel is given by 
the (unique) affine function specified by the following values (in a minor abuse of notation, here we interpret $\g$ as a function of a distribution instead of a winning probability):
\begin{align}
\fmin(\delta_c) = 
\begin{cases} 
\frac{1}{\gamma} \g(\delta_c) + \left(1-\frac{1}{\gamma}\right)\cperp & \text{ if } c\neq\perp \\
\cperp & \text{ if } c = \perp
\end{cases}
\, ,
\label{eq:fmin}
\end{align}
where $\cperp\in[\g\left(\delta_0\right),\g\left(\delta_1\right)]$ is a constant that can be chosen to optimize the keyrate. (Intuitively, this function is constructed simply by noting that the maps $\map_j$ can only produce distributions that lie in the slice of the probability simplex specified by the constraint 
$\mathsf{P} 
[\Ct_j = \perp] = 1-\gamma$, and hence the min-tradeoff function is free to take any value for distributions outside of this slice, recalling that we take the infimum of an empty set to be $+\infty$. For distributions within this slice, we know that $\g$ is an affine lower bound on the entropy as a function of the winning probability, and hence we can just set $\fmin$ equal to $\g$ (up to a domain rescaling) on this slice. Any $\fmin$ constructed this way is precisely of the form described in Eq.~\eqref{eq:fmin}, with $\cperp$ being a constant determining its value on all distributions outside of the $\mathsf{P}[\Ct_j = \perp] = 1-\gamma$ slice.)

As shown in~\cite{LLR+21}, the min-tradeoff function constructed this way satisfies
\begin{align}
\begin{gathered}
\Max(\fmin) =
\max\left\{
\frac{1}{\gamma}\Max(\g) + \left(1-\frac{1}{\gamma}\right)\cperp
, \cperp \right\},
\qquad
\Min_{\mathcal{Q}_{f}} (\fmin) = \Min_{\mathcal{Q}_{\g}} (\g), \\
\Var_{\mathcal{Q}_{f}}(\fmin) \leq \sup_{q\in\mathcal{Q}_{\g}} \sum_{c\in\{0,1\}} \frac{q(c)}{\gamma} \left(\cperp - \g(\delta_c)\right)^2,
\end{gathered}
\end{align}
where $\mathcal{Q}_{f}$ denotes
the set of distributions on $\Ct_j$ such that $\mathsf{P}[\Ct_j = \perp] = 1-\gamma$ and $\mathsf{P}[\Ct_j = 1] \in [\gamma\wmin, \gamma\wmax]$, while $\mathcal{Q}_{\g}$ denotes
the set of all distributions on 
the alphabet $\{0,1\}$
such that $\mathsf{P}[1] \in [\wmin,\wmax]$.

For the specific $\g$ and range of $\cperp$ that we consider, 
these expressions simplify to Eq.~\eqref{eq:minmaxvarbounds}, where 
we have solved the optimization $\sup_{q\in\mathcal{Q}_{\g}}$ in the bound on $\Var_{\mathcal{Q}_{f}}(\fmin)$ by observing that it is an affine function of the distribution $q$, and the set $\mathcal{Q}_{\g}$ we use here is essentially
a line segment (in a $1$-dimensional probability simplex).

\subsubsection{Final min-entropy bound}

The event $\OpPE$ is defined by the conditions
$\freq_\ctstr(1)\geq (\wexp-\dtol)\gamma$ and 
$\freq_\ctstr(0)\leq (1-\wexp+\dtol)\gamma$. Hence for all 
$\ctstr \in \OpPE$,
we have (since $\fmin$ is affine)
\begin{align}
\fmin(\freq_\ctstr)
&= \freq_\ctstr(0) \left(\frac{1}{\gamma} \g\left(\delta_0\right) +  \left(1-\frac{1}{\gamma}\right)\cperp\right)
+ \freq_\ctstr(1) \left(\frac{1}{\gamma} \g\left(\delta_1\right) +  \left(1-\frac{1}{\gamma}\right)\cperp\right)
+ \freq_\ctstr(\perp) \cperp \nonumber\\
&= \freq_\ctstr(0) \left(\frac{1}{\gamma} \g\left(\delta_0\right) -\frac{1}{\gamma}\cperp\right)
+ \freq_\ctstr(1) \left(\frac{1}{\gamma} \g\left(\delta_1\right) -\frac{1}{\gamma}\cperp\right)
+ \cperp \nonumber\\
&\geq (1-\wexp+\dtol) \left(\g\left(\delta_0\right) - \cperp\right)
+ (\wexp-\dtol) \left(\g\left(\delta_1\right) - \cperp\right)
+ \cperp \nonumber\\
&= (1-\wexp+\dtol)\g\left(\delta_0\right)
+ (\wexp-\dtol)\g\left(\delta_1\right)
\nonumber\\
&= \g(\wexp-\dtol),
\label{eq:fminPEineq}
\end{align}
where the inequality holds because $\cperp\in[\g\left(\delta_0\right),\g\left(\delta_1\right)]$, and in the last line we use the fact that $\g$ is affine and revert to interpreting it as a function of winning probability. (We remark that if we fix $\cperp=\g(1)$, then in fact this inequality can be derived using only the $\freq_\str{c}(0)$ condition, following~\cite{DF19}. Hence in principle one could sacrifice the option of optimizing $\cperp$ in exchange for reducing the number of checks to perform in the protocol, which improves the completeness parameters.)

Therefore, we can choose $h=\g(\wexp-\dtol)$ in the EAT statement (Fact~\ref{fact:EAT}) to conclude that the state conditioned on $\OpPE$ satisfies
\begin{align}
\Hmin^{\es'}(\str{A}\Btstr | \str{X}\str{Y} \allE) &> n
\g(\wexp-\dtol) 
- n\frac{(\alpha-1)\ln 2}{2}V^2 - n(\alpha-1)^2K_\alpha 
\nonumber \\
&\qquad 
- \frac{\smf{\es'}}{\alpha-1} - \frac{\alpha}{\alpha-1}\log\frac{1}{\pr{\OpPE}}, \label{eq:EATbound}
\end{align}
where $\smf{\eps},V,K_\alpha$ are as defined in Eq.~\eqref{eq:VK}. Putting this together with Eqs.~\eqref{eq:chain} and~\eqref{eq:Hmaxbnd}, we finally obtain the bound in Theorem~\ref{th:rawHmin}.

\begin{figure}
\centering
\subfloat[\cite{HBD+15} parameters]{
\includegraphics[width=0.48\textwidth]{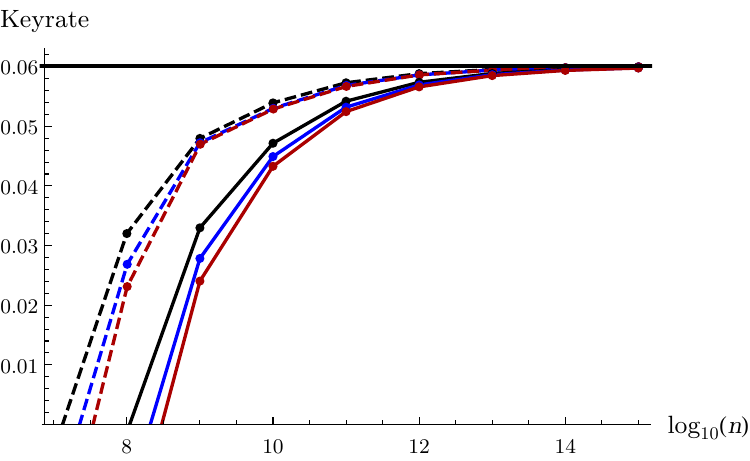}
} 
\subfloat[\cite{RBG+17} parameters]{
\includegraphics[width=0.48\textwidth]{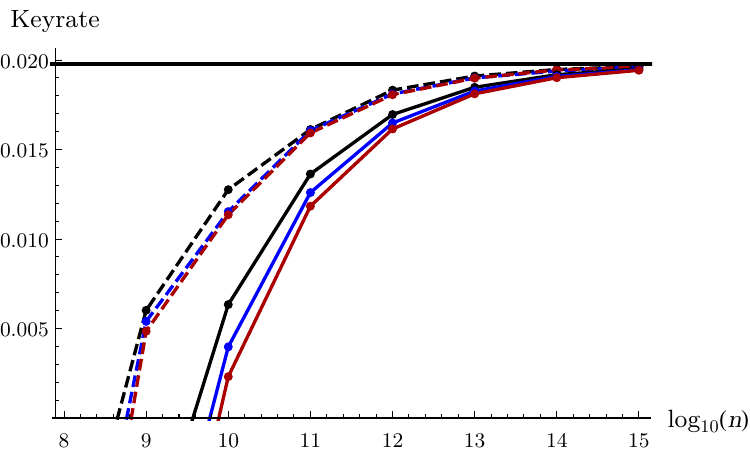}
}
\caption{(From~\cite{arx_TSB+20}) Finite-size keyrates as a function of number of rounds in Protocol~\ref{prot:DIQKD}
without noisy preprocessing ($\p=0$), 
for honest devices following the estimated parameters in~\cite{MvDR+19} for the Bell tests in~\cite{HBD+15} and~\cite{RBG+17}.
The solid curves show the results for coherent attacks (Theorem~\ref{th:DIQKD}), while the dashed curves show the results under the assumption of collective attacks (Theorem~\ref{th:collective}), with the error-correction protocol taken to satisfy Eqs.~\eqref{eq:optEC}--\eqref{eq:ECrate}. The colours correspond to soundness parameters of $\esound=10^{-3}$, $10^{-6}$, and $10^{-9}$ for black, blue, and red respectively, while the completeness parameter is $\ecom=10^{-2}$ in all cases. The horizontal line denotes the asymptotic keyrate. 
All other parameters in Theorems~\ref{th:DIQKD} and \ref{th:collective} were numerically optimized, except $\cperp$. 
}
\label{fig:experiments}
\vspace{2cm}
\includegraphics[width=0.6\textwidth]{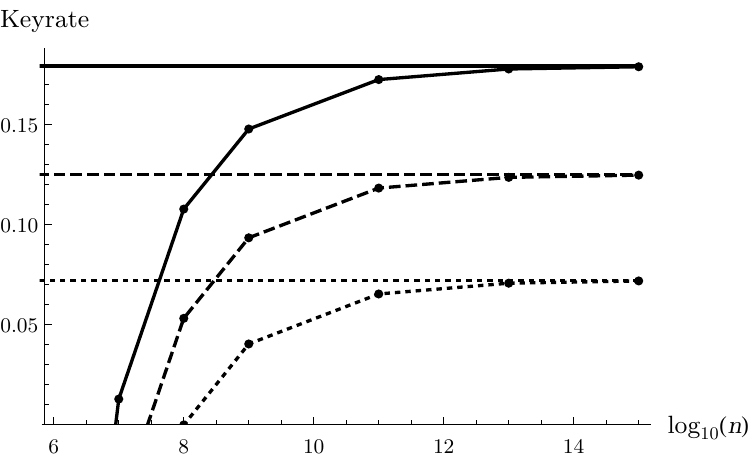}
\caption{(From~\cite{arx_TSB+20}) Finite-size keyrates secure against coherent attacks (Theorem~\ref{th:DIQKD}) as a function of number of rounds in Protocol~\ref{prot:DIQKD}
without noisy preprocessing ($\p=0$), where the honest devices are described by depolarizing noise $\q$ (see Sec.~\ref{sec:hon}). The solid, dashed and dotted curves denote $\q=5\pct$, $6\pct$ and $7\pct$ respectively, with the error-correction protocol taken to satisfy Eqs.~\eqref{eq:optEC}--\eqref{eq:ECrate}. The soundness parameter is $\esound=10^{-6}$ and the completeness parameter is $\ecom=10^{-2}$. The horizontal lines denote the asymptotic keyrates. All other parameters were numerically optimized.
}
\label{fig:depol}
\end{figure}

\section{Resulting finite-size keyrates}
\label{sec:finiteplots}

In Figs.~\ref{fig:experiments} and~\ref{fig:depol}, we plot some finite-size keyrates given by our security proofs, for the various types of honest devices described in Sec.~\ref{sec:hon}. Fig.~\ref{fig:experiments} corresponds to honest devices with performance described by the estimated parameters in~\cite{MvDR+19} for the NV-centre~\cite{HBD+15} and cold-atom~\cite{RBG+17} loophole-free Bell tests. (Noisy preprocessing was not applied for these plots because it appears to only slightly improve the keyrates for those experimental parameters; see Fig.~\ref{fig:protIID} later.)
Fig.~\ref{fig:depol} shows the finite-size keyrates for honest devices subject to depolarizing noise. For reference, we also plot the finite-size keyrates of Protocol~\ref{prot:DIQKD} under the assumption of collective attacks (Theorem~\ref{th:collective} in the subsequent sections).

From Fig.~\ref{fig:experiments}, we see that our security proof requires the \cite{HBD+15} and \cite{RBG+17} experiments to run for approximately $n\sim10^8$ and $10^{10}$ rounds respectively in order to certify a positive finite-size keyrate against coherent attacks, or roughly 1--2 orders of magnitude smaller for collective attacks (we derive the keyrate for that scenario in Sec.~\ref{sec:coll}). While this is a marked improvement over the basic \cite{PAB+09}~protocol (which yields zero asymptotic keyrate for those experiments), these finite-size requirements still appear to be outside the reach of those experimental implementations.
We also see that changing the security parameters by several orders of magnitude only results in fairly small changes to the keyrate, so it appears unlikely that the keyrates could be substantially improved by relaxing these security requirements.

\section{Possible modifications}
\label{sec:mods}

To improve on these results, we now describe some possible modifications of Protocol~\ref{prot:DIQKD}. Namely, Protocol~\ref{prot:preshared} is a modification based on a pre-shared key, which achieves a \emph{net} key generation rate approximately double that of Protocol~\ref{prot:DIQKD}, by overcoming a crucial disadvantage of random-key-measurement protocols (namely, the sifting factor). Also, Protocol~\ref{prot:optcoll} is a modification that is optimized for the collective-attacks assumption --- by changing the protocol itself for this scenario rather than just the security proof, we can further improve the keyrate to some extent. 

\subsection{Coordinating input choices by public communication}
\label{sec:pubchann}

The random-key-measurement protocol has the drawback that the keyrate is effectively halved, since the generation rounds have ``mismatched'' inputs approximately half the time. It would be helpful to find a way to work around this issue. One possible approach could be to observe that in~\cite{ARV19}, it was assumed that the following operations can be performed in each round: the devices receive some shares of a quantum state, then\footnote{The assumption being made here is that since the quantum states are now entirely in the possession of Alice and Bob, the test/generation decision can no longer affect the distributed state --- if the distributed state could depend on whether it is a test or generation round, the protocol would be trivially insecure.} Alice and Bob publicly communicate to come to an agreement on \emph{both} of their input choices, and finally they supply these inputs to their devices. (This was necessary in~\cite{ARV19} because 
Alice and Bob's actions in that DIQKD protocol require both of them to know whether it is a test or generation round. In fact, our analysis can be viewed as the first EAT-based security proof for a ``genuinely sifting-based'' DIQKD protocol, in the sense that Alice and Bob do not coordinate which rounds are test rounds, and simply choose their inputs independently.)
If we assume that this is also possible in our scenario, then Alice and Bob could coordinate their inputs in the generation rounds instead of choosing them independently, thereby avoiding the sifting factor.

Unfortunately, it does not seem clear if such a proposal is entirely plausible in near-term experimental implementations. This is because it relies on the devices being able to store the quantum state for long enough for Alice and Bob to agree on their choice of inputs,
which is potentially challenging for current Bell-test implementations. As an alternative, we propose the following potential modification to the DIQKD protocol in~\cite{ARV19} --- instead of agreeing on the test rounds via public communication, Alice and Bob could use a small amount of pre-shared key to choose which rounds are test rounds, in the same way as in DIRE (for details on the amount of pre-shared key required, see the DIRE protocols in~\cite{ARV19,BRC20} or the discussion in Sec.~\ref{sec:preshared} below). This approach would essentially be a ``key {expansion}'' protocol that requires a small amount of pre-shared key to initialize. We remark that this is not a dramatic change in perspective, because a common method to authenticate channels (namely, message authentication codes) relies on having a small amount of pre-shared key, so the assumed existence of authenticated channels in the DIQKD protocol is likely to require some pre-shared key in any case. 

However, this basic notion cannot immediately be generalized to Protocol~\ref{prot:DIQKD} here, since requiring Alice and Bob to choose uniformly distributed ``matching'' inputs in the generation rounds would require a large amount of pre-shared key (roughly $(1-\gamma)n$ bits), likely exceeding the length of the key generated by the protocol. Fortunately, in the following section, we propose a variation which overcomes this difficulty by ``recovering'' the entropy in the pre-shared key, thereby still achieving net key expansion.

\subsection{Protocol using pre-shared key}
\label{sec:preshared}

Here we describe a variant protocol that avoids the sifting factor \emph{without} requiring the brief quantum storage described above, through the use of a fairly long pre-shared key. (This idea was also independently proposed in~\cite{arx_BRC21}.) The limitation of this variant is that the net increase in secret key is just a (constant) fraction of the amount of pre-shared key; however, the \emph{rate} of net key generation does not have the sifting factor of $1/2$.
Informally, the idea is to simply use the pre-shared key as Alice's input string $\str{X}$, which allows Bob to choose his generation inputs to match Alice's. Just as importantly, this also allows them to (almost) entirely omit the public announcement of their inputs --- hence $\str{X}$ remains private, and with some care it can be incorporated into the final key without losing the entropy it ``contains''. 

We now describe this idea in detail as Protocol~\ref{prot:preshared} below, followed by its security proof. The protocol supposes that Alice and Bob hold a pre-shared (uniform) key of $n$ bits, which we shall simply denote as $\str{X}$, since it will be exactly the string that Alice uses as her inputs. The appropriate value of $\lkey$ to choose will be described later in Theorem~\ref{th:preshared}. 
\begin{savenotes}
\begin{breakablealgorithm}
\caption{} 
\label{prot:preshared}
This protocol proceeds the same way as Protocol~\ref{prot:DIQKD}, except for the following changes:
\begin{itemize}
\item In each round, Alice's input $X_j$ is determined from the pre-shared key $\str{X}$, instead of being generated randomly in that round. Bob's input $Y_j$ is generated as follows: with probability $\gamma$ he chooses a uniformly random $Y_j \in \inYt$, otherwise Bob chooses $Y_j = X_j$. In addition, he generates another register $\Yt_j$ which equals $Y_j$ when $Y_j \in \inYt$ and equals $\perp$ otherwise.
\item Alice and Bob do not publicly announce the strings $\str{X}\str{Y}$. Instead, Bob only announces the string $\Ytstr$.\footnote{It might be possible to consider a slight variant which omits this step. However, knowing $\Ytstr$ allows Alice to compute $\str{Y}$, which may be relevant for error correction since it allows Alice to distinguish the test and generation rounds. In any case, it seems unclear whether the entropy of $\Ytstr$ can be usefully extracted even if it is kept secret, since Alice does not have access to it in that case.} Additionally, the sifting step is unnecessary, since there will be no rounds such that $Y_j \in \inYg$ and $X_j \neq Y_j$.
\item Privacy amplification is performed on the strings $\str{A}\str{X}$ and $\tilde{\str{A}}\str{X}$ instead of $\str{A}$ and $\tilde{\str{A}}$.
\end{itemize}
\end{breakablealgorithm}
\end{savenotes}

To prove the security of this protocol, we can simply follow almost exactly the same security proof as for Protocol~\ref{prot:DIQKD}, with some changes we shall now describe. Firstly, the value of $h_\mathrm{hon}$ (to be used when computing $\cmax$) is replaced by
\begin{align}
\tilde{h}_\mathrm{hon} 
= H(A_j|B_jX_jY_j)_\mathrm{hon} 
&= \frac{1-\gamma}{2} 
\sum_{z\in\inX} H(A_j | B_j;X_j = Y_j = z)_\mathrm{hon}  
\nonumber \\& \qquad 
+ \frac{\gamma}{4} 
\sum_{x\in\inX,y\in\inYt} 
H(A_j| B_j ;X_j = x, Y_j = y)_\mathrm{hon}, \label{eq:ECpreshared}
\end{align}
since the probabilities of $X_j = Y_j = z$ for $z\in\inX$ are now ${(1-\gamma)}/{2}$. (Note that no error-correction information needs to be sent from Alice to Bob regarding $\str{X}$, since both of them have a copy of that string.)

Also, since the strings used in the privacy-amplification step are now $\str{A}\str{X}$ and $\tilde{\str{A}}\str{X}$, this means that we need an equivalent of Eq.~\eqref{eq:EATbound}, with $\Hmin^{\es'}(\str{A}\Btstr\str{X}|\Ytstr E)$ in place of $\Hmin^{\es'}(\str{A}\Btstr | \str{X}\str{Y}E)$. To obtain this, we note that we can simply construct a virtual protocol in the analogous way to Protocol~\ref{prot:virtual}, then consider the same EAT channels $\map_j$ as before, but instead we shall identify $\Ct_j$ with $D_j$, $A_j\Bt_jX_j$ with $S_j$, and $\Yt_j$ with $T_j$ in Definition~\ref{def:EATchann}.
The Markov conditions are again fulfilled, since $\Yt_j$ is generated by trusted randomness in each round and independent of all previous data. To find an appropriate min-tradeoff function for these channels, we note that the output $(\map_j\otimes\idmap_{R})(\omega_{R_{j-1}R})$ of channel $\map_j$ always satisfies
\begin{align}
H(X_j | \Yt_j R) = H(X_j) = 1,
\end{align}
because $X_j$ is produced by trusted randomness independent of $\Yt_j R$. Therefore, we can use the chain rule to write
\begin{align}
H(A_j \Bt_j X_j | \Yt_j R) 
&= H(X_j | \Yt_j R) + H(A_j \Bt_j | X_j \Yt_j \Fj R) \nonumber \\ 
&\geq 1 + (1-\gamma) \lin_\p(w) + \gamma \lin_0(w) \nonumber\\
&\defvar 1 + \tilde{\g}(w), 
\label{eq:chaintrick}
\end{align}
where the function $\tilde{\g}$ (in contrast to $\g$) does {not} have the factor of $1/2$ introduced by sifting, since Alice does not ``erase'' the outputs of any rounds. We can thus construct a new min-tradeoff function $\tilde{f}_\mathrm{min}$ in the same way as in Sec.~\ref{sec:fmin}, but using $1 + \tilde{\g}(w)$ in place of $\g(w)$.\footnote{We remark that if we think of this replacement as happening in two steps, first replacing $\g$ by $\tilde{\g}$ and then adding a ``constant offset'' of $1$, then the latter has no effect on $\Var_{\mathcal{Q}_{f}}(\fmin)$ or the difference $\Max(\fmin)-\Min_{\mathcal{Q}_{f}} (\fmin)$, and hence does not change the finite-size correction to the keyrate except indirectly via changing the system dimensions and the range of $\cperp$. However, the first step of replacing $\g$ by $\tilde{\g}$ does slightly increase the finite-size correction (since $\tilde{\g}$ has a somewhat larger range).} The rest of the proof then proceeds as before, leading to the following security statement:
\begin{theorem}\label{th:preshared}
Protocol~\ref{prot:preshared} 
has the same security guarantees as those described in Theorem~\ref{th:DIQKD},
except with the following changes (with $\tilde{\g}$ being defined in~\eqref{eq:chaintrick}):
\begin{itemize}
\item $\cperp$ is chosen to be in 
$[1+\tilde{\g}(0),1+\tilde{\g}(1)]$ 
instead.
\item $h_\mathrm{hon}$ is replaced by $\tilde{h}_\mathrm{hon}$ as specified in Eq.~\eqref{eq:ECpreshared}.
\item In Eq.~\eqref{eq:keylength} for $\lkey$, $\g(\wexp-\dtol)$ is replaced by $1 + \tilde{\g}(\wexp-\dtol)$, and the values of $V$ and $K_\alpha$ are replaced by 
\begin{align}
\begin{aligned}
\tilde{V} &\defvar \sqrt{\Var_{\mathcal{Q}_{f}}(\tilde{f}_\mathrm{min})+2} + \log
129,
\\
\tilde{K}_\alpha &\defvar \frac{2^{(\alpha-1)(2\log8 + \Max(\tilde{f}_\mathrm{min})-\Min_{\mathcal{Q}_{f}} (\tilde{f}_\mathrm{min}))} }{6(2-\alpha)^3\ln2}
\ln^3\left(2^{2\log8 + \Max(\tilde{f}_\mathrm{min})-\Min_{\mathcal{Q}_{f}} (\tilde{f}_\mathrm{min})} + e^2\right),
\end{aligned}
\end{align}
where $\tilde{f}_\mathrm{min}$ is a function that satisfies
\end{itemize}
\vspace*{-.5cm}
\begin{align}
\begin{gathered} 
\Max(\tilde{f}_\mathrm{min})
= 1 + \frac{1}{\gamma}\tilde{\g}(1) + \left(1-\frac{1}{\gamma}\right)\cperp, \qquad
\Min_{\mathcal{Q}_{f}} (\tilde{f}_\mathrm{min}) 
= 1 + \tilde{\g}(\wmin), \\
\Var_{\mathcal{Q}_{f}}(\tilde{f}_\mathrm{min}) \leq 
\frac{\wmin}{\gamma} 
\min \left\{\Delta_0^2, \Delta_1^2 \right\}
+ \frac{\wmax}{\gamma} 
\max \left\{\Delta_0^2, \Delta_1^2 \right\}
, \text{ where } \Delta_w \defvar \cperp - 1 - \tilde{\g}(w).
\end{gathered}
\end{align}
\end{theorem}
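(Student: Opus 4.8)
\textbf{Proof proposal for Theorem~\ref{th:preshared}.}

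The plan is to re-run the security proof of Theorem~\ref{th:DIQKD} (from Sec.~\ref{sec:com}--\ref{sec:EAT}) essentially verbatim, identifying the three places where Protocol~\ref{prot:preshared} differs from Protocol~\ref{prot:DIQKD} and checking that each modification feeds through cleanly. The completeness argument in Sec.~\ref{sec:com} is unchanged in structure: the only difference is that the honest error-correction cost is now governed by $\tilde{h}_\mathrm{hon}$ of Eq.~\eqref{eq:ECpreshared} rather than $h_\mathrm{hon}$, because the input distribution $\mathsf{P}[X_j=Y_j=z]$ has shifted from $(1-\gamma)/4$ to $(1-\gamma)/2$; since $\str{X}$ is shared, no error-correction information about it is transmitted, so the bound $\cmax \geq \Hmax^{\ez}(\str{A}|\str{B}\str{X}\str{Y})_\mathrm{hon} + 2\log\frac{1}{\ecEC-\ez}+4$ from Fact~\ref{fact:EC} still applies, now with an AEP bound giving $n\tilde{h}_\mathrm{hon} + O(\sqrt{n})$. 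The parameter-estimation completeness bound $\ecPE$ is literally identical, since $\str{C}$ is computed in the same way. So the protocol is $(\ecEC+\ecPE)$-complete exactly as before.

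For soundness I would reuse the whole structure of Sec.~\ref{sec:soundness}: build the analogous virtual protocol (with $\Btstr$, $\Ctstr$, and now also $\Ytstr$ as auxiliary classical registers), define the same events $\Og,\Oh,\OPE,\OpPE$, prove $\eh$-correctness identically via 2-universal hashing, split the secrecy norm via the $\Og$ / $\no{\Og}$ decomposition, and run the same three-case analysis (Case~\ref{case:ECsmall}/\ref{case:PEsmall}/\ref{case:neither}). The only substantive change is in the min-entropy input to the Leftover Hashing Lemma: privacy amplification now acts on $\str{A}\str{X}$, so I need $\Hmin^{\es}(\str{A}\str{X}|\Ytstr\str{L}\allE)$ in place of $\Hmin^{\es}(\str{A}|\str{X}\str{Y}\str{L}\allE)$. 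The chain-rule/leakage bookkeeping of Eq.~\eqref{eq:entPA} goes through with $\str{X}\str{Y}\str{L}\allE$ replaced by $\Ytstr\str{L}\allE$ and the conditioning register $Z$ now being $\str{A}\str{X}$; note $\str{Y}$ is recoverable from $\Ytstr$ together with $\str{X}$, and $\str{X}$ is now inside the "key" register rather than leaked, which is exactly why the sifting factor disappears. The Case~\ref{case:neither} arithmetic combining the raw min-entropy bound, the $\log(1/\pr{\OpPE})$ terms, and the LHL penalty is structurally unchanged, so the final soundness parameter $\max\{\eEA,\ePA+2\es\}+2\eh$ is preserved.

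The core of the argument, and the part I would write out most carefully, is the analogue of Theorem~\ref{th:rawHmin}, i.e. re-deriving the EAT bound with the new register assignment $D_j\leftrightarrow\Ct_j$, $S_j\leftrightarrow A_j\Bt_jX_j$, $T_j\leftrightarrow\Yt_j$. One checks the EAT-channel conditions (Definition~\ref{def:EATchann}) exactly as in Sec.~\ref{sec:EAT} — $\Ct_j$ is still determined from $A_j\Bt_jX_jY_j$, hence from $S_jT_j$ since $X_j\subseteq S_j$ and $Y_j$ is recoverable from $\Yt_j$ together with $X_j$ — and the Markov conditions $I(S_{\upto{j-1}}:T_j|T_{\upto{j-1}}\allE)=0$ hold because $\Yt_j$ is produced by fresh trusted randomness. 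The new ingredient is the single-round entropy bound: using the chain rule $H(A_j\Bt_jX_j|\Yt_jR) = H(X_j|\Yt_jR) + H(A_j\Bt_j|X_j\Yt_j\Fj R) = 1 + H(A_j\Bt_j|X_j Y_j\Fj R)$ (the last step since $X_j$ fixes $Y_j$ on generation rounds, and $H(X_j|\Yt_jR)=1$ by independence), and then applying the same case-split over $(x,y)$ and the crude bound $H(A_j\Bt_j|\cdots)\ge H(A_j|\cdots)$ on the test terms as in Eq.~\eqref{eq:1rndmix}--\eqref{eq:gproof}, but now with generation-round weight $(1-\gamma)/2$ instead of $(1-\gamma)/4$ since there is no sifting. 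This yields $H(A_j\Bt_jX_j|\Yt_j\Fj R)\ge 1+\tilde{\g}(w)$ with $\tilde{\g}(w)=(1-\gamma)\lin_\p(w)+\gamma\lin_0(w)$. From here the min-tradeoff function $\tilde{f}_\mathrm{min}$ is built exactly as in Sec.~\ref{sec:fmin} (infrequent-sampling channel, construction of~\cite{LLR+21}) using $1+\tilde{\g}$ in place of $\g$, giving the stated $\Max$, $\Min$, $\Var$ values; the "constant offset" $1$ shifts $\Max$ and $\Min$ equally and so only affects the finite-size corrections through $\dim(S_j)$ (now $\dim(A_j\Bt_jX_j)$, changing $2\dim(S_j)^2+1$ from $33$ to $129$ and $2\log\dim(S_j)$ from $2\log4$ to $2\log8$) and the admissible range of $\cperp$. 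The key bound $\tilde{f}_\mathrm{min}(\freq_\ctstr)\ge 1+\tilde{\g}(\wexp-\dtol)$ for $\ctstr\in\OpPE$ follows by the same affine computation as Eq.~\eqref{eq:fminPEineq}. Plugging $h=1+\tilde{\g}(\wexp-\dtol)$ into Fact~\ref{fact:EAT} and combining with the (unchanged) max-entropy bound~\eqref{eq:Hmaxbnd} for the auxiliary register gives the modified version of Theorem~\ref{th:rawHmin}, and hence the claimed $\lkey$.

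I expect the main obstacle to be purely a matter of careful bookkeeping rather than a conceptual difficulty: one must verify that replacing the public announcement of $\str{X}\str{Y}$ by the announcement of $\Ytstr$ alone does not break any step where the old proof implicitly used that $\str{X}\str{Y}$ was classical and available — in particular in the chain rule~\eqref{eq:chain} (where the auxiliary register must be conditioned on registers appearing in the final min-entropy, now $\Ytstr$), in the EAT-channel check that $D_j$ is a function of $S_jT_j$ (which needs $X_j\subseteq S_j$, exploited above), and in the max-entropy bound for $\Btstr$ where one can still use a constant max-tradeoff function $\gamma$. One should also double-check that the error-correction string $\str{L}$ legitimately may depend on $\str{Y}$ (recoverable by Alice from $\Ytstr$ and $\str{X}$) without this leaking $\str{X}$, which is fine since $\str{L}$ is computed from data both parties hold. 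None of these require new techniques, so the proof is a faithful adaptation; the only genuinely new line is the chain-rule trick in Eq.~\eqref{eq:chaintrick} that "recovers" the entropy of $\str{X}$ and thereby removes the sifting factor.
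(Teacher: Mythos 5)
Your proposal is correct and follows essentially the same route as the paper: the chain-rule identity $H(A_j\Bt_jX_j|\Yt_jR)=1+H(A_j\Bt_j|X_jY_j\Fj R)\geq 1+\tilde{\g}(w)$, the re-assignment $S_j\leftrightarrow A_j\Bt_jX_j$, $T_j\leftrightarrow\Yt_j$ in the EAT channels, and the resulting dimension changes ($33\to129$, $2\log4\to2\log8$) are exactly the paper's argument. The additional bookkeeping you flag (Markov conditions via fresh randomness of $\Yt_j$, recoverability of $Y_j$ from $\Yt_jX_j$, the replacement of $h_\mathrm{hon}$ by $\tilde{h}_\mathrm{hon}$) matches the paper's treatment, which it states more tersely.
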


Overall, recalling that Protocol~\ref{prot:DIQKD} required $n$ bits of pre-shared key, we see that the \emph{net} gain of secret key bits in Protocol~\ref{prot:preshared} 
is larger than that of Protocol~\ref{prot:DIQKD} by a factor of approximately (ignoring the changes to the finite-size corrections)
\begin{align}
\frac{n \left(\tilde{\g}(\wexp-\dtol) - \tilde{h}_\mathrm{hon}\right)}{n \Big(\g(\wexp-\dtol) - h_\mathrm{hon}\Big)} \approx 2
,
\end{align}
since it avoids the sifting factor. 
Informally, by keeping $\str{X}$ secret and incorporating it in the privacy amplification step, we have ``recovered'' the entropy that was in the pre-shared key.

In practice, including the string $\str{X}$ in privacy amplification 
essentially doubles the input size for the hash function in that step, which raises its computational difficulty substantially (though not insurmountably).
One might wonder whether it would be possible to bypass this aspect --- for instance, by simply performing privacy amplification on $\str{A}$ and $\tilde{\str{A}}$ as before, then appending $\str{X}$ to the output. At first glance, this approach might appear plausible, since $\str{X}$ is not announced in Protocol~\ref{prot:preshared}. Unfortunately, it seems unclear how to certify that the publicly communicated error-correction string $\str{L}$ is independent of $\str{X}$ (in fact, it seems unlikely that this is true). Hence the idea of simply appending $\str{X}$ may not be secure. By instead incorporating it in privacy amplification in the specified manner, Protocol~\ref{prot:preshared} ensures that the entropy of $\str{X}$ is securely ``extracted'' into the final key.

As previously mentioned, the net increase in secret key given by one instance of Protocol~\ref{prot:preshared} is limited to a fraction of the amount of pre-shared key. 
However, it is possible in principle to recursively run Protocol~\ref{prot:preshared} in order to achieve unbounded key expansion --- one can use the key generated by one instance of Protocol~\ref{prot:preshared} to run it again with a longer pre-shared key and larger $n$ (since the security definition is composable, the soundness parameter 
will only increase 
additively in this process~\cite{arx_PR14,arx_PR21}). We stress that in doing so, one must always incorporate the seed into the privacy-amplification step exactly as specified in Protocol~\ref{prot:preshared} --- in particular, this means that the \emph{entire} key changes with every iteration, instead of simply having some new bits appended.
Some care is necessary regarding device memory across instances of this recursive procedure --- while it does not seem to be directly vulnerable to the memory attack of~\cite{BCK13}\footnote{This is because the only public communication in Protocol~\ref{prot:preshared} that can leak any information is the error-correction string 
(all other public communication is based on trusted randomness). In our security proof, we have bounded the min-entropy leakage at this step simply via the length of this string, without any assumptions about its structure, and hence we can still obtain a secure bound on the min-entropy of the input for privacy amplification in the final protocol instance. Note that this claim is strictly restricted to device reuse following the recursive process specified here --- once any key bits have been used for any other purpose, the attack again becomes a potential concern if the devices are reused.}, it is still important to ensure that the states measured in each instance of the protocol are independent of the key generated in the preceding instance, since this key is used to choose the device inputs (which must be independent of the state in order for our security arguments to hold). Again, this relies on the notion that the registers measured by the devices do not contain information about the key generated in the preceding instance.

There is another potential variant of this idea where a pre-shared key is instead used to generate \emph{both} input strings $\str{X}$ and $\str{Y}$, and the input-choice announcement is omitted entirely, with privacy amplification being performed on $\str{A}\str{X}\str{Y}$ and $\tilde{\str{A}}\str{X}\str{Y}$. This can be done by using $\kappa \binh(\gamma) n$ bits to choose the test rounds approximately according to the desired IID distribution of test rounds, then using $\kappa' \gamma n$ bits to set the value of $Y_j$ in the test rounds, where $\kappa,\kappa'>1$ are constants that can be chosen such that the approximations to the desired distributions are sufficiently accurate (see the randomness-expansion protocols in~\cite{ARV19,BRC20} for a more complete description of this process based on the \term{interval algorithm}), and $n$ bits to set the value of $X_j$ in all rounds.\footnote{We break up the use of the seed into separate processes because it allows for better efficiency as compared to directly approximating the desired distribution of $\str{X}\str{Y}$ --- with the approach we describe, the ``inefficiency'' prefactors $\kappa,\kappa'$ of the interval algorithm only appear on the $\binh(\gamma)n,\gamma n$ terms instead of the full entropy of $\str{X}\str{Y}$.} This would hence require $(1 + \kappa \binh(\gamma) + \kappa' \gamma)n$ bits of seed randomness. A similar argument as above could then be performed by noting that (for $X_jY_j$ generated according to the ideal distribution) we have $H(X_j Y_j) 
= 1 + \binh(\gamma) + \gamma
$, so most of the seed entropy can be ``recovered'', up to the losses from the $\kappa,\kappa'$ factors. However, tracking the effects of using the interval algorithm to approximate the ideal distribution is cumbersome (albeit possible), and it is unclear if this variant offers any immediate advantage over Protocol~\ref{prot:preshared} for DIQKD --- though it may be useful for protocols that use non-uniform input distributions.

On the other hand, it appears that this variant may have potential for the purposes of DIRE instead. The main reason why the random-key-measurement approach in~\cite{SGP+21} could not be easily generalized to DIRE is that in order for Alice to select a uniformly random input in every round, she requires a (local) source of $n$ random bits, which is a free resource in DIQKD but not in DIRE --- if a proposed DIRE protocol consumes more random bits than it produces, then it has failed to achieve randomness \emph{expansion} (though it may still be useful for DIRNG). However, the protocol proposed in this section has the property that it ``recovers'' the entropy contained in the seed, which means that one can afford to use much larger seeds while still obtaining a net increase in secret key. Explicitly, the application of this idea to DIRE would hence be as follows (also proposed independently in~\cite{arx_BRC21}): one begins with $2n$ random bits, which are then used as the input strings\footnote{For DIRE based on the CHSH inequality, Bob only requires two possible measurements instead of the four required for the DIQKD protocol here.} $\str{X}\str{Y}$ to the devices over $n$ rounds to obtain outputs $\str{A}\str{B}$. Modelling this process using EAT channels in a manner similar to above (see e.g.~\cite{LLR+21} for details), for each round we would have 
\begin{align}
H(A_j B_j X_j Y_j| R) 
&= H(X_j Y_j | R) + H(A_j B_j | X_j Y_j R) \nonumber \\ 
&= 2 + H(A_j B_j | X_j Y_j R),
\end{align}
which (given a bound on $H(A_j B_j | X_j Y_j R)$) allows one to bound the smoothed min-entropy of $\str{A}\str{B}\str{X}\str{Y}$ conditioned on $\allE$. By performing privacy amplification on $\str{A}\str{B}\str{X}\str{Y}$, one ``recovers'' all the entropy in the seed, due to the $H(X_j Y_j | R)$ term in the above equation. Overall, this proposed protocol allows one to use the improved entropy rate provided by the random-key-measurement approach~\cite{SGP+21}, in the context of DIRE instead of DIQKD.

\subsection{Collective attacks}
\label{sec:coll}

As a reference to compare our results against, we could consider whether a longer secure key could be obtained under the collective-attacks assumption. To this end, we derive the following theorem, with the proof given in Appendix~\ref{asec:collective}:
\begin{theorem}\label{th:collective}
Take any 
$\ecEC,\ecPE,\ePA,\eh,\es,\eIID \in (0,1]$, $\gamma\in(0,1)$, $\p\in[0,1/2]$, and $\dsou\in[0,\wexp - \dtol)$, such that 
\begin{align}\label{eq:eIID}
\eIID \geq 
\cdfBin{n}{1-(\wexp - \dtol - \dsou)\gamma}{\floor{(1-(\wexp-\dtol)\gamma)n}}.
\end{align}
Under the collective-attacks assumption, Protocol~\ref{prot:DIQKD} is $(\ecEC + \ecPE)$-complete and $(\max\{\eIID, \ePA + 2\es\} + 2\eh)$-sound when performed with any choice of $\cmax$ and $\dtol$ such that Eq.~\eqref{eq:comEC} and Eq.~\eqref{eq:ecPE} hold, and $\lkey$ satisfying
\begin{align}
\lkey \leq
n\g(\wexp-\dtol-\dsou) - \sqrt{n} \,(2\log5)\sqrt{\log\frac{2}{\es^2}}
- \cmax - \ceil{\log\left(\frac{1}{\eh}\right)} - 2\log\frac{1}{\ePA} + 2.
\label{eq:lkeycoll}
\end{align}
\end{theorem}
This bound asymptotically converges to the desired value by an analysis similar to that above~\eqref{eq:asympt}. Explicitly: set all the $\eps$ parameters to some constant values satisfying the desired completeness and soundness bounds, then take
$\gamma=2 n^{-1} \max\{ \dtol^{-2} \log(2/\ecPE), \dsou^{-2} \log(1/\eIID) \}$ to ensure that~\eqref{eq:ecPE} and~\eqref{eq:eIID} are satisfied (the latter claim follows from~\eqref{eq:chernoffeIID} in Appendix~\ref{asec:collective}). Now similarly set $\dtol,\dsou \propto 1/n^{1/3}$ to obtain $\dtol,\dsou,\gamma = O(1/n^{1/3})$, i.e.~the ``relaxation parameters'' $\dtol,\dsou$ and the test-round fraction $\gamma$ all go to zero as $n$ increases. This yields convergence to the asymptotic keyrate formula~\eqref{eq:asympt}. 

However, the above theorem is simply a statement for Protocol~\ref{prot:DIQKD} under the assumption of collective attacks, and that protocol does not fully exploit some implications of that assumption. For instance, in Theorem~\ref{th:collective} there is implicitly an $O(\gamma)$ subtractive penalty to the keyrates (which was also present in Theorem~\ref{th:DIQKD}\footnote{In fact, Theorem~\ref{th:DIQKD} has another $O(\gamma)$ subtractive penalty from the use of the bound~\eqref{eq:chain}, but this was due to the technical limitations of the EAT and the fact that the bounds~\eqref{eq:fmax} and \eqref{eq:gproof} are slightly suboptimal.}) caused by having to include the test-round data in the $\cmax$ term. While this is partly mitigated by the fact that the test rounds are also included in privacy amplification, the issue is that the test-round outputs are less strongly correlated as compared to generation rounds, and hence in most noise regimes there is an overall negative effect (on the order of $\gamma$). Yet under the collective-attacks assumption, the test rounds are completely independent of the generation rounds, which implies that the effect of $\gamma$ should instead be to reduce the keyrate by a \emph{multiplicative} factor of $(1-\gamma)$. Importantly, in the latter case it is possible to choose arbitrarily large test probabilities $\gamma$ without necessarily making the keyrates negative, which can dramatically improve the statistical bounds for parameter estimation. To formalize this idea, we consider Protocol~\ref{prot:optcoll} below, which attempts to minimize the finite-size correction as much as possible using the most optimistic assumptions that have been discussed thus far.
\begin{savenotes}
\begin{breakablealgorithm}
\caption{} 
\label{prot:optcoll}
This protocol proceeds the same way as Protocol~\ref{prot:DIQKD}, except for the following changes:
\begin{itemize}
\item Instead of independently choosing whether each round is a test or generation round, Alice chooses a uniformly random subset of size $m$ as test rounds before the protocol begins, 
and we define $\gamma$ as the value $m/n$. Alice also prepares the strings $\str{X}\str{Y}$ in advance, by choosing $X_j=Y_j\in\inYg$ uniformly at random in the generation rounds, and choosing $X_j\in\inX, Y_j\in\inYt$ uniformly at random in the test rounds.
\item In each round, Alice and Bob briefly store their received quantum states instead of immediately measuring them. Alice then publicly announces $X_jY_j$, which Alice and Bob then use as the inputs to their devices.\footnote{Here, in our attempt to minimize the finite-size effects, we are following the~\cite{ARV19} assumption mentioned previously: Alice and Bob can briefly store their received quantum states, in a manner such that the public communication cannot affect the stored states.}
\item In the error-correction step, Alice does not send error-correction data (and a corresponding hash) for the full string $\str{A}$, but rather only the subset of it consisting of the generation rounds, denoted as $\str{A}_g$. Bob's guess for this string will be denoted as $\tilde{\str{A}}_g$. The values of $\str{A}$ in the test rounds, denoted as $\str{A}_t$, are sent directly to Bob without compression or encryption, and Bob uses this string for parameter estimation.
\item Bob's accept condition is instead to check that $\hash(\str{A}_g) = \hash(\tilde{\str{A}}_g)$ and $\freq_{\str{c}_t}(1)\geq \wexp-\dtol$ 
hold, where $\str{C}_t$ denotes the substring of $\str{C}$ corresponding to the test rounds (in particular, this means the frequencies are computed with respect to a string of length $\gamma n$, not $n$).
\item Privacy amplification is performed only on the strings $\str{A}_g$ and $\tilde{\str{A}}_g$.
\end{itemize}
\end{breakablealgorithm}
\end{savenotes}

\begin{figure}
\centering
\subfloat[\cite{RBG+17} parameters, $\p=0$]{
\includegraphics[width=0.48\textwidth]{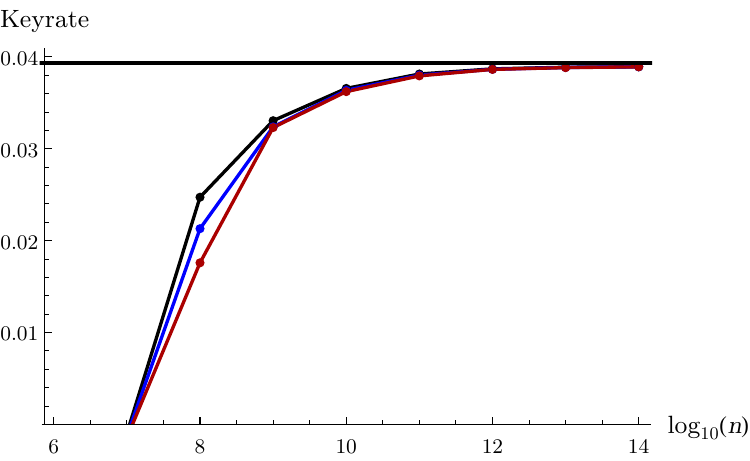}
} 
\subfloat[\cite{RBG+17} parameters, $\p=0.03$ (with heuristic $\lin_{\p}$)]{
\includegraphics[width=0.48\textwidth]{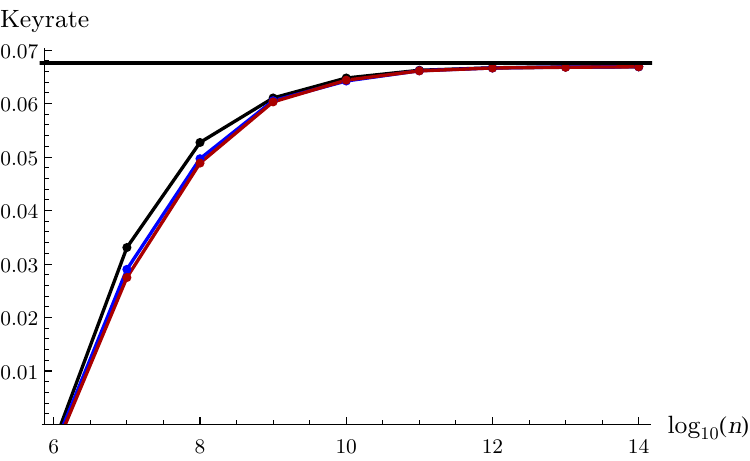}
}
\caption{(From~\cite{arx_TSB+20}) Finite-size keyrates as a function of number of rounds in Protocol~\ref{prot:optcoll} (see Theorem~\ref{th:optcoll}),
using honest devices following the estimated parameters in~\cite{MvDR+19} for the loophole-free Bell test in~\cite{HBD+15}, for $\p=0$ and $\p=0.03$ (the latter is a rough estimate of the choice of $\p$ which yields the highest asymptotic keyrate for these experimental parameters). Note that the latter graph is computed using a heuristic estimate of $\lin_{\p}$ rather than a certified bound.
The error-correction protocol is taken to satisfy Eqs.~\eqref{eq:optEC}~and~\eqref{eq:optcollAEP}. The colours correspond to soundness parameters of $\esound=10^{-3}$, $10^{-6}$, and $10^{-9}$ for black, blue, and red respectively, and the completeness parameter is $\ecom=10^{-2}$ in all cases.
The horizontal line denotes the asymptotic keyrate. 
All other parameters in Theorem~\ref{th:optcoll} were numerically optimized. The required number of rounds to achieve positive keyrate is substantially lower than Protocol~\ref{prot:DIQKD} (see Fig.~\ref{fig:experiments}).
}
\label{fig:protIID}
\end{figure}

For this protocol, the value of $\cmax$ is to be computed based only on the number of generation rounds, 
since error correction is performed on the string $\str{A}_g$ rather than $\str{A}$. Focusing on the best possible theoretical bounds from Sec.~\ref{sec:EC}, this means we take $\cmax$ to be given by Eq.~\eqref{eq:optEC} with
\begin{align}
\Hmax^{\ez}(\str{A}|\str{B}\str{X}\str{Y})_\mathrm{hon} \leq (1-\gamma)n 
h_\mathrm{hon}
+ \sqrt{(1-\gamma)n} \, (2\log5)\sqrt{\log\frac{2}{\ez^2}},
\label{eq:optcollAEP}
\end{align}
where
\begin{align}
h_\mathrm{hon} 
=
\sum_{z\in\inX} \frac{1}{2} H(A_j | B_j;X_j = Y_j = z)_\mathrm{hon}  ,
\end{align}
since the test rounds are excluded. With this value of $\cmax$ in mind, we can state the security guarantees of this protocol, with the proof given in Appendix~\ref{asec:optcoll}
(note that the dependence on 
several security parameters
here is somewhat different as compared to the previous theorems):
\begin{theorem}\label{th:optcoll}
Take any 
$\ecEC,\ecPE,\ePA,\eh,\es,\eIID \in (0,1]$, $\gamma\in(0,1)$, $\p\in[0,1/2]$, and $\dsou\in[0,\wexp - \dtol)$, such that 
\begin{align}
\eIID \geq 
\cdfBin{\gamma n}{1-\wexp+\dtol+\dsou}{\floor{(1-\wexp+\dtol)\gamma n}}. \label{eq:eIIDoptcoll}
\end{align}
Under the collective-attacks assumption, Protocol~\ref{prot:optcoll} is $(\ecEC + \ecPE)$-complete and $(\max\{\eIID, \ePA + 2\es\} + \eh)$-sound 
when performed with $\cmax$ defined in terms of $\ecEC$ as described above, 
and $\dtol,\lkey$ satisfying
\begin{align}
\ecPE &\geq \cdfBin{\gamma n}{\wexp}{\floor{(\wexp-\dtol)\gamma n}}, \label{eq:ecPEoptcoll}\\
\lkey &\leq
(1-\gamma)n\lin_\p(\wexp-\dtol-\dsou) - \sqrt{(1-\gamma)n} \,(2\log5)\sqrt{\log\frac{2}{\es^2}}
\nonumber \\ &\qquad 
- \cmax - \ceil{\log\left(\frac{1}{\eh}\right)} - 2\log\frac{1}{\ePA} + 2. \label{eq:lkeyoptcoll}
\end{align}
\end{theorem}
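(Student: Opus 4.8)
The plan is to adapt the soundness proof of Protocol~\ref{prot:DIQKD} (Sec.~\ref{sec:soundness}) to the modified protocol, leveraging the collective-attacks assumption to replace the EAT with the much simpler AEP, while restricting attention to the generation-round substring $\str{A}_g$ throughout. First I would establish completeness: the protocol aborts only if $\str{A}_g \neq \tilde{\str{A}}_g$ (probability at most $\ecEC$ by the error-correction guarantee, with $\cmax$ chosen via Eqs.~\eqref{eq:optEC} and~\eqref{eq:optcollAEP} applied to the $(1-\gamma)n$ generation rounds) or if $\freq_{\str{c}_t}(1) < \wexp-\dtol$. For the latter, note that under the honest IID behaviour $\str{C}_t$ consists of $\gamma n$ IID bits with $\pr{C_j=1}_\mathrm{hon}=\wexp$, so the abort probability is $\cdfBin{\gamma n}{\wexp}{\floor{(\wexp-\dtol)\gamma n}}$, giving Eq.~\eqref{eq:ecPEoptcoll}; the union bound yields $(\ecEC+\ecPE)$-completeness. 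Note that here the frequencies are computed over a string of length $\gamma n$ rather than $n$, which is the key structural simplification of Protocol~\ref{prot:optcoll}.

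For soundness, I would first handle correctness exactly as in Eq.~\eqref{eq:corrproof}: since $K_A,K_B$ are obtained from $\str{A}_g,\tilde{\str{A}}_g$ and the accept condition includes $\hash(\str{A}_g)=\hash(\tilde{\str{A}}_g)$, the $2$-universal hashing property gives $\eh$-correctness. For secrecy, I would split into cases mirroring Sec.~\ref{sec:soundness} but replace the event $\OpPE$-based analysis with a direct statistical statement. Specifically, under collective attacks each round has a well-defined single-round state $\rho_{\qA\qB\qE}$ and measurements; either (a) the single-round CHSH winning probability $w$ satisfies $w \geq \wexp - \dtol - \dsou$, or (b) it does not, in which case a Chernoff/binomial bound shows the parameter-estimation check on the $\gamma n$ test rounds accepts with probability at most $\eps_\mathrm{IID}$ as defined in Eq.~\eqref{eq:eIIDoptcoll}, so the $\pr{\mathrm{accept}}$ prefactor in the secrecy condition handles this case. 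In case (a), the generation-round state has IID structure $\rho_{\hat{A}\hat{B}XY\qE}^{\otimes (1-\gamma)n}$, and since the single-round entropy satisfies $H(\hat{A}|XY\qE) \geq \lin_\p(w)/\dots$ — more precisely $\sum_x \keyw_x H(\hat{A}_x|\qE) \geq \lin_\p(w) \geq \lin_\p(\wexp-\dtol-\dsou)$ by the affine bound from Chapter~\ref{chap:singlernd} — I apply the AEP (Fact~\ref{fact:AEP}) to get $\Hmin^{\es}(\str{A}_g|\str{X}_g\str{Y}_g\qE^{(1-\gamma)n}) \geq (1-\gamma)n\,\lin_\p(\wexp-\dtol-\dsou) - \sqrt{(1-\gamma)n}\,(2\log5)\sqrt{\log(2/\es^2)}$, using $\dim(\hat{A})=2$ so that $2\log(1+2\cdot 2)=2\log5$.

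Then I would chain through: a coarse chain rule (as in Eq.~\eqref{eq:roughchainrule} / \cite{Tom16} Lemma~6.8) subtracts $\len(\str{L})$, which is bounded by $\cmax + \ceil{\log(1/\eh)}$ since error correction and the check hash are performed only on $\str{A}_g$; the test-round outputs $\str{A}_t$ are sent in the clear but are decoupled from the generation rounds under the collective-attacks assumption (this is the crucial point distinguishing this from the coherent-attacks proof), so they do not reduce the relevant min-entropy; and finally the Leftover Hashing Lemma (Fact~\ref{fact:LHL}) converts this into $\ePA$-secrecy provided $\lkey$ satisfies Eq.~\eqref{eq:lkeyoptcoll}, contributing the $-2\log(1/\ePA)+2$ term and a $+2\es$ penalty. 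Collecting the cases gives $\esecr = \max\{\eps_\mathrm{IID},\ePA+2\es\} + \eh$ (the extra $+\eh$ handled as in the $\Og/\no{\Og}$ split, or more simply absorbed since here the split is cleaner than the $\Btstr$-based one), and combined with correctness, $(\max\{\eps_\mathrm{IID},\ePA+2\es\}+\eh)$-soundness — note this has a single $\eh$ rather than $2\eh$ because $\str{A}_t$ being public means no separate $\Btstr$ register trick is needed. The main obstacle I anticipate is the careful bookkeeping of why the cleartext test-round data $\str{A}_t$ genuinely does not appear in the conditioning system for the min-entropy bound: one must argue that under collective attacks the generation-round reduced state is independent of the test-round measurement outcomes, so that giving Eve $\str{A}_t$ is equivalent to giving her nothing relevant — this requires being precise about the tensor-product structure across rounds and that the test/generation partition is fixed in advance (which is exactly why Protocol~\ref{prot:optcoll} chooses a fixed random subset of size $m$ rather than the IID partition of Protocol~\ref{prot:DIQKD}). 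A secondary subtlety is verifying the binomial bound in case (b) handles the worst case over all single-round strategies consistent with $w < \wexp-\dtol-\dsou$, which follows since the test-round acceptance probability is monotonic in $w$.
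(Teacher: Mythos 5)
Your proposal is correct and follows essentially the same route as the paper's proof in Appendix~\ref{asec:optcoll}: completeness via the binomial bound on the $\gamma n$ test rounds, the three-way case split (smoothing technicality / $\wtrue < \wexp-\dtol-\dsou$ / otherwise), the AEP in place of the EAT, the chain rule for $\str{L}$, the tensor-product decoupling of $\str{A}_t$ from the generation rounds (formalized in the paper as Fact~\ref{fact:Hmintensor}), and the Leftover Hashing Lemma. One bookkeeping point to tighten: the secrecy parameter here must be $\esecr = \max\{\eps_\mathrm{IID},\ePA+2\es\}$ with \emph{no} additional $+\eh$, which is achieved by applying the min-entropy bound directly to $\rho_{\land\Oh\land\OPE}$ (conditioning on a classical event cannot decrease the smoothed min-entropy, Eq.~\eqref{eq:condlemma}) rather than performing the $\Og/\no{\Og}$ split of Eq.~\eqref{eq:secsplit}; if you did perform that split you would end up with $2\eh$ in the soundness parameter, contradicting the claimed bound, so your "more simply absorbed" alternative is the one that must be used.
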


In Fig.~\ref{fig:protIID}, we plot the results of Theorem~\ref{th:optcoll}, focusing on the~\cite{RBG+17} experiment. This protocol has improved finite-size performance as compared to the original Protocol~\ref{prot:DIQKD} (under the collective-attacks assumption) due to at least two factors. Firstly, we can potentially use larger $\gamma$ values, as previously mentioned (some of the points shown in the figure correspond to values ranging up to $\gamma \approx 0.3$).
Secondly, we find that for fixed values of $\gamma,n,\wexp,\dtol,\dsou$, the binomial-distribution bounds for this protocol (Eqs.~\eqref{eq:eIIDoptcoll} and \eqref{eq:ecPEoptcoll}) are typically 
several orders of magnitude better
than their counterparts for Protocol~\ref{prot:DIQKD} (Eqs.~\eqref{eq:eIIDbound} and \eqref{eq:ecPEbound1}). Intuitively, this arises because in Protocol~\ref{prot:DIQKD}, the number of test rounds is itself a random variable, hence increasing the variance in e.g.~the number of rounds where $C_j=1$.
Practically speaking, this means Protocol~\ref{prot:DIQKD} requires noticeably larger values of $\dtol$ and $\dsou$ in order to achieve given completeness and soundness parameters, hence reducing the keyrate by a nontrivial amount.

However, we see that even with the optimistic assumptions that yield Theorem~\ref{th:optcoll}, the keyrate for the estimated experimental parameters we consider only becomes positive at fairly large $n$. This indicates that substantial further work is necessary in order to achieve a demonstration of positive finite-size keyrates.

\chapter{Advantage distillation}
\label{chap:AD}
\newcommand{\scenoneway}{i}
\newcommand{\scenHardyopt}{ii}
\newcommand{\scenMY}{iii}
\newcommand{\scenCHSHbasic}{iv}
\newcommand{\scenCHSHq}{v}
\newcommand{\scenCHSHeta}{vi}

\newcommand{\blk}[1]{\boldsymbol{#1}}
\newcommand{\prsymb}{\mathsf{P}}
\newcommand{\sz}{k}

\newcommand{\etat}{\eta_t} 
\newcommand{\qt}{\q_t} 

\newcommand{\class}[1]{\hat{#1}} 
\newcommand{\e}{\epsilon} 
\newcommand{\dn}{\delta_{{\sz}}} 
\newcommand{\prt}{\prnonoise[ab|xy]}
\newcommand{\flip}[1]{\overline{#1}} 
\newcommand{\supeps}{\delta}
\newcommand{\symmset}{\mathcal{S}}

\newcommand{\mixedsymm}{\rho}
\newcommand{\puresymm}{\rho'}
\newcommand{\psisymm}{\psi'}

We now turn to the somewhat less well-studied setting of advantage distillation in DIQKD, where the proof techniques we have discussed in previous chapters no longer apply straightforwardly. Because of this, in this chapter we will work under the simplifying assumption of collective attacks. We begin by defining in Sec.~\ref{sec:prelimAD} the specific advantage distillation protocol that we consider. Then in Sec.~\ref{sec:securityAD}, we derive a sufficient condition for this protocol to be secure in the context of DIQKD. Finally, in Sec.~\ref{sec:threshAD} we describe the resulting noise tolerances, and compare them to the values achieved by one-way protocols. 
The results in this chapter are based on~\cite{arx_TSB+20}, and the phrasing and presentation are essentially identical to that work.

\section{Preliminaries}
\label{sec:prelimAD}

We work under the collective-attacks assumption, so we follow the notation described in Sec.~\ref{sec:introdevices} and use $\rho_{\qA\qB\qE}$ to denote the single-round state. Given the IID structure, the outcome probabilities can be estimated to arbitrary accuracy given enough rounds, so we shall assume that the parameter estimation step basically only accepts states that produce exactly the same distribution $\pr{ab|xy}$ as the honest states. We focus only on protocols where the generation rounds always use the inputs $x=y=0$, so we are not considering the random-key-measurements technique. 
Also, we will only consider scenarios where all outputs are binary-valued, and we will not implement noisy preprocessing.

For convenience in the proofs, we assume a symmetrization step is implemented (as mentioned in Sec.~\ref{sec:qubit}), in which Alice generates a uniform random bit $F$ in each round and sends it to Bob, with both parties flipping their measurement outcome if and only if $F=1$\footnote{In this work, we take the symmetrization step to be applied to all measurements, which is possible because we focus on scenarios where all measurements have binary outcomes. In principle, one could instead symmetrize the key-generating measurements only, in which case the measurements not used for key generation can have more outcomes.
}. The bit $F$ can be absorbed into Eve's side-information $E$. (Again, this symmetrization step can be omitted in practice; see~\cite{TLR20}.) 
After this process, the generation measurements have symmetrized outcomes, in the sense $\pr{01|00}=\pr{10|00}=\e/2$ and $\pr{00|00}=\pr{11|00}=(1-\e)/2$ for some $\e < 1/2$ (if $\e>1/2$, simply swap Bob's outcome labels).\footnote{Note that in this context, $\e$ is equal to the QBER (for the generation inputs). When we consider depolarizing noise later, the parameter $\q$ is not necessarily equal to $\e$ in all the scenarios we consider --- this is somewhat unfortunate, but needs to be kept in mind.} 
Henceforth, $\prsymb_{\hat{A}\hat{B}|XY}$ refers to the distribution after symmetrization. 

We focus on the repetition-code protocol~\cite{Mau93,wolfthesis,rennerthesis,BA07}\footnote{We choose to focus on this protocol because in device-dependent QKD, thus far it has been the protocol that has achieved the highest noise tolerances~\cite{KL17}.} for advantage distillation, which proceeds as follows.
First, exploiting the collective-attacks assumption, we shall simply ignore the test rounds (apart from assuming they have been used for arbitrarily accurate estimation of $\pr{ab|xy}$), and focus only on the generation rounds. The generation rounds are divided into blocks of $\sz$ rounds each.
We shall denote Alice and Bob's output bitstrings in a block as $\str{A}_0$ and $\str{B}_0$ respectively (with the subscript reminding us that all generation rounds use the inputs $x=y=0$), and Eve's side-information in one block as $\blk{E}$ (this is slightly different from the all-rounds side-information $\str{E}$ that we used in previous discussions of collective attacks; we will not need that in this chapter). Alice privately generates a uniformly random bit $C$, and sends the message $\str{M} = \str{A}_0 \oplus (C,C,...,C)$ to Bob via a public authenticated channel. Bob replies with a bit $D$ that expresses whether to accept the block, with $D=1$ (accept) if and only if $\str{B}_0 \oplus \str{M} = (C',C',...,C')$ for some $C' \in \mathbb{Z}_2$. If the block is not accepted, Alice and Bob overwrite the values of $CC'$ with some null symbol $\perp$.
Alice and Bob will then try to distill a secret key from the bit pairs $(C,C')$ across many blocks, by using one-way error-correction followed by privacy amplification. (Here we do not consider the option of performing procedures such as noisy preprocessing, as it becomes very challenging to analyze.)

Under the collective-attacks assumption, we see that the system before applying the one-way error-correction protocol consists of many IID copies of the registers $CC'D\blk{E}\str{M}$, with Alice holding $C$, Bob holding $C'D$, and Eve holding $D\blk{E}\str{M}$. (The $\str{A}_0\str{B}_0$ registers can be excluded from consideration because Alice and Bob do not use them further in this protocol, and Eve has no direct access to them.) In that case, by the same arguments as described in Sec.~\ref{sec:proofsketch}, the asymptotic keyrate of this protocol (with respect to the number of protocol rounds) would be given by an analogous version of the expression~\eqref{eq:devwin}, i.e.~it would be
\begin{align}
\operatorname{rate}_\infty = \frac{1}{\sz} \left(\inf_{\mathcal{S}_{\mathrm{exact}}} H(C|D\blk{E}\str{M}) - H(C|DC')_{\mathrm{hon}}\right),
\end{align}
where the prefactor accounts for the fact that every block consists of $\sz$ rounds. (We have again taken the fraction of test rounds to go to zero asymptotically.) Furthermore, observe that the probability of accepting a block (i.e.~$D=1$) is $\e^{\sz} + (1-\e)^{\sz}$, and thus we have
\begin{align}
H(C|D\blk{E}\str{M}) = \sum_d \pr{D=d} H(C|\blk{E}\str{M};D=d) = (\e^{\sz} + (1-\e)^{\sz}) H(C|\blk{E}\str{M};D=1),
\end{align}
recalling that $C$ is deterministically set to a null value when $D=0$ (and thus we have $H(C|\blk{E}\str{M};D=1)=0$). Performing the same calculation for $H(C|DC')_{\mathrm{hon}}$, we can conclude that the asymptotic keyrate takes the form
\begin{align}
\operatorname{rate}_\infty = \frac{\e^{\sz} + (1-\e)^{\sz}}{\sz} \left( \inf_{\mathcal{S}_{\mathrm{exact}}} H(C|\blk{E}\str{M};D=1) - H(C|C';D=1)_{\mathrm{hon}} \right).
\label{eq:ADkeyrate}
\end{align}
From this expression, we see that key distillation would be asymptotically possible as long as
\begin{align}
\inf_{\mathcal{S}_{\mathrm{exact}}} H(C|\blk{E}\str{M};D=1) - H(C|C';D=1)_{\mathrm{hon}} > 0,
\label{eq_poskey}
\end{align}
and hence we shall just focus on this condition. For brevity, in the rest of this chapter we shall drop the $_{\mathrm{hon}}$ subscript on the second term.

\section{Security conditions}
\label{sec:securityAD}

\subsection{General scenarios}

We derive the following theorem:
\begin{theorem}
\label{th_fidbound}
For a DIQKD protocol as described above, a sufficient condition for Eq.~\eqref{eq_poskey} to hold for large $\sz$ is for all states accepted in parameter estimation to satisfy
\begin{align}
F(\rho_{E|00},\rho_{E|11})^2 > \frac{\e}{1-\e},
\label{eq_fidbound}
\end{align}
where $\rho_{E|a_0 b_0}$ is Eve's single-round state conditioned on outcomes $(a_0,b_0)$ being obtained for inputs $x=y=0$.
\end{theorem}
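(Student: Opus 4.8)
The plan is to bound the two entropies in \eqref{eq_poskey} separately in the large-$\sz$ limit and compare their leading exponential behaviour. Both terms are conditioned on $D=1$, i.e.\ on the block being accepted, which happens with probability $\e^\sz + (1-\e)^\sz$, and both are entropies of Alice's single bit $C$. The key observation is that $C$ is ``half'' of a perfectly correlated pair of bits (before noise), so the relevant quantities are governed by how well the two values $C=0$ and $C=1$ can be distinguished from the available side-information once we condition on acceptance.

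First I would handle the honest term $H(C|C';D=1)$. Conditioned on $D=1$, the pair $(C,C')$ is distributed so that $C=C'$ with probability $(1-\e)^\sz / (\e^\sz + (1-\e)^\sz)$ and $C\neq C'$ with the complementary probability; hence $H(C|C';D=1) = \binh\!\big(\e^\sz/(\e^\sz+(1-\e)^\sz)\big)$, which decays to zero like $(\e/(1-\e))^\sz$ up to polynomial factors. Next I would lower-bound the adversarial term $\inf_{\mathcal{S}_{\mathrm{exact}}} H(C|\blk{E}\str{M};D=1)$. Here I would use the operational relation between conditional von Neumann entropy and distinguishability: a standard bound (e.g.\ via the Fuchs--van de Graaf inequalities \eqref{eq:fuchsvdg} and the relation of $H(C|\cdot)$ to the guessing/distinguishing probability for a binary classical register) gives $H(C|\blk{E}\str{M};D=1) \gtrsim c\, (1 - d)^2$ or $\gtrsim c\, F^2$ for an appropriate trace distance $d$ or fidelity $F$ between Eve's two conditional states $\rho^{(0)}_{\blk E \str M}$ and $\rho^{(1)}_{\blk E \str M}$ corresponding to $C=0$ and $C=1$, conditioned on $D=1$. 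The crux is then to estimate that fidelity. Because of the IID structure and the form of the message $\str M = \str A_0 \oplus (C,\dots,C)$, and because a block is accepted only when $\str B_0 \oplus \str M$ is constant, the accepted-block conditional state factorizes over the $\sz$ rounds into a mixture dominated (for small $\e$) by the ``all-agree'' pattern; the two hypotheses $C=0$ versus $C=1$ differ, round by round, exactly by swapping the roles of the $(0,0)$ and $(1,1)$ outcome branches. Multiplicativity of fidelity over tensor factors then yields, in the leading term, $F(\rho^{(0)}_{\blk E\str M},\rho^{(1)}_{\blk E\str M} \mid D=1) \approx F(\rho_{E|00},\rho_{E|11})^\sz \big/ (\text{normalization} \sim (1-\e)^\sz)$, so that the adversarial entropy decays no faster than $\big(F(\rho_{E|00},\rho_{E|11})^2/(1-\e)\big)^{\sz}$ up to polynomial factors. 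I would make this precise by writing out the accepted-block state as a sum over the $2^\sz$ outcome strings weighted by the corresponding products of conditional states $\rho_{E|a_j b_j}$, keeping the two constant-string terms as the dominant contribution and bounding the cross terms.

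Comparing the two estimates: key distillation is asymptotically possible as soon as the adversarial term, which scales like $\big(F(\rho_{E|00},\rho_{E|11})^2/(1-\e)\big)^{\sz}$, dominates the honest term, which scales like $\big(\e/(1-\e)\big)^{\sz}$ --- i.e.\ as soon as $F(\rho_{E|00},\rho_{E|11})^2/(1-\e) > \e/(1-\e)$, which is exactly \eqref{eq_fidbound}. Since both bounds come with subexponential (polynomial-in-$\sz$) prefactors, a strict inequality between the exponential rates suffices to guarantee \eqref{eq_poskey} for all sufficiently large $\sz$, and the $\sz$-independent normalization and prefactor losses wash out; I would spell this out as a short limiting argument after dividing through by $(\e^\sz+(1-\e)^\sz)/\sz$ as in \eqref{eq:ADkeyrate}.

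\textbf{Main obstacle.} The delicate step is the lower bound on $H(C|\blk{E}\str{M};D=1)$ in terms of the single-round fidelity $F(\rho_{E|00},\rho_{E|11})$, since Eve is not constrained to a product strategy across the block in a way that directly factorizes her conditional states, and one must handle the cross terms in the accepted-block state (and the normalization by $\Pr[D=1]$) carefully so that the effective fidelity really is $F(\rho_{E|00},\rho_{E|11})^\sz$ to leading order rather than something larger. The cleanest route is probably to pass to a purified/isometric picture of Eve's side-information per round, express the two block hypotheses as $\sz$-fold tensor products acted on by the accept projector, and use multiplicativity of fidelity together with a bound showing the accept projector does not increase the relevant distinguishing quantity by more than a polynomial factor; alternatively one can bound $H(C|\blk E\str M;D=1)$ below by a suitable Rényi-$2$ quantity (collision entropy) that is genuinely multiplicative, at the cost of a slightly weaker--looking but asymptotically equivalent condition.
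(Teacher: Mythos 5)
Your proposal follows essentially the same route as the paper's proof: condition on the message $\str{m}$ and acceptance, observe that the accepted-block state is supported on only the four string pairs $(\str{m},\str{m}),(\flip{\str{m}},\flip{\str{m}}),(\str{m},\flip{\str{m}}),(\flip{\str{m}},\str{m})$, keep the two ``agree'' components as the dominant contribution, use IID multiplicativity of fidelity to get $F(\rho_{E|00},\rho_{E|11})^{\sz}$, lower-bound $H(C|\blk{E}\str{M};D=1)$ by a quantity of order $F^{2\sz}$, and compare exponential rates against $H(C|C';D=1)=\binh(\dn)$. Your entropy-versus-fidelity step via guessing probability and Fuchs--van de Graaf is exactly the paper's alternative proof in the appendix; the main proof instead uses the tighter bound $H(C|\blk{E})\geq 1-\binh\bigl((1-F)/2\bigr)$, but both give the same asymptotic condition.

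One correction on the step you flag as the main obstacle. The cross terms are \emph{not} a polynomial prefactor that ``washes out'': the paper bounds the trace distance between the true accepted state and the truncated all-agree state by $\dn=\e^{\sz}/(\e^{\sz}+(1-\e)^{\sz})$ and applies Winter's continuity bound for conditional von Neumann entropy, which produces an additive correction of order $\binh(\dn)$ --- the \emph{same} exponential order as the honest term $H(C|C';D=1)=\binh(\dn)$ in the denominator. The ratio therefore acquires a finite constant contribution ($-1$ in the limit) from the cross terms, and security follows only because the fidelity contribution $F^{2\sz}/\binh(\dn)\sim(\alpha/\beta)^{\sz}/\sz$ diverges when $\alpha=F^2>\beta=\e/(1-\e)$. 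So the limiting argument must track these two same-order quantities against each other rather than dismiss the correction as subexponential; your proposed purification-plus-accept-projector route is unnecessary (and Eve's conditional states need not be pure in general), while the trace-distance-plus-continuity route is both simpler and what actually closes the gap.
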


Note that in the above theorem, $\rho_{E|00}$ and $\rho_{E|11}$ refer to the states after the symmetrization step is implemented. The theorem can in fact be slightly broadened to encompass protocols without a symmetrization step, as long as $\prsymb_{\hat{A}\hat{B}|XY}$ has symmetrized outcomes, since the latter property is all that is required in the proof (see below). However, since in any case the symmetrization step can be omitted in practice~\cite{TLR20}, this does not seem to be a significant generalization.

The intuition behind the proof is that if Eve sees the message value $\str{M} = \str{m}$, then with high probability Alice and Bob's strings have the value $\str{A}_0 \str{B}_0 = \str{m} \str{m}$ or $\flip{\str{m}} \flip{\str{m}}$ (where $\flip{\str{m}}$ denotes the bitwise complement of $\str{m}$). Hence Eve essentially has to distinguish between these two cases, which can be quantified via the fidelity $F(\rho_{\blk{E}|\str{m} \str{m}},\rho_{\blk{E}|\flip{\str{m}} \flip{\str{m}}}) = F(\rho_{E|00},\rho_{E|11})^{\sz}$. To make this rigorous, we also need to prove that various ``low-probability'' contributions vanish asymptotically. We now give the full proof (see also Appendix~\ref{app:AD} for an alternative proof, which yields a potentially more general version of Theorem~\ref{th_fidbound} that was not stated in~\cite{TLR20}):

\begin{proof}
All states denoted in this proof are normalised. To bound $H(C|\blk{E}\str{M};D=1)$, we first observe that since $H(X|YZ)=\sum_z \pr[Z]{z} H(X|Y;Z=z)$ for classical $Z$, it suffices to bound $H(C|\blk{E};\str{M}=\str{m} \land D=1)$ for arbitrary messages $\str{m}$. Starting from the initial $\str{A}_0 \str{B}_0 \blk{E}$ state
\begin{align}
\rho_{\str{A}_0 \str{B}_0 \blk{E}} = \sum_{\str{a}_0, \str{b}_0} \pr[\str{A}_0 \str{B}_0]{\str{a}_0, \str{b}_0} \ketbra{\str{a}_0, \str{b}_0}{\str{a}_0, \str{b}_0} \otimes \rho_{\blk{E}|\str{a}_0 \str{b}_0},
\end{align}
a straightforward calculation shows that conditioned on the block being accepted and $\str{M}=\str{m}$, the $C \blk{E}$ state takes the form $\rho_{C \blk{E} | \str{M}=\str{m} \land D=1} = \sum_c (1/2) \ketbra{c}{c} \otimes \omega_c$ with
\begin{align}
\begin{gathered}
\omega_0 = \frac{\pr[\str{A}_0 \str{B}_0]{\str{m}, \str{m}} \rho_{\blk{E}|\str{m} \str{m}} + \pr[\str{A}_0 \str{B}_0]{\str{m}, \flip{\str{m}}} \rho_{\blk{E}|\str{m} \flip{\str{m}}}}{\pr[\str{A}_0 \str{B}_0]{{\str{m}}, {\str{m}}} + \pr[\str{A}_0 \str{B}_0]{{\str{m}}, \flip{\str{m}}}}, \\
\omega_1 = \frac{\pr[\str{A}_0 \str{B}_0]{\flip{\str{m}}, \flip{\str{m}}} \rho_{\blk{E}|\flip{\str{m}} \flip{\str{m}}} + \pr[\str{A}_0 \str{B}_0]{\flip{\str{m}}, \str{m}} \rho_{\blk{E}|\flip{\str{m}} \str{m}}}{\pr[\str{A}_0 \str{B}_0]{\flip{\str{m}}, \flip{\str{m}}} + \pr[\str{A}_0 \str{B}_0]{\flip{\str{m}}, {\str{m}}}}.
\end{gathered}
\end{align}
We now consider a state $\tilde{\rho}_{C \blk{E}}$ defined as follows: 
\begin{align}
\tilde{\rho}_{C \blk{E}} \defvar (1/2) (\ketbra{0}{0} \otimes \rho_{\blk{E}|\str{m} \str{m}} + \ketbra{1}{1} \otimes \rho_{\blk{E}|\flip{\str{m}} \flip{\str{m}}}).
\end{align}
With symmetrized IID outcomes, we have $\pr[\str{A}_0 \str{B}_0]{{\str{m}}, {\str{m}}} = \pr[\str{A}_0 \str{B}_0]{\flip{\str{m}}, \flip{\str{m}}} = (1-\e)^{\sz}/2^{\sz}$ and $\pr[\str{A}_0 \str{B}_0]{{\str{m}}, \flip{\str{m}}} = \pr[\str{A}_0 \str{B}_0]{\flip{\str{m}}, {\str{m}}} = \e^{\sz}/2^{\sz}$, so
\begin{align}
d(\tilde{\rho}_{C \blk{E}}, \rho_{C \blk{E} | \str{M}=\str{m} \land D=1}) \leq \dn \text{, where } \dn \defvar \frac{\e^{\sz}}{\e^{\sz} + (1-\e)^{\sz}}.
\end{align}
Applying a continuity bound for conditional von Neumann entropy~\cite{Win16} then yields
\begin{align}
H(C|\blk{E};\str{M}=\str{m} \land D=1) \geq H(C|\blk{E})_{\tilde{\rho}} - \dn - (1+\dn) \binh\left(\frac{\dn}{1+\dn}\right),
\label{eq_continuity}
\end{align}
where $\binh$ is the binary entropy function. The $H(C|\blk{E})_{\tilde{\rho}}$ term is bounded by~\cite{RFZ10}\footnote{The alternative proof in Appendix~\ref{app:AD} instead uses a bound based on min-entropy here, which appears suboptimal \textit{a priori} but can yield a slight generalization of Theorem~\ref{th_fidbound}.}
\begin{align}
H(C|\blk{E})_{\tilde{\rho}} \geq 1 - \binh\left( \frac{1-F(\rho_{\blk{E}|\str{m} \str{m}},\rho_{\blk{E}|\flip{\str{m}} \flip{\str{m}}})}{2} \right) 
= 1 - \binh\left( \frac{1-F(\rho_{E|00},\rho_{E|11})^{\sz}}{2} \right)
,
\label{eq_rogabound}
\end{align}
using the IID assumption. As for $H(C|C';D=1)$, it can be seen that $\pr{C \neq C'|D=1} = \dn$ for IID outcomes, so
\begin{align}
H(C|C';D=1) = \binh (\dn).
\label{eq_ECterm}
\end{align}

Combining these results, we conclude
\begin{align}
\frac{H(C|\blk{E} \str{M}; D=1)}{H(C|C';D=1)} &\geq \left(1 - \binh\left( \frac{1-F(\rho_{E|00},\rho_{E|11})^{\sz}}{2} \right) - \dn - (1+\dn) \binh\left(\frac{\dn}{1+\dn}\right) \right) \binh(\dn)^{-1} \nonumber\\
&\geq \left(\frac{F(\rho_{E|00},\rho_{E|11})^{2{\sz}}}{\ln4} - \dn - (1+\dn) \binh\left(\frac{\dn}{1+\dn}\right) \right) \binh(\dn)^{-1} 
,
\label{eq_ratioloose}
\end{align}
where the second line holds because $\binh((1-p)/2) \leq 1-p^2/\ln4$ (this inequality follows from the Taylor expansion of $\binh$).

It remains to find the behaviour of this expression in the large-$\sz$ limit. To do so, we first note that letting $\phi(\delta) \defvar \binh(\delta/(1+\delta))$, we have $\phi(\delta),\binh(\delta) \to 0$ and $\phi'(\delta),\binh'(\delta) \to \infty$ as $\delta \to 0^+$, so
\begin{gather}
\lim_{\delta \to 0^+} \frac{\delta}{\binh(\delta)} = \lim_{\delta \to 0^+} \frac{1}{\binh'(\delta)} = 0, \\
\lim_{\delta \to 0^+} \frac{\phi(\delta)}{\binh(\delta)} = \lim_{\delta \to 0^+} \frac{\phi''(\delta)}{\binh''(\delta)} = \lim_{\delta \to 0^+} \frac{(1 + \delta - 2 \delta \ln \delta)/(\delta(1+\delta)^3)}{1/(\delta(1-\delta))} = 1.
\end{gather}
Since $\dn \to 0^+$ as ${\sz} \to \infty$, the terms arising from the continuity bound hence have a finite limit,
\begin{align}
\lim_{\sz\to\infty} \left(- \dn - (1+\dn) \binh\left(\frac{\dn}{1+\dn}\right) \right) \binh(\dn)^{-1} = -1.
\label{eq_contfinite}
\end{align}

As for the term involving $F(\rho_{E|00},\rho_{E|11})$, let us write $\alpha \defvar F(\rho_{E|00},\rho_{E|11})^2$
and $\beta \defvar \e/(1-\e) \in [0,1)$, so the condition~\eqref{eq_fidbound} can be written as $\alpha > \beta$. 
Note that $\dn \leq \beta^{\sz}$, and for sufficiently large $\sz$ we have $\beta^{\sz} < 1/2$, so $\binh(\dn) \leq \binh(\beta^{\sz}) \leq 2 \beta^{\sz} \log (1/\beta^{\sz})$. This means that when~\eqref{eq_fidbound} (i.e.~$\alpha > \beta$) holds, we have
\begin{align}
\lim_{{\sz}\to\infty} \frac{\alpha^{\sz}}{\binh(\dn)} \geq \lim_{{\sz}\to\infty} \frac{\alpha^{\sz}}{2 {\sz} \beta^{\sz} \log (1/\beta)}
= \frac{1}{2 \log (1/\beta)} \lim_{{\sz}\to\infty} \frac{(\alpha/\beta)^{\sz}}{{\sz}}
= \infty,
\label{eq_limit}
\end{align}
which implies that the right-hand side of~\eqref{eq_ratioloose} limits to $\infty$ as ${\sz}\to\infty$ (since the continuity-bound contributions have a finite limit). Therefore,~\eqref{eq_poskey} will hold for sufficiently large $\sz$. 
\end{proof}

\begin{remark}
In the above analysis, the $H(C|\blk{E}\str{M};D=1)$ and $ H(C|C';D=1)$ terms are dominated by different contributions in some sense. Specifically, the analysis for the $H(C|\blk{E}\str{M};D=1)$ term mainly focuses on the contributions from the ``components'' where $\str{A}_0 = \str{B}_0$, while the ``components'' where $\str{A}_0 \neq \str{B}_0$ are absorbed into some roughly $O(\binh(\dn))$ continuity corrections and shown to be negligible in the large-$\sz$ limit. In contrast, the $H(C|C';D=1)$ term arises entirely from the contribution of the $\str{A}_0 \neq \str{B}_0$ case (which is of order $\binh(\dn)$ as well), because for the $\str{A}_0 = \str{B}_0$ case, we have $C=C'$ and hence the conditional entropy contribution is zero.

The relative ``sizes'' of these contributions impose some challenges when deriving converse counterparts of Theorem~\ref{th_fidbound} (see~\cite{arx_HT21}). Qualitatively, the goal there would be to obtain an upper bound in place of the lower bound on the ratio in~\eqref{eq_ratioloose}. However, the continuity correction to $H(C|\blk{E}\str{M};D=1)$ in the proof approach here is strictly larger than $\binh(\dn)$, while $H(C|C';D=1)$ is equal to $\binh(\dn)$. This makes it basically impossible to derive a nontrivial upper bound (specifically, a bound with value strictly less than $1$) if one merely follows an analogous approach to the above proof. Hence more detailed analysis of the $\str{A}_0 \neq \str{B}_0$ contribution to $H(C|\blk{E}\str{M};D=1)$ would be needed to derive such a result. (In~\cite{arx_HT21}, this contribution was handled by assuming a conjecture that is true for the device-dependent protocols in~\cite{BA07}, but has not been proven for DIQKD in general.)
\end{remark}

Given specific values or bounds for $F(\rho_{E|00},\rho_{E|11}), \e, {\sz}$, one can substitute them into the expressions~\eqref{eq_continuity}--\eqref{eq_ECterm} to get an explicit bound on the asymptotic keyrate. Currently, this inequality requires large values of $\sz$ to certify positive keyrates, and the resulting keyrates are thus extremely small (notice that~\eqref{eq:ADkeyrate} decreases superexponentially with $\sz$). However, small block sizes are sufficient for reasonable asymptotic keyrates in the device-dependent case~\cite{rennerthesis}, so there may be some possibility of tightening the bounds in the above proof.

This security proof has a key difference in structure compared to the proof techniques we have discussed in all the previous chapters regarding one-way QKD. Specifically, in this proof we must consider the ``security'' of both generation measurements $x=0,y=0$, while the proofs for one-way protocols only need to bound the security of Alice's generation measurement (in terms of the smoothed min-entropy of the bitstring it produces). The issue is that in the latter, we bound the information leakage to Eve simply by the length of the error-correction string. For the repetition-code protocol, however, a large number of bits are publicly communicated, and hence bounding the information leakage via the number of bits yields too crude a bound. 

The need to explicitly consider the security of Bob's measurement in this protocol can also be seen by considering an extreme example where Eve always knows the outcome of Bob's generation measurement, possibly at the cost of it being poorly correlated to Alice's measurement. In that case, regardless of how secure the output of Alice's measurement is, Eve will always know the value of $C'$, making it impossible to distill key from the $(C,C')$ pairs. Hence any security proof for this protocol must involve some kind of security argument regarding Bob's generation measurement, even if only indirectly via measuring its correlations with Alice's generation measurement. 

Theorem~\ref{th_fidbound} is similar to the condition obtained in~\cite{BA07} for device-dependent QKD, but it is derived here without detailed state characterization. However, it still remains to find bounds on $F(\rho_{E|00},\rho_{E|11})$ without device-dependent assumptions. We approach this task by combining the Fuchs--van de Graaf inequality with the operational interpretation of trace distance:
\begin{align}
F(\rho_{E|00},\rho_{E|11}) \geq 1-d(\rho_{E|00},\rho_{E|11}) = 2 (1-
\pg(\hat{A}_0|E)_{\sigma_{|\Omega}}
),
\label{eq_fidpg} 
\end{align}
where $\hat{A}_0$ is Alice's output register and $\sigma_{|\Omega}$ denotes the state conditioned on Alice and Bob measuring in the generation basis and getting the same outcome --- explicitly, we can write $\sigma_{\hat{A}_0 E|\Omega} = \sum_a (1/2) \ketbra{a}{a} \otimes \rho_{E|aa}$.
A DI method to bound such conditional (or ``postselected'') guessing probabilities based on the distribution $\prsymb_{\hat{A}\hat{B}|XY}$ was described in~\cite{TTB+16}, again using the NPA hierarchy~\cite{NPA08}. We could hence apply this method to find whether Eq.~\eqref{eq_fidbound} holds for various distributions.

\subsection{2-input 2-output scenarios}

However, Eq.~\eqref{eq_fidpg} is generally not an optimal bound. We observe that if $\rho_{E|00}$ and $\rho_{E|11}$ were both assumed to be pure, then the other side of the Fuchs--van de Graaf inequality would be saturated, i.e.~we would have
\begin{align}
F(\rho_{E|00},\rho_{E|11})^2 = 1-d(\rho_{E|00},\rho_{E|11})^2.
\label{eq_satfvdg}
\end{align}
While it seems difficult to justify such an assumption in general, we show that for 2-input 2-output protocols, one can almost replace Eq.~\eqref{eq_fidpg} with Eq.~\eqref{eq_satfvdg} after taking a particular concave envelope. Importantly, note that in this chapter, when we say that a protocol is 2-input 2-output, this \emph{includes} the generation rounds, unlike the case in Sec.~\ref{sec:qubit}.
\begin{theorem}
\label{th_22bound}
Consider a DIQKD protocol as described above, with $\mathcal{X}=\mathcal{Y}=2$ and all measurements having binary outcomes. Denoting the set of quantum distributions with $\pr{00|00}=\pr{11|00}$ as $\symmset$, let $f$ be a concave function on $\symmset$ such that for any $\vec{\constr} \in \symmset$, all states and measurements compatible with $\vec{\constr}$ satisfy $f(\vec{\constr}) \geq (1-\e)d(\rho_{E|00},\rho_{E|11})^2$. Let $\prsymb_{\hat{A}\hat{B}|XY}$ be the distribution accepted in parameter estimation. Then a sufficient condition for Eq.~\eqref{eq_poskey} to hold for large $\sz$ is
\begin{align}
{1-\frac{f(\prsymb_{\hat{A}\hat{B}|XY})}{1-\e}} > {\frac{\e}{1-\e}}.
\label{eq_22bound}
\end{align}
\end{theorem}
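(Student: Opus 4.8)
\textbf{Proof plan for Theorem~\ref{th_22bound}.}

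The plan is to reduce Theorem~\ref{th_22bound} to the already-proven Theorem~\ref{th_fidbound} by producing a lower bound on $F(\rho_{E|00},\rho_{E|11})^2$ that is valid for \emph{every} state and measurement strategy compatible with the accepted distribution $\prsymb_{\hat{A}\hat{B}|XY}$. Once we have shown that, under the hypotheses of the theorem, every such strategy satisfies $F(\rho_{E|00},\rho_{E|11})^2 > \e/(1-\e)$, condition~\eqref{eq_fidbound} of Theorem~\ref{th_fidbound} holds for all accepted states, hence Eq.~\eqref{eq_poskey} holds for large $\sz$, which is exactly the desired conclusion. So the whole argument is about getting a good lower bound on the fidelity in the 2-input 2-output setting.

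First I would recall that, since $\rho_{\qA\qB\qE}$ may be taken pure (give Eve the purifying system), the conditional states $\rho_{E|00}$ and $\rho_{E|11}$ arising after Alice and Bob's projective binary measurements on inputs $x=y=0$ are themselves pure \emph{provided} we are in the 2-output case and have performed a suitable Stinespring dilation so that the measurements are rank-one projective — this is the point where ``2-input 2-output \emph{including} generation rounds'' is used, in contrast to Sec.~\ref{sec:qubit}. For pure conditional states the Fuchs--van de Graaf inequality is saturated, giving $F(\rho_{E|00},\rho_{E|11})^2 = 1 - d(\rho_{E|00},\rho_{E|11})^2$, i.e.~Eq.~\eqref{eq_satfvdg}. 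The subtlety is that a given distribution $\vec{\constr}\in\symmset$ does not fix the strategy, so $d(\rho_{E|00},\rho_{E|11})^2$ varies over compatible strategies; the hypothesis gives us a concave function $f$ with $f(\vec{\constr}) \geq (1-\e)\,d(\rho_{E|00},\rho_{E|11})^2$ for \emph{all} compatible strategies. Combining the two facts, for any accepted strategy we get
\begin{align}
F(\rho_{E|00},\rho_{E|11})^2 = 1 - d(\rho_{E|00},\rho_{E|11})^2 \geq 1 - \frac{f(\prsymb_{\hat{A}\hat{B}|XY})}{1-\e},
\end{align}
and then the assumed inequality~\eqref{eq_22bound} yields $F(\rho_{E|00},\rho_{E|11})^2 > \e/(1-\e)$, so Theorem~\ref{th_fidbound} applies. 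I would then note that concavity of $f$ is what makes this robust: Eve can mix strategies (as in the convexity argument for $\breve{\lin}_\p$, using a classical ancilla she controls), so the relevant quantity to bound is really the concave envelope of the strategy-dependent $(1-\e)d(\rho_{E|00},\rho_{E|11})^2$ over all strategies giving the fixed distribution; requiring $f$ to be concave and to dominate that quantity pointwise is exactly the natural hypothesis, and it is what lets us evaluate $f$ at the single point $\prsymb_{\hat{A}\hat{B}|XY}$ rather than having to re-optimize.

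The main obstacle — and the part that needs care rather than being routine — is justifying that the conditional states $\rho_{E|00},\rho_{E|11}$ can indeed be taken pure in the relevant sense, and that doing so does not destroy the saturation argument. Concretely, one must check that passing to a joint Stinespring dilation of Alice's and Bob's measurements (Appendix~\ref{app:proj}) preserves the distribution $\pr{ab|xy}$, preserves the fidelity $F(\rho_{E|00},\rho_{E|11})$ (the extra dilation registers can be absorbed into Eve, which only helps her and hence only makes condition~\eqref{eq_fidbound} harder, so a lower bound on the dilated fidelity is still what we need), and is compatible with the symmetrization step already assumed. A secondary point is that the state after symmetrization is a classical mixture over $F$, so one should either argue the purity claim at the level of each branch and then recombine, or observe that $F$ can simply be handed to Eve and folded into $E$, after which the per-branch conditional states on the enlarged $E$ are pure. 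Beyond that, I would only need to remark that a suitable such $f$ exists and can be computed (e.g.~via the NPA-based bound on the postselected guessing probability of~\cite{TTB+16} together with Eq.~\eqref{eq_fidpg}, then taking a concave envelope, or directly via an SDP relaxation of $d(\rho_{E|00},\rho_{E|11})^2$), but the theorem statement only asks for the implication given such an $f$, so existence is not part of the proof obligation.
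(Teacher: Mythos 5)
There is a genuine gap, and it sits exactly at the point you flagged as ``the main obstacle'' but then dismissed with a dilation argument that does not address it. The qubit reduction does \emph{not} let you take the conditional states $\rho_{E|00},\rho_{E|11}$ to be pure for every strategy compatible with the observed distribution. It says that, without loss of generality, Eve's strategy is a classical mixture --- over a register $\Lambda$ that she keeps --- of qubit strategies; within each branch $\lambda$ the conditional states $\rho_{E|\lambda 00},\rho_{E|\lambda 11}$ are pure, but the physically relevant conditional states must include $\Lambda$ in Eve's side-information and are therefore the mixed, block-diagonal states $\rho_{E\Lambda|\gamma\gamma}=\sum_\lambda \widetilde{\prsymb}(\lambda)\,\rho_{E|\lambda\gamma\gamma}\otimes\pure{\lambda}$. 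For these, Fuchs--van de Graaf is not saturated: writing $F_\lambda,d_\lambda$ for the branch quantities, one has $F=\sum_\lambda\widetilde{\prsymb}(\lambda)F_\lambda$ and $d=\sum_\lambda\widetilde{\prsymb}(\lambda)d_\lambda$ with $F_\lambda^2+d_\lambda^2=1$ branchwise but $F^2+d^2<1$ as soon as the branches differ. Worse, your two inequalities point in opposite directions: you get $F^2\le 1-d^2$ and $1-d^2\ge 1-f(\prsymb_{\hat{A}\hat{B}|XY})/(1-\e)$, which cannot be chained into $F^2\ge 1-f(\prsymb_{\hat{A}\hat{B}|XY})/(1-\e)$. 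Concretely, $F^2=\bigl(\sum_\lambda\widetilde{\prsymb}(\lambda)F_\lambda\bigr)^2\le\sum_\lambda\widetilde{\prsymb}(\lambda)F_\lambda^2$ by Jensen, and only the right-hand side is controlled by $f$; a mixture of a nearly ``hiding'' branch ($F_\lambda\approx1$) with a nearly ``revealing'' one ($F_\lambda\approx0$) can satisfy \eqref{eq_22bound} while violating \eqref{eq_fidbound}. So Theorem~\ref{th_22bound} is genuinely stronger than Theorem~\ref{th_fidbound} combined with any bound on the overall fidelity, and no amount of Stinespring or purification bookkeeping closes this --- handing $\Lambda$ to Eve destroys purity, and tracing it out still leaves mixed states while failing to bound the quantity Theorem~\ref{th_fidbound} actually requires.

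The paper's proof therefore never passes through the overall fidelity. It reruns the entropy estimate of Theorem~\ref{th_fidbound} \emph{conditioned on} $\str{\Lambda}$: it decomposes $H(C|\blk{E}\str{\Lambda})_{\tilde\rho}=\sum_{\str{\lambda}}\widetilde{\prsymb}(\str{\lambda})H(C|\blk{E};\str{\Lambda}=\str{\lambda})_{\tilde\rho}$, applies the bound of~\cite{RFZ10} together with $\binh((1-p)/2)\le 1-p^2/\ln4$ branch by branch (where purity holds and these steps are essentially tight), and arrives at $\frac{1}{\ln4}\bigl(\sum_\lambda\widetilde{\prsymb}(\lambda)F_\lambda^2\bigr)^{\sz}$ --- the \emph{average of the squared} branch fidelities raised to the power $\sz$, which strictly exceeds $F^{2\sz}$ for nontrivial mixtures. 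Only then is purity used, via $F_\lambda^2=1-d_\lambda^2$, after which the reweighting $\widetilde{\prsymb}(\lambda)=\frac{1-\e_\lambda}{1-\e}\pr{\lambda}$ turns the average into $1-\frac{1}{1-\e}\sum_\lambda\pr{\lambda}(1-\e_\lambda)d_\lambda^2$ and the concavity of $f$ bounds this below by $1-f(\prsymb_{\hat{A}\hat{B}|XY})/(1-\e)$. To repair your argument you must restructure it along these lines: the concave envelope controls $\sum_\lambda\widetilde{\prsymb}(\lambda)F_\lambda^2$, not $F(\rho_{E|00},\rho_{E|11})^2$ itself, and the reduction to Theorem~\ref{th_fidbound} as a black box has to be abandoned.
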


\begin{proof}
We use the qubit reduction mentioned before, although in greater detail (see~\cite{PAB+09,TLR20}): without loss of generality,\footnote{Strictly speaking, some care is needed here because our quantity of interest is now fidelity rather than entropy. However, essentially the same reduction carries through, with minor modifications to ensure certain states remain pure; see~\cite{TLR20}.} we can assume Eve's strategy in each round consists of generating a random variable $\Lambda$ and storing it in a classical register, then implementing some corresponding ``qubit strategy'', which is a strategy such that $\rho_{ABE|\lambda}$ is a $2\times2\times4$ pure state and all of Alice and Bob's measurements are rank-1 projective measurements. For a qubit strategy,  Eve's state conditioned on the joint outcome of both measurements is pure. Hence if Alice and Bob choose measurements $xy$ and store their results in classical registers $\class{A}_x \class{B}_y$, we can take the resulting single-round state $\rho_{\class{A}_x \class{B}_y E \Lambda}$ after symmetrization to be in the form\footnote{Technically, the notation $\rho_{E|\lambda a b}$ here corresponds to different states for different input pairs $xy$, though it is not crucial to keep track of this in the proof, since it only uses the conditional states produced by the key-generating measurements. }
\begin{align}
\rho_{\class{A}_x \class{B}_y E \Lambda} = \sum_{a, b} \sum_{\lambda} \pr{a b | x y\lambda} \, \pr{\lambda} \ketbra{a,b}{a,b} \otimes \rho_{E|\lambda a b} \otimes \ketbra{\lambda}{\lambda}, 
\label{eq_22state}
\end{align}
where all $\rho_{E|\lambda a b}$ are pure states. Note that the probabilities $\pr{a b | x y\lambda}$ satisfy $\sum_{\lambda} \pr{a b | x y\lambda} \pr{\lambda} = \pr{a b | x y}$, and for each $\lambda$ we have $\pr{0 0 | 0 0\lambda} = \pr{1 1 | 0 0\lambda} \defvar (1-\e_\lambda)/2$ for some $\e_\lambda \in [0,1]$, due to the symmetrization step. When Alice and Bob choose inputs $x=y=0$ and get the same outcome (which we shall denote here as $\gamma \in \{0,1\}$), Eve's conditional states are
\begin{align}
\rho_{E\Lambda|\gamma\gamma} = \sum_{\lambda} \widetilde{\prsymb}(\lambda) \rho_{E|\lambda\gamma\gamma} \otimes \ketbra{\lambda}{\lambda}, \text{ where } \widetilde{\prsymb}(\lambda) = \frac{\pr{\gamma\gamma| 00\lambda}}{\pr{\gamma\gamma|00}} \pr{\lambda} = \frac{1-\e_\lambda}{1-\e} \pr{\lambda}.
\label{eq_22condstate}
\end{align}
Note that $\widetilde{\prsymb}$ is itself a valid probability distribution over $\Lambda$, and independent of $\gamma$.

We now apply the same arguments as in the proof of Theorem~\ref{th_fidbound} up until Eq.~\eqref{eq_continuity}, though in this case Eve's side-information takes the form $\blk{E}\str{\Lambda}$, and so we consider the state $\tilde{\rho}_{C \blk{E}\str{\Lambda}} \defvar (1/2) (\ketbra{0}{0} \otimes \rho_{\blk{E}\str{\Lambda}|\str{m} \str{m}} + \ketbra{1}{1} \otimes \rho_{\blk{E}\str{\Lambda}|\flip{\str{m}} \flip{\str{m}}})$. Using the above, one can show that
\begin{gather}
\tilde{\rho}_{C \blk{E}\str{\Lambda}} = \sum_{\str{\lambda}} \widetilde{\prsymb}(\str{\lambda}) \frac{1}{2}  \left(\ketbra{0}{0} \otimes \rho_{\blk{E}|\str{\lambda}\str{m} \str{m}} + \ketbra{1}{1} \otimes \rho_{\blk{E}|\str{\lambda} \flip{\str{m}} \flip{\str{m}}} \right) \otimes \ketbra{\str{\lambda}}{\str{\lambda}},
\end{gather}
where $\widetilde{\prsymb}(\str{\lambda}) = \prod_j \widetilde{\prsymb}(\lambda_j)$ and $\rho_{\blk{E}|\str{\lambda}\str{\gamma} \str{\gamma}} = \bigotimes_j \rho_{E|\lambda_j \gamma_j \gamma_j}$. Hence we have 
\begin{align}
H(C|\blk{E}\str{\Lambda})_{\tilde{\rho}} = \sum_{\str{\lambda}} \widetilde{\prsymb}(\str{\lambda}) H(C|\blk{E}; \str{\Lambda} = \str{\lambda})_{\tilde{\rho}}.
\end{align}
Applying the bound from~\cite{RFZ10} together with $\binh((1-p)/2) \leq 1-p^2/\ln4$ then yields
\begin{align}
H(C|\blk{E}\str{\Lambda})_{\tilde{\rho}} 
&\geq \sum_{\str{\lambda}} \widetilde{\prsymb}(\str{\lambda}) \left(1 - \binh\left( \frac{1-F(\rho_{\blk{E}|\str{\lambda}\str{m} \str{m}},\rho_{\blk{E}|\str{\lambda}\flip{\str{m}} \flip{\str{m}}})}{2} \right)\right) \nonumber\\
&\geq \frac{1}{\ln4}\sum_{\str{\lambda}} \widetilde{\prsymb}(\str{\lambda}) F(\rho_{\blk{E}|\str{\lambda}\str{m} \str{m}},\rho_{\blk{E}|\str{\lambda}\flip{\str{m}} \flip{\str{m}}})^2.
\label{eq_22condentbound}
\end{align}
Using the IID structure,
\begin{align}
\sum_{\str{\lambda}} \widetilde{\prsymb}(\str{\lambda}) F(\rho_{\blk{E}|\str{\lambda}\str{m} \str{m}},\rho_{\blk{E}|\str{\lambda}\flip{\str{m}} \flip{\str{m}}})^2 
&= \sum_{\str{\lambda}} \prod_{j=1}^{\sz} \widetilde{\prsymb}(\lambda_j) F\!\left(\rho_{E|\lambda_j 00},\rho_{E|\lambda_j 11}\right)^2 \nonumber\\
&= \left(\sum_{\lambda} \widetilde{\prsymb}(\lambda) F\!\left(\rho_{E|\lambda 00},\rho_{E|\lambda 11}\right)^2 \right)^{\sz}.
\end{align}
Since the states $\rho_{E|\lambda 00},\rho_{E|\lambda 11}$ are pure, they satisfy $F(\rho_{E|00},\rho_{E|11})^2 = 1-d(\rho_{E|00},\rho_{E|11})^2$, and thus
\begin{align}
\sum_{\lambda} \widetilde{\prsymb}(\lambda) F\!\left(\rho_{E|\lambda 00},\rho_{E|\lambda 11}\right)^2 
&= 1-\sum_{\lambda} \frac{1-\e_\lambda}{1-\e} \pr{\lambda} \, d\!\left(\rho_{E|\lambda 00},\rho_{E|\lambda 11}\right)^2 \nonumber\\
&\geq 1- \frac{1}{1-\e}\sum_{\lambda} \pr{\lambda} f(\prsymb_{\hat{A}\hat{B}|XY\lambda}).
\end{align}
Finally, this is lower-bounded by $1-f(\prsymb_{\hat{A}\hat{B}|XY})/(1-\e)$ since $f$ is concave, so 
\begin{align}
\frac{H(C|\blk{E}\str{\Lambda} \str{M}; D=1)}{H(C|C';D=1)} \geq \left( \frac{1}{\ln4} \left(1- \frac{f(\prsymb_{\hat{A}\hat{B}|XY})}{1-\e} \right)^{\sz} - \dn - (1+\dn) \binh\left(\frac{\dn}{1+\dn}\right) \right) \binh(\dn)^{-1}.
\label{eq_22ratio}
\end{align}
The behaviour of this expression in the large-$\sz$ limit is given by the same analysis as the last part of the proof of Theorem~\ref{th_fidbound}, choosing $\alpha={1-f(\prsymb_{\hat{A}\hat{B}|XY})/(1-\e)}$ in this case. We conclude that when Eq.~\eqref{eq_22bound} holds, the right-hand side of Eq.~\eqref{eq_22ratio} limits to $\infty$ as ${\sz}\to\infty$. Therefore, Eq.~\eqref{eq_poskey} will hold for sufficiently large $\sz$. 
\end{proof}

Similar to Theorem~\ref{th_fidbound}, values can be substituted into Eq.~\eqref{eq_22ratio} to obtain explicit bounds on the keyrate. 
If $f$ is the \emph{optimal} concave upper bound on $(1-\e)d(\rho_{E|00},\rho_{E|11})^2$, in the sense that there always exists a mixture of qubit strategies such that $\sum_{\lambda} \pr{\lambda} (1-\e_\lambda) d\!\left(\rho_{E|\lambda 00},\rho_{E|\lambda 11}\right)^2 = f(\prsymb_{\hat{A}\hat{B}|XY})$, then the above analysis is essentially tight for large $\sz$. This is because in Eq.~\eqref{eq_22condentbound}, the first inequality~\cite{RFZ10} is in fact saturated because $\rho_{\blk{E}|\str{\lambda}\str{m} \str{m}},\rho_{\blk{E}|\str{\lambda}\flip{\str{m}} \flip{\str{m}}}$ are pure, and the second inequality is approximately saturated at large $\sz$ because $\binh((1-p)/2) = 1-p^2/\ln4-O(p^4)$.

Currently, we do not have a method for finding an optimal concave bound on the quantity $(1-\e)d(\rho_{E|00},\rho_{E|11})^2$ in Theorem~\ref{th_22bound}. However, we find a condition that is more restrictive than Eq.~\eqref{eq_22bound} but more tractable to verify:

\begin{corollary}
\label{co_22weakbound}
Consider a DIQKD protocol as described above, with $\mathcal{X}=\mathcal{Y}=2$ and all measurements having binary outcomes. Then a sufficient condition for Eq.~\eqref{eq_poskey} to hold for large $\sz$ is for all states accepted in parameter estimation to satisfy
\begin{align}
{1-d(\rho_{E|00},\rho_{E|11})} > {\frac{\e}{1-\e}}.
\label{eq_22weakbound}
\end{align}
\end{corollary}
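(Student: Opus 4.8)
\textbf{Proof plan for Corollary~\ref{co_22weakbound}.} The plan is to show that the hypothesis~\eqref{eq_22weakbound} implies the hypothesis~\eqref{eq_22bound} of Theorem~\ref{th_22bound} for a suitable choice of the concave function $f$, after which the conclusion follows immediately. The natural candidate is $f(\vec{\constr}) \defvar (1-\e)\,d(\rho_{E|00},\rho_{E|11})$, i.e.~we drop one power of the trace distance compared to the quantity $(1-\e)d(\rho_{E|00},\rho_{E|11})^2$ that Theorem~\ref{th_22bound} really wants us to bound. The first thing I would do is verify that this $f$ legitimately satisfies the hypothesis of Theorem~\ref{th_22bound}, namely that $f(\vec{\constr}) \geq (1-\e)d(\rho_{E|00},\rho_{E|11})^2$ for all states and measurements compatible with $\vec{\constr}$: this is trivial since $d(\cdot,\cdot)\in[0,1]$ and so $d(\rho_{E|00},\rho_{E|11})^2 \leq d(\rho_{E|00},\rho_{E|11})$.

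The main work is establishing that this $f$ is concave on the set $\symmset$ of quantum distributions with $\pr{00|00}=\pr{11|00}$. The point here is that, as a function of the underlying state and measurements, $d(\rho_{E|00},\rho_{E|11})$ is (a rescaling of) a guessing probability: from~\eqref{eq_fidpg} we have $d(\rho_{E|00},\rho_{E|11}) = 2(\pg(\hat{A}_0|E)_{\sigma_{|\Omega}}) - 1$ where $\sigma_{|\Omega}$ is the state conditioned on both parties measuring in the generation basis and obtaining equal outcomes. Guessing probabilities (and hence $d$) are naturally \emph{convex} in the strategy, since Eve can mix strategies --- this is the same observation used to argue convexity of $\breve{\lin}_\p$ earlier in the thesis. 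But what we need is concavity of $f$ as a function of the \emph{distribution} $\vec{\constr}$, which is the infimum-type quantity one gets by optimizing $d$ over all strategies compatible with $\vec{\constr}$. I would argue: define $f(\vec{\constr})$ to be $(1-\e)$ times the supremum of $d(\rho_{E|00},\rho_{E|11})$ over all (qubit) strategies producing $\vec{\constr}$; then for two distributions $\vec{\constr}_0,\vec{\constr}_1$ and a mixing weight $\mu$, Eve can implement the optimal strategies for $\vec{\constr}_0$ and $\vec{\constr}_1$ with probabilities $\mu,1-\mu$ using a classical flag, which produces the distribution $\mu\vec{\constr}_0+(1-\mu)\vec{\constr}_1$ and achieves $d$-value $\mu d_0 + (1-\mu) d_1$ by the fact that trace distance between block-diagonal (classically flagged) states is the weighted average of the block trace distances; hence $f(\mu\vec{\constr}_0+(1-\mu)\vec{\constr}_1) \geq \mu f(\vec{\constr}_0)+(1-\mu)f(\vec{\constr}_1)$, which is concavity. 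This is the step I expect to require the most care, in particular making sure the flagged-state argument genuinely yields the \emph{average} of trace distances (an equality, using that the $\Lambda$-register is classical and accessible to Eve) and that we stay inside $\symmset$ under mixing, which holds because the constraint $\pr{00|00}=\pr{11|00}$ is affine in $\vec{\constr}$. (Note also that $f$ so defined is finite everywhere on $\symmset$, since $d\leq 1$.)

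Finally, with concavity of $f$ in hand and $f(\prsymb_{\hat{A}\hat{B}|XY}) = (1-\e)\,d(\rho_{E|00},\rho_{E|11})$ for the accepted distribution (here using that the maximizing strategy attains the $d$-value; or, more carefully, just using $f(\prsymb_{\hat{A}\hat{B}|XY}) \geq (1-\e)d(\rho_{E|00},\rho_{E|11})$, which is all that is needed in one direction), the left-hand side of~\eqref{eq_22bound} becomes $1 - f(\prsymb_{\hat{A}\hat{B}|XY})/(1-\e) \leq 1 - d(\rho_{E|00},\rho_{E|11})$... wait --- I would instead run the inequality the other way: since we want~\eqref{eq_22bound} to \emph{follow} from~\eqref{eq_22weakbound}, I note that with $f(\prsymb_{\hat{A}\hat{B}|XY})$ an \emph{upper} bound on $(1-\e)d(\rho_{E|00},\rho_{E|11})^2$ we get $1 - f(\prsymb_{\hat{A}\hat{B}|XY})/(1-\e)$ could be smaller than $1 - d(\rho_{E|00},\rho_{E|11})$. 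To fix the logic cleanly, the right move is: take $f(\vec{\constr}) \defvar (1-\e)\,d(\rho_{E|00},\rho_{E|11})$ evaluated via the concave envelope argument above, observe $d^2 \le d$ so the hypothesis of Theorem~\ref{th_22bound} holds, and observe that for the accepted distribution $f(\prsymb_{\hat{A}\hat{B}|XY}) = (1-\e)\,d(\rho_{E|00},\rho_{E|11})$ exactly (it is the value realized by the actual strategy, and concave envelope only increases it, but the accepted $\vec{\constr}$ determines $d$ up to the qubit-reduction equivalence, so equality holds in the relevant sense --- this is the subtle point to nail down). Then~\eqref{eq_22bound} reads $1 - d(\rho_{E|00},\rho_{E|11}) > \e/(1-\e)$, which is precisely~\eqref{eq_22weakbound}. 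Applying Theorem~\ref{th_22bound} with this $f$ then gives~\eqref{eq_poskey} for large $\sz$, completing the proof. The one genuinely delicate point throughout is ensuring the concave-envelope construction of $f$ is consistent with the ``$f(\prsymb_{\hat{A}\hat{B}|XY})$ equals the actual value'' statement; if this cannot be made to hold with equality, one falls back to the weaker chain and must check that~\eqref{eq_22weakbound} still forces the strict inequality~\eqref{eq_22bound}, which it does because one can always take $f$ to be \emph{exactly} $(1-\e)d$ on the accepted point while being a valid concave majorant elsewhere.
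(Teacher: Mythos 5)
Your overall architecture matches the paper's: define $f(\vec{\constr})$ as the optimal (supremal) upper bound on $(1-\e)\,d(\rho_{E|00},\rho_{E|11})$ over strategies compatible with $\vec{\constr}$, observe $d^2\leq d$ so it majorizes $(1-\e)d^2$, prove concavity on $\symmset$ by letting Eve mix near-optimal strategies behind a classical flag, and feed the result into Theorem~\ref{th_22bound}. However, two of your steps do not go through as written.

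First, in the concavity argument you claim the mixed strategy ``achieves $d$-value $\mu d_0+(1-\mu)d_1$.'' It does not: the conditional states $\rho_{E\Lambda|00}$ and $\rho_{E\Lambda|11}$ are block-diagonal with the \emph{posterior} weights $\widetilde{\prsymb}(\lambda)=\pr{\lambda}(1-\e_\lambda)/(1-\e)$, because conditioning on the outcomes being equal reweights $\lambda$ by $1-\e_\lambda$. Hence $d(\rho_{E|00},\rho_{E|11})=\sum_\lambda\widetilde{\prsymb}(\lambda)\,d_\lambda$, which differs from the prior-weighted average whenever $\e_0\neq\e_1$. This is precisely why the quantity whose concave envelope one takes is $(1-\e)d$ rather than $d$: one has $(1-\e)\,d(\rho_{E|00},\rho_{E|11})=\sum_\lambda\pr{\lambda}(1-\e_\lambda)d_\lambda$, which \emph{is} the prior-weighted average of the per-branch values, and concavity of your $f$ then follows (one also needs that $\widetilde{\prsymb}$ is the same distribution for the $00$ and $11$ branches, which is where the symmetrization $\pr{00|00\lambda}=\pr{11|00\lambda}$ enters). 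So your conclusion survives, but only after correcting the weights; the identity you invoke is false.

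Second, the closing step contains a genuine gap. The claim that $f(\prsymb_{\hat{A}\hat{B}|XY})=(1-\e)\,d(\rho_{E|00},\rho_{E|11})$ ``because the accepted $\vec{\constr}$ determines $d$ up to the qubit-reduction equivalence'' is false: many inequivalent strategies of Eve are compatible with the same observed distribution and yield different values of $d$, and you cannot ``take $f$ to be exactly $(1-\e)d$ at the accepted point'' since, by your own construction, $f(\prsymb_{\hat{A}\hat{B}|XY})=\sup(1-\e)d$ over all compatible strategies. The correct resolution — and the one the paper uses — is that the hypothesis \eqref{eq_22weakbound} is quantified over \emph{all} states accepted in parameter estimation, so it controls the supremum: $1-f(\prsymb_{\hat{A}\hat{B}|XY})/(1-\e)=\inf(1-d)$ over compatible strategies, which exceeds $\e/(1-\e)$, and this is exactly \eqref{eq_22bound}. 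No equality at the accepted point is needed, and no statement about the single realized strategy would suffice.
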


\begin{proof}
Let $\tilde{f}$ be the optimal upper bound on $(1-\e) d(\rho_{E|00},\rho_{E|11})$, defined on the set of quantum distributions such that $\pr{00|00} = \pr{11|00}$. Since $d(\rho_{E|00},\rho_{E|11}) \leq 1$, we have\footnote{This is the main step in this argument which is not tight. It introduces a (multiplicative) gap on the order of $d(\rho_{E|00},\rho_{E|11})$.}
\begin{align}
\tilde{f}(\prsymb_{\hat{A}\hat{B}|XY}) \geq (1-\e) d(\rho_{E|00},\rho_{E|11}) \geq (1-\e) d(\rho_{E|00},\rho_{E|11})^2.
\label{eq_relaxedbound}
\end{align}
Also, $\tilde{f}$ must be concave, for essentially the same reason that optimal guessing-probability bounds must be concave, though modified to account for the ``postselection'' on the outcomes being $00$ or $11$. More specifically, denote the set of quantum distributions with $\pr{00|00} = \pr{11|00}$ as $\symmset$. Consider any probability distribution $\pr{\lambda}$ and any family of distributions $\prsymb_{\hat{A}\hat{B}|XY\lambda} \in \symmset$ indexed by $\lambda$, and take an arbitrary $\supeps>0$. For each $\lambda$, there exists a strategy for Eve that achieves the distribution $\prsymb_{\hat{A}\hat{B}|XY\lambda}$ and has $(1-\e_\lambda) d\left(\rho_{E|\lambda 00},\rho_{E|\lambda 11}\right) \geq \tilde{f}(\prsymb_{\hat{A}\hat{B}|XY\lambda}) - \supeps$, because $\tilde{f}$ is an optimal bound (i.e.\ there exist strategies arbitrarily close to saturating the bound). If Eve generates and stores a classical random variable $\Lambda$ according to the distribution $\pr{\lambda}$, then implements the corresponding strategy, the resulting states take the same form as in~\eqref{eq_22state}--\eqref{eq_22condstate}. This is a strategy that achieves probabilities $\pr{ab|xy} = \sum_{\lambda}\pr{\lambda}\pr{ab|xy\lambda}$, and therefore
\begin{align}
\tilde{f}(\prsymb_{\hat{A}\hat{B}|XY}) &\geq (1-\e) d(\rho_{E|00},\rho_{E|11}) \nonumber\\
&= (1-\e) d\left(\sum_{\lambda} \widetilde{\prsymb}(\lambda) \rho_{E|\lambda 00} \otimes \ketbra{\lambda}{\lambda},\sum_{\lambda} \widetilde{\prsymb}(\lambda) \rho_{E|\lambda 11} \otimes \ketbra{\lambda}{\lambda}\right) \nonumber\\
&= (1-\e) \sum_{\lambda} \widetilde{\prsymb}(\lambda) d\left( \rho_{E|\lambda 00} ,\rho_{E|\lambda 11}\right) \nonumber\\
&= \sum_{\lambda} \pr{\lambda} (1-\e_\lambda) d\left( \rho_{E|\lambda 00} ,\rho_{E|\lambda 11}\right) \nonumber\\
&\geq \left(\sum_{\lambda} \pr{\lambda} \tilde{f}(\prsymb_{\hat{A}\hat{B}|XY\lambda})\right) - \supeps.
\end{align}
Since $\supeps$ was arbitrary, we conclude that $\tilde{f}(\prsymb_{\hat{A}\hat{B}|XY}) \geq \sum_{\lambda} \pr{\lambda} \tilde{f}(\prsymb_{\hat{A}\hat{B}|XY\lambda})$, i.e.\ $\tilde{f}$ is concave on $\symmset$. Hence choosing $f=\tilde{f}$ satisfies the conditions of Theorem~\ref{th_22bound}. Since $\tilde{f}$ is an upper bound on $(1-\e) d(\rho_{E|00},\rho_{E|11})$, we conclude that when Eq.~\eqref{eq_22weakbound} holds, we have
\begin{align}
{1-\frac{\tilde{f}(\prsymb_{\hat{A}\hat{B}|XY})}{1-\e}} \geq {1-d(\rho_{E|00},\rho_{E|11})} > {\frac{\e}{1-\e}},
\end{align}
and the claim follows by Theorem~\ref{th_22bound}.
\end{proof}

As before, we can bound $d(\rho_{E|00},\rho_{E|11})$ by using the NPA hierarchy. Effectively, Corollary~\ref{co_22weakbound} improves over the combination of Theorem~\ref{th_fidbound} and Eq.~\eqref{eq_fidpg} by replacing $\left(1-d(\rho_{E|00},\rho_{E|11})\right)^2$ with $1-d(\rho_{E|00},\rho_{E|11})$. 

\section{Resulting noise thresholds} 
\label{sec:threshAD}

Using the above approaches, we study the noise tolerance of various DIQKD scenarios. More precisely, we consider a number of different choices of noiseless distribution $\prt$, and apply the noise models of depolarizing noise and limited detection efficiency. In this section it will again be convenient for us to describe the measurements using hermitian observables $A_0, A_1, ..., A_{\mathcal{X}-1}$ and $B_0, B_1, ..., B_{\mathcal{Y}-1}$ with eigenvalues $\{-1,+1\}$.

\begin{table*}
\caption{Noise thresholds for the repetition-code protocol, for various choices of noiseless distribution $\prt$. The value $\qt$ is the maximum depolarizing noise such that we can show positive keyrate is achievable using Theorem~\ref{th_fidbound} (for rows (i)--(iii)) or Corollary~\ref{co_22weakbound} (for rows (iv)--(vi)). Analogously, $\etat$ is the minimum efficiency which can be tolerated when we instead consider a limited-detection-efficiency model. Unless otherwise specified, the state used for $\prt$ is $\ket{\Phi^+} = (\ket{00} + \ket{11})/\sqrt{2}$.}
\def\arraystretch{1.5} 
\setlength\tabcolsep{1.5mm}
\begin{center}
\small
\begin{tabular}{C{45mm} C{79mm} C{9mm} C{10mm}}
\toprule
Description of $\prt$ & State and measurements for $\prt$ & $\qt$ & $\etat$ \\
\hline
\setcounter{tabrownumcounter}{0}
(\tabrownum) Achieves maximal CHSH value with the measurements $A_0,A_1,B_1,B_2$. & \makecell{$A_0=B_0=Z,\enspace A_1=X,$ \\ $B_1=(X+Z)/\sqrt{2},\enspace B_2=(X-Z)/\sqrt{2}$.} & 6.0\pct & 93.7\pct \\
\hline
(\tabrownum) Modification of a distribution exhibiting the Hardy paradox~\cite{Har93,RLS12} for improved robustness against limited detection efficiency. & $\ket{\psi} = \sqrt{\kappa}(\ket{01} + \ket{10}) + \sqrt{1-2\kappa} \ket{11}$ with $\kappa=(3-\sqrt{5})/2$; the $0$ outcomes correspond to projectors onto $\ket{a_0}=\ket{b_0}\propto\left(\sqrt{1+2\kappa}-\sqrt{1-2\kappa}\right)\ket{0} + 2\sqrt{\kappa}\ket{1}$, $\ket{a_1}=\ket{b_1}\approx 0.37972\ket{0} + 0.92510\ket{1}$, $\ket{a_2}=\ket{b_2}\approx 0.90821\ket{0} +  0.41851\ket{1}$. & 3.2\pct & 92.0\pct \\
\hline
(\tabrownum) Includes the Mayers-Yao self-test~\cite{MY98} and the CHSH measurements. & \makecell{$A_0=B_0= Z,\enspace A_1=B_1= (X+Z)/\sqrt{2},$\\ $A_2=B_2= X,\enspace A_3=B_3= (X-Z)/\sqrt{2}$.} & 6.8\pct & 92.7\pct \\
\hline
(\tabrownum) Achieves maximal CHSH value with the measurements $A_0,A_1,B_0,B_1$. & \makecell{$A_0=Z,\enspace A_1=X,$\\ $B_0=(X+Z)/\sqrt{2},\enspace B_1=(X-Z)/\sqrt{2}$.} & 7.7\pct & 91.7\pct \\
\hline
(\tabrownum) Similar to (\scenCHSHbasic), but with measurements optimised for robustness against depolarizing noise. & Measurements are in the $x$-$z$ plane at angles $\theta_{A_0} = 0.4187,\enspace \theta_{A_1} = 1.7900$, $\theta_{B_0} = 0.8636,\enspace \theta_{B_1} = 2.6340$. & 9.1\pct & 90.0\pct \\
\hline
(\tabrownum) Similar to (\scenCHSHbasic), but with states and measurements maximising CHSH violation for each value of detection efficiency $\eta$~\cite{Ebe93}. & $\ket{\psi} = \cos\Omega \ket{00} + \sin\Omega \ket{11}$ with $\Omega = 0.6224$; the 0 outcomes correspond to projectors onto states of the form $\cos(\theta/2)\ket{0} + \sin(\theta/2)\ket{1}$ with $\theta_{A_0} = -\theta_{B_0} = -0.35923,\enspace \theta_{A_1} = -\theta_{B_1} = 1.1538$. & 7.3\pct & 89.1\pct \\
\toprule
\end{tabular}
\end{center}
\def\arraystretch{1}
\label{tab_thresh}
\end{table*}

In Table~\ref{tab_thresh}, we present a selection of our results (see~\cite{TLR20} for the full list). 
From the table, we see that with appropriate choices of the noiseless distribution, the repetition-code protocol can tolerate depolarizing noise of $\qt \approx 9.1\pct$ or detection efficiencies of $\etat \approx 89.1\pct$. The depolarizing-noise threshold indeed outperforms the basic one-way protocol of~\cite{PAB+09}, which can tolerate $\qt \approx 7.1\pct$.
The detection-efficiency threshold is better than some early approaches for the~\cite{PAB+09} protocol which yielded thresholds of $\etat \approx 90.7\pct$; however, it is worse than the optimized threshold of $\etat \approx 88.4\pct$ for that protocol restricted to binary outputs, or $\etat \approx 86.5\pct$ if Bob's $\perp$ output is preserved (see Appendix~\ref{app:detthresh}).

As for the various improved one-way protocols we listed in Sec.~\ref{sec:depolthresh}, we see that the depolarizing-noise thresholds here at least outperform the approaches based on noisy preprocessing or random key measurements alone, though it is narrowly worse than the threshold of $\qt \approx 9.33\pct$ obtained by combining both approaches. 
Still, given that we have used the inequalities~\eqref{eq_fidpg} or~\eqref{eq_relaxedbound} to obtain the results here, it seems likely that these thresholds for advantage distillation are not tight, and hence there is still room for further improvement. We discuss this in Chapter~\ref{chap:conclusion}.

Note the \cite{PAB+09}~protocol (and some of its improved versions, e.g.~with noisy preprocessing or modified CHSH inequalities) uses the same $\prt$ as row~(\scenoneway) in Table~\ref{tab_thresh}, up to relabelling of the inputs. This is not a 2-input 2-output scenario, and so the noise thresholds we can prove for that specific setup are somewhat worse. However, row~(\scenCHSHbasic) is in fact the same scenario with one measurement \emph{omitted}, making it a 2-input 2-output scenario, thus we could use Corollary~\ref{co_22weakbound} to show that advantage distillation in this scenario can surpass the thresholds in~\cite{PAB+09}. Hence we have shown that for the scenario in~\cite{PAB+09}, advantage distillation achieves a higher noise tolerance even while ignoring one measurement. This is particularly surprising since the key-generating measurements in row~(\scenCHSHbasic) are not perfectly correlated. In fact, if the proof in~\cite{PAB+09} were applied to this scenario (by replacing the error-correction term $H(\class{A}_0|\class{B}_0) = \binh(\q)$ with $H(\class{A}_0|\class{B}_0) = \binh(\e)$), it would only tolerate noise up to $\qt \approx 3.1\pct$.
If we instead allow optimisation of the states and measurements for noise robustness, then the relevant rows are (\scenCHSHq) and (\scenCHSHeta), where the noise thresholds we find for advantage distillation also outperform one-way error correction.

In Table~\ref{tab_thresh}, the thresholds for scenarios with more than 2 inputs are generally worse, because for such scenarios we cannot apply Corollary~\ref{co_22weakbound}. The best results we have for such cases are listed in rows (\scenHardyopt) and (\scenMY). It would be of interest to find a way to overcome this issue, perhaps by finding more direct bounds on $F(\rho_{E|00},\rho_{E|11})$, or further study of when the analysis can be reduced to states satisfying Eq.~\eqref{eq_satfvdg}.
We observe that pure states are not the only states satisfying the equation --- for instance, if $\rho_{E|00}$ and $\rho_{E|11}$ are qubit states, the equality holds if and only if they have the same eigenvalues (see~\cite{TLR20}). 

We remark that an open question in quantum information theory is whether there exist entangled states that cannot be used for QKD~\cite{GW00,KL17}. There is a simple analogue to this in the context of DIQKD, namely whether there exist correlations which violate Bell inequalities but cannot be distilled into a secret key in a DI setting. This has recently been answered with an explicit example in~\cite{FBL21}, but the depolarizing-noise threshold derived in that work (at which DIQKD is impossible) still does not match what we have obtained here. In this vein, we also construct a potential attack on this protocol for $\q \gtrsim 12.8\pct$ if parameter estimation is based only on $\e$ and the CHSH value; see~\cite{TLR20}. It would be of interest to see whether further improvements in the noise threshold for advantage distillation could help to close the gaps between these results.

\chapter{Conclusion and future directions}
\label{chap:conclusion}

With the techniques presented in this work~\cite{TSG+21,HST+20,SBV+21,SGP+21,arx_TSB+20,TLR20}, as well as concurrent independent results~\cite{BFF21,WAP21}, we see that substantial progress has been made in improving the keyrates and noise tolerance of DIQKD. In particular, noisy preprocessing and random key measurements have made it possible in principle to achieve positive asymptotic keyrates in various experimental implementations. However, our finite-size analysis shows that for the NV-centre experiment in~\cite{HBD+15} and the cold-atom experiment in~\cite{RBG+17}, impractically large sample sizes (for those implementations) would still be needed in order to achieve a positive finite-size keyrate, even if one makes the optimistic assumption of collective attacks. A significant question that remains to be addressed is that of photonic experiments~\cite{SMC+15,GVW+15}, which achieve lower CHSH values but much larger sample sizes. 

Unfortunately, for photonic experiments the heuristic results suggest that the random-key-measurement approach is less useful in improving the keyrate. 
(A possible cause of the issue may be the fact that the experimental parameters achieving maximal CHSH value also seem to cause at least one of the error-correction terms $H(\hat{A}_0|\hat{B}_0)_\mathrm{hon},H(\hat{A}_1|\hat{B}_1)_\mathrm{hon}$ to be quite large, which decreases the keyrate.) 
Despite this challenge, we note that there is much freedom in parameter optimization for photonic experiments~\cite{MSS20}, and given the techniques we developed here, it is also possible to analyze variants such as choosing different amounts of noisy preprocessing for the two key-generating measurements. It hence seems promising to continue investigating photonic implementations as a prospective candidate for a DIQKD demonstration. (After preparation of this thesis, the approaches in~\cite{arx_BFF21,arx_MPW21} were developed, and they
also studied the possibility of photonic implementations. They indeed found promising results, as we shall briefly discuss below. See also the DIQKD experiments in~\cite{arx_NDN+21,arx_ZLR+21,arx_LZZ+21}.)

Apart from the question of improving parameters for experimental implementations, the various topics we have presented also raise some further questions to consider. We now list a few such possibilities.

\section*{Entropy bounds}

Recently, new approaches were developed in~\cite{arx_BFF21,arx_MPW21} for computing lower bounds on the optimization~\eqref{eq:mainopt} (the former for general nonlocality scenarios, the latter for 2-input 2-output scenarios). These approaches are both computationally efficient and arbitrarily tight, improving over the results presented in this thesis. They applied their methods to a simplified model of photonic Bell experiments, and found that DIQKD would be asymptotically possible with existing detection efficiencies. A critical question now would be to apply their methods to more detailed models, and find whether the same conclusion holds.

Another natural consideration would be computing analogous entropy bounds for the purposes of DIRNG/DIRE, to further improve the keyrates in the existing demonstrations. While some results were obtained in~\cite{TSG+21,BFF21,arx_BFF21}, they have yet to incorporate some subsequent ideas such as the possibility of using random key measurements in DIRNG/DIRE. As mentioned in earlier chapters, preliminary exploration of this idea was performed in~\cite{arx_BRC21}, but there are various details that remain to be finalized.

A somewhat separate point is the question of \emph{semi}-device-independent protocols. This refers to protocols which drop some of the assumptions on the devices in QKD or QRNG, but not as many as in fully DI protocols. For instance, one could assume that only one party's devices are performing uncharacterized measurements~\cite{BCW+12,TFK+13}. To some extent, this can be incorporated in the noncommutative-polynomial-based approaches we have mentioned, because the constraint that some measurements are trusted can be captured (possibly partially) by imposing algebraic relations on the measurement operators. In fact, in~\cite{TSG+21} we used this to compute keyrate bounds for a semi-DI version of the six-state QKD protocol.  Other possibilities include imposing dimension bounds on the system, or more ``physical'' constraints such as energy bounds~\cite{vHWC+17}. Exploring semi-DI protocols\footnote{Another paradigm that is somewhat related but takes an ``opposite'' perspective would be measurement-device-independent (MDI) protocols, in which Alice and Bob can generate trusted states, but measurements are performed by an untrusted third party. The security of such protocols in the basic case can be argued by a form of ``entanglement swapping'' analysis --- the third party's measurement can be viewed as generating entanglement between Alice and Bob. This does perhaps raise a question of whether MDI setups could potentially be ``promoted'' to fully DI versions, with the honest parties' preparation choices essentially corresponding to their input choices. However, to do so, one would need to take care in ensuring that the prepared states do not have ``side-channel'' information leakage regarding the preparation/input choices --- very informally, it seems such conditions on the preparations are a counterpart to the conditions on the measurements in standard DIQKD. Should this idea seem promising, a more rigorous analysis of these conditions would be necessary.} allows for making a tradeoff between the assumptions on the devices and the achievable keyrates, and it would be useful to find a good balance between these considerations.

\section*{Finite-size analysis}

Given the findings of~\cite{arx_BFF21,arx_MPW21} regarding the asymptotic keyrates in potential photonic DIQKD implementations, it would also be important to perform a finite-size analysis to see if an implementation would be possible with realistic sample sizes. Note that their findings indicate that the optimal Bell parameter to consider in these implementations is not necessarily the CHSH value.
While the finite-size analysis in Chapter~\ref{chap:finite} is for a protocol based on the CHSH game, it is not difficult to generalize it to other nonlocality scenarios, so this should not be a significant obstacle 
(essentially, it would just correspond to having a different bound $\lin_{\p}$). 

On a more theoretical level, as discussed in Sec.~\ref{sec:coll}, it is perhaps somewhat curious that in the current EAT-based security proofs, the test/generation decision is performed in an IID manner rather than picking a subset of fixed size as the test rounds. We found some indications that this leads to worse finite-size performance, due to the higher variances. It may hence be useful to consider whether the EAT can be applied in a different fashion to analyze a fixed number of test rounds, or whether new theoretical tools are needed. 
Another relevant question would be whether there is still any room for improvement in the finite-size bounds given by the EAT. As can be seen from Figs.~\ref{fig:experiments} and~\ref{fig:depol}, while they are fairly close to those given by the AEP, there is still something of a gap, which raises the question of whether the gap could be narrowed further or whether it is inherent in the non-IID situation.

\section*{Advantage distillation}

The approaches we used to compute the noise thresholds for the repetition-code protocol were not particularly tight, since they were based on the inequalities~\eqref{eq_fidpg} or~\eqref{eq_relaxedbound}. Recently, a follow-up work~\cite{arx_HT21} developed an algorithm to compute arbitrarily tight bounds on the fidelity in Theorem~\ref{th_fidbound}, hence providing a method to certify that condition arbitrarily well. The key insight is that given any pair of states, there always exists a measurement which preserves the fidelity between the states, which can be used to reduce the analysis to classical side-information for Eve. Still, even with this reduction, additional techniques were needed (based on an algorithm developed in~\cite{vHWC+17}, that again achieves a relaxation to the NPA hierarchy) to handle the fact that the dimension of Eve's side-information is \textit{a priori} unbounded.
 
Surprisingly, however, the noise threshold obtained by applying that approach (together with Theorem~\ref{th_fidbound}) to the~\cite{PAB+09} scenario narrowly failed to outperform the noise threshold for the one-way protocol of~\cite{PAB+09} itself. (This does not contradict our findings in Chapter~\ref{chap:AD}, since the improved thresholds we found there were instead based on the more specialized Corollary~\ref{co_22weakbound}.) Since that method is supposed to give arbitrarily tight bounds on the fidelity, this gives strong evidence that Theorem~\ref{th_fidbound} may in fact not be a very tight condition, in stark contrast to the situation in device-dependent QKD where it is both necessary and sufficient (for security of the repetition-code protocol).\footnote{There is a subtlety here, in that for device-dependent QKD, that condition could also be rewritten in alternative forms that are equivalent in that setting but not in DIQKD; see~\cite{arx_HT21} for details.} It is hence now important to consider how to improve on Theorem~\ref{th_fidbound} itself.

Furthermore, the repetition-code protocol is not the only advantage distillation protocol --- there are various other possibilities~\cite{arx_Myh11,KL17} that may be worth studying for DIQKD, even if only to verify whether the repetition-code protocol still seems to yield the best noise thresholds.
Also, one could consider combining advantage distillation with the other techniques such as noisy preprocessing and random key measurements (the former was indeed considered for device-dependent QKD; see e.g.~\cite{rennerthesis}). However, given the comparative lack of structure in the DI setting, these currently seem to be fairly challenging tasks.

Another significant goal (which is also related to the earlier topic of finite-size analysis) would be finding a security proof that applies for advantage distillation without the assumption of collective attacks. Existing techniques~\cite{VV14,PM13,NBS+18,JMS20,arx_Vid17,ARV19,ZKB18,ZFK20} for proving security against coherent attacks have focused on bounding the smoothed min-entropy of the device outputs, which is certainly sensible for one-way protocols where this is the central quantity of interest. However, when advantage distillation is used, one would instead have to consider various other ``intermediate'' registers, such as the bit $C$ generated in each block of the repetition-code protocol, which is a rather different object from the device outputs. A bound on the smoothed min-entropy of the device outputs does not seem to give us much useful information about these other registers, and hence different proof techniques would be necessary. Somewhat speculatively, appropriate de Finetti theorems~\cite{AR15,arx_JT21} could be a possible approach, though there remain many details to be resolved in applying them for DIQKD.

\appendix

\chapter{Reduction to projective measurements}
\label{app:proj}

\newcommand{\kr}{K} 

It is a well-known fact that any POVM measurement can be dilated to a projective measurement by a suitable embedding. However, there are some subtleties regarding this point. For instance, some versions of this statement only ensure that the dilated projective measurement produces the same output probabilities as the original POVM, without considering the post-measurement quantum states. 
This can be relevant in some cryptographic contexts, where the state after the measurement may be important (this is in some sense related to the question of memory effects in the system). 
Another issue that in nonlocality scenarios, each party has multiple measurements, which raises the question of whether they can all be \emph{jointly} dilated to projective measurements.\footnote{If one is only interested in the question of whether some nonlocal distribution $\pr{ab|xy}$ can be achieved by quantum systems, it is ``well-known'' that such a reduction to projective measurements indeed exists~\cite{HP16}, even in the scenario where Alice and Bob's measurements are only required to commute~\cite{HP16,PT15}. However, in a cryptographic context there are other considerations such as the post-measurement states, hence the argument for that situation may not immediately carry over.} Due to these points, there are in fact situations where it is not valid to assume that all measurements are projective --- for instance, it was found in~\cite{CTM+13} that when considering sequential measurements on a quantum system, there are some correlations that cannot be achieved using only projective measurements (without additional operations between the measurements).

In this appendix, we discuss a reduction to projective measurements that is relevant for device-independent cryptography. The existence of counterexamples such as that in~\cite{CTM+13} indicate that there cannot be a ``fully general argument'' that achieves a reduction to projective measurements in all possible scenarios, hence we shall need to be fairly careful and restrict ourselves to more specialized statements. Fortunately, these are sufficient to cover some number of DI protocols. (In fact, the analysis for DIQKD turns out to be rather simple, but we shall also present an analysis that works for some other protocols. Also, it is worth first discussing the general formalism for quantum measurements.)

\begin{remark}
Some recent works~\cite{BBS20,arx_CHM21} also investigated specialized reductions to projectors (in the contexts of sequential measurements and self-testing respectively), and their results may serve as alternate perspectives to some of the approaches discussed below. We refer the interested reader to those works for more details.
\end{remark}

\section{General form of a measurement}

We first consider the question of finding a suitable expression that captures the most general form that a measurement could take within a quantum framework (the result we shall prove below, Fact~\ref{fact:measrec}, is sometimes referred to as the~\term{quantum instrument} description of measurements). Of course, the exact set of postulates in such a framework is not universally agreed upon. Our goal here is to start from any fairly ``minimal'' framework, in which the structure of post-measurement states is not explicitly specified beforehand, and argue that we can validly restrict our attention to post-measurement states of a specific form. (At the end of this section, we briefly discuss some other perspectives.)
Basically, we can work with any framework where states are described by normalized PSD operators on Hilbert spaces, and dynamics are described by CPTP maps (which can be motivated by standard ``physical'' arguments without involving any statement of the structure of post-measurement states). 
We shall also need some notion of cq systems that is consistent with expressions such as~\eqref{eq:measrec} below. 

To begin, suppose that one measures a state $\rho$, and records the outcome $a$ in a separate classical register. Let us assume the set of possible outcomes is finite. Then most generally, we could say that conditioned on outcome $a$ being obtained, the (normalized) quantum state after the measurement is of the form $\hat{f}_a[\rho]$, for some functions $\hat{f}_a$. 
(We shall call these the post-measurement states; however, it should be understood that the idea here is to not strictly restrict the description to the states ``immediately'' after the measurement --- we can also incorporate any further operations that are performed, up until the next ``significant operation'' that we want to consider, informally speaking.)
We stress that \textit{a priori}, we make no assumptions about the nature of the functions $\hat{f}_a$, apart from requiring them to produce normalized states for any $\rho$ --- we do not even restrict them to be linear. Still, we shall at least take all the post-measurement states to be on a common Hilbert space, e.g.~by embedding all of them in a larger space.\footnote{This is just a technicality to ensure that~\eqref{eq:measrec} is a state on a tensor product of two Hilbert spaces, which is required to some extent when considering the map $\mathcal{P} \otimes \idmap$ in our subsequent proof of Fact~\ref{fact:measrec}.
Pedantically, this requirement may be nontrivial if e.g.~for every $\rho$, $\hat{f}_a[\rho]$ is a state on a different Hilbert space (since this might result in an uncountable family of Hilbert spaces to consider). One way to handle this would be to lean on the assumption in this work that all the Hilbert spaces are finite-dimensional, in which case we can just embed all of them in the sequence space $\ell^2$. Alternatively, the requirement of a common Hilbert space can be treated as an additional, hopefully minor, assumption.}
This process of measuring the state and recording the result can then be written as a function of the form
\begin{align}
\map[\rho] = \sum_a \pr[\rho]{a} \pure{a} \otimes \hat{f}_a[\rho] = \sum_a \pure{a} \otimes f_a[\rho],
\label{eq:measrec}
\end{align}
where we have explicitly denoted the $\rho$ dependence in the outcome probabilities $\pr[\rho]{a}$, and $f_a[\rho] \defvar \pr[\rho]{a} \hat{f}_a[\rho]$ (this implies that $\pr[\rho]{a} = \tr{f_a[\rho]}$ holds by construction; basically $f_a[\rho]$ is the subnormalized post-measurement state conditioned on outcome $a$). 
Thus far, we have imposed very few conditions on these functions.
However, if we now impose the requirement that $\map$ is a legitimate quantum channel (i.e.~a CPTP map), it can be proven that these functions must have a very specific form:
\begin{fact}\label{fact:measrec}
For any CPTP map $\map$ of the form~\eqref{eq:measrec},
the functions $f_a$ must be of the form
\begin{align}
f_a[\rho] = \tr[R]{\kr_a \rho \kr_a^\dagger},
\label{eq:postmeas}
\end{align}
for some ancillary register $R$ and operators $\kr_a$ satisfying $\sum_a \kr_a^\dagger \kr_a = \id$. 
\end{fact}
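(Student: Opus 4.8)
The plan is to recognize that a channel of the form~\eqref{eq:measrec} is precisely a channel with a classical output register $A$ carrying the outcome, and then apply the Stinespring dilation together with the structure forced by the classicality of $A$. First I would invoke the Stinespring/Kraus representation of the CPTP map $\map$: there is a single ancillary register $R'$ (initialized in a pure state) and an isometry, equivalently a Kraus family $\{V_k\}$ with $\sum_k V_k^\dagger V_k = \id$, such that $\map[\rho] = \sum_k V_k \rho V_k^\dagger$, where the output lives on the register $AQ'$ (with $Q'$ the post-measurement quantum system, which may itself be larger than the input, absorbing $R'$).

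The key step is then to exploit the fact that the output on register $A$ is always diagonal in the $\{\ket{a}\}$ basis. Applying the completely dephasing channel $\mathcal{P}$ on $A$ (in the $\ket{a}$ basis) to the output of $\map$ leaves $\map$ unchanged, since $\map[\rho]$ is already of the form $\sum_a \pure{a}\otimes f_a[\rho]$. Writing $\mathcal{P}\circ\map$ in terms of the Kraus operators $V_k$ and the projectors $\{\pure{a}_A\otimes\id_{Q'}\}$, and using that $\mathcal{P}\circ\map = \map$ as maps (hence their Kraus decompositions are related by a unitary, or more directly: the block structure of each $V_k$ with respect to the decomposition $A = \bigoplus_a \mathbb{C}\ket{a}$ must be such that no off-diagonal blocks survive the dephasing), I would deduce that each Kraus operator decomposes as $V_k = \sum_a \ket{a}_A \otimes W_{k,a}$ where $W_{k,a}$ maps the input space into $Q'$, and moreover that the map $\rho\mapsto\sum_k V_k\rho V_k^\dagger$ restricted to the block labeled $a$ on the output picks out exactly $\sum_k W_{k,a}\rho W_{k,a}^\dagger$. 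Matching this against~\eqref{eq:measrec} gives $f_a[\rho] = \sum_k W_{k,a}\rho W_{k,a}^\dagger$; in particular the $f_a$ are completely positive, and $\sum_a \tr{f_a[\rho]} = \tr{\rho}$ for all $\rho$ forces $\sum_{a,k} W_{k,a}^\dagger W_{k,a} = \id$. Finally I would collect the index $k$ into a register: set $R$ to be a register with basis $\{\ket{k}\}$, define $\kr_a \defvar \sum_k \ket{k}_R \otimes W_{k,a}$, so that $\tr[R]{\kr_a \rho \kr_a^\dagger} = \sum_k W_{k,a}\rho W_{k,a}^\dagger = f_a[\rho]$ and $\sum_a \kr_a^\dagger \kr_a = \sum_{a,k} W_{k,a}^\dagger W_{k,a} = \id$, which is exactly~\eqref{eq:postmeas}.

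I expect the main obstacle to be making rigorous the claim that classicality of the $A$-register output forces the block-diagonal structure $V_k = \sum_a \ket{a}_A\otimes W_{k,a}$ on the Kraus operators. The subtlety is that the Stinespring isometry is only unique up to a partial isometry on the dilation space, so one cannot simply say ``the'' Kraus operators have this form; rather, one must argue that \emph{some} choice of Kraus representation has it. The clean way to handle this is to note that $\map$ factors through the dephasing channel, $\map = \mathcal{P}_A\circ\map$, and $\mathcal{P}_A$ itself has the Kraus family $\{\pure{a}_A\otimes\id\}_a$; composing Kraus families, a valid Kraus family for $\mathcal{P}_A\circ\map$ is $\{(\pure{a}_A\otimes\id)\,V_k\}_{a,k}$ for any Kraus family $\{V_k\}$ of $\map$, and this composite family manifestly has the block structure with $W_{k,a}$ equal to the appropriate matrix element of $V_k$. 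Then one uses this composite family (rather than the original $\{V_k\}$) throughout, which sidesteps the uniqueness issue entirely. The remaining steps — reading off $f_a$, checking trace preservation, repackaging into $\kr_a$ — are routine bookkeeping. One should also remark that the finite-dimensionality (or at least separability) assumption in the paper ensures the Stinespring dilation and Kraus decomposition are available in the standard form, and that the finiteness of the outcome set keeps all the sums over $a$ finite; these are already in force by the standing assumptions of the chapter.
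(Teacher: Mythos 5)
Your proposal is correct and follows essentially the same route as the paper: both exploit the invariance of $\map[\rho]$ under the pinching/dephasing channel on the classical register, expand a Kraus family of $\map$ through that pinching to obtain the block form $(\pure{a}\otimes\id)\widetilde{\kr}_r$, read off $f_a$ by orthogonality of the $\pure{a}$ factors, and absorb the Kraus index into the ancillary register $R$. The "main obstacle" you identify (non-uniqueness of the Kraus family) is resolved in the paper exactly as in your own suggested fix, so there is no gap.
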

\begin{proof}
The proof is based on two observations. First, regardless of the nature of the functions $f_a$, a state of the form~\eqref{eq:measrec} is always invariant under the application of a pinching channel $\mathcal{P}[\sigma]\defvar\sum_a \pure{a} \sigma \pure{a}$ (in the ``classical basis'') to the first system, i.e.~we have $(\mathcal{P} \otimes \idmap) [\map[\rho]] = \map[\rho]$.
Second, given that $\map$ is a CPTP map, it has a Kraus representation, $\map[\rho] = \sum_r \widetilde{\kr}_r \rho \widetilde{\kr}_r^\dagger$ for some operators $\widetilde{\kr}_r$ satisfying $\sum_r \widetilde{\kr}_r^\dagger \widetilde{\kr}_r = \id$. Putting these points together allows us to write
\begin{align}
\map[\rho] = (\mathcal{P} \otimes \idmap) [\map[\rho]] 
&= (\mathcal{P} \otimes \idmap) \left[\sum_r \widetilde{\kr}_r \rho \widetilde{\kr}_r^\dagger\right] \nonumber\\
&= \sum_{ar} (\pure{a} \otimes \id) \widetilde{\kr}_r \rho \widetilde{\kr}_r^\dagger (\pure{a} \otimes \id) \nonumber\\
&= \sum_{a} \pure{a} \otimes \sum_{r} \widetilde{L}_{ar} \rho \widetilde{L}_{ar}^\dagger, 
\end{align}
where $\widetilde{L}_{ar} \defvar \sum_a (\bra{a} \otimes \id) \widetilde{\kr}_{r}$. By comparing the last expression to~\eqref{eq:measrec} (and using the orthogonality of the $\pure{a}$ factors), we find that\footnote{An alternative approach: first, using the linearity of $\map$ and orthogonality of the $\pure{a}$ factors, one can prove that all $f_a$ must be linear. Then using the fact that an operator of the form $\sum_{a} \pure{a} \otimes O_a$ is PSD if and only if all $O_a$ are PSD, one can prove that all $f_a$ are completely positive. This implies they can be written in the form~\eqref{eq:subnormkraus} for some operators $\widetilde{L}_{ar}$. Finally, the condition that $\map$ is trace-preserving implies that $\tr{O} = \tr{\map[O]} = \tr{\sum_{ar} \widetilde{L}_{ar}^\dagger \widetilde{L}_{ar} O}$ for any $O$, which imposes the condition $\sum_{ar} \widetilde{L}_{ar}^\dagger \widetilde{L}_{ar} = \id$.}
\begin{align}
f_a[\rho] = \sum_{r} \widetilde{L}_{ar} \rho \widetilde{L}_{ar}^\dagger,
\label{eq:subnormkraus}
\end{align}
furthermore, the operators $\widetilde{L}_{ar}$ have the property $\sum_{ar} \widetilde{L}_{ar}^\dagger \widetilde{L}_{ar} = \id$. 

Now simply define $\kr_a \defvar \sum_r \ket{r} \otimes \widetilde{L}_{ar}$, where the $\ket{r}$ states are on an ancillary system $R$. It is straightforward to verify that in that case,~\eqref{eq:subnormkraus} is equal to~\eqref{eq:postmeas}. Also, we have $\sum_a \kr_a^\dagger \kr_a = \sum_{ar} \inn{r}{r} \widetilde{L}_{ar}^\dagger \widetilde{L}_{ar} = \id$, as desired.
\end{proof}

Since we started by allowing the functions $f_a$ to be as general as reasonably possible, the above argument shows that we incur basically no loss of generality by restricting our analysis to post-measurement states of the form~\eqref{eq:postmeas}. In particular, this will automatically incorporate scenarios such as storing the measurement outcome in $f_a[\rho]$ itself, or erasing the states after measurement and replacing them with different states, or any of a host of other possibilities --- we do not have to worry about missing out \emph{any} possible ``physically allowed'' operation by only considering states of the form~\eqref{eq:postmeas}.

Given this result, any possible quantum measurement process is completely described by the operators $\kr_a$ (which are essentially Kraus operators of a sort, up to technicalities about how to view the role of $R$), with the subnormalized post-measurement states being $\tr[R]{\kr_a \rho \kr_a^\dagger}$ with corresponding probabilities $\pr[\rho]{a} = \tr{\tr[R]{\kr_a\rho\kr_a^\dagger}} = \tr{\kr_a\rho\kr_a^\dagger}$. 
If all the Kraus operators $\kr_a$ are projectors and the $R$ system is trivial, then it is a projective measurement.
In the subsequent discussion, we will mainly focus on the effects of measurements on pure states, since determining the action of a quantum operation on all pure states is sufficient to completely specify its action on mixed states.

We now turn to the setting of multiple possible measurements, indexed by $x\in\mathcal{X}$. In that case, for each such measurement we have its defining Kraus operators $\kr_{a|x}$, satisfying $\sum_a \kr_{a|x}^\dagger \kr_{a|x} = \id$. For convenience, let us take all the measurements to have the same outcome set, by padding out the Kraus operator tuples for each measurement with some trivial Kraus operators (i.e.~that map everything to the zero vector) as necessary.
Also, we shall take all of these operators to have the same codomain by using a suitable embedding, chosen in such a way that the $R$ system is the same for all $x$.\footnote{Here the embedding encounters no technical issues as long as $\mathcal{X}$ is finite, because that means we only have finitely many Hilbert spaces to consider --- if we denote the Kraus operator codomain for each $x$ as $\mathcal{H}_{R_x} \otimes \mathcal{H}_{A'_x}$ (recall that we have already previously taken all the post-measurement states for each $x$ to be on some common register), we can 
take the embedding space to be $\mathcal{H}_{R} \otimes \mathcal{H}_{A'}$ where $\mathcal{H}_{R} \defvar \bigoplus_x \mathcal{H}_{R_x}$, $\mathcal{H}_{A'} \defvar \bigoplus_x \mathcal{H}_{A'_x}$.
}
Let $RA'$ denote the registers corresponding to this codomain Hilbert space, and let $A$ denote the register for the state before measurement. 
With this, it is possible to obtain a partial reduction to projective measurements, as follows:
\begin{fact}\label{fact:pvm}
Consider a family of measurement processes as described above, with Kraus operators $\kr_{a|x}$ satisfying $\sum_a \kr_{a|x}^\dagger \kr_{a|x} = \id_A$ for all $x$. There exists an embedding of the pre-measurement Hilbert space $\mathcal{H}_A$ in a larger Hilbert space $\mathcal{H}_{\widetilde{A}}$, as well as unitaries $U_x$ and orthogonal projectors $P_a$ on $\mathcal{H}_{\widetilde{A}}$ satisfying $\sum_a P_a = \id_{\widetilde{A}}$, such that for any $\ket{\psi}\in\mathcal{H}_A$ and any $a,x$ we have
\begin{align}
P_a U_x \ket{\psi} = \ket{a}_{\hat{A}} \otimes \kr_{a|x} \ket{\psi},
\label{eq:pvmform1}
\end{align}
where $\hat{A}$ is another ancillary register. In turn, this implies there exist some other orthogonal projectors $P_{a|x}$ on $\mathcal{H}_{\widetilde{A}}$ satisfying $\sum_a P_{a|x} = \id_{\widetilde{A}}$ for all $x$, such that for any $\ket{\psi}\in\mathcal{H}_A$ and any $a,x$ we have
\begin{align}
U_x P_{a|x} \ket{\psi} = \ket{a}_{\hat{A}} \otimes \kr_{a|x} \ket{\psi}.
\label{eq:pvmform2}
\end{align}
\end{fact}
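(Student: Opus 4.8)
\textbf{Proof plan for Fact~\ref{fact:pvm}.}
The plan is to construct the Stinespring-style dilation explicitly and then verify the two displayed identities in turn. First I would fix the ancillary register $\hat{A}$ to have orthonormal basis $\{\ket{a}\}_{a}$ indexed by the common outcome set, and define the enlarged pre-measurement space as $\mathcal{H}_{\widetilde{A}} \defvar \mathcal{H}_{\hat{A}} \otimes \mathcal{H}_R \otimes \mathcal{H}_{A}$ (or a suitable subspace thereof), with $\mathcal{H}_A$ embedded via $\ket{\psi} \mapsto \ket{0}_{\hat{A}} \otimes \ket{0}_R \otimes \ket{\psi}_A$ for some fixed reference vectors $\ket{0}_{\hat{A}},\ket{0}_R$. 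For each $x$, I would define the linear map $V_x : \mathcal{H}_A \to \mathcal{H}_{\widetilde{A}}$ by $V_x\ket{\psi} \defvar \sum_a \ket{a}_{\hat{A}} \otimes \kr_{a|x}\ket{\psi}$, where $\kr_{a|x}$ is understood as a map into $\mathcal{H}_R \otimes \mathcal{H}_A$ per the conventions set up just before the statement (using the common $R$). The defining relation $\sum_a \kr_{a|x}^\dagger \kr_{a|x} = \id_A$ says exactly that $V_x^\dagger V_x = \id_A$, so $V_x$ is an isometry; the orthogonality of the $\ket{a}$ factors makes the cross terms vanish in that computation. I would then take $P_a \defvar \ketbra{a}{a}_{\hat{A}} \otimes \id_{RA}$, which are manifestly orthogonal projectors summing to $\id_{\widetilde{A}}$.

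Next I would extend each isometry $V_x$ to a unitary $U_x$ on $\mathcal{H}_{\widetilde{A}}$ (possibly after enlarging $\mathcal{H}_{\widetilde{A}}$ once more so that the orthogonal complement of $V_x(\mathcal{H}_A)$ has dimension at least that of $\mathcal{H}_A$, e.g.\ by direct-summing a copy of $\mathcal{H}_A$ — this is the standard ``any isometry extends to a unitary'' argument). Importantly, the \emph{same} enlargement works for all $x$ since $\mathcal{X}$ is finite, so all $U_x$ live on a common space. Since $U_x$ agrees with $V_x$ on the image of the embedding of $\mathcal{H}_A$, for $\ket{\psi}\in\mathcal{H}_A$ we get $P_a U_x \ket{\psi} = P_a V_x\ket{\psi} = \ket{a}_{\hat{A}} \otimes \kr_{a|x}\ket{\psi}$, which is precisely~\eqref{eq:pvmform1}. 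For the second identity~\eqref{eq:pvmform2}, I would simply set $P_{a|x} \defvar U_x^\dagger P_a U_x$; these are orthogonal projectors (unitary conjugates of orthogonal projectors) summing to $U_x^\dagger \id_{\widetilde{A}} U_x = \id_{\widetilde{A}}$, and $U_x P_{a|x} = P_a U_x$, so $U_x P_{a|x}\ket{\psi} = P_a U_x\ket{\psi} = \ket{a}_{\hat{A}}\otimes\kr_{a|x}\ket{\psi}$ as claimed.

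The main obstacle, such as it is, is bookkeeping rather than conceptual: one has to be careful that the codomain register $R$ really can be taken common to all $x$ (this was arranged in the paragraph preceding the statement via a direct-sum embedding, so I would just cite that) and that the various enlargements of the Hilbert space can all be done simultaneously and finitely — this is where the finiteness of $\mathcal{X}$ and of the outcome set is used, and where I would be most careful to state the embeddings precisely. A secondary subtlety is notational consistency between ``$\kr_{a|x}$ as an operator $\mathcal{H}_A \to \mathcal{H}_R\otimes\mathcal{H}_{A'}$'' and its appearance inside $\mathcal{H}_{\widetilde{A}}$; I would resolve this by absorbing $R$ and $A'$ into $\widetilde{A}$ from the outset so that $\kr_{a|x}\ket{\psi}$ is literally a vector in a subspace of $\mathcal{H}_{\widetilde{A}}$. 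Everything else — the isometry check, the unitary extension, the projector algebra — is routine.
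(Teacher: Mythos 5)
Your proposal is correct and follows essentially the same route as the paper: define the isometry $\ket{\psi}\mapsto\sum_a\ket{a}_{\hat{A}}\otimes \kr_{a|x}\ket{\psi}$, verify $V_x^\dagger V_x=\id_A$ from the Kraus completeness relation, extend to a unitary $U_x$ on a fixed enlargement, take $P_a$ to be the projector onto the $\ket{a}_{\hat{A}}$ sector, and set $P_{a|x}=U_x^\dagger P_a U_x$. The only difference is cosmetic — the paper embeds via the direct sum $\mathcal{H}_A\oplus\mathcal{H}_{\hat{A}RA'}$ (forcing a small $\delta_{a,0}\id_A$ correction so the $P_a$ sum to the identity), whereas your tensor-product embedding with reference vectors avoids that technicality.
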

\begin{proof}
Under the assumption that the systems are finite-dimensional, the construction to obtain~\eqref{eq:pvmform1} is fairly standard (see e.g.~\cite{NC10}, though here we take a mildly different perspective on the embedding). Here we just outline the argument: take $\mathcal{H}_{\widetilde{A}} = \mathcal{H}_{A} \oplus \mathcal{H}_{\hat{A}RA'}$ so that $\mathcal{H}_{A}$ and $\mathcal{H}_{\hat{A}RA'}$ can be embedded in it (in the ``obvious'' way), define each unitary $U_x$ by requiring that $U_x \ket{\psi} = \sum_a \ket{a}_{\hat{A}} \otimes \kr_{a|x} \ket{\psi}$ for all $\ket{\psi}\in\mathcal{H}_A$ and then extending to a unitary\footnote{This is the step that needs separate handling in the infinite-dimensional case, where one cannot take for granted that an arbitrary isometry on a subspace can be extended to a unitary on the entire space. (For a simple example of this on the Hilbert space $\ell^2$ of square-summable sequences, 
consider the subspace $\{\vec{w}\in\ell^2 \mid w_1 = 0 \}$, and the ``shift isometry'' $V:(0,w_2,w_3,\dots) \to (w_2,w_3,w_4,\dots)$ on this subspace. Since this isometry is surjective on $\ell^2$, it cannot be extended any further.) For one approach to handle this and prove Fact~\ref{fact:pvm} for arbitrary Hilbert spaces, see e.g.~\cite{arx_KST20} where the idea is to introduce a yet larger Hilbert space in which to embed the domain and codomain, instead of trying to extend the isometry within the original Hilbert space.
} on $\mathcal{H}_{\widetilde{A}}$, and take $P_a = \pure{a}_{\hat{A}} \otimes \id_{RA'} \oplus \delta_{a,0} \id_{A}$ (the $\delta_{a,0} \id_{A}$ term is just a minor technicality to ensure that $\sum_a P_a = \id_{\widetilde{A}}$). 
As for the~\eqref{eq:pvmform2} statement, it follows immediately by taking $P_{a|x} = U_x^\dagger P_a U_x$. 
\end{proof}

Fact~\ref{fact:pvm} tells us that in place of thinking of the various measurement processes in terms of arbitrary Kraus operators $\kr_{a|x}$, we can describe them using only projective measurements and unitaries, performed in specific orders (and then tracing out $\hat{A}R$ afterwards). The statement~\eqref{eq:pvmform1} gives a description where a unitary (depending on the measurement choice $x$) is performed first followed by a projective measurement, whereas the statement~\eqref{eq:pvmform2} gives a description where a projective measurement that depends on $x$ is performed first, followed by a unitary that also depends on $x$.

When considering only a single possible choice of measurement, the above results are ``well-known'' reductions to projective measurements, often leading to claims that quantum measurements can be assumed projective without loss of generality (if there is only a single unitary $U_x$ to consider, it can usually be viewed as a basis change with limited physical significance). However, the presence of the $x$-dependent unitaries in Fact~\ref{fact:pvm} causes issues when multiple measurements are involved. For instance, it implies that in situations where we have multiple measurements that can be performed sequentially on a system, one cannot simply take for granted that all the measurements can be treated as projective, e.g.~in the sense of finding projectors such that $P_{a_2|x_2} P_{a_1|x_1} \ket{\psi}$ is isometrically equivalent to $\kr_{a_2|x_2} \kr_{a_1|x_1} \ket{\psi}$. 
Given only Fact~\ref{fact:pvm} by itself, we can only achieve a reduction where some unitaries are interleaved between the projectors, with these unitaries depending on the measurement chosen at each step.

Of course, this immediately raises the question of whether it is possible to derive a version of~\eqref{eq:pvmform2} that only has the projectors $P_{a|x}$ and not the $x$-dependent unitaries $U_x$.\footnote{Another perspective is to argue that the unitaries can be accounted for by including a classical memory register that records a list of all the measurements performed --- the projectors in subsequent measurements can then be conjugated with appropriate unitaries (conditioned on this record) in such a way that we arrive at e.g.~relations of the form $\widetilde{U}_{x_2 x_1} P_{a_2|x_2 x_1} P_{a_1|x_1} \ket{\psi} = \ket{a_2 a_1} \otimes \kr_{a_2|x_2} \kr_{a_1|x_1} \ket{\psi}$, which do not have explicit unitaries between the projective measurements. However, this comes at the cost of needing to introduce an explicit memory and/or a dependence on the measurement history in the projectors describing the measurements. This is a departure from the framework we have been aiming for here, where each measurement is completely described by Kraus operators $\kr_{a|x}$ that only depend on the choice of $x$, and any memory must be specifically accounted for in the post-measurement states produced by these Kraus operators. 
While it might intuitively seem that it must be possible to find projectors $\pvm_{a|x}$ that also store the value of $x$ in the post-measurement state, the approaches used here do not straightforwardly allow such a construction, and furthermore the counterexample in~\cite{CTM+13} places strong obstacles in the path of such a result.
Still, it is at least possible to derive a different (but again, partially restricted) form of ``reduction to projectors'' using this perspective --- see~\cite{BBS20}.}  
However, inspecting the proof of Fact~\ref{fact:pvm} does not seem to yield a straightforward way to do so. More fundamentally, as noted previously, results such as~\cite{CTM+13} impose insurmountable barriers towards such a result --- basically, that work investigated the set of correlations that can be produced when the post-measurement states have the form $P_{a_n|x_n} \dots P_{a_1|x_1} \ket{\psi}$ (with any ``memory'' having to be encoded in the post-measurement states rather than a separate classical register), and showed that it is in fact not even possible to achieve all classical (memory-assisted) correlations using such measurements. 
Given this result, it must be genuinely impossible to derive a ``fully general'' construction where e.g.~we always have projectors such that $P_{a_n|x_n} \dots P_{a_1|x_1} \ket{\psi}$ is isometrically equivalent to $\kr_{a_n|x_n} \dots \kr_{a_1|x_1} \ket{\psi}$.
Hence we need to be careful when making reductions to projective measurements in DI cryptography, and we shall discuss some approaches in the next section.

\subsection{Other possible approaches}

Before moving on, we briefly consider other perspectives that could be taken regarding the nature of measurements in a quantum framework. For example, some frameworks postulate that all measurements are ``fundamentally'' projective, i.e.~the state immediately after a measurement is of the form $\sum_a \pure{a} \otimes P_{a|x} \rho P_{a|x}$ for some projectors $P_{a|x}$, possibly with some embeddings in a larger Hilbert space (here we have 
allowed for multiple possible measurements, indexing the corresponding projector sets with $x$). However, even with this postulate, in the context of DI cryptography there seems to be no reason not to allow the devices to perform some operations before and after the ``fundamental'' projective part.
Since these additional operations are essentially uncharacterized, taking this projective-measurements postulate does not really end up imposing a useful restriction on the overall processes implemented in the devices --- the most convenient way to analyze them may be to simply observe that the resulting states are included within the family described by~\eqref{eq:measrec} (albeit with explicitly denoted dependence on measurement choice $x$), then run through the same line of reasoning as before. (In fact, by implementing unitaries $U_x$ on the post-measurement conditional states $P_{a|x} \rho P_{a|x}$ in a suitable embedding, one obtains expressions matching the LHS of~\eqref{eq:pvmform2}. Hence we can conclude that both perspectives essentially end up describing the same family of possible measurement processes, because Facts~\ref{fact:measrec}--\ref{fact:pvm} imply that the LHS of~\eqref{eq:pvmform2} suffices to describe all processes of the form~\eqref{eq:measrec}, apart from tracing out some registers.)
However, the main perspective we presented previously has the advantage that it did not need to start with the postulate that quantum measurements are ``fundamentally'' projective --- this postulate may not be universally agreed upon. 

Another framework that is sometimes used for quantum measurements is to first only consider POVM elements, i.e.~for each measurement $x$, one has a tuple of PSD operators $\povm_{a|x}$ satisfying $\sum_a \povm_{a|x} = \id$, and the probability of outcome $a$ is given by $\tr{\povm_{a|x} \rho}$. (Here, all the operators $\povm_{a|x}$ must have the same domain and codomain, namely the original Hilbert space $\mathcal{H}_A$.)
This has the advantage of a ``cleaner'' reduction to projectors --- it is always possible to 
embed $\mathcal{H}_A$ in a larger Hilbert space and find some projectors $P_{a|x}$ such that we simply have $\povm_{a|x}=P_{a|x}$ on the subspace $\mathcal{H}_A$ (it is this result that is sometimes referred to as the Naimark dilation). In this case, there are no additional unitaries to consider, and the embedding is ``the same'' across all the operator domains and codomains.
The drawback here is that there is no statement yet regarding the form of the post-measurement states. Some such frameworks postulate that they must be of the form $\sqrt{\povm_{a|x}} \rho \sqrt{\povm_{a|x}}$, where $\sqrt{\povm_{a|x}}$ denotes the unique PSD operator satisfying $\left(\sqrt{\povm_{a|x}}\right)^2 = \povm_{a|x}$. However, this seems almost as restrictive as the projective-measurements-only framework described above, and it is natural to again allow for operations before and after the ``fundamental'' measurement, arriving at similar conclusions. Alternatively, other such frameworks allow the post-measurement states to be of the form $\kr_{a|x} \rho \kr_{a|x}^\dagger$ for any $\kr_{a|x}$ satisfying $\kr_{a|x}^\dagger \kr_{a|x} = \povm_{a|x}$.\footnote{It can be shown using polar decomposition (even in the infinite-dimensional case, given conditions such as boundedness or closure) that any such $\kr_{a|x}$ are related to $\povm_{a|x}$ by $\kr_{a|x}
= V_{a|x}\sqrt{\povm_{a|x}}$ for some 
isometries or partial isometries 
$V_{a|x}$, so another way to view this version is that it simply allows a slightly restricted family of operations $V_{a|x}$ to be applied after $\sqrt{\povm_{a|x}}$.} 
This gives almost the same family of processes as described by~\eqref{eq:postmeas}, apart from the mostly minor issue of the ancillary register $R$. (One slight drawback of not having $R$ is that a pure initial state can then only produce pure post-measurement states; the register $R$ allows for some ``classical mixing'' to produce mixed states.)

\section{In DI cryptography}

We now turn to the question of finding reductions to projective measurements that are suitable for DI cryptography. As it turns out, once the framework we discussed above has been established, this is in fact a fairly simple task for the main protocols discussed in this thesis, such as DIQKD and DIRNG/DIRE. Broadly speaking, the common point in such protocols is that we have honest parties performing local measurements on their systems, with a single adversary Eve holding a separate side-information register. The key insight here is that if we focus on a single round\footnote{In principle, one might try to apply an analogous argument to the measurements producing the full output strings, instead of focusing on single rounds. However, this may not be the most efficient approach, because after reducing the analysis to projective measurements producing the full output strings, the security proofs we have described still involve some reductions to single rounds, which might not preserve the projective structure. Another problem in the sequential setting is that Alice's output might be allowed to depend on Bob's earlier inputs, which means that the operator describing Alice's output string is a rather complicated object that may not be amenable to the analysis described here (which somewhat relies on the tensor-product structure). Hence it seems better to \emph{first} reduce the analysis to single rounds via the various methods described in the rest of the thesis, \emph{then} reduce it to projective measurements.}, the security proofs for all these protocols can be phrased entirely in terms of the ccq states that store Alice and Bob's results and Eve's side-information (the post-measurement states in Alice and Bob's quantum registers are implicitly assumed to be securely discarded or otherwise inaccessible to Eve, since they could store a copy of the measurement outputs), and Alice and Bob's registers are only measured once. This gives us substantial freedom in performing additional operations on Alice and Bob's quantum registers, which will not affect Eve's side-information because quantum theory is non-signalling (in the calculations below, this is mathematically reflected by the cyclic property of partial trace).

More explicitly, let us focus on a single round in the sense of Sec.~\ref{sec:proofsketch} or Chapter~\ref{chap:singlernd} (or Chapter~\ref{chap:AD} for advantage distillation). 
Since determining the effect of a quantum operation on all pure states completely specifies its effect on arbitrary states (alternatively, by following the reductions to pure states in those sections), we consider only pure $\rho_{\qA \qB \qE}$, denoting the pure state vector as $\ket{\rho}_{\qA \qB \qE}$. 
In the sense discussed in the previous section, let the Kraus operators of the various measurements be $\kr_{a|x}$ and $\kr_{b|y}$, and denote the registers for their codomains as $R_A \qA'$ and $R_B \qB'$ respectively. 
Now observe that all quantities of interest for the single-round analysis are determined completely by the following ccq state (more precisely, a family of ccq states, indexed by the inputs $xy$): 
\begin{align}
\rho'_{\hat{A}_x \hat{B}_y \qE} &= \sum_{ab} \pure{ab}_{\hat{A}_x \hat{B}_y} \otimes \tr[R_A \qA' R_B \qB']{\left( \kr_{a|x}\otimes\kr_{b|y}\otimes\id_{\qE}  \ket{\rho}_{\qA \qB \qE} \right) (\text{h.c.})} , 
\label{eq:ccqpostmeas}
\end{align}
where for brevity we use ${\ket{v}(\text{h.c.})}$ to denote ${\pure{v}}$ (i.e.~appending the hermitian conjugate of $\ket{v}$). Specifically, the conditional entropies $H(\hat{A}_x|{E})$ (even after noisy preprocessing) and the probabilities $\pr{ab|xy}$, as well as all relevant states in the security analysis in Chapter~\ref{chap:AD}, can all be written entirely in terms of these states.

Hence to reduce the analysis to projective measurements, it would suffice to find some projective measurements and pre-measurement states such that we obtain the same ccq states after the measurement. This can be achieved using Fact~\ref{fact:pvm}: consider projectors $\pvm_{a|x},\pvm_{b|y}$ and unitaries $U_x,U_y$ as constructed in that statement, on registers $\widetilde{\qA},\widetilde{\qB}$ (for larger Hilbert spaces) respectively. Suppose the devices perform projective measurements described by this choice of $\pvm_{a|x},\pvm_{b|y}$, and take the pre-measurement state to be $\ket{\rho}_{\widetilde{\qA}\widetilde{\qB} \qE}$, i.e.~the state $\ket{\rho}_{\qA \qB \qE}$ with the Alice-Bob registers embedded in $\widetilde{\qA}\widetilde{\qB}$ (note that this means $\ket{\rho}_{\widetilde{\qA}\widetilde{\qB} \qE}$ lies within the subspace $\mathcal{H}_{\qA \qB \qE} $ of the embedding space $\mathcal{H}_{\widetilde{\qA} \widetilde{\qB} \qE}$).
Then the corresponding ccq states after measurement are
\begin{align}
& \sum_{ab} \pure{ab}_{\hat{A}_x \hat{B}_y} \otimes 
\tr[\widetilde{\qA}\widetilde{\qB}]{ \left(\pvm_{a|x}\otimes\pvm_{b|y}\otimes\id_{\qE} \ket{\rho}_{\widetilde{\qA}\widetilde{\qB} \qE}\right) (\text{h.c.})} \nonumber\\
=& \sum_{ab} \pure{ab}_{\hat{A}_x \hat{B}_y} \otimes 
\tr[\widetilde{\qA}\widetilde{\qB}]{ \left(U_x\pvm_{a|x}\otimes U_y\pvm_{b|y}\otimes\id_{\qE} \ket{\rho}_{\widetilde{\qA}\widetilde{\qB} \qE}\right) (\text{h.c.})} 
\nonumber\\
=& \sum_{ab} \pure{ab}_{\hat{A}_x \hat{B}_y} \otimes 
\tr[\widetilde{\qA}\widetilde{\qB}]{ \left(\ket{ab}_{\hat{A}\hat{B}} \otimes \left(\kr_{a|x}\otimes \kr_{b|y}\otimes\id_{\qE} \ket{\rho}_{\widetilde{\qA}\widetilde{\qB} \qE}\right)\right) (\text{h.c.})} 
\text{ since }  \ket{\rho} \in \mathcal{H}_{\qA \qB \qE}
\nonumber\\
=& \sum_{ab} \pure{ab}_{\hat{A}_x \hat{B}_y} \otimes 
\tr[\hat{A} R_A \qA' \hat{B} R_B \qB']{ \left(\ket{ab}_{\hat{A}\hat{B}} \otimes \left(\kr_{a|x}\otimes \kr_{b|y}\otimes\id_{\qE} \ket{\rho}_{\qA \qB \qE}\right)\right) (\text{h.c.})} ,
\end{align}
where in the last line we have explicitly written the relevant embedded registers within $\widetilde{\qA}\widetilde{\qB}$.
Evaluating the partial trace over $\hat{A}\hat{B}$ results in exactly the same expression as~\eqref{eq:ccqpostmeas}, as desired. 

\begin{remark}
As a slightly different approach, observe that the states~\eqref{eq:ccqpostmeas} can in fact be written entirely in terms of the POVM elements of the measurements, i.e.~$\povm_{a|x}= \kr_{a|x}^\dagger \kr_{a|x}$, $\povm_{b|y} = \kr_{b|y}^\dagger \kr_{b|y}$. Using a similar line of reasoning as above (or by directly using the Naimark dilation for POVMs), we can find a common embedding for all of Alice's POVM elements $\povm_{a|x}$ and projectors $\pvm_{a|x}$ on the larger space, such that $\povm_{a|x}=\pvm_{a|x}$ on $\mathcal{H}_A$. Doing the same for Bob, we again end up with projective measurements producing the same final ccq states. This approach has a small advantage of not needing to handle the Kraus operators directly, instead essentially only requiring a framework for POVM measurements that yields just the output probabilities. However, the scenarios discussed in the remainder of this section seem less amenable to this approach.
\end{remark}

Putting DIQKD aside, however, there are other DI protocols where the above argument may not suffice. For instance, a family of protocols with rather different structure from QKD or DIQKD would be the family of two-party ``distrustful'' protocols, in which one of the parties can be dishonest --- classic examples of such protocols are coin-flipping, bit commitment, and oblivious transfer. DI protocols in this setting typically centre around the two quantum registers held by the parties, and the dishonest party usually has access to the quantum states (on their register) after the measurement. This introduces some new challenges as compared to security proofs against a single third-party adversary, in that the structure of the post-measurement states may be more ``directly'' involved in the security proof (as compared to the preceding discussion, where they were basically discarded). 

We shall now describe a reduction to projectors that works for a particular family of such protocols. Specifically, we consider the DI certified-deletion protocols in~\cite{FM18,arx_KT20}, where the aim is for one party (say, Bob) to prove that they have deleted some information. The key observation that these protocols rely on is the following property of a nonlocal game known as the magic-square game (we describe the game in detail in Appendix~\ref{asec:MS}): if Alice and Bob implement the states and measurements~\eqref{eq:MSmeas} that perfectly win the magic-square game, then while Bob knows Alice's output bit $a_y$ exactly, he has no information about any other bits in Alice's string $a$ --- the measurement that produces Bob's output also ``deletes'' all information regarding those other bits. By combining this fact with the game's robust self-testing properties~\cite{WBM+16}, the following result was derived by~\cite{FM18}:\footnote{We highlight some technicalities in the phrasing of the result here, which is something of an ``intermediate'' statement between Proposition~2 and Corollary 3 of~\cite{FM18}. 
First, note that in this statement, there is no probability distribution for $x,y,y'$ --- rather, the part involving the magic-square winning probability is a ``hypothetical'' situation where the game is played with those states and measurements (which would involve random inputs, but not the values denoted in this statement), whereas the subsequent parts involve ``fixed'' values of $x,y,y'$, not random values.
Also, no explicit statement is made regarding whether Bob ``knows'' the values of $x,y'$ when producing the guess for $a_{y'}$ --- the bound is about \emph{any} guessing measurement on register $\qB$, so the question is ill-posed in some senses. (Though given the quantifier ordering in this phrasing, the optimal guessing measurement could be different for each $x,y'$, so in that sense Bob is allowed to ``know'' $x,y'$.)
However, in the context of applying this result in games or protocols, typically the situation of interest is along the lines of first having the parties play the standard magic-square game, then having Bob produce a guess for $a_{y'}$. In such cases, it is usually important to at least enforce that Bob does not know Alice's input $x$ {before} playing the magic-square game --- if he does, it is possible for him to learn her entire string by performing the same measurement as her on the shared state $\ket{\Phi^+}\ket{\Phi^+}$.
}
\begin{fact}\label{fact:certdel}
(See Proposition~2 and Corollary 3 of~\cite{FM18}) Suppose Alice and Bob have a state and projective measurements (on some registers $\qA \qB$) that can win the magic-square game with probability $1-\delta$. Take any values $x,y,y' \in \{0,1,2\}$ such that $y \neq y'$. Now suppose that Alice and Bob perform the aforementioned measurements for inputs $x,y$, obtaining outputs $a,b$. 
If Bob now implements any measurement on register $\qB$ to guess the value of $a_{y'}$, the probability of him guessing correctly is at most
$\frac{1}{2}+9\sqrt{\delta}$.
\end{fact}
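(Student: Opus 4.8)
\textbf{Proof plan for Fact~\ref{fact:certdel} (certified deletion bound).}

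The plan is to assemble this statement from two ingredients that are essentially ``known'': the robust self-testing property of the magic-square game, and a direct computation on the ideal strategy~\eqref{eq:MSmeas} showing that the ideal post-measurement state on $\qB$ carries no information about $a_{y'}$ for $y' \neq y$. First I would recall (or cite~\cite{WBM+16}) the robust self-test: if a state $\ket{\rho}_{\qA\qB}$ together with projective measurements wins the magic-square game with probability $1-\delta$, then there exist local isometries $V_A, V_B$ taking $\ket{\rho}_{\qA\qB}$ to a state that is $O(\sqrt{\delta})$-close (in vector norm / purified distance) to the ideal state $\ket{\Phi^+}_{\qA_1\qB_1}\ket{\Phi^+}_{\qA_2\qB_2} \otimes \ket{\text{junk}}$, and simultaneously mapping each measurement operator $\pvm_{a|x}$, $\pvm_{b|y}$ to something $O(\sqrt{\delta})$-close to the corresponding ideal operator of~\eqref{eq:MSmeas} tensored with identity on the junk register. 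The explicit constants in the $O(\sqrt{\delta})$ bounds are the ones quoted in~\cite{WBM+16,FM18}; I would just track them carefully enough to land on the stated $\tfrac12 + 9\sqrt{\delta}$.

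Next I would handle the ideal case exactly. Fix $x,y,y'$ with $y \neq y'$. In the ideal strategy, Alice's measurement for input $x$ produces a three-bit string $a$ (the row/column of the magic square) by measuring the two maximally entangled pairs in commuting Pauli bases, and Bob's measurement for input $y$ produces $b$ together with a post-measurement state on $\qB$ that is completely determined by the outcome. The key structural fact, which I would verify by a short explicit calculation using the Pauli operators in the magic square, is that conditioned on Bob's outcome, the reduced state $\rho_{\qB|b}$ is independent of $a_{y'}$: equivalently, the two subnormalized conditional states on $\qB$ for $a_{y'}=0$ and $a_{y'}=1$ are equal. Hence for the ideal strategy, any measurement on $\qB$ guesses $a_{y'}$ with probability exactly $\tfrac12$. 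This is precisely the ``deletion'' property, and it is the conceptual heart of the argument — but it is a finite, mechanical computation once the magic-square operators are written down, so I would not belabour it.

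Then I would combine the two: given the real strategy that wins with probability $1-\delta$, apply $V_A \otimes V_B$, use the triangle inequality to replace (i) the real state, (ii) Alice's real measurement for input $x$, and (iii) Bob's real measurement for input $y$ by their ideal counterparts, each at the cost of an $O(\sqrt{\delta})$ term in the relevant trace-distance or operator-norm estimate. Since isometries do not change guessing probabilities, and since an $\eta$-perturbation in trace distance of the cq state recording $(a_{y'}, \qB)$ changes the optimal guessing probability by at most $\eta$, the real guessing probability for $a_{y'}$ is within $O(\sqrt{\delta})$ of its ideal value $\tfrac12$. Collecting the constants from~\cite{WBM+16,FM18} gives the bound $\tfrac12 + 9\sqrt{\delta}$. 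The main obstacle I anticipate is bookkeeping: the self-test supplies robustness bounds on the state and on each measurement operator separately, and one must propagate all three perturbations through to a single bound on the guessing probability for one particular bit $a_{y'}$ while being careful that the perturbation of Alice's measurement only affects the distribution of $a$ (hence of $a_{y'}$) and not Bob's guessing register. Keeping the chain of triangle inequalities tight enough to reach the stated constant, rather than merely ``$O(\sqrt{\delta})$'', is the delicate part, and here I would lean directly on the estimates already worked out in~\cite{FM18}.
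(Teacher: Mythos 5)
The paper does not actually prove this statement: Fact~\ref{fact:certdel} is imported verbatim from \cite{FM18} (``We will not give the proof here; refer to \cite{FM18} for the details''), and the paper's own contribution in that appendix is the extension to non-projective measurements (Fact~\ref{fact:certdelproj}), whose proof is a separate dilation argument. Your plan is therefore being compared against the source \cite{FM18} rather than against anything in this thesis, and on that comparison it is faithful: the argument in \cite{FM18} is exactly the combination you describe, namely the robust self-test of the magic-square game from \cite{WBM+16}, the exact computation showing that in the ideal strategy Bob's post-measurement state depends on Alice's string only through $a_y=b_x$ (so the ideal guessing probability for $a_{y'}$ is exactly $\tfrac12$), and a chain of perturbation bounds propagating the $O(\sqrt{\delta})$ closeness of state and measurement operators into the guessing probability. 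The only part of your plan that is not self-contained is the derivation of the explicit constant $9$, which you correctly identify as pure bookkeeping and defer to \cite{FM18}; since the statement is itself a citation, that is acceptable, but be aware that if you were asked to produce the constant from scratch you would need the quantitative isometry and operator-closeness bounds of \cite{WBM+16} in explicit form, not just their $O(\sqrt{\delta})$ shape.
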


We will not give the proof here; refer to~\cite{FM18} for the details. Instead, our focus is on explaining why a somewhat stronger statement also holds:
\begin{fact}\label{fact:certdelproj}
Fact~\ref{fact:certdel} holds even without the condition that the measurements are projective.
\end{fact}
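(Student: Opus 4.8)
The plan is to reduce Fact~\ref{fact:certdelproj} to Fact~\ref{fact:certdel} by exhibiting, for any given state and (possibly non-projective) measurements, an equivalent \emph{projective} setup on a larger Hilbert space that produces exactly the same joint input-output statistics \emph{and} the same family of post-measurement ccq states on Bob's register that enter the guessing-probability bound. This is precisely the kind of argument already carried out in this appendix for the DIQKD-style ccq states~\eqref{eq:ccqpostmeas}, and I would follow the same template, adapted to the magic-square scenario.

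First I would set up the general form of the measurements using Fact~\ref{fact:measrec} and Fact~\ref{fact:pvm}: write Alice's magic-square measurements as Kraus operators $\kr_{a|x}$ and Bob's as $\kr_{b|y}$, and invoke Fact~\ref{fact:pvm} to obtain an embedding $\mathcal{H}_{\qA}\hookrightarrow\mathcal{H}_{\widetilde{\qA}}$, $\mathcal{H}_{\qB}\hookrightarrow\mathcal{H}_{\widetilde{\qB}}$, orthogonal projectors $\pvm_{a|x},\pvm_{b|y}$, and unitaries $U_x,U_y$ satisfying the relation~\eqref{eq:pvmform2}. The pre-measurement state is taken to be the embedded version $\ket{\rho}_{\widetilde{\qA}\widetilde{\qB}}$ (purified if necessary, as in the reductions already used in the thesis). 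The projective measurements $\{\pvm_{a|x}\}$, $\{\pvm_{b|y}\}$ then form a valid projective strategy. Two things must be checked: (i) this projective strategy wins the magic-square game with the same probability $1-\delta$, and (ii) for the fixed inputs $x,y$ in the statement, Bob's optimal guessing probability for $a_{y'}$ using register $\widetilde{\qB}$ is no larger than in the original setup. Claim~(i) is immediate because~\eqref{eq:pvmform2} shows the output probabilities $\pr{ab|xy}=\tr{(\pvm_{a|x}\otimes\pvm_{b|y})\pure{\rho}_{\widetilde{\qA}\widetilde{\qB}}}$ agree with the original $\tr{(\povm_{a|x}\otimes\povm_{b|y})\pure{\rho}_{\qA\qB}}$, since the unitaries $U_x\otimes U_y$ are undone by the fact that $\ket{\rho}$ lies in the embedded subspace; the winning probability is a linear function of these probabilities, hence unchanged. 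For claim~(ii), I would compute the ccq state $\sum_{a}\pure{a}_{\hat A_x}\otimes\tr[\widetilde{\qA}]{(\pvm_{a|x}\otimes\id_{\widetilde{\qB}})\pure{\rho}_{\widetilde{\qA}\widetilde{\qB}}}$ conditioned on Alice obtaining $a$ after her projective measurement, Bob obtaining $b$ after his, and show — again using~\eqref{eq:pvmform2} and the cyclicity of the partial trace exactly as in the display leading to~\eqref{eq:ccqpostmeas} — that Bob's reduced state on $\widetilde{\qB}$ is isometrically equivalent to his reduced state $\tr[R_B\qB']{(\kr_{b|y}\otimes\id)\pure{\rho}_{\qA\qB}(\text{h.c.})}$ in the original non-projective setup, up to the ancillas $\hat B R_B \qB'$ which are traced out. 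Since guessing probability is invariant under isometries and cannot increase under tracing out subsystems (Bob always has the option of ignoring the extra registers, or conversely the extra registers in the original setup only help him), the bound $\tfrac12+9\sqrt\delta$ carries over.

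The main obstacle I anticipate is making precise the claim in~(ii) that the \emph{Bob-side} reduced states coincide even though in Fact~\ref{fact:certdel} Bob is allowed to measure his register \emph{after} the magic-square round, i.e.~his register is a genuine quantum memory rather than something discarded. Concretely, one must be careful that the unitary $U_y$ appearing in~\eqref{eq:pvmform2} — which in the projective picture is applied to Bob's register right after his projective measurement — does not change what information about $a_{y'}$ is extractable, and that the ancillary registers $R_B,\qB'$ carry no additional information about Alice's string beyond what Bob already had. This is true because $R_B\qB'$ are produced by a local operation on Bob's side from $\ket{\rho}_{\qA\qB}$ and the magic-square Kraus operators, so they are functions of Bob's share alone; but spelling this out cleanly requires tracking the registers through the embedding of Fact~\ref{fact:pvm} rather than just invoking it as a black box. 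I would handle this by stating the projective replacement as: Bob's register in the projective picture is $\widetilde{\qB}$, which after his measurement and the unitary $U_y$ contains $\hat B$ (a classical copy of $b$), $R_B\qB'$ (the dilation ancillas), and a residual system; the joint $\qA$-$\widetilde{\qB}$ state conditioned on $a,b$ equals (up to the fixed isometry $U_y^{-1}$ and the embedding map) the original $\qA$-$R_B\qB'$ post-measurement state, hence any guessing measurement on $\widetilde{\qB}$ corresponds to one on the original Bob system and vice versa.

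Finally I would note that since all quantities entering Fact~\ref{fact:certdel} — the magic-square winning probability and the guessing probability of $a_{y'}$ from Bob's post-measurement register — are preserved (or can only decrease) under this replacement, applying Fact~\ref{fact:certdel} to the projective strategy yields the same conclusion $\tfrac12+9\sqrt\delta$ for the original non-projective strategy, which is exactly Fact~\ref{fact:certdelproj}. I would also remark, as the excerpt does in the DIQKD case, that this argument is self-contained modulo Facts~\ref{fact:measrec}--\ref{fact:pvm}, and that one must restrict to finite-dimensional systems (consistent with the standing assumption of the thesis) so that the unitary-extension step in Fact~\ref{fact:pvm} goes through without the subtleties flagged in the footnote there.
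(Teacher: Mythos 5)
Your proposal follows essentially the same route as the paper's proof: embed via Fact~\ref{fact:pvm}, note that the projective strategy reproduces $\pr{ab|xy}$ (hence the magic-square winning probability), and show that Bob's original post-measurement states $\omega^{xy}_{B' \land ab}$ are recovered from the projective ones by the local operation ``apply $U_y$, then trace out $\hat{B}R_B$'', so that the guessing probability in the original setup is bounded by that of the projective setup, to which Fact~\ref{fact:certdel} applies. One minor caution: claim~(ii) as you literally state it has the inequality in the wrong direction --- the reduction needs the \emph{original} Bob's guessing probability to be at most the \emph{projective} Bob's --- but your isometry-plus-partial-trace verification delivers exactly that, so the substance is correct.
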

\noindent The works~\cite{FM18,arx_KT20} use this as a building block to construct more complex DI protocols for various certified-deletion tasks. The fact that it holds for arbitrary measurements is important for this purpose, because the ``uncharacterized'' devices in DI protocols are not restricted to projective measurements. Again, our focus is not on the protocol details and security proofs, but rather we shall just explain how to prove Fact~\ref{fact:certdelproj}:
\begin{proof}
We can again suppose that the pre-measurement state is a pure state $\ket{\rho}_{\qA \qB}$ without loss of generality, e.g.~by giving a purifying register to Bob.
As before, let the Kraus operators of the measurements (for the magic-square part) be $\kr_{a|x}$ and $\kr_{b|y}$, and denote the registers for their codomains as $R_A \qA'$ and $R_B \qB'$ respectively. For each input pair $xy$, let $\omega^{xy}_{B' \land ab}$ denote Bob's subnormalized post-measurement state conditioned on outputs $ab$. In that case, we can write (ignoring the classical output registers, since the Kraus operators can anyway be defined such that they store a copy of the outputs):
\begin{align}
\omega^{xy}_{B' \land ab} = \tr[R_A \qA' R_B]{\left( \kr_{a|x}\otimes\kr_{b|y} \ket{\rho}_{\qA \qB} \right) (\text{h.c.})} .
\label{eq:postmeasBob}
\end{align}
Bob's optimal probability of guessing $a_{y'}$ can be written entirely as a function of the states $\omega^{xy}_{B' \land ab}$ (more precisely we only need to consider a ``coarse-graining'' of these states by summing over values of $a$ with the same value of $a_{y'}$, but our argument works even without this coarse-graining). This implies that to prove the desired claim, it would suffice to find some projective measurements achieving the same magic-square winning probability, such that there is some operation Bob can apply to the resulting subnormalized post-measurement states to obtain the same states $\omega^{xy}_{B' \land ab}$
(since guessing probability is nonincreasing under local operations\footnote{To apply this fact correctly, the operation cannot depend explicitly on any random variables Bob does not have direct access to, such as Alice's output $a$. However, by the phrasing of Fact~\ref{fact:certdel}, it can possibly depend on $x,y,y'$, though our construction will only involve $y$ (which Bob certainly has access to).} applied to the side-information system).

To do so, we again consider projectors $\pvm_{a|x},\pvm_{b|y}$ and unitaries $U_x,U_y$ as constructed in Fact~\ref{fact:pvm}, on registers $\widetilde{\qA},\widetilde{\qB}$ (for larger Hilbert spaces) respectively. 
Suppose the devices perform projective measurements described by this choice of $\pvm_{a|x},\pvm_{b|y}$, on the pre-measurement state  $\ket{\rho}_{\widetilde{\qA}\widetilde{\qB}}$, i.e.~$\ket{\rho}_{\qA \qB}$ embedded in $\widetilde{\qA}\widetilde{\qB}$. This yields the same nonlocal distribution $\pr{ab|xy}$ (as easily verified from~\eqref{eq:pvmform2}) and thus the same magic-square winning probability. Also, Bob's subnormalized post-measurement states would be
\begin{align}
\tr[\widetilde{\qA}]{\left( \pvm_{a|x}\otimes\pvm_{b|y} \ket{\rho}_{\widetilde{\qA}\widetilde{\qB}} \right) (\text{h.c.})} .
\end{align}
Now suppose Bob applies the unitary $U_y$ to his system, then traces out the registers $\hat{B} R_B$. The resulting states would be
\begin{align}
&\tr[\widetilde{\qA} \hat{B} R_B]{\left( \pvm_{a|x}\otimes U_y\pvm_{b|y} \ket{\rho}_{\widetilde{\qA}\widetilde{\qB}} \right) (\text{h.c.})} \nonumber\\
=& \tr[\widetilde{\qA} \hat{B} R_B]{\left( U_x \pvm_{a|x}\otimes U_y\pvm_{b|y} \ket{\rho}_{\widetilde{\qA}\widetilde{\qB}} \right) (\text{h.c.})} 
\nonumber\\
=& \tr[\widetilde{\qA} \hat{B} R_B]{\left(\ket{ab}_{\hat{A}\hat{B}} \otimes \left(\kr_{a|x}\otimes\kr_{b|y} \ket{\rho}_{\widetilde{\qA}\widetilde{\qB}}\right) \right) (\text{h.c.})} 
\text{ since }  \ket{\rho} \in \mathcal{H}_{\qA \qB} \nonumber\\
=& \tr[R_A \qA' R_B]{\left(\kr_{a|x}\otimes\kr_{b|y} \ket{\rho}_{{\qA}{\qB}}\right) (\text{h.c.})} .
\end{align}
These are exactly the same states as~\eqref{eq:postmeasBob}, as desired.
\end{proof}

A key point that allowed the above argument to work was the fact that in this context, we could allow Bob to perform some operations with explicit $y$ dependence (after the projective measurement). To somewhat highlight why this is nontrivial, we could consider what would happen if we had attempted a different approach by describing the process of Bob producing a guess $g$ as another measurement, with some Kraus operators $\kr_g$. Notice we do not need to include an explicit $y$ dependence in these Kraus operators, since the Kraus operators $\kr_{b|y}$ can be defined such that they store copies of $y$ (by the generality of Fact~\ref{fact:measrec}), allowing the $\kr_g$ operators to just ``read off'' the registers storing $y$. In that case, the distribution of the various classical values produced in this process would be given by $\tr{\left( \kr_{a|x}\otimes\kr_g \kr_{b|y} \ket{\rho}_{\qA \qB} \right) (\text{h.c.})}$. However, it would \emph{not} necessarily be possible to find projectors (and an embedding) such that $\tr{\left( \pvm_{a|x}\otimes\pvm_g \pvm_{b|y} \ket{\rho}_{\widetilde{\qA}\widetilde{\qB}} \right) (\text{h.c.})} = \tr{\left( \kr_{a|x}\otimes\kr_g \kr_{b|y} \ket{\rho}_{\qA \qB} \right) (\text{h.c.})}$, because of the unitaries $U_y$ that appear in~\eqref{eq:pvmform2}. Hence a reduction to projectors in this strong sense might not be possible. (One workaround would be to instead include an explicit $y$ dependence in the guessing-measurement projectors, but whether this is acceptable would depend on context --- for instance, another perspective on the above proof is that it works because this $y$ dependence is acceptable.)

Of course, ideally we would like to have systematic procedures for reductions to projectors for general DI protocols, but currently we appear to be restricted to more case-by-case approaches. (Yet another protocol where such a reduction is possible is in~\cite{arx_KST20}, exploiting a specific property of the protocol, 
but we will not describe it here.) Still, the ideas used in the above arguments should generalize to some families of DI protocols at least, which may be a useful starting point. Depending on the protocol, the (rather different) approaches described in~\cite{BBS20,arx_CHM21} may also be usable; refer to those works for further details.

As a closing remark, we note that the constructions in this section all relied fairly heavily on the tensor-product structure between Alice and Bob. Roughly, this allowed us to reduce the analysis to projective measurements independently for each of their systems, then combine them as a tensor product. In frameworks where Alice and Bob's measurements are only required to commute, one faces the more challenging task of preserving the commutation relations while constructing the projectors. As mentioned previously, this is possible when focusing solely on the Alice-Bob distributions $\pr{ab|xy}$~\cite{HP16,PT15}; however, we leave the question of whether this is possible for DIQKD to future work.

\chapter{Foundational perspective on DI cryptography}
\label{app:foundations}

(The following is a more detailed exposition of an argument described in e.g.~\cite{BCP+14}; see also an in-depth analysis in~\cite{woodheadthesisnodoi}.)
We begin by proving a small lemma:
\begin{lemma}\label{lem:NSprod}
Suppose a distribution $\pr{ab|xy}$ satisfies the NS conditions~\eqref{eq:NS} and is deterministic, i.e.~$\pr{ab|xy} \in \{0,1\}$ for all $a,b,x,y$. Then it must also be product, i.e.~$\pr{ab|xy}=\pr{a|x}\pr{b|y}$.
\end{lemma}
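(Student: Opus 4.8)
The plan is to fix an arbitrary input pair $x,y$ and argue directly about the support of $\pr{\cdot\cdot|xy}$. Since $\pr{ab|xy}$ is a probability distribution over the finite set of pairs $(a,b)$ with all values in $\{0,1\}$, exactly one pair $(a^\star,b^\star)$ has $\pr{a^\star b^\star|xy}=1$ and all others are $0$. The goal is to show $a^\star$ depends only on $x$ and $b^\star$ depends only on $y$, which immediately gives $\pr{ab|xy}=\delta_{a,a^\star(x)}\delta_{b,b^\star(y)}=\pr{a|x}\pr{b|y}$.

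First I would invoke the no-signalling conditions~\eqref{eq:NS} to define the well-defined marginals $\pr{a|x}=\sum_b \pr{ab|xy}$ and $\pr{b|y}=\sum_a \pr{ab|xy}$. Computing the first marginal using the fact that $\pr{\cdot\cdot|xy}$ is a point mass at $(a^\star,b^\star)$, I get $\pr{a|x}=\delta_{a,a^\star}$, so $a^\star$ is determined by $x$ alone (a priori $a^\star$ could have depended on both $x$ and $y$, but the NS condition says $\pr{a|x}$ does not depend on $y$, hence neither does its unique support point). Symmetrically, $\pr{b|y}=\delta_{b,b^\star}$ shows $b^\star$ is a function of $y$ alone. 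Writing $a^\star=a^\star(x)$ and $b^\star=b^\star(y)$, I then have $\pr{ab|xy}=\delta_{a,a^\star(x)}\,\delta_{b,b^\star(y)}=\pr{a|x}\,\pr{b|y}$, which is the claimed product form.

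The only mild subtlety — and the place to be slightly careful — is justifying that a $\{0,1\}$-valued probability distribution really is supported on a single pair: this uses that the values are non-negative, sum to $1$, and lie in $\{0,1\}$, so at least one is $1$ and (by the sum being $1$) exactly one is. This is routine. There is no real obstacle here; the lemma is essentially a one-line consequence of the no-signalling conditions once the deterministic structure is unpacked, and I would keep the write-up correspondingly short.
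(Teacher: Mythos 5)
Your proposal is correct and follows essentially the same route as the paper's proof: identify the unique output pair supported by the deterministic distribution for each input pair, compute the marginals, and use the NS conditions to conclude that the distinguished outputs depend only on the respective local inputs, yielding the product form. No gaps.
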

\begin{proof}
For any input pair $xy$, $\pr{ab|xy}$ is a valid probability distribution for $ab$. Given that $\pr{ab|xy}$ only takes values in $\{0,1\}$, this implies that for each input pair $xy$ there must be exactly one ``distinguished'' output pair $a^\star_{xy}b^\star_{xy}$ (the subscript denotes the input dependence) that occurs with probability $1$, with all others having probability $0$. 
We can summarize this as
\begin{align}
\pr{ab|xy} 
= \delta_{a,a^\star_{xy}} \delta_{b,b^\star_{xy}} .
\end{align}
Then we have (by the definitions of $\pr{a|x}$ and $\pr{b|y}$)
\begin{align}
\pr{a|x} 
= \sum_b \delta_{a,a^\star_{xy}} \delta_{b,b^\star_{xy}} 
= \delta_{a,a^\star_{xy}} ,\qquad
\pr{b|y}
= \sum_a \delta_{a,a^\star_{xy}} \delta_{b,b^\star_{xy}} 
= \delta_{b,b^\star_{xy}} .
\label{eq:detmarg}
\end{align}
Hence $\pr{ab|xy}=\pr{a|x}\pr{b|y}$, as claimed. (Note that putting together the NS conditions with~\eqref{eq:detmarg} implies that in fact $a^\star_{xy}$ is independent of $y$ and $b^\star_{xy}$ is independent of $x$, as we would intuitively expect in order to fulfill the NS property.)
\end{proof}

Now consider the following reasoning.
Suppose that a distribution $\pr{ab|xy}$ is described by an ``NS hidden-variable model'', in the sense that we can write it in the form $\pr{ab|xy} = \sum_\lambda \pr{ab|xy\lambda} \pr{\lambda}$ such that for each $\lambda$, $\pr{ab|xy\lambda}$ satisfies the NS conditions. 
Now note that if the hidden-variable model is deterministic (i.e.~all $\pr{ab|xy\lambda}$ are either $0$ or $1$), then by Lemma~\ref{lem:NSprod} we know that we must have $\pr{ab|xy\lambda} = \pr{a|x\lambda}\pr{b|y\lambda}$. This means that $\pr{ab|xy}$ admits an LHV model, and hence cannot violate a Bell inequality. 

Taking the contrapositive, we have the following statement for any distribution $\pr{ab|xy}$ described by an NS hidden-variable model: if the distribution violates a Bell inequality, then the hidden-variable model cannot be deterministic. Importantly, this means that in some sense there is ``fundamental randomness'' in the outputs: even if an adversary has perfect knowledge of the hidden variable $\lambda$, they cannot perfectly predict all the outputs. This is useful for cryptography, since randomness that cannot be predicted by an adversary is essentially ``secret'' randomness. Notice that this line of reasoning was extremely general, i.e.~it is not specific to quantum theory --- it only used the notion that the underlying model satisfies the NS conditions, which one might hope any ``reasonable'' physical theory satisfies (otherwise the theory would in principle allow setting up scenarios with instantaneous signalling between parties, unless perhaps some restriction is placed on how well the hidden variable $\lambda$ can be ``fixed''). There are, however, some points to be careful of, which we address at the end of this section.

\newcommand{\ga}{a'}
\newcommand{\gb}{b'}
To explicitly cast the quantum setting in this framework, let us consider a distribution produced by a state $\rho_{\qA\qB\qE}$ and measurements $\pvm_{a|x},\pvm_{b|y}$ as in~\eqref{eq:qprobs}. Now suppose that Eve performs a measurement with POVM operators $\pvm_{\ga\gb}$ on her system, obtaining a pair of outputs $\ga\gb$ which is intended to be a guess of Alice and Bob's outputs. 
Since this measurement commutes with Alice and Bob's measurements, we can suppose that it is instead performed first, in which case the state in Alice and Bob's devices conditioned on Eve getting output $\ga\gb$ is
\begin{align}
\rho_{\qA \qB | \ga\gb} = \frac{1}{\prnoscale{\ga\gb}} \tr[\qE]{\left(\id_{\qA}\otimes\id_{\qB}\otimes\pvm_{\ga\gb}\right) \rho_{\qA \qB \qE}},
\end{align}
where $\prnoscale{\ga\gb}  \defvar \tr{\pvm_{\ga\gb} \rho_{\qE}}$ is the distribution of Eve's outputs. 
In terms of this state, we can write 
(using the fact that $\sum_{\ga\gb} \pvm_{\ga\gb}=\id_{\qE}$):
\begin{align}
\pr{ab|xy} = \tr{\left(\pvm_{a|x}\otimes\pvm_{b|y}\otimes\sum_{\ga\gb} \pvm_{\ga\gb}\right) \rho_{\qA \qB \qE}} &= \sum_{\ga\gb} \tr{\pvm_{a|x}\otimes\pvm_{b|y} \, \rho_{\qA \qB | \ga\gb}} \prnoscale{\ga\gb} . 
\end{align}
The idea is to now view $\ga\gb$ as the hidden variable $\lambda$ in our above discussion. In that case, we see that the distribution $\pr{ab|xy\ga\gb} = \tr{\pvm_{a|x}\otimes\pvm_{b|y} \, \rho_{\qA \qB | \ga\gb}}$ indeed describes an NS hidden-variable model in the sense previously discussed. Hence that line of reasoning applies, i.e.~if $\pr{ab|xy}$ violates a Bell inequality, then it is impossible for $\pr{ab|xy\ga\gb}$ to be deterministic. This in turn implies that Eve's guess $\ga\gb$ cannot always be equal to $ab$.

However, while this line of reasoning can serve to motivate the possibility of DI cryptography, there are some technicalities to be aware of. In particular, we have not considered the possibility of Eve choosing her measurement depending on the inputs $xy$, whereas in many DI protocols, she does learn these inputs (after Alice and Bob's measurements), and should be able to choose her measurement depending on them\footnote{If Eve's measurement can depend on $xy$, then it produces a conditional Alice--Bob state of the form $\rho_{\qA \qB | \ga\gb xy}$, in which case the resulting distribution $\tr{\pvm_{a|x}\otimes\pvm_{b|y} \, \rho_{\qA \qB | \ga\gb xy}}$ is not of a form that straightforwardly satisfies the NS conditions, and our above reasoning does not directly apply.} (though for protocols using a fixed input pair in all generation rounds, this is not an issue, since Eve only needs to perform the optimal measurement for guessing the outputs from that input pair). Another issue is that although we have proven that (for $\pr{ab|xy}$ violating a Bell inequality and described by NS hidden variables) not \emph{all} the values $\pr{ab|xy\lambda}$ can be in $\{0,1\}$, this does not rule out the following possibility: 
there may be a distribution $\pr{ab|xy}$ that violates some Bell inequality, but has the property that for any specific input pair $x^\star y^\star$, there exists an NS hidden-variable model for $\pr{ab|xy}$ such that $\pr{ab|x^\star y^\star \lambda} \in \{0,1\}$ for all $a,b,\lambda$. (Note the quantifier ordering --- the NS hidden-variable models are necessarily different for each $x^\star y^\star$, otherwise our earlier conclusions would be violated.) In fact, explicit examples of such behaviour are known, for instance when considering NS 
(but not quantum, due to some self-testing results) 
distributions for a nonlocal game known as the magic-square game; we present the details in the next section. This could be a problem if a fixed input pair is used in all generation rounds, since Eve would be able to perfectly predict the outputs for that input pair --- regarding our above discussion of the quantum setting, she does not even need to choose her measurement as a function of $xy$ in this case, since she only needs to predict the outputs for that input pair. (Protocols that have multiple generation inputs, such as the ones in~\cite{SGP+21,JMS20,arx_Vid17}, may avoid this issue.)
To deal with these points, a more detailed analysis can be found in~\cite{woodheadthesisnodoi} --- as a quick summary, the idea is to define a superset of LHV models that they call \term{partially deterministic} models, and show that if a distribution $\pr{ab|xy}$ violates an inequality that holds for all partially deterministic models, then it must allow DI certification of ``secret randomness''.

\section{NS strategies for the magic-square game}
\label{asec:MS}

Here we quickly introduce the magic-square game, and present in more detail how it may be insecure against NS adversaries in a DI context.
The magic-square game is a nonlocal game defined as follows\footnote{There is in fact significant variability across different works considering the magic-square game (and not merely in the sense of using different input-output labellings and/or parity constraints). For instance, some works instead allow both devices to output strings of arbitrary parity, and incorporate the parity constraints into the win condition. Other works instead consider a setting where the inputs for both Alice and Bob can be rows or columns, making it a situation with 6 inputs per party instead of 3. Here we present the version that is easiest to describe for our purposes.} (intuitively, it can be visualized as having Alice fill in a row and Bob fill in a column of a $3\times3$ grid while satisfying some parity constraints, with the win condition being to have matching values on the square in common between the row and column):
\begin{itemize}
\item Alice and Bob supply uniformly random inputs $x,y\in\{1,2,3\}$.
\item Alice's device outputs a 3-bit string of even parity, and Bob's device outputs a 3-bit string of odd parity (i.e.~$a \in \{000,011,101,110\}$
and $b \in \{001,010,100,111\}$).
\item The devices win the game iff $a_y = b_x$.
\end{itemize}
The winning probability of this game is a Bell parameter in the sense mentioned in Sec.~\ref{sec:Bellineq}. Note that achieving a winning probability of $1$ on this game is equivalent to stating that for all input pairs $xy$, the outputs always satisfy the win condition $a_y = b_x$. (This equivalence also holds for any other input distribution with full support.)

Using LHV strategies, the magic-square game can only be won with probability at most $8/9$; however, there exists a simple quantum strategy that wins with probability $1$ (in other words, a nontrivial Bell inequality holds for the winning probability of the magic-square game). An example of such a quantum strategy would be as follows: the devices share the state $\ket{\Phi^+}_{AB}\ket{\Phi^+}_{A'B'}$, with $AA'$ for Alice and $BB'$ for Bob (i.e.~each party holds two qubits). On input $x$, Alice's device produces 3 output bits by performing the 3 measurements in row $x$ of the following table on her qubits; analogously, on input $y$, Bob's device performs the 3 measurements in column $y$ of the table on his qubits:
\begin{align}
\begin{array}{c c c}
Z \otimes \id & \id \otimes Z & Z \otimes Z \\
\id \otimes X & X \otimes \id & X \otimes X \\
- Z \otimes X & - X \otimes Z & Y \otimes Y
\end{array}
\label{eq:MSmeas}
\end{align}
(For compactness, here we specified the measurements as Pauli observables, with the eigenvalues $+1/-1$ being mapped to bit-valued outputs $0/1$ respectively. It can be verified in every row or column, the 3 measurements are compatible and can thus be performed simultaneously.)
This strategy produces a uniform distribution over all winning outputs, i.e.~the resulting distribution has $\pr{ab|xy}=1/8$ for all $abxy$ combinations that win the game (and $\pr{ab|xy}=0$ otherwise). 
We shall refer to these as the ``canonical'' quantum strategy and distribution for the magic-square game (of course, any unitarily equivalent states and measurements also produce the same distribution, but we shall use this term for brevity).

The magic-square game has a useful symmetry property, as follows. Consider any permutation $\pi$ on the set $\{1,2,3\}$, and let $f_\pi$ be the corresponding ``bitwise permutation'' on 3-bit strings 
$c\in\mathbb{Z}_2^3$ defined by 
$f_\pi(c)_{\pi(z)} = c_z$. Given any such $\pi$ and any distribution $\pr{ab|xy}$ that perfectly wins\footnote{The following symmetry argument also applies if the distribution only wins the game with probability $p<1$, i.e.~it produces another distribution that achieves the same winning probability $p$. However, in this case it would rely on the inputs $xy$ being uniform --- it may not immediately work for versions of the magic-square game with non-uniform inputs.} the magic-square game, we can construct another distribution $\prnew{ab|xy}$ that also perfectly wins the magic-square game, namely $\prnew{ab|xy} \defvar \pr{f_\pi(a) \, b | x \, \pi(y)}$. To see that this new distribution indeed always wins the magic-square game, observe that for any $abxy$ that occurs with nonzero probability under $\prnew{ab|xy}$, we have $a_y = f_\pi(a)_{\pi(y)} = b_x$ as desired, where 
the second equality holds because of the 
definition of $\prnew{ab|xy}$ in terms of the original distribution.
Furthermore, since this corresponds to simply a relabelling of Alice's outputs and Bob's inputs, it also preserves any LHV/quantum/NS properties of the original distribution. Clearly, similar relabellings can also be applied to Bob's outputs and Alice's inputs instead, i.e.~taking $\prnew{ab|xy} \defvar \pr{a \, f_\pi(b) | \pi(x) \, y}$.
(Note that $\prnew{ab|xy}$ is not necessarily distinct from $\pr{ab|xy}$, e.g.~when $\pr{ab|xy}$ is the canonical quantum distribution --- this is a corollary of the self-testing properties we shall now discuss. However, this is not an issue for our subsequent arguments based on these symmetries.) 

Another important property of this game is that it has self-testing properties in the quantum case --- the states and measurements that perfectly win the magic-square game are essentially unique, up to local isometries/ancillas. (Robust versions of this statement are also known~\cite{WBM+16}.) In terms of nonlocal distributions $\pr{ab|xy}$, this immediately implies that the only quantum distribution that perfectly wins the magic-square game is the canonical one. It also has useful implications for DI protocols, as follows. Recall that the canonical quantum strategy involves a pure state between Alice and Bob, and has $\pr{ab|xy} \neq 1$ for all $abxy$. Slightly informally, this ensures that devices that perfectly win the magic-square game will produce device-independently secure randomness against quantum adversaries, since they must essentially be implementing this pure Alice--Bob state (i.e.~the adversary cannot have any nontrivial purification/side-information) and none of the outputs are deterministic.

However, this statement no longer holds if we consider NS adversaries instead.\footnote{Another way to see that the magic-square game does not self-test NS strategies is to consider the main result of~\cite{RTHH16}, which showed that the \emph{only} extremal NS distributions that are quantum-realizable are the ``trivial'' ones that are already LHV-realizable. In particular, this means that since the canonical magic-square quantum distribution is not LHV-realizable, it cannot be an extremal NS distribution, so it cannot be self-tested with respect to the set of NS distributions. (Also, since this quantum strategy already reaches the maximum winning probability, we cannot try to obtain a self-testing result for NS strategies by ``raising the threshold'' on the required winning probability, as in e.g.~the scenario of self-testing PR boxes via CHSH. Similar conclusions should also hold for any other Bell inequality in which the maximal values for quantum and NS distributions coincide.) However, this does not directly seem to lead to a conclusion that it is ``fully insecure'' against NS adversaries in the sense we now describe (which requires not only a decomposition into other NS distributions, but also the property that the distributions in the decomposition allow the adversary to perfectly guess the output); we defer to~\cite{RTHH16} and follow-up works for further discussion.} More precisely, we have the following counterexample: 
\begin{fact}\label{fact:NSMS}
Consider the magic-square game. For any input pair $x^\star y^\star$, there exists a distribution $\pr{ab|xy}$ that perfectly wins the game and is described by an NS hidden-variable model (i.e.~we have $\pr{ab|xy} = \sum_\lambda \pr{ab|xy\lambda} \pr{\lambda}$ such that each $\pr{ab|xy\lambda}$ satisfies the NS conditions) with $\pr{ab|x^\star y^\star \lambda} \in \{0,1\}$ for all $a,b,\lambda$. Furthermore, this distribution can be chosen such that $\pr{ab|xy} \neq 1$ for all $a,b,x,y$ (in fact, it can be chosen such that $\pr{ab|xy} \leq 1/2$).
\end{fact}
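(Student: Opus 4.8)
\textbf{Proof proposal for Fact~\ref{fact:NSMS}.}

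The plan is to build an explicit NS hidden-variable model for each target input pair $x^\star y^\star$, exploiting the symmetry of the magic-square game to reduce to a single canonical case, and then symmetrizing over outputs to get the bound $\pr{ab|xy} \leq 1/2$. First I would handle the case $x^\star = y^\star = 1$ explicitly (the others follow by the relabelling symmetry $\prnew{ab|xy} = \pr{f_\pi(a)\,b\mid x\,\pi(y)}$ and its Bob-side analogue, which I noted preserve the NS/LHV/winning properties; composing the two kinds of relabelling lets me move any $x^\star y^\star$ to $11$). For the $11$ case, the idea is to fix Alice's output on input $x=1$ to be a \emph{deterministic} function of the hidden variable $\lambda$, while leaving enough freedom in the remaining conditional distributions to still win the game on all other input pairs. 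Concretely, I would let $\lambda$ range over the set of ``valid fillings'' of the $3\times3$ grid with the magic-square parity constraints (rows even, columns odd) — equivalently, over deterministic winning strategies of the associated classical-with-communication game, or over the canonical quantum distribution's support decomposed appropriately. The key structural fact to invoke is that a deterministic assignment to a single row is consistent with \emph{many} completions of the square, but no single deterministic assignment to the whole square can win all input pairs (that is exactly why the classical value is $8/9$): so for each $\lambda$ I will pick Alice's $x=1$ row deterministically, and then for the other rows/columns let $\pr{ab|xy\lambda}$ be a suitable NS (but not deterministic, hence not local) distribution that still satisfies $a_y = b_x$ with certainty. A clean way to produce such per-$\lambda$ NS pieces is to start from the canonical quantum distribution (or a PR-box-like NS distribution for the magic-square game) and ``condition'' it on Alice's $x=1$ outcome — conditioning a winning NS distribution on one party's outcome for one fixed input yields another winning NS distribution in which that outcome is now deterministic, and NS-ness of the conditional pieces is preserved because the no-signalling marginals only get rescaled. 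Averaging these conditional pieces over $\lambda$ with the original weights recovers a full winning distribution $\pr{ab|xy}$.

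The second step is to arrange $\pr{ab|xy} \neq 1$, indeed $\pr{ab|xy}\leq 1/2$. For this I would apply a final symmetrization: take any winning distribution $\pr{ab|xy}$ with an NS hidden-variable model of the above form, and mix it with the version obtained by flipping, say, the first bit of both parties' outputs (i.e. $a\mapsto a\oplus 100$, $b\mapsto b\oplus 100$ uniformly via a public bit absorbed into $\lambda$) — this is a relabelling that preserves winning and NS structure but moves probability mass around; more robustly, I would mix uniformly over the group of relabellings generated by the output bit-permutations $f_\pi$ acting simultaneously on a row and a column in a consistent way, which acts transitively enough on the support of each $\pr{ab|xy}$ that no single $(a,b)$ outcome can retain weight above $1/2$ for a fixed $(x,y)$. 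Crucially, because each group element is a relabelling of the \emph{local} outputs (and possibly inputs), the mixture still admits an NS hidden-variable model of the required form: the new hidden variable is $(\lambda, g)$ where $g$ is the group element, and $\pr{ab|x^\star y^\star (\lambda,g)}$ remains $\{0,1\}$-valued since applying a deterministic local relabelling to a deterministic conditional keeps it deterministic. One must check the relabelling chosen to fix $x^\star y^\star$ to $11$ and the symmetrizing relabellings are compatible (they can be composed into the final model), which is routine.

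The main obstacle I expect is the \emph{simultaneous} requirement: the per-$\lambda$ conditional pieces must (i) be NS, (ii) still win the game on \emph{all} input pairs (not just $x^\star y^\star$), and (iii) have Alice's $x^\star$-output deterministic, and then the average must avoid accidentally making some $\pr{ab|xy}=1$. Items (ii) and (iii) are in tension — fixing one row deterministically constrains the column outcomes that must match it, and one has to verify that enough NS slack remains in the orthogonal directions; the cleanest route is probably to exhibit the model in closed form by writing the canonical distribution as $\frac{1}{8}[\,a_y=b_x\,]\prod(\text{parity constraints})$ and literally computing the conditional $\pr{ab\mid xy, a^{(x^\star)}=r}$ for each deterministic row value $r$, checking the NS marginals by hand (a small finite check, since everything lives on $\{0,1\}^3$). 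Then $\Pr[ab|xy]\le 1/2$ should drop out once the output-symmetrization group is chosen large enough; if the obvious bit-flip mixture only gives $\le 1$ strictly but not $\le 1/2$, enlarging to the full relevant subgroup of $f_\pi$'s will close the gap. I would also double-check that the distribution is genuinely \emph{not} quantum — but that is automatic, since the self-testing result forces the unique perfect quantum distribution to be the canonical one, which has $\pr{ab|x^\star y^\star\lambda}$ nondeterministic for every decomposition, contradicting property (iii).
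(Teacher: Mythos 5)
There is a genuine gap at the heart of your construction: the ``conditioning'' step that is supposed to produce the per-$\lambda$ NS pieces is not well-defined. For input pairs $(x,y)$ with $x\neq x^\star$, the quantity $\pr{ab\mid xy,\,a^{(x^\star)}=r}$ does not exist for a general NS (or quantum) distribution, because there is no joint distribution of Alice's outputs for two different inputs --- that counterfactual structure is exactly what an LHV model would provide and what a nonlocal distribution lacks. (For the canonical magic-square strategy, Alice's row-$1$ and row-$2$ observables do not commute, so there is literally no joint outcome distribution to condition on.) Conditioning is only defined on the input pairs $(x^\star,y)$, which is not enough to specify $\pr{ab|xy\lambda}$ for all $xy$. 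Relatedly, your first candidate for the range of $\lambda$ (valid parity-respecting fillings of the grid) is the empty set, as you yourself note, and the alternatives you offer are left undeveloped. So the crucial existence claim --- per-$\lambda$ NS distributions that are deterministic on $x^\star y^\star$ yet win on all input pairs --- is never actually established; it is deferred to ``a small finite check'' whose proposed method would fail. A smaller but real problem: your symmetrization via a single-bit flip $a\mapsto a\oplus 100$, $b\mapsto b\oplus 100$ is not a valid relabelling, since flipping one bit of a $3$-bit string changes its parity and takes Alice's outputs out of the even-parity alphabet (and Bob's out of the odd-parity one); only the bitwise permutations $f_\pi$ coupled with the corresponding input permutation preserve the game.

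For comparison, the paper's proof avoids all of this by brute force: it exhibits a single explicit NS distribution (found by linear programming, with \emph{trivial} hidden variable --- the distribution itself already satisfies $\pr{ab|11}\in\{0,1\}$) that perfectly wins the game, verifies the NS conditions by inspecting row and column sums of the table, and only then introduces a two-valued $\lambda$ by mixing that distribution with one of its $f_\pi$-relabelled images to push every $\pr{ab|xy}$ down to at most $1/2$; the general $x^\star y^\star$ case follows by the same relabelling symmetry you invoke. Your high-level skeleton (fix $x^\star y^\star=11$, relabel, mix) matches the paper's, but the load-bearing step needs to be replaced by an explicit distribution or a feasibility argument (e.g.\ the linear program the paper actually solves), not by conditioning.
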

\noindent 
This means that any magic-square DI protocol that uses a fixed input pair to generate output randomness would be insecure against an NS adversary, who could hold the hidden variable $\lambda$ and perfectly predict the outputs for that input pair. 
The possibility of ensuring that $\pr{ab|xy} \neq 1$ is relevant in this context because if the distribution $\pr{ab|xy}$ itself attains $\pr{ab|xy} = 1$ for some input-output combinations, then in principle Alice and Bob could detect the attack (given enough samples, possibly under an IID assumption) by estimating all the probabilities $\pr{ab|xy}$ individually instead of just the magic-square winning probability. By having all probabilities $\pr{ab|xy}$ far away from $1$, the adversary ``hides'' the attack, in that the deterministic behaviour only appears when conditioned on the adversary's side-information $\lambda$, not when considering the values $\pr{ab|xy}$ that Alice and Bob can estimate.

We now prove Fact~\ref{fact:NSMS}, by presenting explicit distributions with the claimed property.
\begin{proof}

\newcommand{\matMS}[5]
{\bgroup\renewcommand*{\arraystretch}{1}
$\begin{array}{c|c c c c}
#1 & 001 & 010 & 100 & 111 \\
\hline
000 & #2 \\
011 & #3 \\
101 & #4 \\
110 & #5
\end{array}$
\egroup}
\newcommand{\lose}{${\color{red}\xmark}$}

We shall first prove the claim for the specific case $x^\star y^\star = 11$, and without imposing the condition $\pr{ab|xy} \neq 1$ yet. We do so by simply presenting a list of all the probabilities $\pr{ab|xy}$ in an NS distribution such that $\pr{ab|11} \in \{0,1\}$ for all $ab$ --- this is a ``degenerate'' NS hidden-variable model in the sense that the hidden variable $\lambda$ is trivial. (This example was found by noting that the existence of such a distribution can be formulated as a linear program, which is easily solved with standard software packages.) In the following list, the ``outer layer'' of rows and columns are indexed by Alice and Bob's inputs respectively, while the ``inner layer'' of rows and columns in each the small sub-tables are indexed by Alice and Bob's outputs respectively. All non-numerical entries should be understood to be zero; 
we have used crosses to mark all $abxy$ combinations which do not satisfy the magic-square win condition. This makes it easy to check that the distribution indeed perfectly wins the magic-square game, since all nonzero probabilities only occur in positions not marked with a cross (note that the numerical entries are already correctly normalized, so there cannot be any nonzero values other than the numerical entries).

\def\arraystretch{1.5} 
\setlength\tabcolsep{2mm}
\centerline{
\begin{tabular}{c c c c}
& 
$y=1$ & $y=2$ & $y=3$ \vspace{2mm}\\
$x=1$ &
\matMS{}{1 & & \lose & \lose}{ & & \lose & \lose}{\lose & \lose & & }{\lose & \lose & & }&
\matMS{}{1/2 & 1/2 & \lose & \lose}{\lose & \lose & & }{ & & \lose & \lose}{\lose & \lose & & }&
\matMS{}{1/2 & 1/2 & \lose & \lose}{\lose & \lose & & }{\lose & \lose & & }{ & & \lose & \lose}
\vspace{2mm}\\
$x=2$ &
\matMS{}{1/2 & \lose & & \lose}{1/2 & \lose & & \lose}{\lose & & \lose & }{\lose & & \lose & }&
\matMS{}{1/2 & \lose & & \lose}{\lose & 1/2 & \lose & }{ & \lose & & \lose}{\lose & & \lose & }&
\matMS{}{1/2 & \lose & & \lose}{\lose & 1/2 & \lose & }{\lose & & \lose & }{ & \lose & & \lose}
\vspace{2mm}\\
$x=3$ &
\matMS{}{\lose & & & \lose}{\lose & & & \lose}{1/2 & \lose & \lose & }{1/2 & \lose & \lose & }&
\matMS{}{\lose & & & \lose}{ & \lose & \lose & }{\lose & 1/2 & & \lose}{1/2 & \lose & \lose & }&
\matMS{}{\lose & & & \lose}{ & \lose & \lose & }{1/2 & \lose & \lose & }{\lose & 1/2 & & \lose}
\end{tabular}
}
\def\arraystretch{1} 
\vspace{5mm}
\noindent To verify that this distribution is indeed non-signalling, note that for each $xy$, the marginal probabilities $\pr{a|xy}$ and $\pr{b|xy}$ can be computed from the corresponding sub-table in the above list by simply summing along the rows and columns respectively. 
A quick inspection confirms that for each value of $x$, the row-wise sums are the same for all $y$, and for each value of $y$, the column-wise sums are the same for all $x$; therefore, the NS conditions are satisfied.

Hence for $x^\star y^\star = 11$, this distribution proves the first part of the claim. To proceed, we make use of the symmetry property mentioned earlier, to construct another NS distribution $\prnew{ab|xy}$ such that $\prnew{ab|11}=1$ for a different value of $ab$. For instance, we can take the permutation that swaps Alice's input values $2$ and $3$ (and relabels Bob's outputs accordingly)\footnote{Technically, here we implicitly relied on the fact that this permutation acts nontrivially on the string $b=001$ (i.e.~the output such that $\pr{ab|11}=1$ in the original distribution). If we had instead considered applying the permutation on Alice's output and Bob's input, the argument would not have worked for this example at least, because all permutations of $a=000$ are identical.} --- note that this permutation does not change Alice's input $1$, hence the ``special input pair'' $x^\star y^\star$ that achieves deterministic outputs remains as $11$. 
By taking a convex combination of these two distributions (conditioned on a hidden variable $\lambda$), we obtain an NS hidden-variable distribution that perfectly wins the magic-square game and achieves $\pr{ab|11 \lambda} \in \{0,1\}$ for all $a,b,\lambda$, but also with $\pr{ab|xy}\neq1$ for all $abxy$ (since there is only one probability equal to $1$ in each of those distributions, and it corresponds to a different $ab$ in each case). Also, if each distribution is chosen with probability $1/2$, it is clear that the resulting distribution satisfies $\pr{ab|xy}\leq1/2$ for all $abxy$ (because in the original distribution described above, all probabilities are at most $1/2$ other than the special input-output combination that occurs with probability $1$, which has probability $0$ in the permuted distribution and vice versa).

Finally, to construct analogous distributions for any $x^\star y^\star$, we again exploit the symmetry property to relabel the $x^\star y^\star=11$ case to all other possible input pairs (e.g.~just by considering all possible permutations $\pi$ for each party). This completes the proof of the claim.
\end{proof}

We remark that the distributions constructed in the above proof are not the same as the canonical quantum distribution. As mentioned above, in principle a DI protocol could estimate all the probabilities $\pr{ab|xy}$ individually instead of just the magic-square winning probability, so a natural question is whether this version would rule out perfect attacks by NS adversaries. However, it turns out that this is also insufficient:

\begin{fact}
Consider the distribution $\pr{ab|xy}$ for the magic-square game where $\pr{ab|xy}=1/8$ for all winning combinations $abxy$ and $\pr{ab|xy}=0$ otherwise, i.e.~the canonical quantum distribution. For any input pair $x^\star y^\star$, there exists an NS hidden-variable model for this distribution that has $\pr{ab|x^\star y^\star \lambda} \in \{0,1\}$ for all $a,b,\lambda$.
\end{fact}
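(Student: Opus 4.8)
The plan is to show that the canonical quantum distribution $\pr{ab|xy} = 1/8$ on all winning combinations of the magic-square game admits, for each fixed input pair $x^\star y^\star$, an NS hidden-variable decomposition that is deterministic on that input pair. First I would reduce to the case $x^\star y^\star = 11$ using the symmetry (relabelling) property of the magic-square game established earlier in this section: since the canonical distribution is invariant under the permutation symmetries $\prnew{ab|xy} \defvar \pr{f_\pi(a)\,b|x\,\pi(y)}$ and their counterparts acting on Bob's outputs and Alice's inputs, and since these relabellings preserve the NS property of any hidden-variable model, it suffices to construct the decomposition for one chosen input pair and then transport it to all others by composing suitable permutations for each party.

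Next, for $x^\star y^\star = 11$, I would build the decomposition out of the deterministic NS ``vertices'' exhibited in the proof of Fact~\ref{fact:NSMS} together with their images under the symmetry group. The key observation is that the displayed distribution in that proof is a single deterministic-at-input-$11$ NS distribution $\pr{ab|xy}^{(1)}$ with $\pr{ab|11}^{(1)} = \delta_{a,000}\delta_{b,001}$; applying the subgroup of permutations that fix Alice's input $1$ (there are such permutations acting on Bob's outputs and on Alice's other inputs, as well as permutations of Bob's inputs $y$) generates a family of deterministic-at-$11$ NS distributions whose restrictions to the input pair $11$ range over several of the eight winning output pairs $(a,b)$ with $a_1 = b_1$. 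One then averages these with appropriate weights (a uniform average over a generating set of the group, refined if necessary) and checks that the resulting mixture has marginals at each input pair equal to those of the canonical distribution. Because every component is NS and deterministic at input $11$, the mixture is an NS hidden-variable model (with $\lambda$ indexing the components) that is deterministic at input $11$, as required.

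The main obstacle I anticipate is verifying that some convex combination of these deterministic NS vertices reproduces the canonical distribution \emph{exactly} — not merely its magic-square winning probability, but every entry $\pr{ab|xy} = 1/8$. This is a linear-feasibility question: one must show the canonical distribution lies in the convex hull of the orbit of $\pr{ab|xy}^{(1)}$ under the relevant symmetry subgroup (the one fixing $x=1$, so that determinism at input $11$ is preserved). I would handle this by exploiting the transitivity of that subgroup's action on the set of winning output pairs at input $11$ and, more importantly, on the larger structure of winning combinations at the other input pairs, so that the group-average of $\pr{ab|xy}^{(1)}$ is automatically uniform on each winning set by symmetry; if the stabilizer is too large for full uniformity, I would enlarge the generating family by also including vertices obtained from the second deterministic-at-$11$ distribution used in the proof of Fact~\ref{fact:NSMS} (the one coming from swapping Alice's inputs $2$ and $3$). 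Concretely this reduces to checking a small finite linear system, which can be done by hand or by the same linear-programming observation already invoked earlier, and I would present only the resulting weights and a brief marginal-matching check rather than the full enumeration. Once exact reproduction is confirmed for $x^\star y^\star = 11$, the symmetry transport completes the argument for all input pairs.
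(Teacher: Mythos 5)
Your overall skeleton — reduce to $x^\star y^\star=11$ by the relabelling symmetries, then settle a finite linear-feasibility question — is exactly the paper's route: the paper formulates the existence of the decomposition as a linear program (the guessing-probability LP of~\cite{PAM+10,NPS14,BSS14} restricted to NS hidden variables), solves it numerically, and provides code rather than the explicit solution, noting that $\lambda$ must take at least $8$ values so the solution is too lengthy to display.

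However, your \emph{primary} strategy — generating the components as the orbit of the deterministic-at-$11$ vertex from the proof of Fact~\ref{fact:NSMS} under the symmetry subgroup fixing that input pair, and group-averaging — provably cannot work, and you should not present it as the main line. The only relabellings in the paper's symmetry family that preserve determinism at $(x,y)=(1,1)$ are those with $\pi(1)=1$, and these act on Alice's output at input $1$ only through the bitwise permutations $f_\pi$, which preserve Hamming weight. Every vertex in the orbit of the displayed distribution therefore has $\pr{a=000\,|\,x=1}=1$ (and the same is true of the second vertex you propose adding, which also sits at $a=000$), so no convex combination of them can reproduce the canonical marginal $\pr{a\,|\,x=1}=1/4$ on the four even-parity strings. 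To cover the weight-$2$ strings $a\in\{011,101,110\}$ you need deterministic-at-$11$ NS vertices that are \emph{not} in this orbit; indeed the paper explicitly remarks that some components of the LP solution do not appear to arise from the described symmetry transformations. Your fallback — just solve the small LP directly — is the correct move and is what the paper does, but be aware that with at least $8$ hidden-variable values each specifying a full $36$-entry conditional distribution, "by hand" is unrealistic; the paper resorts to publishing the solver code instead.
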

\begin{proof}
It suffices to prove the claim for some particular choice of $x^\star y^\star$, then apply the same symmetry argument as in the previous proof. Taking e.g.~$x^\star y^\star=11$, the existence of such an NS hidden-variable model can be formulated as a linear program, following the guessing-probability approach of~\cite{PAM+10,NPS14,BSS14} (since the $\pr{ab|x^\star y^\star \lambda} \in \{0,1\}$ condition is essentially equivalent to requiring that an adversary can perfectly guess $ab$). We find that this linear program can be solved in a matter of seconds, and indeed returns such an NS hidden-variable model. However, the solution is rather lengthy to present (in particular, observe that the $\pr{ab|xy}=1/8$ constraints together with the requirement that $\pr{ab|x^\star y^\star \lambda} \in \{0,1\}$ imply $\lambda$ must take on at least $8$ values), and we instead simply provide the code that computes this solution at \href{https://github.com/ernesttyz/magicsquareNS}{https://github.com/ernesttyz/magicsquareNS}. (The code also provides the option of computing guessing probabilities bounded by the NPA hierarchy instead. We found that NPA level 1 only reduces the adversary's guessing probability to $0.5000$, while NPA level 2 can bring it down to $0.1254$ --- for comparison, the true quantum value must be $1/8 = 0.125$, by self-testing. Higher NPA levels were too large to easily compute.)

Inspecting the solution suggests that all the conditional distributions $\pr{ab|xy\lambda}$ appear to correspond to some symmetry transformation applied to the distribution in the previous proof; however, it seems unclear what symmetry transformations yield some of those distributions.
\end{proof}

\chapter{Composable security}
\label{app:comp}

{

\newcommand{\A}{\mathrm{A}}
\newcommand{\B}{\mathrm{B}}

\newcommand{\clF}{\mathcal{F}}
\newcommand{\sfP}{\mathsf{P}}

\newcommand{\conv}{\chi}
\newcommand{\prot}{\Pi}
\newcommand{\simu}{\Sigma}
\newcommand{\freal}{\clF^\mathrm{real}}
\newcommand{\fideal}{\clF^\mathrm{ideal}}

In this appendix, we give a brief high-level description of some operational implications of the security definitions --- more detailed or pedagogical explanations are available in~\cite{MR11,arx_PR14,VPdR19,arx_PR21}. This description mostly follows the phrasing and presentation in~\cite{arx_KT20}, though the final version of that work contains significantly more details.

\section{Resources and converters}

We first very briefly state the concepts and definitions we require from the Abstract Cryptography framework (for more details, see~\cite{MR11,arx_PR14,VPdR19,arx_PR21}).
In this framework, a \term{resource} is an abstract system with an interface available to each party, to and from which they can supply some inputs and receive some outputs. We also have the notion of \term{converters} which can interact with a resource to produce a new resource. A converter is an abstract system with an inner and outer interface, with the inner interface being connected to the resource interfaces, and the outer interface becoming the new interface of the resulting resource. If $P$ is a subset of the parties and we have a converter $\conv^{P}$ that connects to their interfaces in a resource $\clF$, we shall denote this as $\conv^{P}\clF$
or $\clF\conv^{P}$ (the ordering has no significance except for readability).

As an important basic example, a protocol is essentially a tuple $\mathscr{P} = (\prot^{\A},\prot^{\B},\dots)$ of converters, one for each party. Each converter describes how that party interacts with its interfaces in $\clF$, producing a new set of inputs and outputs ``externally'' (i.e.~at the outer interface). If we have (for instance) a protocol with converters $\prot^{\A}$ and $\prot^{\B}$ for parties $\A$ and $\B$, for brevity we shall use $\prot^{\A\B}$ to denote the converter obtained by attaching both the converters $\prot^{\A}$ and $\prot^{\B}$.

The Abstract Cryptography framework also requires some metric that quantifies how ``close'' various resources are to each other. In general, any metric satisfying some abstract axioms is sufficient~\cite{MR11}. However, for the purposes of this work we shall focus on metrics based on an operational notion of distinguishability, which yields more easily understood operational implications. Specifically, given two resources $\clF$ and $\clF'$, a \term{distinguisher} is a system that interacts with the interfaces of these resources, and then produces a single bit $G$ (which can be interpreted as a guess of which resource it is interacting with). For a given distinguisher, let $\sfP_{G|\clF}$ be the probability distribution it produces on $G$ when supplied with $\clF$, and analogously for $\clF'$. Its \term{distinguishing advantage} $\eps$ between these two resources is defined to be 
\begin{align}
\eps \defvar \left|\sfP_{G|\clF}[0] - \sfP_{G|\clF'}[0]\right| = \frac{1}{2} \norm{\sfP_{G|\clF} - \sfP_{G|\clF'}}_1.
\label{eq:distadv}
\end{align}
This concept can be used to construct a metric on the set of resources~\cite{MR11}, though here we will only require the definition~\eqref{eq:distadv} of the distinguishing advantage rather than the full properties of the metric.

\section{Security definitions}

We can now discuss security definitions in this framework. 
Consider a situation with some number of parties, where some subset $Q$ (which may be the entire set of parties, depending on the relevant cryptographic situation) can be potentially dishonest. (For example, in the case of QKD or DIQKD, we would have three parties Alice/Bob/Eve, out of which only Eve is potentially\footnote{This statement might seem slightly odd, in that Eve is typically viewed as ``always dishonest'' in QKD/DIQKD, informally speaking. However, this discrepancy is merely a matter of terminology --- the cases of honest and dishonest Eve in the Abstract Cryptography framework just correspond respectively to the absence or presence of Eve in the informal sense, or put another way, whether she chooses to leave the honest implementation untouched, versus attempting to eavesdrop.} dishonest.)
We model such a situation as a tuple of resources $(\clF_P)_{P\subseteq Q}$, where $\clF_P$ denotes the resources available when parties $P$ are dishonest (which potentially have more functionalities than when they are honest). 
Suppose we have such a resource tuple $\left(\freal_P\right)_{P\subseteq Q}$ describing the ``real'' functionalities available to the various parties, and a protocol $\mathscr{P}$ which connects to the interfaces of $\freal$, with the informal goal of constructing a more idealized resource tuple $\left(\fideal_P\right)_{P\subseteq Q}$. This is formalized as follows (here we denote the complement of a set $P$ as $\overline{P}$):
\begin{definition}\label{def:comp}
For a scenario in which there is some set $Q$ of potentially dishonest parties, we say that \term{$\mathscr{P}$ constructs $\left(\fideal_P\right)_{P\subseteq Q}$ from $\left(\freal_P\right)_{P\subseteq Q}$ within distance $\eps$} if the following holds: for every $P \subseteq Q$, there exists a converter $\simu^{P}$ which connects to their interfaces, such that for every distinguisher, the distinguishing advantage between $\prot^{\overline{P}}\freal_P$ and $\fideal_P\simu^P$ is at most $\eps$.
The converters $\simu^{P}$ shall be referred to as \term{simulators}. 

In contexts where $\left(\fideal_P\right)_{P\subseteq Q}$ and $\left(\freal_P\right)_{P\subseteq Q}$ are unambiguously established, we may simply say for brevity that the protocol $\mathscr{P}$ is \term{$\eps$-secure}.
\end{definition}

We have stated the above definition slightly differently from~\cite{MR11}, in which an individual simulator is required for each dishonest party. 
We remark that a security proof satisfying Definition~\ref{def:comp} could be converted into one satisfying the~\cite{MR11} definition by modifying the choice of $\left(\fideal_P\right)_{P\subseteq Q}$ to one that (for every $P$ containing more than one party) explicitly includes quantum channels between the dishonest parties $P$, which would allow for individual simulators that communicate using these quantum channels in order to effectively implement the simulator $\simu^{P}$ in Definition~\ref{def:comp}. From the perspective of~\cite{MR11}, this would basically reflect the inability of a protocol to \emph{guarantee} that the dishonest parties cannot communicate with each other. For ease of description, in the rest of this discussion we shall follow Definition~\ref{def:comp} as stated, instead of the exact definition in~\cite{MR11} (in any case, for QKD/DIQKD there is only one potentially dishonest party).

In the case of QKD/DIQKD, the ideal functionality we aim to construct can be described as follows (see~\cite{arx_PR21} for more details): if Eve is honest, it always\footnote{Strictly speaking, the ideal functionality described in~\cite{arx_PR21} is a version that has some probability of producing an abort symbol even in the honest-Eve case. However, these versions of the ideal functionality only differ by $O(\ecom)$ (in terms of distinguishing advantage), so if $\ecom$ is small, this difference does not matter too much.} just gives Alice and Bob a perfect shared secret key of length $\ell$; if Eve is dishonest, she can input a single bit that specifies whether the functionality aborts (producing some ``null'' abort symbol for all parties) or gives Alice and Bob a perfect shared secret key of length $\ell$ (as in the honest case). Note that with the typical resources for QKD/DIQKD, it is impossible to produce an ideal functionality that {always} outputs a perfect shared secret key even when Eve is dishonest, because she can always set up states that force the protocol to abort with high probability --- in other words, a denial-of-service attack cannot be prevented. 

With this in mind, we can discuss the significance of the core security definition used in this work (Definition~\ref{def:secure}). The main point is that if Definition~\ref{def:secure} is satisfied in a QKD\footnote{Here we technically restrict the claim to (device-dependent) QKD rather than DIQKD. This is because for DIQKD, the device-reuse issues mentioned in Sec.~\ref{sec:assumptions} and~\ref{sec:securitydef} cause some difficulties in proving that Definition~\ref{def:comp} is satisfied --- if the device behaviour is correlated to ``secret data'' in past protocol instances, this gives a method to distinguish the real and ideal cases. Informally, it seems that it should be possible to construct a similar proof by imposing a condition that forbids such behaviour; however, this concept has not been fully formalized within the Abstract Cryptography framework.} protocol, then it can be proven~\cite{arx_PR21} that the protocol constructs the aforementioned ideal functionality within distance $\ecom+\esound$ in the sense of Definition~\ref{def:comp}. 
(More precisely, a closer inspection of the proof shows that for the dishonest-Eve case, the distinguishing advantage is at most $\esound$, independent of $\ecom$, which may be convenient if $\ecom$ is only heuristically estimated.)
Roughly, the intuition is that since Definition~\ref{def:secure} is based on trace distance, it 
ensures that the real and ideal functionalities are hard to distinguish, given appropriately chosen simulators.
We do not discuss the proof details here, but the main point is that Definition~\ref{def:secure} implies the slightly more abstract Definition~\ref{def:comp} within the Abstract Cryptography framework, which has consequences that we shall now explain.

\section{Operational implications}

In general (not just for QKD), Definition~\ref{def:comp} has an important operational implication, regarding the effects of composing protocols with each other.\footnote{A more abstract implication, which holds for more general metrics, is that if several protocols satisfying this definition are composed, the ``error'' $\eps$ of the resulting larger protocol can be bounded by simply by adding those of the sub-protocols~\cite{MR11}.} Namely, suppose we have a larger protocol that uses $\left(\fideal_P\right)_{P\subseteq Q}$ as a resource, and take any event that might be considered a ``failure'' in the larger protocol (we impose no restrictions on the nature of a failure, except that it be a well-defined event). Suppose we also have a proof that for any strategy by the dishonest parties, the probability of this failure event in the larger protocol is upper-bounded by some $p_\mathrm{max}$ when using $\left(\fideal_P\right)_{P\subseteq Q}$. In that case, one implication of Definition~\ref{def:comp} being satisfied is that if $\left(\fideal_P\right)_{P\subseteq Q}$ is replaced by the protocol $\mathscr{P}$ applied to $\left(\freal_P\right)_{P\subseteq Q}$, then the probability of this failure event is still upper-bounded by $p_\mathrm{max}+\eps$. This is a very useful operational statement, since it assures us that the probability of \emph{any} ``bad'' event cannot increase by too much when replacing the ideal functionality with the real protocol.

To prove this claim, we simply need to make two observations, taking an arbitrary $P \subseteq Q$: firstly, since the bound $p_\mathrm{max}$ for the functionality $\fideal_P$ holds for any strategy by the dishonest parties, it must in particular hold when they implement the simulator $\simu^P$, i.e.~it holds if they are using $\fideal_P\simu^P$ instead of $\fideal_P$. Secondly, since the distinguishing advantage between $\fideal_P\simu^P$ and $\prot^{\overline{P}}\freal_P$ is at most $\eps$ when Definition~\ref{def:comp} is satisfied, the probability of the failure event cannot differ by more than $\eps$ between them (otherwise the event would serve as a way to distinguish the two cases).

Note that there is a slight technicality here: in order for the argument to be valid, the bound $p_\mathrm{max}$ (for the larger protocol using $\fideal_P$) must be derived for a class of dishonest-party strategies that includes the simulator $\simu^P$, in order for the bound to hold for $\fideal_P\simu^P$ as well. 
This means that 
if a more ``powerful'' simulator is used in Definition~\ref{def:comp}, then the bound $p_\mathrm{max}$ must be proved against a more ``powerful'' class of strategies. 
However, this is perhaps more of a consideration for the larger protocol, rather than the protocol $\mathscr{P}$ in Definition~\ref{def:comp} itself.

}

\chapter{Parametrization with correlators}
\label{app:corrs}

\newcommand{\cset}{\mathcal{S}}

Here we give the details of a convenient parametrization of probability distributions in nonlocality scenarios, which we used in Sec.~\ref{sec:qubit}. It applies in scenarios where all measurements have outcomes in $\{-1,+1\}$, under the assumption that the NS conditions are satisfied.\footnote{Of course, more generally any 
NS distribution
with binary outcomes for all measurements 
is isomorphic to
such a scenario; this choice of output labelling is just what yields the desired parametrization most conveniently.} The main motivation behind such a parametrization is the fact that under the NS conditions, 
not all the probabilities can be chosen freely --- once some subset of them are specified, the rest are fixed by the NS constraints; however, it is perhaps not immediately clear how to choose a ``good'' subset to specify. The approach here describes a systematic set of correlator parameters that uniquely specifies an NS distribution.
The number of these parameters is somewhat smaller than the number of individual probability terms, which
is convenient when considering constraints to impose in optimizations such as those in Chapter~\ref{chap:singlernd}.
(This result has been previously used in various works on this topic, e.g.~\cite{PBS+11}; here we simply give a pedagogical presentation of it. See also~\cite{CG04} for the derivation of a different parametrization that generalizes to output sets of arbitrary size, as argued in~\cite{Pir05}.) 

While in this work we have mostly focused on two-party nonlocality scenarios, this parametrization also holds for $N$-party scenarios (with any number of inputs per party, as long as all outputs are in $\{-1,+1\}$). 
For completeness, the description here will cover this more general setting
(the NS conditions generalize ``intuitively'' to $N>2$ parties; see e.g.~\cite{Pir05} for a detailed description).

\section{Classical random variables}

We first give a proof of a useful fact about classical random variables, without involving nonlocal distributions for now. Let $Z_1, Z_2, \dots, Z_N$ be random variables taking values in $\{-1,+1\}$, following some joint distribution. There are various correlators between these random variables --- to take an explicit example, if for instance we consider the variables $Z_2, Z_3, Z_5$, their correlator $\expval{Z_2 Z_3 Z_5}$ is defined as the expectation value of their product:
\begin{align}
\expval{Z_2 Z_3 Z_5} \defvar \sum_{
z_2 z_3 z_5
} z_2 z_3 z_5 \pr{z_2 z_3 z_5} = \sum_{\vec{z} 
} z_2 z_3 z_5 \pr{\vec{z}},
\end{align}
where the last expression is written in a form that will be convenient later. 
More generally,
for any subset $\cset \subseteq\upto{N}$, we have a correlator for the corresponding subset of the random variables --- for ease of description, we shall introduce the notation $\expval{Z_{\cset}}$ to denote this:
\begin{align}
\label{eq:RVdefcorr}
\expval{Z_{\cset}} \defvar \sum_{\vec{z} 
} \bigg(\prod_{j\in\cset} z_j\bigg) \pr{\vec{z}}.
\end{align}
Note that we allow $\cset$ to be an \emph{arbitrary} subset of $\upto{N}$ in this expression --- in particular, it might only contain a single element, or even be empty. In such cases,~\eqref{eq:RVdefcorr} does not really describe a ``correlation'' between some variables, but it is still a well-defined expression (taking the value of the empty product to be $1$). More specifically, when $\cset$ contains only one element, $\expval{Z_{\cset}}$ is just the expectation value of one of the random variables $Z_j$, whereas when $\cset$ is the empty set, $\expval{Z_{\cset}}$ is simply the sum of all the probabilities $\pr{\vec{z}}$, and thus must equal $1$.

Observe that since there are $2^N$ subsets of $\upto{N}$, there are $2^N$ correlators $\expval{Z_{\cset}}$ (when including the ``empty correlator'' $\expval{Z_{\{\}}}$ as well). In comparison, the probability distribution of the random variables $Z_j$ also has $2^N$ terms $\pr{\vec{z}}$. It hence seems possible that given all $2^N$ correlators $\expval{Z_{\cset}}$, the probabilities $\pr{\vec{z}}$ may be uniquely specified. (The probabilities $\pr{\vec{z}}$ only have $2^N-1$ degrees of freedom due to the normalization constraint; however, we can treat this as being captured by the condition $\expval{Z_{\{\}}} = 1$ --- this is why it was convenient to include the $\cset=\{\}$ case when defining $\expval{Z_{\cset}}$.) We now prove that this is indeed the case (the proof also gives an explicit method to quickly compute the probabilities $\pr{\vec{z}}$ from the correlators $\expval{Z_{\cset}}$).

\begin{fact}\label{fact:RVcorrs}
Let $Z_1, Z_2, \dots, Z_N$ be random variables taking values in $\{-1,+1\}$, and consider the correlators $\expval{Z_{\cset}}$ as defined above. Given the values of all the
correlators $\expval{Z_{\cset}}$, the
probabilities $\pr{\vec{z}}$ are all uniquely specified.
\end{fact}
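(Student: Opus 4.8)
The plan is to prove the stronger statement that there is an explicit inversion formula recovering each probability from the correlators, which in particular yields the claimed uniqueness (exhibiting a left inverse of the probabilities-to-correlators map shows that map is injective). Concretely, I would establish
\begin{align}
\pr{\vec{z}} = \frac{1}{2^N} \sum_{\cset \subseteq \upto{N}} \bigg(\prod_{j\in\cset} z_j\bigg) \expval{Z_{\cset}} \qquad \text{for all } \vec{z} \in \{-1,+1\}^N .
\end{align}
This is just the inverse Walsh--Hadamard transform on the hypercube: the functions $\vec{z}\mapsto\prod_{j\in\cset} z_j$, indexed by subsets $\cset\subseteq\upto{N}$, form an orthogonal basis for real-valued functions on $\{-1,+1\}^N$, and the $\expval{Z_{\cset}}$ are (up to the $2^{-N}$ normalization) the expansion coefficients of $\pr{\cdot}$ in this basis.

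First I would substitute the definition $\expval{Z_{\cset}} = \sum_{\vec{z}'}\big(\prod_{j\in\cset} z'_j\big)\pr{\vec{z}'}$ into the right-hand side and interchange the two finite sums, reaching
\begin{align}
\frac{1}{2^N} \sum_{\vec{z}'} \pr{\vec{z}'} \sum_{\cset\subseteq\upto{N}} \prod_{j\in\cset} z_j z'_j .
\end{align}
Next I would evaluate the inner sum via the elementary identity
\begin{align}
\sum_{\cset\subseteq\upto{N}} \prod_{j\in\cset} z_j z'_j = \prod_{j=1}^{N} \left(1 + z_j z'_j\right),
\end{align}
which holds because expanding the product over $j$ produces exactly one monomial for each subset $\cset$ (the set of indices at which the factor $z_j z'_j$, rather than $1$, is chosen). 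Since $z_j, z'_j \in \{-1,+1\}$, each factor $1 + z_j z'_j$ equals $2$ if $z_j = z'_j$ and $0$ otherwise, so the product equals $2^N \delta_{\vec{z},\vec{z}'}$. Substituting this back collapses the sum over $\vec{z}'$ to the single surviving term $\pr{\vec{z}}$, which establishes the formula and hence the fact.

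As for the difficulty: there is essentially no hard step — this is a routine character-orthogonality computation, and the whole argument fits in a few lines. The only points needing a little care are justifying the sum-swap (immediate, since both index sets are finite), correctly verifying the product identity and the $\{-1,+1\}$-valued evaluation of $1 + z_j z'_j$, and remembering that it is the existence of this left inverse, not merely the counting coincidence of $2^N$ correlators versus $2^N$ probabilities, that actually proves uniqueness. I would also note in passing that the same computation shows the full tuple of $2^N$ correlators (with $\expval{Z_{\{\}}}=1$ encoding normalization) carries precisely the same information as the probability distribution, which is what makes this parametrization useful in Sec.~\ref{sec:qubit}.
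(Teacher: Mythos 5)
Your proof is correct and is essentially the same argument as the paper's: both rest on the orthogonality of the $2^N \times 2^N$ character matrix with entries $\prod_{j\in\cset} z_j$, the paper phrasing it as orthogonality of the rows (via the symmetric-difference computation) while you verify the explicit inverse formula directly via the equivalent column orthogonality $\sum_{\cset}\prod_{j\in\cset}z_j z'_j = \prod_j(1+z_j z'_j) = 2^N\delta_{\vec{z},\vec{z}'}$. Both yield the same explicit inverse (the rescaled transpose), so there is nothing to add.
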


\begin{proof}
All the correlators $\expval{Z_{\cset}}$ are linear combinations of the probabilities $\pr{\vec{z}}$, as seen from~\eqref{eq:RVdefcorr}. Hence if we view the $2^N$-tuple of correlators $\expval{Z_{\cset}}$ as a column vector, and similarly for the $2^N$-tuple of probabilities $\pr{\vec{z}}$, we can obtain the former from the latter by multiplication with a suitable matrix. The expression~\eqref{eq:RVdefcorr} yields the explicit entries of this matrix --- labelling each of its rows using a subset $\cset \subseteq\upto{N}$, and labelling each of its columns using a string $\vec{z} \in \{-1,+1\}^N$, the entry in row $\cset$ and column $\vec{z}$ is simply $\prod_{j\in\cset} z_j$. 

We now argue that this matrix is an orthogonal matrix, up to a scaling factor of $1/2^N$. Therefore, it is invertible (proving the desired result); furthermore, the inverse matrix is obtained by simply transposing it and multiplying all entries by $1/2^{2N}$, which is easy to compute.

To prove this, consider any two rows of the matrix, labelled by distinct subsets $\cset$ and $\cset'$. We first show that the two rows are orthogonal. Since all the entries have values $\pm 1$, it is sufficient (in fact, also necessary) to show that the rows differ in sign on exactly half their entries. In turn, this is equivalent to stating that $\left(\prod_{j\in\cset} z_j\right)\left(\prod_{j'\in\cset'} z_{j'}\right) = -1$ for exactly half of the values $\vec{z} \in \{-1,+1\}^N$. Since $\left(\prod_{j\in\cset} z_j\right)\left(\prod_{j'\in\cset'} z_{j'}\right) = \prod_{k\in\cset\Delta\cset'} z_k$ where $\cset\Delta\cset'$ is the symmetric difference of $\cset$ and $\cset'$ (i.e.~the set of elements belonging to exactly one of the two sets), we see that indeed exactly half the values will be $-1$, as desired. 
(Here we technically used the fact that $\cset\Delta\cset'$ is non-empty because $\cset\neq\cset'$.)

Now observe that all the rows have norm $2^N$, since all entries have values $\pm 1$. Hence if we rescale all the matrix entries by $1/2^N$ (which does not affect the orthogonality of the rows), all the rows are orthonormal vectors, thus the rescaled matrix is orthogonal as claimed.
\end{proof}
Hence if we simply have $N$ classical random variables (with values $\{-1,+1\}$), we can use the correlators to completely parametrize the distribution.

\section{NS distributions}

We now return to the topic of nonlocality scenarios for $N$ parties, where a generalized form of the above result holds. The distributions in such scenarios are of the form $\pr{\vec{a}|\vec{x}} = \pr{a_1 a_2 \dots a_N | x_1 x_2 \dots x_N}$, with $a_1 a_2 \dots a_N$ being the outputs and $x_1 x_2 \dots x_N$ being the inputs. 
If the distribution satisfies the NS conditions, we can define some correlators as follows. For ease of explanation, we begin with an explicit example --- take for instance parties $2,3,5$, and specify some particular input values $x_2 x_3 x_5$ for these parties. Given this tuple of input values, we can define the correlator for the corresponding outputs:\footnote{A caveat: the notation in this setting is different from that which we used above for classical random variables, where the subscripts in e.g.~$\expval{Z_2 Z_3 Z_5}$ specified the subset of random variables to consider in the correlator. Here, the subscripts $x_2, x_3, x_5$ in the notation $\expval{A_{x_2} A_{x_3} A_{x_5}}$ instead refer to the input values we are considering, with the subset of parties being specified in turn by the subscripts on $x_2, x_3, x_5$. (This is basically similar to the correlator notation previously used in Sec.~\eqref{sec:qubit}, though here we specify Alice and Bob's terms using the index subscripts rather than the letters $A$ and $B$.) An analogous technicality arises in the notation $\expval{A_{x_{\cset}}}$ we shall shortly introduce.} 
\begin{align}
\expval{A_{x_2} A_{x_3} A_{x_5}} \defvar \sum_{a_2 a_3 a_5} a_2 a_3 a_5 \pr{a_2 a_3 a_5|x_2 x_3 x_5} = \sum_{\vec{a}} a_2 a_3 a_5 \pr{\vec{a}|\vec{x}},
\end{align}
where in the last expression, one can use any value of $\vec{x}$ that is compatible with the specified inputs $x_2 x_3 x_5$ --- crucially, the NS conditions ensure that any such choice of $\vec{x}$ yields the same value for this expression. (For that matter, it is only because of the NS conditions that we can sensibly speak of the correlations between these parties' outputs given these inputs only --- without those, one would need to specify the inputs used by the other parties before the output distribution of these parties can be determined. Put another way, the $\pr{a_2 a_3 a_5|x_2 x_3 x_5}$ terms in the middle expression are only well-defined because of the NS conditions.) This is an important property that will be vital in our subsequent arguments.

Having seen some basic properties from this example, we now turn to the general construction:
take any subset $\cset\subseteq\upto{N}$ of the parties, and specify some particular input values $(x_j)_{j\in\cset}$ for these parties, which we shall denote as $x_{\cset}$ for brevity. With this, we can define a correlator for the corresponding outputs, which we shall denote in this section as
\begin{align}
\label{eq:defcorr}
\expval{A_{x_{\cset}}} \defvar \sum_{\vec{a}} \bigg(\prod_{j\in\cset} a_j\bigg) \pr{\vec{a}|\vec{x}},
\end{align}
where in the $\pr{\vec{a}|\vec{x}}$ term one can use any value of $\vec{x}$ that agrees with $x_{\cset}$ on the subset $\cset$. 
(Again, this can only be reasonably interpreted as a correlation between some parties' outputs because of the NS conditions.) 
With this definition of correlators in this situation, we now prove an analogous statement to Fact~\ref{fact:RVcorrs} (again, the proof also yields a simple procedure to compute the probabilities $\pr{\vec{a}|\vec{x}}$):
\begin{fact}
Consider an $N$-party nonlocal distribution satisfying the NS conditions and with all outputs taking values in $\{-1,+1\}$, and let the correlators $\expval{A_{x_{\cset}}}$ be as defined above. Given the values of all the correlators $\expval{A_{x_{\cset}}}$, the 
probabilities $\pr{\vec{a}|\vec{x}}$ are all uniquely specified.
\end{fact}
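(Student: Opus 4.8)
The plan is to reduce this $N$-party statement to the purely classical Fact~\ref{fact:RVcorrs} by fixing the inputs one party at a time. The key observation is that once we fix a full input tuple $\vec{x} = (x_1,\dots,x_N)$, the outputs $A_1,\dots,A_N$ become ordinary $\{-1,+1\}$-valued random variables with joint distribution $\pr{\cdot|\vec{x}}$, and Fact~\ref{fact:RVcorrs} tells us that this distribution is uniquely determined by its $2^N$ correlators $\expval{Z_{\cset}}$ (in that notation), which here are exactly the quantities $\sum_{\vec{a}} \left(\prod_{j\in\cset} a_j\right) \pr{\vec{a}|\vec{x}}$. So it suffices to show that, for every fixed $\vec{x}$ and every $\cset \subseteq \upto{N}$, this sum equals one of the correlators $\expval{A_{x_{\cset}}}$ defined in~\eqref{eq:defcorr} — and that is immediate from the definition, taking the input restriction $x_{\cset}$ to be the restriction of $\vec{x}$ to $\cset$.

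First I would state precisely what must be shown: for an arbitrary input tuple $\vec{x}$, the probabilities $\{\pr{\vec{a}|\vec{x}}\}_{\vec{a}}$ are determined by the given data. Fixing $\vec{x}$, I would point out that $\pr{\cdot|\vec{x}}$ is a legitimate probability distribution on $\{-1,+1\}^N$, so Fact~\ref{fact:RVcorrs} applies verbatim: the map from this distribution to the vector of its $2^N$ correlators is (a rescaling of) an orthogonal matrix, hence invertible, and the inverse is given explicitly by the transpose scaled by $1/2^N$. Then I would observe that the $\cset$-correlator of $\pr{\cdot|\vec{x}}$, namely $\sum_{\vec{a}} \left(\prod_{j\in\cset} a_j\right) \pr{\vec{a}|\vec{x}}$, is by~\eqref{eq:defcorr} precisely $\expval{A_{x_{\cset}}}$ where $x_{\cset}$ denotes the components of $\vec{x}$ indexed by $\cset$ — here the NS conditions are exactly what guarantees that~\eqref{eq:defcorr} is well-defined (independent of the choice of completion of $x_{\cset}$ to a full input tuple), so the correlator $\expval{A_{x_{\cset}}}$ that appears in our hypothesis really does coincide with the classical correlator of $\pr{\cdot|\vec{x}}$. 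Since all the $\expval{A_{x_{\cset}}}$ are given, all the classical correlators of $\pr{\cdot|\vec{x}}$ are given, so by Fact~\ref{fact:RVcorrs} the distribution $\pr{\cdot|\vec{x}}$ is uniquely determined. As $\vec{x}$ was arbitrary, all of $\pr{\vec{a}|\vec{x}}$ are determined.

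I do not expect a serious obstacle here; the content is essentially bookkeeping, with the only subtle point being to make the matching between the two notions of correlator explicit and to note where non-signalling is used. If anything, the mild care required is purely notational: the subscripts in $\expval{A_{x_{\cset}}}$ encode both the subset $\cset$ (via the index set of the $x_j$) and the chosen input values, whereas in Fact~\ref{fact:RVcorrs} the subscripts in $\expval{Z_{\cset}}$ encode only the subset, with the underlying distribution being implicit. Once this correspondence is spelled out for a fixed $\vec{x}$, the proof is a one-line appeal to the already-established Fact~\ref{fact:RVcorrs}. I would also remark, following the classical case, that the number of correlators is generally smaller than the naive count of probability terms across all input tuples, because correlators for a given $\cset$ depend only on the inputs of parties in $\cset$; this is the efficiency gain that motivates the parametrization, but it is a remark rather than part of the proof.
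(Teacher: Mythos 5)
Your proposal is correct and follows essentially the same route as the paper: fix an arbitrary full input tuple, identify the correlators $\expval{A_{x_{\cset}}}$ (restricted to that tuple) with the classical correlators of the conditional distribution $\pr{\cdot|\vec{x}}$ — which is exactly where the NS conditions are invoked — and then apply Fact~\ref{fact:RVcorrs} to invert, repeating over all input tuples. The paper merely makes the bookkeeping explicit via the auxiliary notation $\expval{A_{x_{\cset}}}_{|\vec{x}}$, which you handle verbally; there is no substantive difference.
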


\begin{proof}
It is convenient to first introduce a slight variant of the notation~\eqref{eq:defcorr}; namely,
\begin{align}
\label{eq:defcondcorr}
\expval{A_{x_{\cset}}}_{|\vec{x}} \defvar \sum_{\vec{a}} \bigg(\prod_{j\in\cset} a_j\bigg) \pr{\vec{a}|\vec{x}},
\end{align}
where $\vec{x}$ must be chosen such that it agrees with $x_{\cset}$ on the subset $\cset$ --- this notation simply makes explicit which value of $\vec{x}$ is chosen to compute the correlator (even though all valid choices yield the same final value). 

To prove the desired claim, first focus on some specific input tuple $\vec{x}^\star$. 
Given this value, there are $2^N$ probabilities $\pr{\vec{a}|\vec{x}^\star}$, and we shall argue that they can all be computed if we have all the correlators $\expval{A_{x^\star_{\cset}}}$, as follows. 
Since all the correlators are known, in particular, for each subset $\cset\subseteq\upto{N}$ we know the value of $\expval{A_{x^\star_{\cset}}}_{|\vec{x}^\star} = \expval{A_{x^\star_{\cset}}}$. 
Now observe that the relation~\eqref{eq:defcondcorr} between the correlators $\expval{A_{x^\star_{\cset}}}_{|\vec{x}^\star}$ and the probabilities $\pr{\vec{a}|\vec{x}^\star}$ is exactly the same as the relation~\eqref{eq:RVdefcorr} in the earlier discussion of classical random variables. Therefore, we can apply the same method to compute the probabilities $\pr{\vec{a}|\vec{x}^\star}$ from the correlators $\expval{A_{x^\star_{\cset}}}_{|\vec{x}^\star}$, as desired.

Finally, since $\vec{x}^\star$ was arbitrary, we can simply repeat the process for each possible value, thereby computing all the probabilities $\pr{\vec{a}|\vec{x}}$.
\end{proof}

\begin{remark}
The role played by the NS conditions in the above proof is ensuring that $\expval{A_{x_{\cset}}}_{|\vec{x}}$ is the same for any choice of $\vec{x}$ compatible with $x_{\cset}$, i.e.~they are well-defined just by writing $\expval{A_{x_{\cset}}}$. Or to take a ``reverse'' perspective: when implementing the above computation, many of the correlators 
will feature in the computations for more than one value of $\vec{x}^\star$; the fact that each correlator has the same value in every such computation serves to ensure that the resulting distribution satisfies the NS conditions. Similar ideas are behind the parametrization in~\cite{CG04,Pir05}, which is essentially based on the marginals $\pr{\vec{a}_{\cset}|\vec{x}_{\cset}}$ directly, using the fact that the NS conditions ensure that these terms are well-defined.
\end{remark}

To see that this approach indeed reduces the number of parameters (as compared to specifying all the $\pr{\vec{a}|\vec{x}}$ terms individually), we can compare the number of terms in each case. We first remark that in the earlier case of classical random variables (i.e.~Fact~\ref{fact:RVcorrs}), there was no advantage in this respect --- the number of correlators and the number of probabilities are both $2^N$ (and the fact that the ``empty correlator'' must have value $1$ is just the same statement as the normalization constraint on the probabilities). However, when it comes to nonlocal distributions $\pr{\vec{a}|\vec{x}}$, there can be a substantial reduction. If we denote the input set for party $j$ as $\mathcal{X}_j$, then the number of correlators $\expval{A_{x_{\cset}}}$ is the same as the number of possible choices of ${x_{\cset}}$, which is\footnote{To see this, let $\mathcal{X}'_j$ denote the set $\mathcal{X}_j$ with a ``null symbol'' $\perp$ added, which is distinct from all other input values. Then there exists a bijection between the set $
\vec{\mathcal{X}}' \defvar 
\mathcal{X}'_1 \times \mathcal{X}'_2 \times \dots \times \mathcal{X}'_N$ and the set of possible choices of $x_{\cset}$, as follows: for each element $\vec{x}' \in \vec{\mathcal{X}}'$, 
take $\mathcal{S}$ to be the set of indices on which $\vec{x}'$ does not take the value $\perp$, and set $x_{\cset}$ to be equal to $\vec{x}'$ on this set of indices. This will produce each possible choice of $x_{\cset}$ exactly once. (The idea here is basically the same as the standard 
counting argument for the size of a power set.)} $\prod_j (|\mathcal{X}_j| + 1)$ (if we omit the empty correlator using normalization, just subtract $1$ from this count). In contrast, the number of probabilities $\pr{\vec{a}|\vec{x}}$ is simply $2^N \prod_j |\mathcal{X}_j|$, and even if we account for the normalization constraints (which can be fairly straightforwardly implemented, e.g.~by leaving one value of $\pr{\vec{a}|\vec{x}}$ for each $\vec{x}$ to be determined from the other terms by normalization), we would still need to specify $(2^N-1) \prod_j |\mathcal{X}_j|$ terms. 
For any nonlocality scenario, $\prod_j (|\mathcal{X}_j| + 1) - 1$ is always smaller\footnote{Qualitatively, this must hold simply because in deriving the former, we made use of the NS and normalization constraints, whereas in the latter, only the normalization constraints were used. However, this can also be proven directly: first ignore the $-1$ term for the ``empty correlator'' in the former (since it only makes it smaller anyway), in which case the ratio of the two expressions is $
(2^N-1) \prod_j (|\mathcal{X}_j| / (|\mathcal{X}_j| + 1))$. As long as $|\mathcal{X}_j| \geq 2$ for all parties, this is lower bounded by $(2^N-1) (2/3)^N$, 
which
is greater than $1$ for all $N\geq2$.} than $(2^N-1) \prod_j |\mathcal{X}_j|$ (as long as the scenario is ``nontrivial'' in the sense that $N\geq 2$ and every party has more than one input), and the difference can be fairly significant.

Furthermore, $\prod_j (|\mathcal{X}_j| + 1) - 1$ is in fact the minimum number of independent parameters required to specify such a distribution. This follows from the fact that the set of NS distributions has dimension $\prod_{j\in\upto{N}} \left(\sum_{x_j \in \mathcal{X}_j} (|\mathcal{A}_{x_j}|-1) + 1\right) - 1$~\cite{Pir05}, where $\mathcal{A}_{x_j}$ is the set of output values for party $j$ given input $x_j$. In our case, we have $|\mathcal{A}_{x_j}|=2$ for all terms, in which case this expression indeed reduces to $\prod_j (|\mathcal{X}_j| + 1) - 1$. 
In other words, this parametrization is essentially optimal in terms of the number of independent parameters.

As a final remark, we note that while the correlators must clearly satisfy the constraints $\expval{A_{x_{\cset}}} \in [-1,+1]$, not every tuple of correlator values satisfying these constraints can be obtained from an NS distribution. 
In particular, this means that if one wants to generate some ``arbitrary'' NS distribution, one cannot simply choose some tuple of correlator values within those bounds and assume that this automatically yields a valid NS distribution.\footnote{Although, the set of correlator tuples that produce valid NS distributions is a polytope at least --- this follows from the fact that the set of NS distributions is a polytope, and the correlators are related to the probabilities by an affine transformation.}
An explicit example demonstrating this issue can be found by considering the 2-party Popescu-Rohrlich (PR) box~\cite{KT85,Ra85,PR94}: this is the NS distribution specified by the correlator values $\expval{A_0 B_0} = \expval{A_0 B_1} = \expval{A_1 B_0} = - \expval{A_1 B_1} = 1$ and $\expval{A_x} = \expval{B_y} = 0$ for all $x,y \in \mathbb{Z}_2$ (here we revert to the Alice-Bob correlator notation used in Sec.~\ref{sec:qubit}.) It turns out that this is the \emph{unique} NS distribution that achieves 
those values on the two-party correlators $\expval{A_x B_y}$.
This implies that, for instance, the tuple of correlator values $\expval{A_0 B_0} = \expval{A_0 B_1} = \expval{A_1 B_0} = - \expval{A_1 B_1} = 1$, $\expval{A_x} = \expval{B_y} = 1$ does not correspond to a valid NS distribution. (A different way to see that this correlator tuple does not match an NS distribution is to recall that by the arguments in Appendix~\ref{app:foundations}, any NS distribution with $\expval{A_x B_y}$ values violating the CHSH inequality cannot be deterministic. Therefore, it is not possible for such an NS distribution to attain extremal values on all the ``one-party correlators'' $\expval{A_x}, \expval{B_y}$. Note that this argument more generally rules out a different family of correlations as compared to the argument based on the uniqueness of the PR box.)

\chapter{Details of numerical algorithm}
\label{app:qubitnumerics}

We first describe how each minimization can be tackled
when treating the parameters in the other minimizations as constants, then summarize how all these algorithms can be put together in a consistent manner, and argue that this approach indeed yields arbitrarily tight bounds.

\section{Minimization over Alice's measurement}
\label{asec:minalice}

We tackle this minimization simply by applying a (uniform) continuity bound for $\thA$.
Specifically, for $\delta\in[0,\pi]$ we describe a monotone increasing function $\cont(\delta)$ that bounds the change in the objective function when $\thA$ is replaced by $\thA+\delta$ (treating $\thB$ and $\rho_{AB}$ as constants). 
Then for any set of intervals 
of the form $\{[\theta_j-\delta_j,\theta_j+\delta_j]\}_j$ that covers the interval $[0,\pi]$, we would have 
\begin{align}
\min_{\thA} F_\mathrm{obj}(\thA,\thB,\rho_{AB}) \geq \min_j 
F_\mathrm{obj}(\theta_j,\thB,\rho_{AB}) - \cont(\delta_j)
.
\end{align}
We apply this in practice by starting with a fairly ``coarse'' choice of intervals, then iteratively applying the process of deleting the interval that currently achieves the minimization over $j$, and replacing it with smaller intervals that cover the deleted interval.

To derive such a continuity bound, we first analyze the entropic term in the objective function, following~\cite{SBV+21} (with a minor modification to slightly improve the bound).
Take any pure initial state $\ket{\rho}_{ABE}$, and let $\sigma_{\hat{A}_1 BE}$ be the state obtained by performing a Pauli measurement along angle $\thA$ in the $X$-$Z$ plane on the $A$ register of $\ket{\rho}_{ABE}$, applying noisy preprocessing, then storing the result in the classical register $\hat{A}_1$ and tracing out $A$. 
Let $\sigma'_{\hat{A}_1 BE}$ be the analogous state with $\thA$ replaced by $\thA+\delta$ for some $\delta \in [0,\pi]$. Our goal would be to bound $\left|H(\hat{A}_1| E)_{\sigma} - H(\hat{A}_1| E)_{\sigma'}\right|$. 

We have $H(\hat{A}_1| E)_{\sigma} 
= \sum_{\hat{a}_1} \pr{\hat{a}_1} H(\sigma_{E|\hat{A}_1=\hat{a}_1}) + H(\hat{A}_1)_{\sigma} - H(E)_{\sigma}
$, and analogously for $\sigma'$. Since the operations on $A$ do not affect $E$, we have $H(E)_{\sigma}=H(E)_{\sigma'}$. Also, for states of the form~\eqref{eq:almostdiag}, we have $\rho_{A}=\id/2$ and hence 
$H(\hat{A}_1)_{\sigma} = H(\hat{A}_1)_{\sigma'} = 1$. This gives us
\begin{align}
\left|H(\hat{A}_1| E)_{\sigma} - H(\hat{A}_1| E)_{\sigma'}\right| 
\leq \max_{\hat{a}_1} \left| \left(H(\sigma_{E|\hat{A}_1=\hat{a}_1}) - H(\sigma'_{E|\hat{A}_1=\hat{a}_1})\right)\right|,
\end{align}
so it suffices to bound the difference in entropies of the conditional states on $E$.

Now observe that exactly the same state $\sigma_{\hat{A}_1 B E}$ would have been produced if the initial state had been $\left(e^{i\thA Y_A/2} \otimes \id_{BE}\right)\ket{\rho}_{ABE}$ and the initial Pauli measurement were replaced by a $Z$ measurement.
Furthermore, the fact that $\rho_A=\id/2$ implies we can write $\ket{\rho}_{ABE} = \sum_a \ket{a}_A \ket{a}_{BE} / \sqrt{2}$, where $\{\ket{a}_A\}_a$ is the $Z$-eigenbasis of $A$ and $\{\ket{a}_{BE}\}_a$ are two orthonormal states on $BE$. This implies 
\begin{align}
\left(e^{i\thA Y_A/2} \otimes \id_{BE}\right)\ket{\rho}_{ABE} 
= \left(\id_A \otimes \left(e^{i\thA Y_{BE}/2}\right)^T\right)\ket{\rho}_{ABE} 
= \frac{1}{\sqrt{2}}\sum_a \ket{a}_A \otimes 
\rot{\thA}
\ket{a}_{BE},
\end{align}
where 
$Y_{BE} \defvar i\ketbra{-1}{+1}_{BE}-i\ketbra{+1}{-1}_{BE}$ 
and $\rot{\thA}\defvar\left(e^{i\thA Y_{BE}/2}\right)^T$.
Performing the analogous analysis for
$\sigma'$,
we conclude that 
\begin{align}
\begin{aligned}
\sigma_{BE|\hat{A}_1=\hat{a}_1} &= 
(1-\p) \rot{\thA}\pure{\hat{a}_1}_{BE}\rot{\thA}^\dagger + \p \rot{\thA}\pure{-\hat{a}_1}_{BE}\rot{\thA}^\dagger, \\
\sigma'_{BE|\hat{A}_1=\hat{a}_1} &= 
(1-\p) \rot{\thA+\delta}\pure{\hat{a}_1}_{BE}\rot{\thA+\delta}^\dagger + \p \rot{\thA+\delta}\pure{-\hat{a}_1}_{BE}\rot{\thA+\delta}^\dagger.
\end{aligned}
\end{align}
Therefore, we have
\begin{align}
F(\sigma_{E|\hat{A}_1=\hat{a}_1},\sigma'_{E|\hat{A}_1=\hat{a}_1}) &\geq F(\sigma_{BE|\hat{A}_1=\hat{a}_1},\sigma'_{BE|\hat{A}_1=\hat{a}_1}) \nonumber \\
&\geq (1-\p) \left|\bra{\hat{a}_1}\rot{\thA}^\dagger\rot{\thA+\delta}\ket{\hat{a}_1}\right| + \p \left|\bra{-\hat{a}_1}\rot{\thA}^\dagger\rot{\thA+\delta}\ket{-\hat{a}_1}\right| \nonumber \\
&\geq \left|\cos\frac{\delta}{2}\right|,
\end{align}
where the second inequality holds by concavity of fidelity, 
and the third inequality is given by explicit calculation (see~\cite{SBV+21}).

This lets us apply a fidelity-based continuity bound~\cite{SBV+21}:
\begin{align}
\max_{\hat{a}_1} \left| \left(H(\sigma_{E|\hat{A}_1=\hat{a}_1}) - H(\sigma'_{E|\hat{A}_1=\hat{a}_1})\right)\right| \leq 4.023 \acos F(\sigma_{E|\hat{A}_1=-1},\sigma'_{E|\hat{A}_1=-1}) \leq 2.012 \delta,
\label{eq:Hcontbnd}
\end{align}
where in the second inequality we used the condition $\delta \in [0,\pi]$. 
(Numerical heuristics suggest that the true bound in Eq.~\eqref{eq:Hcontbnd} may simply be $\delta$, so there is some potential for improvement here, though the effect would be fairly small.)

As for the $\vec{\Gamma}$ term,
we note that 
\newcommand{\cs}[1]{\cos(#1)}
\newcommand{\sn}[1]{\sin(#1)}
\begin{align}
&\left|\tr{\left(\vec{\lambda}\cdot\vec{\Gamma}(\thA+\delta,\thB)-\vec{\lambda}\cdot\vec{\Gamma}(\thA,\thB)\right)\rho_{AB}}\right| 
\nonumber\\
\leq& \norm{\vec{\lambda}\cdot\vec{\Gamma}(\thA+\delta,\thB) - \vec{\lambda}\cdot\vec{\Gamma}(\thA,\thB)}_\infty \nonumber\\
=& \norm{\left((\cs{\thA+\delta}Z + \sn{\thA+\delta}X) - (\cs{\thA}Z + \sn{\thA}X) \right) \otimes (\lambda_{10} B_0+\lambda_{11} B_1)}_\infty \nonumber\\
\leq& (|\lambda_{10}| + |\lambda_{11}|)\norm{
(\cs{\thA+\delta}Z + \sn{\thA+\delta}X) - (\cs{\thA}Z + \sn{\thA}X)
}_\infty \nonumber\\
=& (|\lambda_{10}| + |\lambda_{11}|)\sqrt{2-2\cos(\delta)},
\end{align}
where the last line follows from an explicit eigenvalue calculation.

Overall, this means that for $\delta\in[0,\pi]$ we can choose 
\begin{align}
\cont(\delta) = 
2.012 \keyw_1
\delta + (|\lambda_{10}| + |\lambda_{11}|)\sqrt{2-2\cos(\delta)},
\label{eq:contbnd}
\end{align}
accounting for the $\keyw_1$ factor on the $H(\hat{A}_1| E)$ term. This bound is monotone increasing for $\delta\in[0,\pi]$, as required (so that it also bounds the change in entropy when the measurement angle is changed from $\thA$ to any value in the interval $[\thA,\thA+\delta]$).

\section{Minimization over Bob's measurement}
\label{asec:minbob}

The entropic term in the objective function has no dependence on Bob's measurement, so we only need to consider the $\vec{\Gamma}$ term.
In principle, this could be approached using the same argument as above, where we would arrive at the continuity bound
\begin{align}
&\left|\tr{\left(\vec{\lambda}\cdot\vec{\Gamma}(\thA,\thB+\delta)-\vec{\lambda}\cdot\vec{\Gamma}(\thA,\thB)\right)\rho_{AB}}\right| 
\leq
(|\lambda_{01}| + |\lambda_{11}|)\sqrt{2-2\cos(\delta)}.
\end{align}

However, some heuristic experiments indicate that the following approach (used in~\cite{SGP+21}) is more efficient: we can let $r_Z\defvar\cos(\thB)$ and $r_X\defvar\sin(\thB)$ and write
\begin{align}
\sum_{xy} \lambda_{xy} A_x \otimes B_y 
&= \left(\sum_{x} \lambda_{x0} A_x\right) \otimes Z + \left(\sum_{x} \lambda_{x1} A_x\right) \otimes (r_Z Z + r_X X),
\end{align}
in which case the minimization over $\thB\in[0,\pi]$ is equivalent to minimizing over $(r_Z,r_X)$ that lie on the set $\semicset \defvar \{(r_Z,r_X)\mid r_Z^2+r_X^2=1 \text{ and } r_X \geq 0\}$ (i.e.~a semicircular arc). Crucially, the objective function is affine with respect to the vector $(r_Z, r_X)$. Hence if $V$ is any set of points such that $\semicset$ is contained in their convex hull $\operatorname{Conv}(V)$, we immediately have 
\begin{align}
\min_{(r_Z,r_X)\in \semicset} F_\mathrm{obj}(\thA,(r_Z,r_X),\rho_{AB}) 
&\geq \min_{(r_Z,r_X)\in \operatorname{Conv}(V)} F_\mathrm{obj}(\thA,(r_Z,r_X),\rho_{AB}) \nonumber\\
&\geq \min_{(r_Z,r_X)\in V} F_\mathrm{obj}(\thA,(r_Z,r_X),\rho_{AB}),
\end{align}
because the minimum of an affine function over the convex hull of a set $V$ is always attained at an extremal point (which will be a point in $V$). 
To apply this result, we start with a simple choice of the set $V$ (for instance,
in our code we use 
$V = \{(1,0),(1,1),(-1,1),(-1,0)\}$) and find the point in $V$ that yields the minimum value. We then delete this point and replace it with two other points such that $\semicset$ is still contained in the convex hull, and iterate this process until a sufficiently tight bound is obtained (for instance, by checking that there is a feasible point of the optimization that is sufficiently close to the lower bound we have obtained). 

\section{Minimization over states}
\label{asec:minstate}

The minimization over states can be tackled by expressing it as a convex optimization and applying the \term{Frank-Wolfe algorithm}~\cite{FW56}, as was observed in~\cite{WLC18}. For completeness, we now describe the method, with minor 
modifications and clarifications for our specific scenario. 

We observe that (for fixed measurements, parametrized as described above) the minimization over states is the minimization of the convex function 
\begin{align}
f_\mathrm{obj}(\rho) \defvar
\left(\sum_{x\in\{0,1\}} \keyw_x D(\calG({\rho}) \Vert \calZ_x(\calG({\rho})))\right) - \vec{\lambda}\cdot \tr{\vec{\Gamma}(\thA,(r_Z,r_X)) \rho}.
\end{align}
This function is differentiable for all $\rho$ such that $\calG(\rho) > 0$, with~\cite{WLC18}\footnote{The notation here follows that used in~\cite{WLC18}, in which for a function $f$ defined on matrices parametrized by their matrix entries ($\rho=\sum_{jk}\rho_{jk}\ketbra{j}{k}$), its derivative is $\nabla f(\rho) \defvar \sum_{jk} (\partial f/ \partial\rho_{jk}) \ketbra{j}{k}$. 
Denoting the tangent to the graph of $f$ at $\rho$ as $t_{\rho}$, this definition of $\nabla f$ satisfies $t_{\rho}(\rho+\Delta) = f(\rho) + \tr{(\nabla f(\rho))^T \Delta}$.
}
\begin{align}
(\nabla f_\mathrm{obj}(\rho))^T = \left( \sum_{x\in\{0,1\}} \keyw_x \calG^\dagger \!\left(\log\calG(\rho) - \log\calZ_x(\calG(\rho))\right)\right) - \vec{\lambda}\cdot\vec{\Gamma}(\thA,(r_Z,r_X)),
\label{eq:fobjder}
\end{align}
where $\calG^\dagger$ is the adjoint channel of $\calG$. In practice, we will not need to explicitly compute $\calG^\dagger$, because all subsequent arguments rely only on ``inner products'' $\tr{(\nabla f_\mathrm{obj}(\rho))^T\sigma}$, which can be rewritten as
\begin{align}
\tr{(\nabla f_\mathrm{obj}(\rho))^T\sigma} = \tr{\left( \sum_{x\in\{0,1\}} \keyw_x \left(\log\calG(\rho) - \log\calZ_x(\calG(\rho))\right)\calG(\sigma)\right) - \vec{\lambda}\cdot\vec{\Gamma}(\thA,(r_Z,r_X))\sigma}.
\end{align}

We also note that the optimization domain is a set defined by PSD constraints (namely, $\rho$ of the form~\eqref{eq:almostdiag} with $\rho\geq0$ and $\tr{\rho}=1$), which means that optimizing any affine function over this set is an SDP, which can be efficiently solved and bounded~\cite{BV04v8} via its dual value. Together with the explicit expression~\eqref{eq:fobjder} for the derivative of the objective function, this makes the optimization problem here a prime candidate for the {Frank-Wolfe algorithm}~\cite{FW56}, which yields \emph{secure} lower bounds on the true minimum value of the optimization. 

The Frank-Wolfe algorithm is based on a simple geometric insight: for any point in the domain of a convex function $f_\mathrm{obj}$, the tangent hyperplane (or a supporting hyperplane, if $f_\mathrm{obj}$ is not differentiable at that point) to the graph of $f_\mathrm{obj}$ at that point yields an affine lower bound on $f_\mathrm{obj}$. Hence if we can minimize affine functions over the optimization domain, then any such tangent hyperplane lets us obtain a lower bound on the minimum of $f_\mathrm{obj}$ on the domain. In addition, it intuitively seems that taking the tangent at points closer to the true optimal solution should yield tighter lower bounds (though there are some technical caveats). 
Thus in principle, one could simply perform some heuristic computations to get an estimate of where the true minimum lies, then take the tangent at that estimate and obtain the corresponding lower bound. For this optimization, however, we found that the results were fairly sensitive to deviations from the true optimum. To get the bounds to converge, we found it more efficient to use the standard algorithm,
which (under mild assumptions) can be proven to converge in order $O(1/k)$ after $k$ iterations: 
\newcommand{\epsd}{\eps}
\begin{savenotes}
\begin{algorithm}[htp]
\caption*{\textbf{Frank-Wolfe algorithm} (As presented in~\cite{WLC18})}
Let the domain of optimization be $\optdom$
and the acceptable gap between the feasible values and the lower bounds be $\eps_\mathrm{tol}>0$.
\begin{algorithmic}[1]
\State Set $k=0$
and heuristically find $\rho_0 \defvar \arg\min_{\rho\in\optdom}f_\mathrm{obj}(\rho)$.
\State\label{step:FWSDP}Solve the SDP
\begin{align}
\min_{\Delta} \tr{(\nabla f_\mathrm{obj}(\rho_k))^T \Delta} \suchthat \; \rho_k + \Delta \in \optdom, \label{eq:FWSDP}
\end{align}
which returns a feasible point $\Delta^*$
as well as a dual value $-\epsd \leq 0$ 
that lower-bounds the true minimum of the SDP.
\State If $\epsd \leq \eps_\mathrm{tol}$ then stop and return $f_\mathrm{obj}(\rho_k)-\epsd$ as a lower bound on the optimization.
\State Otherwise, heuristically find 
$\mu^* \defvar \arg\min_{\mu\in(0,1)}f_\mathrm{obj}(\rho_k + \mu\Delta^*)$.
\State Set $\rho_{k+1} = \rho_k + \mu^* \Delta^*$, $k\leftarrow k+1$ and return to Step~\ref{step:FWSDP}.
\end{algorithmic}
\end{algorithm}
\end{savenotes}

\noindent Geometrically, the SDP in Eq.~\eqref{eq:FWSDP} corresponds to considering the tangent to $f_\mathrm{obj}$ at $\rho_k$ and computing the maximum amount by which it can decrease (as compared to its value at $\rho_k$) over the domain $\optdom$. 
As described earlier,
this bounds the maximum amount by which $f_\mathrm{obj}$ can decrease from $f_\mathrm{obj}(\rho_k)$ over $\optdom$.

In our application of the Frank-Wolfe algorithm, there is the technical issue that if $\calG(\rho_k)$ is singular (or has negative eigenvalues, from numerical imprecision), then the derivative at $\rho_k$ (Eq.~\eqref{eq:fobjder}) is ill-behaved. 
To cope with this, we used
the heuristic solution of simply replacing $\rho_k$ with $(1-\delta)\rho_k + \delta \idk$ in Step~\ref{step:FWSDP}, where $\delta 
\defvar 
\max(\operatorname{eigenvalues}(-\calG(\rho_k)) \cup \{ 10^{-14} \})
$. 
Note that this does not affect the \emph{security} of the result, since it merely corresponds to taking a tangent at a slightly different point, which still yields a valid lower bound. On the other hand, it could possibly affect the theoretical convergence rates, but in practice this did not appear to pose a significant problem in our setting. (In~\cite{WLC18}, this issue is addressed by analyzing a ``perturbed'' version of the optimization and applying a continuity bound, but we found that for the level of accuracy we desired in this work, the admissible perturbation values are too small to cope with the negative eigenvalues that occur.)
\clearpage

\section{Overall algorithm}
\label{asec:algorithmsummary}

Putting the above results together, we see that for any set of intervals $[\theta_j-\delta_j,\theta_j+\delta_j]$ such that $[0,\pi]\subseteq \bigcup_j[\theta_j-\delta_j,\theta_j+\delta_j]$, and set $V$ such that $\semicset \subseteq \operatorname{Conv}(V)$, we have 
\begin{align}
\min_{\rho_{AB}} \min_{\thB} \min_{\thA}
F_\mathrm{obj}(\thA,\thB,\rho_{AB}) 
&\geq \min_{\rho_{AB}} \min_{\thB} \min_j F_\mathrm{obj}(\theta_j,\thB,\rho_{AB}) - \cont(\delta_j) \nonumber \\
&= \min_j \min_{\rho_{AB}} \min_{(r_Z,r_X)\in \semicset} F_\mathrm{obj}(\theta_j,(r_Z,r_X),\rho_{AB}) - \cont(\delta_j) \nonumber \\
&\geq \min_j \min_{\rho_{AB}} \min_{(r_Z,r_X)\in V} F_\mathrm{obj}(\theta_j,(r_Z,r_X),\rho_{AB}) - \cont(\delta_j) \nonumber \\
&= \min_j \min_{(r_Z,r_X)\in V} \min_{\rho_{AB}}  F_\mathrm{obj}(\theta_j,(r_Z,r_X),\rho_{AB}) - \cont(\delta_j).
\label{eq:mincommute}
\end{align}
We can refine the intervals $[\theta_j-\delta_j,\theta_j+\delta_j]$ and set $V$ by the iterative processes described above, with the innermost minimization over $\rho_{AB}$ being handled by the Frank-Wolfe algorithm. It is clearly possible to swap the order of $\min_j$ and $\min_{(r_Z,r_X)\in V}$ in the last line; however, some heuristic plots of the objective function suggest that performing the optimizations in the order shown here is slightly faster. 

To see that this expression can converge to a tight bound, observe that the first inequality in~\eqref{eq:mincommute} becomes arbitrarily tight as we choose smaller values of $\delta_j$. For the second inequality, note that 
the described algorithm chooses $V$ in such a way that the minimum over $V$ approaches the minimum over $\semicset$, hence this inequality also becomes arbitrarily tight.

It would be convenient for future analysis if it were possible to develop closed-form expressions for $\breve{\lin}_\p$ (as was done in~\cite{HST+20,WAP21,SBV+21}), rather than the computationally intensive numerical approaches shown above. However, this appears to be rather challenging, as noted in~\cite{WAP21}. In particular, we found numerical evidence against the conjecture proposed in~\cite{SGP+21} that the minimum in~\eqref{eq:mainopt} (when restricted to qubit strategies) can always be attained by states such that $\rho_{AB}$ is of rank $2$. (A similar observation was reported in~\cite{WAP21}.) More precisely, in the process of heuristically solving the optimizations in order to estimate suitable choices of $\lambda$, we discovered that if we imposed the additional restriction that $\rho_{AB}$ has rank $2$, there was a small but numerically significant difference as compared to the results without this restriction. (The states that heuristically approach the minimum in the latter indeed tend to have two very small eigenvalues, but it appears that these eigenvalues cannot be reduced exactly to zero.) This suggests that the aforementioned conjecture is not true after all, which poses a challenge for closed-form analysis because the eigenvalues involved in computing the entropy are genuinely roots of a fourth-degree polynomial (in~\cite{HST+20,SBV+21}, a key element of the analysis was to argue that it suffices to consider rank-$2$ $\rho_{AB}$, simplifying the expression for the eigenvalues). 

\section{Optimality of depolarizing-noise threshold}
\label{asec:optthresh}

We now derive an explicit upper bound on the depolarizing-noise threshold for protocols of the form we consider. (We highlight that because we will do so by constructing an extremely generic attack, this upper bound holds for \emph{all} protocols of this form, regardless of choices of parameters such as the input distributions and noisy preprocessing. To some extent, it is surprising that there even exist any protocols that achieve thresholds close to the upper bounds implied by such a generic attack.) We first recall that the measurement statistics in the depolarizing-noise model can be viewed as being obtained via real projective measurements on the Werner state $\varrho_\q$ defined in~\eqref{eq:werner}. It is known~\cite{AGT06,Kri79} that these measurement statistics can be reproduced by a local-hidden-variable (LHV) model as soon as the state does not violate the CHSH inequality, i.e.~for $q\geq q_2 \defvar{(2-\sqrt{2})}/{4} \approx 14.6\pct$.\footnote{In fact, this is also true in a more general context where the honest parties have \emph{any} number of real projective measurements on the Werner state, via the value of the second Grothendieck constant $K_G(2)=\sqrt{2}$~\cite{AGT06,Kri79}. For general (i.e.~not necessarily real) projective measurements the analogous value is known to satisfy $q_3 \lesssim 15.9\pct$, as follows from the best known bound on the third Grothendieck constant $K_G(3)$~\cite{HQV+17}.} 

In the device-independent setting, the existence of an LHV model yields an attack for Eve that gives her full knowledge of the measurement outcomes.
This means that for any depolarizing-noise value $\q$, we can construct the following attack for Eve: first, she generates a classical ancilla bit which is equal to $0$ with probability $\pBell \defvar (\q_2-\q)/\q_2$. If the bit is equal to $0$, Alice and Bob's devices simply perform the honest measurements on the maximally entangled state $\Phi^+$. Otherwise, the devices implement an LHV model that yields the same statistics as the honest measurements performed on the Werner state $\varrho_{\q_2}$, but which gives Eve full knowledge of the outcomes. This attack indeed reproduces the statistics corresponding to depolarizing noise~$\q$, since $\pBell\pure{\Phi^+} + (1-\pBell) \varrho_{\q_2} = \varrho_{\q}$ and the second strategy produces the same statistics as the honest measurements performed on $\varrho_{\q_2}$.

In the two cases, the entropies of Alice's outputs after noisy preprocessing are $H(\hat{A}_x|\tilde{E})_\mathrm{triv}=1$ (i.e.~Eve's side-information is trivial since the Alice--Bob state is pure) and $H(\hat{A}_x|\tilde{E})_\mathrm{LHV}=\binh(\p)$ (i.e.~the uncertainty arises purely from the noisy preprocessing) respectively, where $\tilde{E}$ denotes Eve's side-information excluding the ancilla bit. Incorporating the ancilla bit into Eve's side-information $E$, this attack achieves
\begin{align}
H(\hat{A}_x|E) = \pBell H(\hat{A}_x|\tilde{E})_\mathrm{triv} + (1-\pBell) H(\hat{A}_x|\tilde{E})_\mathrm{LHV} 
= \frac{\q_2-\q}{\q_2} + \frac{\q}{\q_2} \binh(p).
\end{align}
This expression hence serves as an upper bound on the best possible lower bound we could derive on the conditional entropy against Eve. Also, the conditional entropy against Bob's output (for Bob's optimal key-generation measurements on Werner states) is simply
\begin{align}
H(\hat{A}_x|B_x) = \binh(\p +(1-2\p) \q).
\label{eq:entBob}
\end{align}
Recalling that the asymptotic keyrate is given by the Devetak-Winter expression~\eqref{eq:devwin} (and applying the same simplifications leading to~\eqref{eq:randomkeyrate}), this yields a simple upper bound on the critical value of $\q$ that still allows a positive asymptotic keyrate (for protocols of this form, i.e.~applying noisy preprocessing, random key measurements, and considering the full output distribution, but only using one-way error correction):
\begin{align}
\q_\textrm{att}(\p) \defvar& \max \bigg\{ \q 
\, \bigg| \, 
\sum_x \keyw_x (H(\hat{A}_x|E) - H(\hat{A}_x|B_x))  \geq 0 \bigg\} \nonumber\\
=& \max \left\{ \q 
\, \middle| \,
\frac{\q_2-\q}{\q_2} + \frac{\q}{\q_2} \binh(\p) -  \binh(\p +(1-2\p) \q) \geq 0 \right\},
\end{align}
where in the first line the entropies refer to those of the attack we have described.
We observe numerically that this upper bound is increasing with respect to $\p$, so we have $\q_\textrm{att}(\p)\leq \q_\textrm{att}(\p\to1/2)$. In the limit $\p\to 1/2$ we can write $\p=1/2-\delta$ and expand the expression in small $\delta$ to find
\begin{align}
\q_\textrm{att}\!\left(\p\to\frac{1}{2}\right) = \frac{1+ 4 \q_2 -\sqrt{8 \q_2+1}}{8 \q_2} = 1 - \frac{\sqrt{7+4\sqrt{2}} - 1}{2\sqrt{2}} \approx 9.57\pct.
\end{align}
Hence for protocols of the form described in this work, it is not possible for the depolarizing-noise threshold to exceed this value. For the parameter choice $\p=0.3$, we have $\q_\textrm{att}(0.3) \approx 9.51 \pct$, which is close to the threshold of $9.33 \pct$ for which we could certify a positive keyrate.\footnote{These two thresholds cannot match exactly, because the feasible points shown in Fig.~\ref{fig:entbnds} also imply that the tight bound $\breve{\lin}_{\p}({\constr})$ is not \emph{exactly} equal to the linear interpolation between $(2,\binh(\p))$ and $(2\sqrt{2},1)$, which essentially corresponds to the attack we describe here.}

Finally, it is worth noting that while the main focus of this section is depolarizing noise applied to the statistics corresponding to the ideal CHSH measurements on $\Phi^+$, the above analysis in fact generalizes to a substantially larger family of scenarios. Specifically, the same analysis applies for depolarizing noise applied to the statistics from any number of real projective measurements on $\Phi^+$, since the results of~\cite{AGT06,Kri79} yield the required LHV models.\footnote{A more general consideration would be the scenario of depolarizing noise applied to the statistics from an arbitrary two-qubit state, but it is not immediately obvious whether the thresholds in~\cite{AGT06,Kri79,HQV+17} apply to such states as well, so we leave this for future work.} (If Bob performs suboptimal generation measurements, then Eq.~\eqref{eq:entBob} still holds as a lower bound, so the argument carries through.) In addition, replacing $\q_2$ with $\q_3$ straightforwardly provides a threshold of $\q \lesssim 10.01 \pct$ for all possible projective measurements on that state, via the respective known bound in~\cite{HQV+17}. (However, in all such cases where there are more than 2 possible measurements, there is no qubit reduction to make it easier to derive corresponding lower bounds, hence the lower bounds have also not been very thoroughly explored.) 

The above argument relies on the fact that depolarizing-noise statistics can be obtained by simple mixtures of ``extremal'' strategies. This is not necessarily the case for other noise models, such as limited detection efficiency in photonic experiments. Therefore, the same argument cannot be directly applied to obtain upper bounds on the thresholds for such forms of noise.

\chapter{Detection efficiency thresholds}
\label{app:detthresh}

Here we briefly summarize some values that have been claimed as the detection efficiency threshold at which DIQKD is possible, using only the~\cite{PAB+09} bound (based on CHSH value) to compute the keyrate, and using Pauli measurements on two-qubit states for the noiseless distribution. Since these results preceded the random-key-measurements idea, they are all situations where Alice and Bob used a fixed input in all generation rounds.
Alice has 2 inputs and Bob has 3 inputs --- two of Bob's inputs are used only in test rounds (to compute the CHSH value), and his last input is used only in generation rounds (so it should ideally be well-correlated with Alice's generation input).
Grouping them roughly, we can describe the thresholds as follows:

\begin{itemize}
\item If the noiseless distribution is fixed as~\eqref{eq:noiselessCHSH}, and Bob's outputs are mapped to binary values, the threshold is $92.4\pct$. If Bob's $\perp$ outcome is preserved, this improves to $90.9\pct$~\cite{ML12}.

\item If the noiseless distribution is optimized to maximize the CHSH value (as in~\cite{Ebe93}), and Bob's generation measurement is simply set to be in the same basis as Alice's, and Bob's outputs are mapped to binary values, the threshold is $90.9\pct$. (The closeness to the~\cite{ML12} threshold appears to just be a numerical coincidence.) This can be slightly improved to $90.7\pct$ by optimizing Bob's generation measurement separately instead of just setting it to be in the same basis as Alice's. This was the value previously quoted as the reference in~\cite{TLR20}. 

\item If the noiseless distribution is instead optimized to maximize the keyrate (as computed via the~\cite{PAB+09} bound) directly, and Bob's outputs are mapped to binary values, the threshold is $88.4\pct$. (A slightly worse value of $89.3\pct$ was stated in~\cite{SBV+21} because that considers a situation where the distribution is symmetrized and the symmetrization bit is discarded afterwards, instead of being retained for error-correction purposes.) By preserving Bob's $\perp$ outcome, this threshold is improved to $86.5\pct$~\cite{BFF21,WAP21,SBV+21}.
\end{itemize}

\chapter{Detailed security proofs for collective attacks}
\label{app:colloverall}

\section{Proof of Theorem~\ref{th:collective}}
\label{asec:collective}

We observe that the completeness proof from Sec.~\ref{sec:com} directly applies, since we have not changed the honest behaviour. As for the soundness proof, we 
follow the argument in Sec.~\ref{sec:soundness} up until Eq.~\eqref{eq:secsplit}, upon which we again require a bound on the first term in that equation. To do so, we note that since the device behaviour is identical in every round, the true CHSH winning probability for the devices can be denoted as a constant value $\wtrue$ (though this value is not directly known to Alice and Bob). We can thus define the following exhaustive possibilities for the device behaviour (again, $\rho$ is the state at the end of Protocol~\ref{prot:virtual}, and the first case is just to deal with technicalities regarding the smoothing parameter):
{\setlist[enumerate]{leftmargin=2cm,labelindent=2cm}
\begin{enumerate}[label*=Case \arabic*:,ref=\arabic*]
\item \label{case:ECsmall2} For the state $\rho$, $\pr{\Og \land \Oh \land \OpPE} \leq \es^2$.
\item \label{case:wlow} $\wtrue < \wexp - \dtol - \dsou$.
\item \label{case:whigh} 
Neither of the above are true.
\end{enumerate}
}
\noindent In case~\ref{case:ECsmall2}, the first term of Eq.~\eqref{eq:secsplit} is immediately bounded by
\begin{align}
\pr{\Og \land \Oh \land \OpPE} \leq \es^2.
\end{align}
In case~\ref{case:wlow}, we observe that in each round, the probability of the CHSH game being played and the devices winning is $\wtrue\gamma$. Therefore, we have
\begin{align}
\pr{\freq_\ctstr(1) \geq (\wexp-\dtol)\gamma}
&=\pr{\freq_\ctstr(\neg 1) < 1-(\wexp-\dtol)\gamma} \nonumber\\
&\leq\pr{\freq_\ctstr(\neg 1)n \leq \floor{(1-(\wexp-\dtol)\gamma)n}} \nonumber\\
&= \cdfBin{n}{1-\wtrue\gamma}{\floor{(1-(\wexp-\dtol)\gamma)n}} \nonumber\\
&\leq \eIID, 
\label{eq:eIIDbound}
\end{align}
where the last inequality follows from the fact that we have $1-\wtrue\gamma > 1-(\wexp - \dtol - \dsou)\gamma$, and $\cdfBin{n}{p}{k} \leq \cdfBin{n}{p'}{k}$ for $p \geq p'$.
(Again, one could obtain a simpler 
expression by noting $\pr{\freq_\ctstr(1) \geq (\wexp-\dtol)\gamma} 
\leq \pr{\freq_\ctstr(1) \geq (\wtrue+\dsou)\gamma}$ and then upper bounding the latter expression via the Chernoff bound in the same way as the second line of~\eqref{eq:chernoffs} (as long as $\dsou$ is chosen small enough to ensure $\wtrue + \dsou \leq 1$, e.g.~by choosing it smaller than $1-w_\mathrm{max}$ where $w_\mathrm{max}$ is the maximum quantum CHSH winning probability):
\begin{align}\label{eq:chernoffeIID}
\pr{\freq_\ctstr(1) \geq (\wtrue+\dsou)\gamma} 
\leq e^{-\frac{n \gamma \dsou^2}{2}} ,
\end{align}
but this gives significantly poorer results.) 
Overall, this allows us to bound the first term of Eq.~\eqref{eq:secsplit} in this case by
\begin{align}
\pr{\Og \land \Oh \land \OpPE} \leq \pr{\OpPE} \leq \pr{\freq_\ctstr(1) \geq (\wexp-\dtol)\gamma} \leq \eIID.
\end{align}

As for case~\ref{case:whigh}, we can directly apply the AEP (Corollary~4.10 of~\cite{DFR20}) to obtain the following bound 
in place of Theorem~\ref{th:rawHmin}, with the slight difference that we have not conditioned on $\OpPE$ yet:
\begin{align}
\Hmin^{\es}(\str{A}|\str{X}\str{Y} \allE)_{\rho} >
n\g(\wexp-\dtol-\dsou) - \sqrt{n} \,(2\log5)\sqrt{\log\frac{2}{\es^2}}.
\label{eq:rawHminAEP}
\end{align}
The remainder of the analysis proceeds very similarly to 
Sec.~\ref{sec:soundness}. Specifically, since we have $\pr{\Og \land \Oh \land \OpPE} \geq \es^2$ in this case, we can conclude 
\begin{align}
\Hmin^{\es}(\str{A}|\str{X}\str{Y}\str{L} \allE)_{\rho_{\land \Og \land \Oh \land \OpPE}}
&\geq \Hmin^{\es}(\str{A}|\str{X}\str{Y}\str{L} \allE)_{\rho}
\nonumber \\
&\geq \Hmin^{\es}(\str{A}|\str{X}\str{Y} \allE)_{\rho} 
- \cmax - \ceil{\log\left(\frac{1}{\eh}\right)}.
\end{align}
Putting this together with Eq.~\eqref{eq:rawHminAEP}, we see that as long as we choose $\lkey$ satisfying Eq.~\eqref{eq:lkeycoll}, we will have
\begin{align}
\frac{1}{2}\left(\Hmin^{\es}(\str{A}|\str{X}\str{Y}\str{L} \allE)_{\rho_{\land \Og \land \Oh \land \OpPE}} - \lkey  + 2\right) \geq \log\frac{1}{\ePA}.
\end{align}
Again noting that $\pr{\Og \land \Oh \land \OpPE} \geq \es^2$ in this case, the Leftover Hashing Lemma then implies the first term of Eq.~\eqref{eq:secsplit} is bounded by
\begin{align}
&\frac{1}{2}\norm{\mPA(\rho_{\land \Og \land \Oh \land \OpPE})_{K_A E'} - \idk_{K_A} \otimes \mPA(\rho_{\land \Og \land \Oh \land \OpPE})_{E'}}_1 \nonumber\\
\leq{}& 2^{-\frac{1}{2} \left(\Hmin^{\es}(\str{A}|\str{X}\str{Y}\str{L} \allE)_{\rho_{\land \Og \land \Oh \land \OpPE}} - \lkey + 2\right)} + 2\es \nonumber\\
\leq{}& \ePA + 2\es.
\end{align}

Therefore, we finally conclude that under the collective-attacks assumption, the secrecy condition is satisfied by choosing
\begin{align}
\esecr = \max\{\es^2, \eIID, \ePA + 2\es\} + \eh = \max\{\eIID, \ePA + 2\es\} + \eh.
\end{align} 
As before, the protocol is $\eh$-correct, so we conclude that it is $(\max\{\eIID, \ePA + 2\es\} + 2\eh)$-sound when performed with $\lkey$ satisfying Eq.~\eqref{eq:lkeycoll}.

\section{Proof of Theorem~\ref{th:optcoll}}
\label{asec:optcoll}

To prove this theorem, we first observe that the claimed completeness bound holds because
\begin{align}
\pr{\freq_{\str{c}_t}(1) < \wexp-\dtol}_\mathrm{hon}
&\leq 
\pr{\freq_{\str{c}_t}(1)\gamma n \leq \floor{(\wexp-\dtol)\gamma n}}_\mathrm{hon} \nonumber\\
&= \cdfBin{\gamma n}{\wexp}{\floor{(\wexp-\dtol)\gamma n}},
\end{align}
and hence by applying the union bound, we see that the probability of the honest protocol aborting is at most $\ecEC + \ecPE$.

As for soundness, we follow the argument in Sec.~\ref{sec:finiteproof} up until Eq.~\eqref{eq:esecrnew}, though it is no longer necessary to introduce the virtual parameter estimation, since Bob has access to the exact values of $\str{A}_t$. We also need to slightly modify the event definitions (and it is no longer necessary to consider the event $\OpPE$): 
{\setlist[description]{leftmargin=2cm,labelindent=2cm}
\begin{description}
\item $\Og$: $\str{A}_g = \tilde{\str{A}}_g$ 
\item $\Oh$: $\hash(\str{A}_g) = \hash(\tilde{\str{A}}_g)$
\item $\OPE$: $\freq_{\str{c}_t}(1)\geq \wexp-\dtol$ 
\end{description}
}
\noindent Again, the accept condition for this protocol can be stated as the event $\Oh \land \OPE$. 
For this protocol, we can prove a bound of the form~\eqref{eq:esecrnew} directly, without splitting it into the terms in Eq.~\eqref{eq:secsplit}. To do so, denote the state after the parameter-estimation step in Protocol~\ref{prot:optcoll} as $\rho$, and consider the following exhaustive possibilities:
{\setlist[enumerate]{leftmargin=2cm,labelindent=2cm}
\begin{enumerate}[label*=Case \arabic*:,ref=\arabic*]
\item \label{case:ECsmall3} 
For the state $\rho$, $\pr{\Oh \land \OPE} \leq \es^2$.
\item \label{case:wlow3} $\wtrue < \wexp - \dtol - \dsou$.
\item \label{case:whigh3} Neither of the above are true.
\end{enumerate}
}
In case~\ref{case:ECsmall3}, the left-hand-side of Eq.~\eqref{eq:esecrnew} is immediately bounded by $
\es^2$.
In case~\ref{case:wlow3}, we observe that in each of the $\gamma n$ test rounds (note that now we focus only on the test rounds, instead of the full output strings as we did in the previous proofs), the probability of winning the CHSH game is $\wtrue
$. Therefore, we have
\begin{align}
\pr{\freq_{\str{c}_t}(1) \geq \wexp-\dtol}
&=\pr{\freq_{\str{c}_t}(
0
) < 1-\wexp+\dtol} \nonumber\\
&=\pr{\freq_{\str{c}_t}(0)\gamma n < (1-\wexp+\dtol)\gamma n} \nonumber\\
&\leq\pr{\freq_{\str{c}_t}(0)\gamma n \leq \floor{(1-\wexp+\dtol)\gamma n}} \nonumber\\
&= \cdfBin{\gamma n}{1-\wtrue}{\floor{(1-\wexp+\dtol)\gamma n}} \nonumber\\
&\leq \eIID,
\end{align}
which allows us to bound the left-hand-side of Eq.~\eqref{eq:esecrnew} in this case by
\begin{align}
\pr{\Oh \land \OPE} \leq \pr{\OPE} = \pr{\freq_{\str{c}_t}(1) \geq \wexp-\dtol} \leq \eIID.
\end{align}

Finally, for case~\ref{case:whigh3}, we note that since Eve's quantum side-information $\allE$ is genuinely in (IID) tensor-product form under the collective-attacks assumption, we can denote the subsets corresponding to test and generation rounds as $\str{E}_t$ and $\str{E}_g$ respectively. The AEP in~\cite{DFR20} then gives
\begin{align}
\Hmin^{\es}(\str{A}_g|\str{X}_g\str{Y}_g\str{E}_g)_{\rho} >
(1-\gamma)n\lin_\p(\wexp-\dtol-\dsou) - \sqrt{(1-\gamma)n} \,(2\log5)\sqrt{\log\frac{2}{\es^2}},
\end{align}
and 
we proceed similarly: since we have $\pr{\Oh \land \OPE} \geq \es^2$ in case~\ref{case:whigh3}, we can conclude
\begin{align}
\Hmin^{\es}(\str{A}_g|\str{A}_t\str{X}\str{Y}\str{L} \allE)_{\rho_{\land \Oh \land \OPE}}
&\geq \Hmin^{\es}(\str{A}_g|\str{A}_t\str{X}\str{Y}\str{L} \allE)_{\rho}
\nonumber \\
&\geq \Hmin^{\es}(\str{A}_g|\str{A}_t\str{X}\str{Y} \allE)_{\rho} 
- \cmax - \ceil{\log\left(\frac{1}{\eh}\right)} \nonumber \\
&= \Hmin^{\es}(\str{A}_g|\str{X}_g\str{Y}_g\str{E}_g)_{\rho}
- \cmax - \ceil{\log\left(\frac{1}{\eh}\right)},
\end{align}
where the last line holds because the test rounds are independent of the generation rounds
(this is intuitive, but for completeness we provide a proof; see Fact~\ref{fact:Hmintensor} below).
Hence as long as we choose $\lkey$ satisfying Eq.~\eqref{eq:lkeyoptcoll}, we will have
\begin{align}
\frac{1}{2}\left(\Hmin^{\es}(\str{A}_g|\str{A}_t\str{X}\str{Y}\str{L} \allE)_{\rho_{\land \Oh \land \OPE}} - \lkey  + 2\right) \geq \log\frac{1}{\ePA}.
\end{align}
Again noting that $\pr{\Oh \land \OPE} \geq \es^2$ in this case, the Leftover Hashing Lemma then implies 
\begin{align}
&\frac{1}{2}\norm{\mPA(\rho_{\land \Oh \land \OPE})_{K_A E'} - \idk_{K_A} \otimes \mPA(\rho_{\land \Oh \land \OPE})_{E'}}_1 \nonumber\\
\leq{}& 2^{-\frac{1}{2} \left(\Hmin^{\es}(\str{A}_g|\str{A}_t\str{X}\str{Y}\str{L}\allE)_{\rho_{\land \Oh \land \OPE}} - \lkey + 2\right)} + 2\es \nonumber\\
\leq{}& \ePA + 2\es.
\end{align}

Therefore, we finally conclude that under the collective-attacks assumption, the secrecy condition is satisfied by choosing
\begin{align}
\esecr = \max\{\es^2, \eIID, \ePA + 2\es\} = \max\{\eIID, \ePA + 2\es\}.
\end{align} 
As before, the protocol is $\eh$-correct, so we conclude that it is $(\max\{\eIID, \ePA + 2\es\} + \eh)$-sound when performed with $\lkey$ satisfying Eq.~\eqref{eq:lkeyoptcoll}.\\

For completeness, we provide a proof of an intermediate statement used in the above argument (the main technicality is verifying that it holds even with smoothing; also, for generality we prove the statement for subnormalized states):
\begin{fact}\label{fact:Hmintensor}
For any $\rho\in\dop{\leq}(ST), \sigma\in\dop{\leq}(R)$, we have $\Hmin^{\eps}(S|TR)_{\rho\otimes\sigma} = \Hmin^{\eps}(S|T)_{\tr{\sigma}\rho}$.
\end{fact}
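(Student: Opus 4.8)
Looking at Fact~\ref{fact:Hmintensor}, the goal is to show that tensoring on an independent side-information register $R$ (in a product state) changes the conditional smoothed min-entropy only via the trace factor $\tr{\sigma}$ it contributes. The plan is to reduce to the unsmoothed case and use the explicit formula for $\Hmin$ from the definition in Section~\ref{sec:minmaxent}, then handle smoothing by a careful argument about which neighbouring states can appear.

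First I would treat the \emph{unsmoothed} identity $\Hmin(S|TR)_{\rho\otimes\sigma} = \Hmin(S|T)_{\tr{\sigma}\rho}$. Using the definition, $\Hmin(S|TR)_{\rho\otimes\sigma} = -\log\min_{\omega}\norm{(\rho\otimes\sigma)^{1/2}(\id_S\otimes\omega_{TR})^{-1/2}}_\infty^2$ where $\omega$ ranges over subnormalized states on $TR$ with the appropriate kernel condition. The key sub-claim is that the optimal $\omega_{TR}$ can be taken to be a product $\omega_T\otimes\sigma_R/\tr{\sigma}$ (suitably normalized); intuitively, since $\rho\otimes\sigma$ is a product across the $ST$ versus $R$ cut and $S$ only ``sees'' the $T$ part, decoupling $R$ by matching it to $\sigma_R$ is optimal. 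To make this rigorous I would compute the operator norm for a product $\omega_T\otimes\tau_R$ — it factorizes as $\norm{\rho^{1/2}(\id_S\otimes\omega_T)^{-1/2}}_\infty^2 \cdot \tr{\sigma}\norm{\sigma^{1/2}\tau_R^{-1/2}}_\infty^2$ (up to normalization bookkeeping) — and observe that the second factor is minimized at $\tau_R \propto \sigma_R$, while a general (non-product) $\omega_{TR}$ cannot do better; this last point can be argued by a pinching/data-processing argument applied to the channel that discards $R$ and replaces it with $\sigma_R$, or by invoking the operational interpretation of $\Hmin$ as a conditional max-achievable quantity that is monotone under local operations on the conditioning system. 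Either way, the product ansatz is optimal, and plugging it in gives exactly $\Hmin(S|T)_{\tr{\sigma}\rho}$ after tracking the trace factors (note $\tr{(\tr{\sigma}\rho)} = \tr{\sigma}\tr{\rho} = \tr{\rho\otimes\sigma}$, so the smoothing-parameter ranges match up).

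For the smoothed version, I would argue both inequalities. For ``$\geq$'': given any $\tilde{\rho}\in\dop{\leq}(ST)$ with $\pd(\tilde{\rho},\tr{\sigma}\rho)\leq\eps$ achieving the optimum in $\Hmin^\eps(S|T)_{\tr{\sigma}\rho}$, form $\tilde{\rho}\otimes(\sigma/\tr{\sigma})$; since purified distance is non-increasing under tensoring with a fixed state and satisfies the right scaling, one checks $\pd(\tilde{\rho}\otimes\sigma/\tr{\sigma},\,\rho\otimes\sigma)\leq\eps$, and the unsmoothed identity gives $\Hmin(S|TR)_{\tilde{\rho}\otimes\sigma/\tr{\sigma}} = \Hmin(S|T)_{\tilde{\rho}}$, yielding $\Hmin^\eps(S|TR)_{\rho\otimes\sigma}\geq\Hmin^\eps(S|T)_{\tr{\sigma}\rho}$. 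For ``$\leq$'': take any $\tilde{\tau}\in\dop{\leq}(STR)$ with $\pd(\tilde{\tau},\rho\otimes\sigma)\leq\eps$; I would apply the CPTP map $\mathcal{T}$ that traces out $R$ and reattaches $\sigma_R$ — this map fixes $\rho\otimes\sigma$, so by the data-processing inequality for purified distance $\pd(\mathcal{T}(\tilde{\tau}),\rho\otimes\sigma)\leq\eps$, and $\mathcal{T}(\tilde{\tau})$ is a product $\tilde{\tau}_{ST}'\otimes\sigma/\tr{\sigma}$ with $\tilde{\tau}_{ST}'$ close to $\tr{\sigma}\rho$. Since $\Hmin(S|TR)$ is non-decreasing under $\mathcal{T}$ acting on the conditioning registers $TR$, we get $\Hmin(S|TR)_{\tilde{\tau}}\leq\Hmin(S|TR)_{\mathcal{T}(\tilde{\tau})} = \Hmin(S|T)_{\tilde{\tau}'_{ST}}\leq\Hmin^\eps(S|T)_{\tr{\sigma}\rho}$, and taking the maximum over $\tilde{\tau}$ finishes it.

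The main obstacle I expect is the bookkeeping around subnormalization and the purified-distance scaling: ensuring that $\pd(\tilde{\rho}\otimes\sigma/\tr{\sigma},\rho\otimes\sigma)$ genuinely equals (or is bounded by) $\pd(\tilde{\rho},\tr{\sigma}\rho)$ requires checking the generalized-fidelity definition carefully, since the ``missing trace'' terms $\sqrt{(1-\tr{\cdot})(1-\tr{\cdot})}$ interact with the tensor factor $\sigma/\tr{\sigma}$ in a slightly nontrivial way. The cleanest route is probably to prove it on purifications: purify $\sigma$ to a pure state, purify $\rho$ and $\tilde\rho$, note fidelity is multiplicative under tensor products of purifications, and translate back — this avoids the subnormalization subtleties entirely. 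The monotonicity-of-$\Hmin$-under-local-channels-on-the-conditioning-system step is standard (it follows from the SDP/operational characterization), but I would state it explicitly as a cited lemma rather than rederive it.
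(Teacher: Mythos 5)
Your proof is correct, and the overall architecture — establish the unsmoothed identity first, then lift it to the smoothed version via monotonicity of the purified distance under CPTP maps (appending $\hat{\sigma}\defvar\tr{\sigma}^{-1}\sigma$ for one direction, discarding/reattaching $R$ for the other) — is the same as the paper's. The details of the unsmoothed step differ, however. You work from the operator-norm definition of $\Hmin$, show the objective factorizes over a product ansatz $\omega_T\otimes\tau_R$ (with the $R$-factor minimized at $\tau_R\propto\sigma$, forced by the trace constraint), and then rule out non-product optimizers by citing data-processing for $\Hmin$ under channels on the conditioning system. The paper instead uses the SDP-dual characterization $\Hmin(A|B)_\rho=\max\{-\log\tr{K_B}\mid \id_A\otimes K_B\geq\rho_{AB},\,K_B\geq0\}$ and exhibits explicit feasible operators in both directions: $\tr{\sigma}^{-1}K'_T\otimes\sigma_R$ for "$\geq$", and the partial trace $K''_T=\tr[R]{K''_{TR}}$ for "$\leq$". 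This makes the paper's argument self-contained (no appeal to an external data-processing theorem), and its "$\leq$" direction is actually proved in the stronger form $\Hmin^\eps(S|T)_\omega\geq\Hmin^\eps(S|TR)_\omega$ for \emph{arbitrary} $\omega\in\dop{\leq}(STR)$, whereas your reattachment-map argument uses the product structure of $\rho\otimes\sigma$ in an essential way (the map must fix the state). Your concern about subnormalization in the purified-distance step is legitimate but resolves exactly as you suspect: appending a fixed normalized state and tracing out a register are both CPTP, and $\pd$ is monotone under CPTP maps even on $\dop{\leq}$, so no excursion through purifications is needed. The only cosmetic slip is the stray $\tr{\sigma}$ prefactor in your factorized norm expression — the factorization is exact and the $\tr{\sigma}$ emerges only after substituting the optimal $\tau_R$ — but this does not affect the argument.
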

\begin{proof}
We shall use the fact that the min-entropy 
can be equivalently defined~\cite{Tom16} as
\newcommand{\posop}{K}
\begin{align}
\Hmin(A|B)_\rho \defvar 
\max_{\posop_{B}} 
\left\{
-\log 
\tr{\posop_{B}} \;\middle|\; \id_{A} \otimes \posop_{B} \geq \rho_{AB} \text{ and } \posop_{B} \geq 0 \right\}.
\label{eq:Hmindef2}
\end{align}

We first show that $\Hmin^{\eps}(S|TR)_{\rho\otimes\sigma} \geq \Hmin^{\eps}(S|T)_{\tr{\sigma}\rho}$. To do so, we start by proving a version without smoothing, i.e.~$\Hmin(S|TR)_{\rho\otimes\sigma} \geq \Hmin(S|T)_{\tr{\sigma}\rho}$. 
Let $\posop'_{T}$ be an operator that attains the optimum in the definition~\eqref{eq:Hmindef2} for $\Hmin(S|T)_{\tr{\sigma}\rho}$. Then the operator inequality $\id_{S} \otimes \posop'_{T} \otimes \sigma_{R} \geq \tr{\sigma}\rho_{ST} \otimes \sigma_{R}$ holds, which implies $\tr{\sigma}^{-1} \posop'_{T} \otimes \sigma_{R}$ is a feasible point in the optimization that defines $\Hmin(S|TR)_{\rho\otimes\sigma}$. Hence the value of $\Hmin(S|TR)_{\rho\otimes\sigma}$ is at least $-\log\tr{\tr{\sigma}^{-1} \posop'_{T} \otimes \sigma_{R}} 
= -\log\tr{\posop'_{T}}
= \Hmin(S|T)_{\tr{\sigma}\rho}$, as claimed.

To prove the inequality with smoothing, we use the fact that the purified distance is monotone under CPTP (more generally, completely positive trace-nonincreasing) maps~\cite{Tom16}. Let $\tilde{\rho}\in\dop{\leq}(ST)$ be a state in the $\eps$-ball (in purified distance) around $\tr{\sigma}\rho$ such that $\Hmin(S|T)_{\tilde{\rho}} = \Hmin^{\eps}(S|T)_{\tr{\sigma}\rho}$. 
By considering the CPTP map that simply appends the normalized state $\hat{\sigma}\defvar\tr{\sigma}^{-1}\sigma$, we see that $\pd(\tilde{\rho}\otimes\hat{\sigma} , \tr{\sigma}\rho\otimes\hat{\sigma}) \leq \pd(\tilde{\rho} , \tr{\sigma}\rho) \leq \eps$, i.e.~$\tilde{\rho}\otimes\hat{\sigma}$ is in the $\eps$-ball around $\tr{\sigma}\rho\otimes\hat{\sigma}$. But $\tr{\sigma}\rho\otimes\hat{\sigma}$ is just $\rho\otimes\sigma$, hence the value of $\Hmin^{\eps}(S|TR)_{\rho\otimes\sigma}$ is at least $\Hmin(S|TR)_{\tilde{\rho}\otimes\hat{\sigma}} \geq \Hmin(S|T)_{\tilde{\rho}} = \Hmin^{\eps}(S|T)_{\tr{\sigma}\rho}$, where the first inequality holds by the unsmoothed version proven above.

We now show that the opposite inequality $\Hmin^{\eps}(S|T)_{\tr{\sigma}\rho} \geq \Hmin^{\eps}(S|TR)_{\rho\otimes\sigma}$ holds as well, hence the values must be equal. 
In fact, we shall prove a slightly stronger result, that $\Hmin^{\eps}(S|T)_{\omega} \geq \Hmin^{\eps}(S|TR)_{\omega}$ for any $\omega\in\dop{\leq}(STR)$, with the understanding that $\omega_{ST} \defvar \tr[R]{\omega_{STR}}$ for subnormalized states.
(Qualitatively, this inequality states that conditioning on additional systems never increases the smoothed min-entropy.)
We again start by proving the unsmoothed version, $\Hmin(S|T)_{\omega} \geq \Hmin(S|TR)_{\omega}$ for any $\omega\in\dop{\leq}(STR)$. To do so, let $\posop''_{TR}$ be an operator that attains the optimum in the definition~\eqref{eq:Hmindef2} for $\Hmin(S|TR)_{\omega}$. Then since operator inequalities are preserved by partial trace, we have $\id_{S} \otimes \posop''_{T} = \tr[R]{\id_{S} \otimes \posop''_{TR}} \geq 
\tr[R]{\omega_{STR}} = \omega_{ST}$, so $\posop''_{T}$ is a feasible point in the optimization that defines $\Hmin(S|T)_{\omega}$. Hence $\Hmin(S|T)_{\omega}$ is at least $-\log\tr{\posop''_{T}} = -\log\tr{\posop''_{TR}} = \Hmin(S|TR)_{\omega}$, as claimed.

To prove the smoothed version, let $\tilde{\omega}\in\dop{\leq}(STR)$ be a state in the $\eps$-ball around $\omega$ such that $\Hmin(S|TR)_{\tilde{\omega}} = \Hmin^{\eps}(S|TR)_{\omega}$.
By considering the CPTP map that traces out $R$, we see that $\pd(\tilde{\omega}_{ST},\omega_{ST}) \leq \pd(\tilde{\omega} , \omega) \leq \eps$, i.e.~$\tilde{\omega}_{ST}$ is in the $\eps$-ball around $\omega_{ST}$. Hence the value of $\Hmin^{\eps}(S|T)_{\omega}$ is at least $\Hmin(S|T)_{\tilde{\omega}} \geq \Hmin(S|TR)_{\tilde{\omega}} = \Hmin^{\eps}(S|TR)_{\omega}$, where the first inequality holds by the unsmoothed version proven above. This is the desired result.
\end{proof}

\chapter{Alternative security proof for advantage distillation}
\label{app:AD}

\newcommand{\ovlap}{G} 

Here we provide an alternative proof of Theorem~\ref{th_fidbound}, which also yields a potentially more general result. Essentially, the result holds for
any distinguishability measure $\ovlap(\rho,\sigma)$ (on normalized states $\rho,\sigma$) that shares the following properties of the fidelity:
\begin{gather}
\ovlap(\rho,\sigma) = \ovlap(\sigma,\rho) 
, \label{eq:symm} \\     
\ovlap(\rho \otimes \rho',\sigma \otimes \sigma') \geq \ovlap(\rho,\sigma) \ovlap(\rho',\sigma') 
, \label{eq:supmult} \\
\ovlap(\rho,\sigma)^2 + d(\rho,\sigma)^2 \leq 1 , \label{eq:FvdGupp} 
\end{gather}
The first line states that it is symmetric, the second is a ``one-sided'' multiplicativity property across tensor products, and the last is in the same form as one side of the Fuchs--van de Graaf inequality. For the purposes of considering \emph{necessary} conditions for security of the protocol~\cite{arx_HT21}, the other side of the inequality can be useful,
\begin{align}
\ovlap(\rho,\sigma) + d(\rho,\sigma) \geq 1 , \label{eq:FvdGlow} 
\end{align}
but we will not need it here.
(Since the trace distance has range $[0,1]$, the bound~\eqref{eq:FvdGupp} also implies that $\ovlap(\rho,\sigma) \leq 1$, and~\eqref{eq:FvdGlow} implies that $\ovlap(\rho,\sigma) \geq 0$. Again, we will only need the former property, not the latter.)

A concrete example of a distinguishability measure (other than fidelity) that has these properties is the \term{pretty-good fidelity}~\cite{LZ04,Aud14comp,IRS17}, defined as ${F}_\mathrm{pg}(\rho, \sigma) \defvar \tr{\sqrt{\rho} \sqrt{\sigma}}$. Another example is the \term{quantum Chernoff coefficient}~\cite{ACM+07,ANS+08,NS09} (also known under other names, e.g.~quantum Chernoff bound/divergence/information, although often with an additional logarithm), defined as
$Q(\rho, \sigma) \defvar \inf_{s\in(0,1)} \tr{\rho^s \sigma^{1-s}}$.
Both of these quantities satisfy~\eqref{eq:symm}--\eqref{eq:FvdGupp}, and in fact also the bound~\eqref{eq:FvdGlow} (which we do not need here). Technically, the fidelity and pretty-good fidelity satisfy~\eqref{eq:supmult} with equality as well; however, the quantum Chernoff coefficient does not, hence we state that property as an inequality only. 

For generality, we shall first present the new statement in terms of any distinguishability measure $\ovlap$ with the aforementioned properties, deferring the question of picking a ``good'' choice of $\ovlap$ until after the proof.
\begin{theorem}\label{th:gbound}
Let $\ovlap$ be a distinguishability measure with the properties~\eqref{eq:symm}--\eqref{eq:FvdGupp}. Then for a DIQKD protocol as described in Sec.~\eqref{sec:prelimAD}, a sufficient condition for Eq.~\eqref{eq_poskey} to hold for large $\sz$ is for all states accepted in parameter estimation to satisfy
\begin{align}
\ovlap(\rho_{E|00},\rho_{E|11})^2 > \frac{\e}{1-\e},
\label{eq:gbound}
\end{align}
where $\rho_{E|a_0 b_0}$ is Eve's single-round state conditioned on outcomes $(a_0,b_0)$ being obtained for inputs $x=y=0$.
\end{theorem}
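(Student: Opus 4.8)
The plan is to essentially re-run the proof of Theorem~\ref{th_fidbound}, but replacing the two places where the specific structure of the fidelity was used with the abstract properties~\eqref{eq:symm}--\eqref{eq:FvdGupp}. The overall scaffolding is unchanged: conditioned on $\str{M}=\str{m}$ and $D=1$, the state $\rho_{C\blk{E}|\str{M}=\str{m}\land D=1}$ is within trace distance $\dn = \e^{\sz}/(\e^{\sz}+(1-\e)^{\sz})$ of the ``idealized'' state $\tilde{\rho}_{C\blk{E}} = (1/2)(\pure{0}\otimes\rho_{\blk{E}|\str{m}\str{m}} + \pure{1}\otimes\rho_{\blk{E}|\flip{\str{m}}\flip{\str{m}}})$, by the same IID symmetrized-outcome computation as before; applying the conditional-entropy continuity bound of~\cite{Win16} then gives~\eqref{eq_continuity} verbatim, and the error-correction term is still $H(C|C';D=1) = \binh(\dn)$ as in~\eqref{eq_ECterm}. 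So the only modification needed is in bounding $H(C|\blk{E})_{\tilde{\rho}}$.

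First I would observe that $\tilde{\rho}_{C\blk{E}}$ is a cq state encoding the two alternatives $\rho_{\blk{E}|\str{m}\str{m}}$ and $\rho_{\blk{E}|\flip{\str{m}}\flip{\str{m}}}$ with equal prior, so $H(C|\blk{E})_{\tilde{\rho}} \geq 1 - \binh\!\left(\tfrac12 + \tfrac12 d(\rho_{\blk{E}|\str{m}\str{m}},\rho_{\blk{E}|\flip{\str{m}}\flip{\str{m}}})\right)$ via the operational interpretation of trace distance as optimal distinguishing bias (Theorem~14 of~\cite{BH09}, or equivalently bounding $H(C|\blk{E})$ below by $2(1-\pg(C|\blk{E}))$ and then relating $\pg$ to trace distance). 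Next, using property~\eqref{eq:FvdGupp} in the form $d(\rho,\sigma) \leq \sqrt{1-\ovlap(\rho,\sigma)^2}$, and the ``supermultiplicativity''~\eqref{eq:supmult} together with the IID structure to write $\ovlap(\rho_{\blk{E}|\str{m}\str{m}},\rho_{\blk{E}|\flip{\str{m}}\flip{\str{m}}}) \geq \ovlap(\rho_{E|00},\rho_{E|11})^{\sz}$, I would conclude
\begin{align}
H(C|\blk{E})_{\tilde{\rho}} \geq 1 - \binh\!\left(\tfrac12 + \tfrac12\sqrt{1-\ovlap(\rho_{E|00},\rho_{E|11})^{2\sz}}\right)
\geq \frac{\ovlap(\rho_{E|00},\rho_{E|11})^{2\sz}}{\ln 4},
\end{align}
where the last step uses $\binh((1-p)/2) \leq 1-p^2/\ln4$ with $p = \ovlap(\rho_{E|00},\rho_{E|11})^{\sz}$. (This is precisely the role played by~\eqref{eq_rogabound} in the original proof, but now derived from generic properties rather than the specific bound of~\cite{RFZ10}; the min-entropy route through~\cite{BH09} is why~\eqref{eq:FvdGlow} is not needed.) Symmetry~\eqref{eq:symm} is used only implicitly, to ensure $\ovlap(\rho_{E|00},\rho_{E|11})$ is well-defined as a symmetric quantity and that the per-round factors combine cleanly.

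From here the argument is identical to the tail of the proof of Theorem~\ref{th_fidbound}: substituting into the ratio $H(C|\blk{E}\str{M};D=1)/H(C|C';D=1)$ gives the analogue of~\eqref{eq_ratioloose} with $\alpha = \ovlap(\rho_{E|00},\rho_{E|11})^2$ and $\beta = \e/(1-\e)$, the continuity corrections contribute a finite limit of $-1$ exactly as in~\eqref{eq_contfinite}, and the dominant term $\alpha^{\sz}/\binh(\dn)$ diverges precisely when $\alpha > \beta$, i.e.\ when~\eqref{eq:gbound} holds (using $\dn \leq \beta^{\sz}$ and $\binh(\dn) \leq 2\sz\beta^{\sz}\log(1/\beta)$ as in~\eqref{eq_limit}). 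Hence~\eqref{eq_poskey} holds for large $\sz$. Theorem~\ref{th_fidbound} is then recovered as the special case $\ovlap = F$, since the Fuchs--van de Graaf inequality~\eqref{eq:fuchsvdg} and multiplicativity of fidelity give exactly properties~\eqref{eq:symm}--\eqref{eq:FvdGupp}. The main obstacle is not really an obstacle but a verification step: confirming that the candidate measures (pretty-good fidelity, quantum Chernoff distance) genuinely satisfy~\eqref{eq:supmult} and~\eqref{eq:FvdGupp} — the Chernoff-distance case needs the inequality $Q(\rho\otimes\rho',\sigma\otimes\sigma') \geq Q(\rho,\sigma)Q(\rho',\sigma')$, which follows by choosing a common (possibly suboptimal) exponent $s$ in the infimum, and the bound $Q(\rho,\sigma)^2 + d(\rho,\sigma)^2 \leq 1$ which is known from~\cite{ACM+07,NS09}. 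One should also note that without the qubit reduction these bounds on $\ovlap(\rho_{E|00},\rho_{E|11})$ must still be certified device-independently (e.g.\ via the NPA-hierarchy technique of~\cite{TTB+16} applied to whichever $\ovlap$ one picks), and it is a priori unclear which choice of $\ovlap$ yields the least lossy DI certification — that is the practically interesting open question this generalization exposes.
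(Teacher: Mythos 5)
Your overall strategy is exactly the paper's: keep the scaffolding of the proof of Theorem~\ref{th_fidbound} up to Eq.~\eqref{eq_continuity}, bound $H(C|\blk{E})_{\tilde{\rho}}$ via the trace distance between $\rho_{\blk{E}|\str{m}\str{m}}$ and $\rho_{\blk{E}|\flip{\str{m}}\flip{\str{m}}}$ using a guessing-probability argument, convert trace distance to $\ovlap$ via~\eqref{eq:FvdGupp}, use~\eqref{eq:symm}--\eqref{eq:supmult} and the IID structure to reduce to the single-round quantity, and then reuse the asymptotic tail of the original proof. (The paper's main route goes through $\Hmin(C|\blk{E})_{\tilde{\rho}} = -\log\frac{1+d}{2}$; the $2(1-\pg)$ route you cite is mentioned there as a footnoted alternative. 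Either works.)

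However, your key intermediate inequality is false as written. You claim $H(C|\blk{E})_{\tilde{\rho}} \geq 1 - \binh\bigl(\tfrac12 + \tfrac12 d\bigr)$, attributing it to Theorem~14 of~\cite{BH09}. That theorem gives $H \geq 2(1-\pg) = 1-d$, with no binary entropy. Your version fails badly: for orthogonal conditional states ($d=1$) it asserts $H(C|\blk{E}) \geq 1-\binh(1) = 1$, whereas Eve then determines $C$ perfectly and $H(C|\blk{E})=0$; more generally it already fails for pure conditional states whenever $d > 1/\sqrt{2}$. The follow-on algebra is also inconsistent: applying $\binh((1-p)/2)\leq 1-p^2/\ln 4$ to an argument of the form $\tfrac12+\tfrac12\sqrt{1-p^2} $ (which equals $\binh$ evaluated at $\tfrac{1-\sqrt{1-p^2}}{2}$ by symmetry) would yield $(1-p^2)/\ln 4$, not $p^2/\ln 4$. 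Indeed the constant $1/\ln 4$ is not reachable from trace-distance information alone: it is specific to the fidelity-based bound of~\cite{RFZ10}, which is precisely what you cannot invoke for a generic $\ovlap$. The good news is that the error is confined to the constant prefactor: replacing the false step with the correct $H(C|\blk{E})_{\tilde{\rho}} \geq 1-d \geq 1-\sqrt{1-\alpha^{\sz}} \geq \alpha^{\sz}/2$ (or the paper's $-\log\frac{1+\sqrt{1-\alpha^{\sz}}}{2} \geq \alpha^{\sz}/\ln 16$), where $\alpha = \ovlap(\rho_{E|00},\rho_{E|11})^2$, leaves the rest of your argument intact, since the divergence of $\alpha^{\sz}/\binh(\dn)$ under condition~\eqref{eq:gbound} only requires \emph{some} positive constant in front of $\alpha^{\sz}$. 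You should fix the intermediate bound before the proof can be considered complete.
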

\begin{proof}
We follow the proof of Theorem~\ref{th_fidbound} up until the inequality~\eqref{eq_continuity}. 
At this point, we instead observe that the $H(C|\blk{E})_{\tilde{\rho}}$ term is lower bounded by\footnote{Alternatively, we could use the slightly tighter bound $H(C|\blk{E})_{\tilde{\rho}} \geq 1-d(\rho_{\blk{E}|\str{m} \str{m}},\rho_{\blk{E}|\flip{\str{m}} \flip{\str{m}}})$ (Theorem~14 of~\cite{BH09}), directly relating the von Neumann entropy to the trace distance without going through the min-entropy. This yields a better lower bound of $1-\sqrt{1-\alpha^{\sz}} \geq \alpha^{\sz}/2$ in place of~\eqref{eq:taylorbnd}, but again only corresponds to having a different denominator and does not affect the final conclusion.}
$\Hmin(C|\blk{E})_{\tilde{\rho}} = -\log \pg(C|\blk{E})_{\tilde{\rho}} = -\log ((1+d(\rho_{\blk{E}|\str{m} \str{m}},\rho_{\blk{E}|\flip{\str{m}} \flip{\str{m}}}))/2)$, and thus another proof approach would be to upper-bound $d(\rho_{\blk{E}|\str{m} \str{m}},\rho_{\blk{E}|\flip{\str{m}} \flip{\str{m}}})$. This can be achieved by using the ``Fuchs--van de Graaf'' bound~\eqref{eq:FvdGupp}, followed by simplifying the expression using the properties~\eqref{eq:symm}--\eqref{eq:supmult}:
\begin{align}
d(\rho_{\blk{E}|\str{m} \str{m}},\rho_{\blk{E}|\flip{\str{m}} \flip{\str{m}}}) \leq \sqrt{1-\ovlap(\rho_{\blk{E}|\str{m} \str{m}},\rho_{\blk{E}|\flip{\str{m}} \flip{\str{m}}})^2} \leq \sqrt{1-\ovlap(\rho_{E|00},\rho_{E|11})^{2\sz}}.
\end{align}
This gives us the following bound in place of~\eqref{eq_rogabound}:
\begin{align}
H(C|\blk{E})_{\tilde{\rho}} \geq -\log \frac{1+\sqrt{1-\ovlap(\rho_{E|00},\rho_{E|11})^{2\sz}}}{2}.
\label{eq:hminbound}
\end{align}

Denoting $\alpha \defvar \ovlap(\rho_{E|00},\rho_{E|11})^2 \leq 1$ for brevity, we now further lower-bound this expression using the inequalities 
$\ln x \leq x-1$ and
$\sqrt{1-x} \leq 1-x/2$ 
(for $x>0$ and $x\leq 1$ respectively; these inequalities are easily derived using the fact that $\ln$ and $\sqrt{~}$ are concave):
\begin{align}
-\log \frac{1+\sqrt{1-\alpha^{\sz}}}{2} 
\geq \frac{1}{\ln 2} \left(1-\frac{1+\sqrt{1-\alpha^{\sz}}}{2}\right)
\geq \frac{\alpha^{\sz}}{\ln16}.
\label{eq:taylorbnd}
\end{align}
This gives us the following bound in place of~\eqref{eq_ratioloose}:
\begin{align}
\frac{H(C|\blk{E} \str{M}; D=1)}{H(C|C';D=1)} \geq \left(
\frac{\ovlap(\rho_{E|00},\rho_{E|11})^{2\sz}}{\ln16}
- \dn - (1+\dn) \binh\left(\frac{\dn}{1+\dn}\right) \right) \binh(\dn)^{-1}.
\label{eq:ratio2}
\end{align}
The behaviour of this expression in the large-$\sz$ limit is given by essentially the same analysis as the last part of the Theorem~\ref{th_fidbound} proof (the replacement of the $\ln4$ denominator by $\ln16$ does not change the fact that the limit is $\infty$), yielding the claimed result.
\end{proof}

The above proof basically goes through a lower bound on $\Hmin(C|\blk{E})_{\tilde{\rho}}$ rather than $H(C|\blk{E})_{\tilde{\rho}}$ directly, so we would expect it to be ``less tight'' in some sense than the previous proof of Theorem~\ref{th_fidbound}. Indeed, in the case where we take $\ovlap$ to be the fidelity, the previous bound~\eqref{eq_rogabound} is strictly (except at the extremal values) better than the new bound~\eqref{eq:hminbound}, and~\eqref{eq_ratioloose} is similarly better than~\eqref{eq:ratio2}.\footnote{In fact, the latter pair of bounds are the Taylor expansions of the former pair for small values of $\alpha^{\sz}$. 
This implies that the conversion from~\eqref{eq_rogabound} to~\eqref{eq_ratioloose} (or~\eqref{eq:hminbound} to~\eqref{eq:ratio2}) is essentially tight at large $\sz$.}
However, this difference is not sufficient to affect the conclusion based on the $\sz\to\infty$ limit, though it would be relevant if we aim to compute explicit lower bounds on the block size $\sz$ required for a positive keyrate. 
The advantage of the proof here is that it only uses the properties~\eqref{eq:symm}--\eqref{eq:FvdGupp}, hence yielding a result that holds for any distinguishability measure with those properties.\footnote{All of these statements also hold for the approach based on $H(C|\blk{E})_{\tilde{\rho}} \geq 1-d(\rho_{\blk{E}|\str{m} \str{m}},\rho_{\blk{E}|\flip{\str{m}} \flip{\str{m}}})$.}

Broadly speaking, a core step in this proof was using the ``Fuchs--van de Graaf'' inequality~\eqref{eq:FvdGupp} to bound $d(\rho_{\blk{E}|\str{m} \str{m}},\rho_{\blk{E}|\flip{\str{m}} \flip{\str{m}}})$ (and thus $H(C|\blk{E})_{\tilde{\rho}}$) in terms of the fidelity-like quantity $G$. In comparison, the previous Theorem~\ref{th_fidbound} proof used the inequality in~\eqref{eq_rogabound} (from~\cite{RFZ10}) to directly bound $H(C|\blk{E})_{\tilde{\rho}}$ in terms of fidelity.
This does raise the question of whether the latter inequality could still hold if the fidelity is replaced by some other distinguishability measure $\ovlap$. 
If so, then as long as $\ovlap$ also satisfies~\eqref{eq:symm}--\eqref{eq:supmult} (without requiring the ``Fuchs--van de Graaf'' bounds~\eqref{eq:FvdGupp}--\eqref{eq:FvdGlow}), the previous Theorem~\ref{th_fidbound} proof would immediately generalize to $\ovlap$ as well, yielding another way to derive Theorem~\ref{th:gbound}. 

Unfortunately, while Theorem~\ref{th:gbound} is in principle more general than Theorem~\ref{th_fidbound}, we do not currently know a choice of $\ovlap$ that allows the former to yield a concrete advantage over the latter. This is because the only examples we have listed are the pretty-good fidelity and the quantum Chernoff coefficient, which are both \emph{lower} bounds on the fidelity (in fact they obey the chain of inequalities 
$Q(\rho, \sigma) \leq {F}_\mathrm{pg}(\rho, \sigma) \leq F(\rho, \sigma)$~\cite{ACM+07}\footnote{As noted in~\cite{Aud14comp}, both inequalities can be simultaneously saturated by taking $\rho=\operatorname{diag}(1-t,t,0)$, $\sigma=\operatorname{diag}(1-t,0,t)$; 
a bit more generally, the second inequality is saturated if $\rho$ and $\sigma$ are simultaneously diagonalizable states.
(\cite{ACM+07} incorrectly claimed that $Q(\rho, \sigma) = F(\rho, \sigma)$ when either of the states is pure; what their argument in fact showed was that (e.g.~taking $\rho$ as the pure state) $Q(\rho, \sigma) 
= \bra{\rho}\sigma\ket{\rho} 
= F(\rho, \sigma)^2$, as noted in~\cite{ANS+08}. If \emph{both} states are pure, then furthermore this is equal 
to ${F}_\mathrm{pg}(\rho, \sigma)$.)}). 
Therefore, if we take $\ovlap$ to be either of these choices, then satisfying the condition~\eqref{eq:gbound} immediately implies satisfying the condition~\eqref{eq_fidbound}, hence the former has no advantage over the latter (in terms of being a sufficient condition for the protocol to be secure).\footnote{On the other hand, the fact that the fidelity is lower bounded by these choices of $\ovlap$ is useful when studying \emph{necessary} security conditions; see~\cite{arx_HT21} for details. To roughly summarize, the pretty-good fidelity yields a better result than the fidelity in that case; however, the quantum Chernoff coefficient is not covered by that analysis because it does not satisfy~\eqref{eq:supmult} when the inequality is flipped.} It remains to be seen whether there is any choice of $\ovlap$ that satisfies~\eqref{eq:symm}--\eqref{eq:FvdGupp} but is not a lower bound on the fidelity, in which case Theorem~\ref{th:gbound} would be a ``true'' improvement over Theorem~\ref{th_fidbound} (putting aside the question of whether $\ovlap(\rho_{E|00},\rho_{E|11})$ can be computed or bounded in a given protocol).

Another natural question raised by the above discussion is whether it suggests any approaches for further improvements over Theorem~\ref{th_fidbound}, perhaps by changing some of the inequalities used. Roughly speaking, it appears that the main point which has potential for improvement is the initial bound~\eqref{eq_rogabound} 
--- most of the subsequent steps are essentially equalities for large $\sz$. The inequality in~\eqref{eq_rogabound} is tight if nothing is known about the states other than the value of $F(\rho_{\blk{E}|\str{m} \str{m}},\rho_{\blk{E}|\flip{\str{m}} \flip{\str{m}}})$, but here there is additional structure in the states which could possibly be exploited. However, as observed earlier, replacing it by an inequality of essentially ``the same form'' (e.g.~just replacing the fidelity by $\ovlap$ exactly, with no other changes) would only yield a result in ``the same form'' as well (basically, Theorem~\ref{th:gbound}). Hence it would encounter the same issues as discussed above regarding Theorem~\ref{th:gbound}. Unless a better choice of $\ovlap$ is identified, it would seem that the only way to go beyond those issues would be to find some inequality to replace~\eqref{eq_rogabound} (possibly involving a very different quantity from the fidelity) that is in ``a different form'' in some sense.

The last question has since been resolved in a follow-up work after this thesis, where it was shown that a suitably different bound can be found in terms of the quantum Chernoff coefficient, yielding basically matching necessary and sufficient conditions for security of the repetition-code protocol (against collective attacks). We retain here the following discussion of some difficulties encountered in some prospective approaches we considered, in case it is helpful in other contexts (though it seems unlikely to be a novel observation). 

\section{Uniqueness of pure Uhlmann property}

When considering distinguishability measures, one particularly convenient property of the fidelity is that it satisfies Uhlmann's theorem, which substantially simplifies the proofs of inequalities such as the Fuchs--van de Graaf inequalities and the bound~\eqref{eq_rogabound} (by effectively reducing the analysis to pure states).
For clarity, we first highlight a subtle technicality --- in this section, we focus mainly on discussing whether a distinguishability measure $\ovlap(\rho_A,\sigma_A)$ can satisfy the following property: for any {purification} $\ket{\rho}_{AR}$ of $\rho_A$, we have (writing ``$\operatorname{opt}$'' to denote either supremum or infimum depending on the relevant distinguishability measure)
\begin{align}
\ovlap(\rho_A,\sigma_A) = \mathop{\operatorname{opt}}_{\ket{\sigma}_{AR}} \ovlap(\pure{\rho}_{AR},\pure{\sigma}_{AR}),
\label{eq:uhlmann}
\end{align}
where the optimization is taken over all \emph{purifications} $\ket{\sigma}_{AR}$ of $\sigma_A$. We will refer to this as a ``pure Uhlmann property''.
This should be contrasted against the subtly different statement that for any {extension} $\rho_{AR}$ of $\rho_A$, we have
\begin{align}
\ovlap(\rho_A,\sigma_A) = \mathop{\operatorname{opt}}_{\sigma_{AR}} \ovlap(\rho_{AR},\sigma_{AR}),
\label{eq:uhlmann2}
\end{align}
where the optimization is taken over all \emph{extensions} $\sigma_{AR}$ (not necessarily pure) of $\sigma_A$. The second statement is usually weaker in the sense that~\eqref{eq:uhlmann} implies~\eqref{eq:uhlmann2} whenever $\ovlap$ has an appropriate data-processing inequality, but the converse does not hold --- this follows from e.g.~the finding in~\cite{arx_MFSR22} that (taking the optimization to be an infimum) the max-relative entropy satisfies~\eqref{eq:uhlmann2}, but it cannot satisfy\footnote{To see this, note that the max-relative entropy has the property that it is infinite for any pair of distinct pure states, and therefore if we simply pick some distinct states $\rho_A \neq \sigma_A$ with a finite value of $D_\mathrm{max}(\rho_A\Vert\sigma_A)$, 
we would have $D_\mathrm{max}(\pure{\rho}_{AR}\Vert\pure{\sigma}_{AR}) = \infty$ for any purifications and hence~\eqref{eq:uhlmann} cannot hold.}~\eqref{eq:uhlmann}.

We now argue that the pure Uhlmann property (i.e.~\eqref{eq:uhlmann}) does not carry over straightforwardly to other distinguishability measures. In fact, given some mild conditions, one can claim that the \emph{only} distinguishability measure with this property is ``essentially'' the fidelity, in a sense we shall now elaborate on. We state the results in terms of both ``overlap-like measures'' and ``distance-like measures'' (i.e.~those which increase or decrease respectively as the states become ``more similar''). Note that we do not assume the distinguishability measure is symmetric in its arguments, so our analysis applies to quantities such as relative entropy. 

We begin with the following characterization of some properties that any distinguishability measure $\ovlap$ with a data-processing inequality must satisfy (we thank Christopher Chubb for pointing out this property). 
\begin{fact}\label{fact:DPI}
Let $\ovlap$ be a real-valued function on pairs of states, such that a data-processing inequality of the following form holds: $\ovlap(\map[\rho],\map[\sigma]) \geq \ovlap(\rho,\sigma)$ for any CPTP map $\map$.
Then it must be unitarily invariant (i.e.\ $\ovlap(U \rho U^\dagger,U \sigma U^\dagger) = \ovlap(\rho,\sigma)$ for any unitary $U$), which in turn implies that for pure states ($\rho=\pure{\rho}, \sigma=\pure{\sigma}$), we must have $\ovlap(\rho,\sigma) = g(\left|\inn{\rho}{\sigma}\right|)$ for some function $g:[0,1]\to\mathbb{R}$. Furthermore, the data-processing inequality implies the function $g$ must be nondecreasing.

If $\ovlap$ instead has a data-processing inequality in the direction $\ovlap(\map[\rho],\map[\sigma]) \leq \ovlap(\rho,\sigma)$, then all the above properties also hold, except that $g$ must be a nonincreasing function instead.
\end{fact}
\begin{proof}
The fact that $\ovlap$ is invariant under any unitary $U$ follows immediately by applying the data-processing inequality on both $U$ and its inverse.
With the fact that it is unitarily invariant, we are free to represent the pure states $\pure{\rho}, \pure{\sigma}$ in any choice of orthonormal basis when computing $\ovlap(\pure{\rho}, \pure{\sigma})$. Also, by the freedom to choose global phases when picking the vector representatives $\ket{\rho},\ket{\sigma}$ for the pure states, we can take $\inn{\rho}{\sigma} = \left|\inn{\rho}{\sigma}\right|$ without loss of generality. With this, one can always construct (e.g.~via Gram-Schmidt) an orthonormal basis $\left\{\ket{0},\ket{1},\dots\right\}$ in which $\ket{\rho} = \ket{0}$ and $\ket{\sigma} = x \ket{0} + \sqrt{1-x^2} \ket{1}$, where $x = \left|\inn{\rho}{\sigma}\right|$. When using this basis to compute $\ovlap(\pure{\rho}, \pure{\sigma})$, both states are completely described by the single parameter $x$, hence $\ovlap(\pure{\rho}, \pure{\sigma})$ can only be a function of this parameter.

We now focus only on the case where the data-processing inequality is in the direction $\ovlap(\map[\rho],\map[\sigma]) \geq \ovlap(\rho,\sigma)$; precisely analogous arguments apply when it is in the opposite direction.
To see that $g$ must be nondecreasing, we use a proof by contradiction: suppose to the contrary there exist $x,x'\in[0,1]$ such that $x>x'$ but $g(x)<g(x')$. One can always find a pair of states $\ket{\rho},\ket{\sigma}$ such that $\left|\inn{\rho}{\sigma}\right| = x$, and extend them to a pair of states $\ket{\rho}\otimes\ket{\rho'},\ket{\sigma}\otimes\ket{\sigma'}$ such that $\left|(\bra{\rho}\otimes\bra{\rho'})(\ket{\sigma}\otimes\ket{\sigma'})\right| = x'$ (simply by taking $\left|\inn{\rho'}{\sigma'}\right| = x'/x < 1$). Since all these states are pure, we have $\ovlap(\pure{\rho},\pure{\sigma}) 
= g(x)$ and $\ovlap(\pure{\rho}\otimes\pure{\rho'},\pure{\sigma}\otimes\pure{\sigma'}) = g(x')$. But since $g(x)<g(x')$, this would contradict the data-processing inequality (taking the map $\map$ to simply be the partial trace over the extension system). 
\end{proof}
\noindent To see examples of the above fact, notice that for pure states, the following distinguishability measures can all be written as a function of $\left|\inn{\rho}{\sigma}\right|$: fidelity, pretty-good fidelity, quantum Chernoff coefficient, trace distance, Jensen-Shannon divergence, relative entropy (and its R\'{e}nyi generalizations).\footnote{The relative entropy is an interesting edge case, as it does not ``obviously'' seem to be a function of $\left|\inn{\rho}{\sigma}\right|$ for pure states. However, recalling that $D(\rho\Vert\sigma)=\infty$ whenever the kernel of $\sigma$ is not contained in the kernel of $\rho$, we see that in fact for pure states we can indeed write $D(\rho\Vert\sigma)=g(\left|\inn{\rho}{\sigma}\right|)$ for a nonincreasing function $g:[0,1]\to\mathbb{R}\cup\{\infty\}$ (albeit one with range in the extended reals), in the sense that $D(\rho\Vert\sigma)=0$ if $\left|\inn{\rho}{\sigma}\right|=1$ and $D(\rho\Vert\sigma)=\infty$ for any other value of $\left|\inn{\rho}{\sigma}\right|$.} 

Furthermore, we now show that any distinguishability measure that is a function of $\left|\inn{\rho}{\sigma}\right|$ for pure states \emph{and} has a pure Uhlmann property must in fact just be the same function of the fidelity (for all states, not just pure states):
\begin{fact}
\label{fact:fidfunction}
Let $\ovlap$ be a real-valued function on pairs of states, such that for pure states ($\rho=\pure{\rho}, \sigma=\pure{\sigma}$), we have $\ovlap(\rho,\sigma) = g(\left|\inn{\rho}{\sigma}\right|)$ for some function $g:[0,1]\to\mathbb{R}$. Suppose $\ovlap$ also has a pure Uhlmann property of the form in~\eqref{eq:uhlmann},
where the optimization is a supremum or infimum (over all purifications $\ket{\sigma}_{AR}$ of $\sigma_A$).
Then the supremum/infimum is always attained, and $\ovlap$ must satisfy $\ovlap(\rho,\sigma) = g(F(\rho,\sigma))$ for all states $\rho,\sigma$.
\end{fact}
\begin{proof}
We focus on the case where the optimization is a supremum; precisely analogous arguments apply when it is instead an infimum.
We first note that in this case $g$ must again be a nondecreasing function: observe that the pure Uhlmann property~\eqref{eq:uhlmann} implies $\ovlap$ satisfies data-processing under partial trace on any pure initial states, i.e.~$\ovlap(\rho_A,\sigma_A) \geq \ovlap(\pure{\rho}_{AR},\pure{\sigma}_{AR})$, and this suffices to apply the same argument as in the second half of the Fact~\ref{fact:DPI} proof. 

Now to prove the main result, take any $\rho_A,\sigma_A$, and note that the standard Uhlmann theorem (for fidelity) states that there exists a purification $\ket{\sigma^\star}_{AR}$ such that $F(\rho_A,\sigma_A) = \left|\inn{\rho_{AR}}{\sigma^\star_{AR}}\right| = \max_{\ket{\sigma}_{AR}} \left|\inn{\rho_{AR}}{\sigma_{AR}}\right|$. Hence for any other purification $\ket{\sigma}_{AR}$, we immediately have
\begin{align}
\ovlap(\pure{\rho}_{AR},\pure{\sigma}_{AR}) 
= g\left(\left|\inn{\rho_{AR}}{\sigma_{AR}}\right|\right)
\leq g\left(\left|\inn{\rho_{AR}}{\sigma^\star_{AR}}\right|\right)
=\ovlap(\pure{\rho}_{AR},\pure{\sigma^\star}_{AR}),
\end{align}
where the inequality holds since $g$ is nondecreasing, and the equalities hold by hypothesis. This implies that the supremum in~\eqref{eq:uhlmann} is attained by $\pure{\sigma^\star}_{AR}$, and thus we have
\begin{align}
\ovlap(\rho_A,\sigma_A) = \ovlap(\pure{\rho}_{AR},\pure{\sigma^\star}_{AR})
= g\left(\left|\inn{\rho_{AR}}{\sigma^\star_{AR}}\right|\right) = g(F(\rho_A,\sigma_A)),
\end{align}
as claimed. (Note that the above argument invokes a subtle technicality that Uhlmann's theorem for fidelity also states that the supremum is in fact attained, which we essentially used to argue that the supremum/infimum in~\eqref{eq:uhlmann} is attained as well. This fact was needed in order to avoid considerations about whether $g$ is continuous.)
\end{proof}

By combining Facts~\ref{fact:DPI} and~\ref{fact:fidfunction}, we see that any distinguishability measure satisfying\footnote{There are various alternatives to the chain of reasoning shown above.
For instance, note that the pure Uhlmann property~\eqref{eq:uhlmann} implies that $\ovlap$ satisfies data-processing under partial trace on pure initial states. Hence we could instead start from e.g.~the conditions that $\ovlap$ has the pure Uhlmann property and is unitarily (or isometrically) invariant, which by a Stinespring-dilation argument would imply that it has a data-processing inequality for pure initial states $(\rho,\sigma)$, leading to the same conclusions.} a data-processing inequality and having a pure Uhlmann property (in the sense of~\eqref{eq:uhlmann}) must in fact just be ``secretly'' the fidelity (up to a rescaling by a monotone function $g$ --- while $g$ might not give a bijection between the ranges of $\ovlap$ and $F$, the fact that it is monotone implies that $\ovlap$ cannot be better than simply a ``coarse-graining'' of $F$). Hence it seems it would be difficult to find a ``useful'' distinguishability measure with a pure Uhlmann property, other than the fidelity. (While some distinguishability measures such as max-relative entropy satisfy the weaker property~\eqref{eq:uhlmann2}, that version is slightly less powerful in proofs, as it only allows us to reduce the analysis to scenarios where \emph{one} of the states is pure.)

\cleardoublepage
\phantomsection
\addcontentsline{toc}{chapter}{Bibliography}
\printbibliography

\end{document}